\definecolor{ForestGreen}{rgb}{0.1333,0.5451,0.1333}
\definecolor{DarkRed}{rgb}{0.8,0,0}
\definecolor{Red}{rgb}{1,0,0}
\g@addto@macro\bfseries{\boldmath}
\g@addto@macro\mdseries{\unboldmath}
\g@addto@macro\normalfont{\unboldmath}
\g@addto@macro\rmfamily{\unboldmath}
\g@addto@macro\upshape{\unboldmath}
\renewcommand{\paragraph}[1]{\medskip\noindent{\bfseries #1}\xspace}
\declaretheorem[numberwithin=section,refname={Theorem,Theorems},Refname={Theorem,Theorems},name={Theorem}]{thm}
\declaretheorem[numberlike=thm,refname={Theorem,Theorems},Refname={Theorem,Theorems},name={Theorem}]{theorem}
\declaretheorem[numberlike=thm,refname={Lemma,Lemmas},Refname={Lemma,Lemmas},name={Lemma}]{lem}
\declaretheorem[numberlike=thm,refname={Lemma,Lemmas},Refname={Lemma,Lemmas},name={Lemma}]{lemma}
\declaretheorem[numberlike=thm,refname={Corollary,Corollaries},Refname={Corollary,Corollaries},name={Corollary}]{cor}
\declaretheorem[numberlike=thm,refname={Corollary,Corollaries},Refname={Corollary,Corollaries},name={Corollary}]{corollary}
\declaretheorem[numberlike=thm,refname={Fact,Facts},Refname={Fact,Facts},name={Fact}]{fact}
\declaretheorem[numberlike=thm,refname={Proposition,Propositions},Refname={Proposition,Propositions},name={Proposition}]{prop}
\declaretheorem[numberlike=thm,refname={Proposition,Propositions},Refname={Proposition,Propositions},name={Proposition}]{proposition}
\declaretheorem[numberlike=thm,refname={Observation,Observations},Refname={Observation,Observations}]{observation}
\declaretheorem[style=remark,numberlike=thm,refname={Claim,Claims},Refname={Claim,Claims}]{claim}
\crefname{algorithm}{Algorithm}{Algorithms}
\Crefname{algorithm}{Algorithm}{Algorithms}
\theoremstyle{definition}
\declaretheorem[numberlike=theorem]{definition}
\newcommand{\ot}{\tilde{O}}
\newcommand{\ignore}[1]{}
\newcommand{\pset}{\mathcal{P}}
\newcommand{\uset}{\mathcal{U}}
\newcommand{\bset}{\mathcal{B}}
\newcommand{\capacity}{c}
\newcommand{\Ot}{\tilde{O}}
\newcommand{\out}{\operatorname{out}}
\newcommand{\vol}{\operatorname{vol}}
\newcommand{\poly}{\operatorname{poly}}
\newcommand{\polylog}{\operatorname{polylog}}
\newcommand{\tw}{\mathrm{tw}}
\newcommand{\mincut}{\mathrm{mincut}}
\newcommand{\dep}{\mathsf{dep}}
\newcommand{\mlp}{\textsf{Multi-level Pruning}\xspace}
\newcommand{\cp}{\textsf{Cluster Decomposition}\xspace}
\newcommand{\ep}{\textsf{Expander Decomposition}\xspace}
\def\ShowComment{True} % Switch comments on or  off            
\def\thatchaphol#1{\marginpar{$\leftarrow$\fbox{T}}\footnote{$\Rightarrow$~{\sf\textcolor{purple}{#1 --Thatchaphol}}}}
\def\zihan#1{\marginpar{$\leftarrow$\fbox{Z}}\footnote{$\Rightarrow$~{\sf\textcolor{orange}{#1 --Zihan}}}}
\def\harry#1{\marginpar{$\leftarrow$\fbox{H}}\footnote{$\Rightarrow$~{\sf\textcolor{blue}{#1 --Harry}}}}
\def\note#1{#1}
\def\thatchaphol#1{}
\def\zihan#1{}
\def\note#1{} 
\title{The Expander Hierarchy\\and its Applications to Dynamic Graph Algorithms}
\author[1]{Gramoz Goranci}
\author[2]{Harald R\"{a}cke}
\author[3]{Thatchaphol Saranurak}
\author[4]{Zihan Tan}
\affil[1]{University of Toronto, Canada}
\affil[2]{TU Munich, Germany}
\affil[3]{Toyota Technological Institute at Chicago, USA}
\affil[4]{University of Chicago, USA}
\date{}
\begin{document}

        \begin{titlepage}   
                \maketitle
                \pagenumbering{gobble}   
                %\vspace{-.7cm} 
                \begin{abstract}
	We introduce a notion for hierarchical graph clustering which we call the \emph{expander hierarchy} and show a fully dynamic algorithm for maintaining such a hierarchy on a graph with $n$ vertices undergoing edge insertions and deletions using $n^{o(1)}$ update time. 
	An expander hierarchy is a tree representation of graphs that faithfully captures the cut-flow structure and consequently our dynamic algorithm almost immediately implies several results including:
	\begin{enumerate}
		\item The first fully dynamic algorithm with $n^{o(1)}$ worst-case update time that allows querying $n^{o(1)}$-approximate conductance, $s$-$t$ maximum flows, and $s$-$t$ minimum cuts for any given $(s,t)$ in $O(\log^{1/6} n)$ time. Our results are deterministic and extend to multi-commodity cuts and flows. All previous fully dynamic (or even decremental) algorithms for any of these problems take $\Omega(n)$ update or query time. 
		The key idea behind these results is a fully dynamic algorithm for maintaining a \emph{tree flow sparsifier}, a notion introduced by R\"acke~[FOCS'02] for constructing competitive oblivious routing schemes.{\small \par}
		\item A deterministic fully dynamic connectivity algorithm with $n^{o(1)}$ worst-case update time. 
		This significantly simplifies the recent algorithm by Chuzhoy et al.~that uses the framework of Nanongkai, Saranurak, and Wulff-Nilsen {[}FOCS'17{]}.{\small \par}
		\item A deterministic fully dynamic treewidth decomposition algorithm on constant-degree graphs with $n^{o(1)}$ worst-case update time that maintains a treewidth decomposition of width $\text{tw}(G)\cdot n^{o(1)}$
		where $\text{tw}(G)$ denotes the treewidth of the current graph. This is the first non-trivial dynamic algorithm for this problem.{\small \par}
	\end{enumerate}
	Our technique is based on a new stronger notion of the expander decomposition, called the \emph{boundary-linked expander decomposition}. This decomposition is more robust against updates and better captures clustering structure of graphs.
	Given that the expander decomposition has proved extremely useful in many fields, including approximation, sketching, distributed, and dynamic algorithms, we expect that our new notion will find more future applications.
\end{abstract}

                \newpage   
                \setcounter{tocdepth}{2}  
                \tableofcontents    
                %\newpage           
                %\listoftheorems                         
        \end{titlepage}

        \newpage         
        \pagenumbering{arabic}

\section{Introduction}

Computation on \emph{trees }is usually significantly easier than on
\emph{general graphs}. Hence, one of the universal themes in graph
algorithms is to compute tree representations that faithfully preserve fundamental
properties of a given graph. Examples include spanning forests (preserving connectivity), shortest
path trees (preserving distances from a source), Gomory-Hu trees (preserving
pairwise minimum cuts), low stretch spanning trees and tree embedding
(preserving average distances between pairs of vertices), and treewidth decomposition (preserving
``tree-like'' structure). Among all known approaches for representing a graph with a tree,
the \emph{tree flow sparsifier} introduced by Räcke \cite{Racke02}
is astonishingly strong. Roughly speaking, it is a tree $T$ that
approximately preserves the values of \emph{all cuts} of a graph $G$ (see the formal
definition in \Cref{subsec:applications}). The existence of such
trees, which is far from obvious, already enables competitive oblivious
routing schemes with both theoretical \cite{Racke02,Racke08} and
practical impact \cite{ApplegateC03}. Its polynomial-time construction
\cite{HarrelsonHR03,BienkowskiKR03} also leads to polynomial-time
approximation algorithms for many fundamental problems including minimum
bisection, min-max partitioning, $k$-multicut, etc (see e.g.~\cite{AndreevGGMM09,BansalFKMNNS14,ChekuriKS13,Racke08,RackeS14}).
More recently, the almost-linear time construction was shown \cite{RackeST14}
and played a key role in obtaining the celebrated result of approximating
maximum flows in near-linear time \cite{Sherman13,KelnerLOS14,Peng16}.
Given that the construction for static graphs are now well understood,
we raise the challenging question of whether it is possible to maintain tree flow
sparsifiers in \emph{dynamic graphs} that undergo a sequence of edge
insertions and deletions without recomputing from scratch after each
update.%, as this would simultaneous imply many applications in dynamic graph algorithms. %
\begin{comment}
This would immediately imply several new results in dynamic graph
algorithms. Unfortunately, there is an important gap in the understanding
preventing this goal. From the side of static algorithms, all previous
static constructions of tree flow sparsifiers seem inherently difficult
to dynamize because of their top-down algorithmic nature: given just
a few edge deletions, it is possible that the structure of the whole
tree changes. From the side of dynamic algorithms, our state-of-the-art
algorithms can only maintain relatively simpler trees such as spanning
forests \cite{Frederickson85,HenzingerK99,KapronKM13,NanongkaiS17,NanongkaiSW17},
minimum spanning forests \cite{HolmLT01,Wulff-Nilsen17,NanongkaiSW17},
or low stretch trees \cite{ForsterG19,ChechikZ20}. Very recently,
the first dynamic algorithm for tree flow sparsifiers is shown by
Goranci, Henzinger, and Saranurak \cite{GoranciHS19inc}. However,
the algorithm works only when an update sequence contains edge \emph{insertion
only }or in the weaker \emph{offline setting}. In fact, the algorithm
calls a static algorithm for tree flow sparsifers in a black-box manner.
So it again inherently suffers the same drawback and cannot handle
edge deletions. This work motivates us to design a new construction
of tree flow sparsifier from scratch. {} because this would immediately
imply several new results in dynamic graph algorithms. 
\end{comment}

%***Linking between two paragraph more naturally**

In this paper, we answer this question in affirmative by introducing a new notion for hierarchical graph clustering
which we call the \emph{expander hierarchy}. We state a precise definition
later in \Cref{subsec:def}. We show
that the expander hierarchy is a tree representation of a graph that
is strong enough to imply tree flow sparsifiers (and much more), 
and yet robust against updates in the sense that it admits fully dynamic
algorithms on an $n$-vertex graph maintaining the hierarchy in $n^{o(1)}$
update time.

The fact that tree flow sparsifiers can be maintained efficiently
immediately allows us to efficiently compute approximate solutions to a wide range
of flow/cut-based problems, including max flows, multi-commodity flows,
minimum cuts, multi cuts, multi-way cuts, and conductance. Specifically, for all these problems, this gives the first sub-linear time fully dynamic algorithms with $n^{o(1)}$ worst-case update and query time.
The power of the expander hierarchy is not limited to tree flow sparsifiers. 
It also gives an algorithm for the deterministic dynamic connectivity with $n^{o(1)}$ worst-case update time, that significantly simplifies the recent breakthrough result in \cite{ChuzhoyGLNPS19det,NanongkaiSW17} on this problem.
Moreover, it gives the first algorithm for maintaining an approximate treewidth decomposition, a central object in the field of fixed-parameter
tractable algorithms. We discuss these applications in detail in \Cref{subsec:applications}.

%To summarize, we introduce the expander hierarchy as a clean combinatorial object that is robust against adversarial updates, and faithfully preserves the flow/cut structure of the input graph, so that it can be easily applied to construct new algorithms or simplify previous algorithms for many classical graph problems.
%It is likely that future development on dynamic graph algorithms can build on such a hierarchy.

In short, we introduce the expander hierarchy as a clean combinatorial
object that is very robust against adversarial updates, yet strong
enough to imply many new results and simplify previous important development.
It is likely that future development on dynamic graph algorithms can
build on such a hierarchy.

\subsection{Our Results: The Dynamic Expander Hierarchy}
\label{subsec:def}

First, we recall definitions related to \emph{expanders}. Let $G=(V,E)$ be
an $n$-vertex $m$-edge unweighted graph. For any set $S,T\subseteq V$,
let $E_{G}(S,T)$ denote a set of edges between $S$ and $T$. The
\emph{volume} of $S$ is $\vol_{G}(S)=\sum_{u\in S}\deg_{G}(u)$ and
we write $\vol(G)=\vol_{G}(V)$. The \emph{conductance} of a cut $(S,V\setminus S)$
is $\Phi_G(S)=\frac{|E_{G}(S,V\setminus S)|}{\min\{\vol_{G}(S),\vol_{G}(V\setminus S)\}}$
and the conductance of $G$ is denoted by $\Phi_{G}=\min_{\emptyset\neq S\subset V}\Phi_{G}(S)$.
We say that $G$ is a \emph{$\phi$-expander} iff $\Phi_{G}\ge\phi$.

We need the following generalized notation for induced subgraphs in order to define our new decomposition.
Recall that $G[S]$ denotes the subgraph of $G$ induced by $S$. For any $w\ge0$,
let $G[S]^{w}$ be obtained from $G[S]$ by adding $\left\lceil w\right\rceil $
self-loops to each vertex $v\in S$ for every boundary edge
$(v,x)$, $x\notin S$. Note that $G[S]^{0}=G[S]$ and vertices in
$G[S]^{1}$ have the same degree as in the original graph $G$ (each
self-loop contributes $1$ to the degree of the node that it is incident
to).

\paragraph{Stronger Expander Decomposition.}
The core of this paper is to identify a stronger notion of the well-known \emph{expander decomposition} \cite{KannanVV04}
which states that, given any graph $G=(V,E)$ and any parameter $\phi>0$,
there is a partition $\uset=(U_{1},\dots,U_{k})$ of $V$ into \emph{clusters}, such that: 
\begin{enumerate}
\item $\sum_{i}|E(U_{i},V\setminus U_{i})|=\tilde{O}(\phi m)$.
\item For all $i$, $G[U_{i}]$ is a $\phi$-expander.
\end{enumerate}
%We call the sets in $\uset$ \emph{clusters}.
Basically, the decomposition says that one can remove $\tilde{O}(\phi)$-fraction
of edges so that each connected component in the remaining graph is a $\phi$-expander. As
expanders have many algorithm-friendly properties such as, having low diameter,
small mixing time, etc., this decomposition has found numerous applications
across areas, including property testing \cite{GoldreichR98,KumarSS18}, approximation
algorithms \cite{KleinbergR96,Trevisan05}, fast graph algorithms
\cite{SpielmanT11-SecondJournal,Sherman13,KelnerLOS14}, distributed
algorithms \cite{Censor-HillelHK17,ChangPZ19,ChangS19,EdenFFKO19},
and dynamic algorithms \cite{NanongkaiS17,Wulff-Nilsen17,NanongkaiSW17}.

In this paper, we propose a stronger notion of the expander decomposition. For parameters
$\alpha,\phi >0$, we define an \emph{$(\alpha,\phi)$-boundary-linked
expander decomposition} of $G$ as a partition $\uset=(U_{1},\dots,U_{k})$
of $V$, together with $\phi_{1},\dots,\phi_{k}\ge\phi$, such that:
\begin{enumerate}
\item \label{enu:boundary}$\sum_{i}|E(U_{i},V\setminus U_{i})|=\tilde{O}(\phi m)$.
\item \label{enu:linked}For all $i$, $G[U_{i}]^{\alpha/\phi_{i}}$ is
a $\phi_{i}$-expander.
\item \label{enu:sparse cut}For all $i$, $|E(U_{i},V\setminus U_{i})|\le\tilde{O}(\phi_{i}\vol_G(U_{i}))$.
\end{enumerate}
Compared to the the previous definition in \cite{KannanVV04}, we strengthen Property \ref{enu:linked}
and additionally require Property \ref{enu:sparse cut}.
%\footnote{Later, in \Cref{def:decomp}, we give a more general definition for the decomposition for a subset $U\subseteq V$ in $G$.} 
Before discussing the power of the new decomposition, we start with 
intuitive observations how it more faithfully captures the clustering structure of graphs.
%We now briefly motivate the definition by showing how it more faithfully captures the clustering structure of graphs.
It is instructive to
think of $\alpha=1/\polylog(n)$ and $\phi=1/n^{o(1)}\ll\alpha$. 

%\zihan{I think this motivation in this paragraph is not convincing enough. It is not clear why the ``more inside than outside" property is necessary.} 
Intuitively, in a good graph clustering, vertices in a cluster are better connected to the inside of the cluster than to its outside.
Observe that the stronger form of Property \ref{enu:linked} implies that, for every vertex $v\in U_{i}$, $\deg_{G[U_{i}]}(v)\ge\alpha\cdot\deg_{G[V\setminus U_{i}]}(v)$. Without this strengthening, there could be a vertex
$v\in U_{i}$ where $\deg_{G[U_{i}]}(v)\ll\deg_{G[V\setminus U_{i}]}(v)$,
which is counter-intuitive.% because $v$ connects to outside the cluster much more than within the cluster. %
\begin{comment}
More generally, Property \ref{enu:linked} implies that $|E_{G}(S,U_{i}\setminus S)|\ge\alpha|E_{G}(S,V\setminus U_{i})|$
for all $S\subset U_{i}$ where $\vol_{G[U_{i}]^{\alpha/\phi_{i}}}(S)\le\vol_{G[U_{i}]^{\alpha/\phi_{i}}}(U_{i})/2$.
\end{comment}
{} Moreover, Property \ref{enu:sparse cut} additionally implies that
$\deg_{G[V-U_{i}]}(v)\le\tilde{O}(\phi_{i})\cdot\deg_{G[U_{i}]}(v)$
for most $v\in U_{i}$. That is, most vertices also have few connection to the outside of the cluster which again matches our intuitive understanding of a good graph
clustering.

We say that the decomposition $\uset$ has slack $s \ge 1$ if we relax Property \ref{enu:linked} as follows:
for all $i$, $G[U_{i}]^{\alpha/\phi_{i}}$ is a $(\phi_{i}/s)$-expander. We say that $\uset$ has no slack if $s=1$.

\begin{comment}
The name \emph{boundary-linked} comes from the fact that each boundary
edges $E(U_{i},V\setminus U_{i})$ of cluster $U_{i}$ contributes
the large weight of $\alpha/\phi_{i}$ to the volume in $G[U_{i}]^{\alpha/\phi_{i}}$
\end{comment}
{} 

\paragraph{The Expander Hierarchy.}
Let $\uset$ be an $(\alpha,\phi)$-boundary-linked expander decomposition
of $G$. Suppose that we contract each cluster $U_{i}\in\uset$ into
a vertex where we keep parallel edges and removes self-loops. Let
$G_{\uset}$ denote the contracted graph. Observe that $\vol(G_{\uset})=\sum_{i}|E_{G}(U_{i},V\setminus U_{i})|=\tilde{O}(\phi m)$.
So $\vol(G_{\uset})\ll\vol(G)$ for small enough $\phi$. Repeating
the process of decomposition and contraction leads to the following
definition. A sequence of graphs $(G^{0},\dots,G^{t})$ is an \emph{$(\alpha,\phi)$-expander
	decomposition sequence} of $G$ if $G^{0}=G$, $G^{t}$ has no edges,
and for each $i\ge0$, there is an $(\alpha,\phi)$-boundary-linked
expander decomposition $\uset_{i}$ where $G^{i+1}=G_{\uset_{i}}^{i}$.

Now, observe that the sequence $(G^{0},\dots,G^{t})$ naturally corresponds
to a tree $T$ where the set of vertices at level $i$ of $T$ corresponds
to vertices of $G^{i}$ (i.e.~$V_{i}(T)=V(G^{i})$) and if a vertex
$u_{i}\in V(G^{i})$ is contracted to a super-vertex $u_{i+1}\in V(G^{i+1})$,
then we add an edge $(u_{i},u_{i+1})\in T$ with weight $\deg_{G^{i}}(u_{i})$.
%\footnote{Strictly speaking, $T$ is a forest if $G$ is not connected.}
We call this tree an \emph{$(\alpha,\phi)$-expander hierarchy}, which is
the central object of this paper. We say that $T$ has slack $s$ if each $\uset_i$ from the $(\alpha,\phi)$-expander decomposition sequence has slack at most $s$.

Our main result shows that an expander hierarchy is robust enough to be maintained
under edges updates in subpolynomial update time.

\begin{thm}
\label{thm:main}There is an algorithm that, given an $n$-vertex unweighted graph $G$ undergoing edge insertions and deletions, explicitly maintains, with high probability, a $(1/\polylog(n),2^{-O(\log^{3/4}n)})$-expander hierarchy of $G$ with depth $O(\log^{1/4}n)$ and slack $2^{-O(\log^{1/2}n)}$ in $2^{O(\log^{3/4}n)}$ amortized update time. The algorithm works against an \textbf{adaptive} adversary.
\end{thm}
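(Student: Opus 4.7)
The plan is to reduce the dynamic maintenance of the whole hierarchy to a dynamic algorithm for a single boundary-linked expander decomposition, and then stack those decompositions across levels. First, I would build a static algorithm that, given $G$ and target parameters $\alpha,\phi$, produces an $(\alpha,\phi)$-boundary-linked expander decomposition $\uset = (U_1,\dots,U_k)$ satisfying the three required properties. The starting point is the standard cut-matching / trimming scheme used for the classical expander decomposition of~\cite{KannanVV04}, adapted in two essential ways: the trimming phase is carried out with respect to the boundary-weighted graph $G[U_i]^{\alpha/\phi_i}$ (so that every boundary edge is charged $\alpha/\phi_i$ against the volume of its endpoint, which is what the strengthened Property~2 demands), and a sparse-cut certificate is enforced on each $U_i$ during trimming so that Property~3 also holds. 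With an adaptive per-component choice of $\phi_i$ and a $\polylog(n)$-factor absorbed into the slack, this should yield all three properties.

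Next, I would dynamize this construction. For edge deletions I would design a generalized expander pruning routine: given a deletion inside a cluster $U_i$, prune a small set $P \subseteq U_i$ so that $G[U_i \setminus P]^{\alpha/\phi_i'}$ remains a $(\phi_i'/s)$-expander for a slightly worse $\phi_i'$ and slightly larger slack $s$; the vertices in $P$ are spilled into the contracted graph $G^{i+1}$, causing a small batch of edge insertions/deletions one level up that are handled recursively. Edge insertions that do not violate the boundary-linked conditions are absorbed in place; those that do mark their clusters ``dirty,'' and a standard batched rebuild via the static algorithm restores the invariants within the claimed amortized update budget. Because $\vol(G^{i+1}) = \tilde O(\phi) \cdot \vol(G^i)$, setting $\phi = 2^{-O(\log^{3/4} n)}$ yields depth $O(\log n / \log(1/\phi)) = O(\log^{1/4} n)$, and the per-level cost $2^{O(\log^{3/4} n)}$ multiplied across $O(\log^{1/4} n)$ levels still fits into the claimed amortized $2^{O(\log^{3/4} n)}$ bound.

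The main obstacle is the generalized pruning subroutine. The strengthened Property~2 counts each boundary edge $\alpha/\phi_i$ times in the local volume, so a single deletion can destabilize $G[U_i]^{\alpha/\phi_i}$ far more than in the classical Saranurak--Wang setting; the potential argument that charges pruned volume against deleted edges only closes after paying exactly the slack allowed in the theorem statement, and getting the trade-off right between $\phi$, $\alpha$, and the slack is where most of the technical work lies. A secondary difficulty is controlling the recursion between levels: pruning at level $i$ creates updates at level $i{+}1$, and one must verify that the cascading recourse over $O(\log^{1/4} n)$ levels contributes only a $\polylog$-type overhead rather than a geometric blow-up. Finally, to handle an adaptive adversary, every internal random choice (e.g.~inside the cut-matching subroutine) must be kept private, so that the adversary, who only sees the current hierarchy and the answers to structural queries, cannot steer the algorithm into worst-case random outcomes; this follows standard patterns once the pruning analysis is made oblivious to the specific random cuts used during construction.
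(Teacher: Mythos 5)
Your high-level architecture (static boundary-linked decomposition via cut-matching plus trimming, then level-by-level dynamization with pruning and contraction) matches the paper, but there is a genuine gap at the step that is actually the heart of the theorem: controlling the \emph{recourse} (the number of induced updates to the contracted graph $G^{i+1}$) per level. In your scheme the pruned vertices of $P$ are ``spilled into the contracted graph,'' which creates $\Theta(\vol_G(P)) = \Theta(k/\phi)$ new inter-cluster edges for $k$ updates, i.e.\ amortized recourse $\Theta(1/\phi) = 2^{\Theta(\log^{3/4}n)}$ per level. Since recourse compounds multiplicatively over the $t = \Theta(\log^{1/4}n)$ levels, this gives update time $(1/\phi)^{\Omega(t)} = n^{\Omega(1)}$, contradicting the claimed bound; you flag this as a ``secondary difficulty'' to be verified, but with your construction it is not a verification issue, it is a geometric blow-up that the construction as described cannot avoid. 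Relatedly, your final accounting (``per-level cost $2^{O(\log^{3/4}n)}$ multiplied across $O(\log^{1/4}n)$ levels still fits'') is arithmetically wrong: the quantity that multiplies across levels is the recourse, and it must itself be kept at $2^{O(\log^{1/2}n)}$ (or $\polylog$) for the product $\rho^{t}\cdot\tau$ to stay at $2^{O(\log^{3/4}n)}$.

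The paper closes exactly this gap by using all three properties of the boundary-linked decomposition in the recourse analysis, and this is the missing idea in your proposal. First, the pruned set $P$ is not spilled as singletons but is \emph{re-decomposed} with the static algorithm, so its contribution to the new boundary is $O(|E(P,V\setminus P)|)+\tilde O(\phi\vol(P))$ rather than $\vol(P)$. Second, Property~2 (boundary-linkedness, with $\alpha = 1/\polylog(n) \gg \phi$) is what bounds the edges from $P$ to \emph{outside} the cluster by $O(k/\alpha)$ — this is precisely why the generalized pruning theorem carries the extra guarantee $|E(P,V\setminus U)| \le O(k/\alpha)$ — and Property~3 is what bounds the recourse when a cluster has absorbed too many updates and must be fully reset. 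Together these give amortized recourse $\tilde O(1/\alpha)$ per batch, which is what makes $\rho^{t}$ subpolynomial. Finally, for a genuine online sequence of updates the dynamically changing pruned set can still accumulate large total recourse even though it is small at every fixed time; the paper handles this with an NSW-style multi-level pruning scheme with $h = O(\log^{1/2}n)$ levels, and it is this scheme (not the trimming potential argument you describe) that produces the specific slack $2^{O(\log^{1/2}n)}$ and the $2^{O(\log^{1/2}n)}$ recourse appearing in the theorem. Your proposal contains neither the recourse mechanism nor the multi-level scheduling, so as written it does not yield the claimed update time or explain the stated slack. (The adaptive-adversary claim, by contrast, needs no privacy argument: the only randomness is a one-sided w.h.p.\ certification inside cut-matching, and a union bound over all calls suffices.)
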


The algorithm in Theorem~\ref{thm:main} can be both \emph{derandomized} and \emph{deamortized}
with essentially the same guarantee up to subpolynomial factors. We note however that the deamortized algorithm does not explicitly maintain the hierarchy, but supports queries of the following form: given a vertex $u$ of $G$, return 
 a leaf-to-root path of $u$ in the hierarchy
%a path connecting the vertex $u$ that serves as a leaf in the hierarchy to the root of the hierarchy,
in $O(\log^{1/4}n)$ time.\footnote{This kind of guarantee is similar to the dynamic matching algorithm
by \cite{BernsteinFH19} with worst-case update time.} 

\subsection{Applications}
\label{subsec:applications}
%Now, we discuss the power of expander hierarchies.
The dynamic algorithm for maintaining an expander hierarchy in \Cref{thm:main}
and its derandomized and deamortized counterpart immediately imply a number
of applications in dynamic graph algorithms. 
\Cref{tab:application} shows that the high-level algorithm for each application can be described in only one or two sentences.
%See \Cref{tab:application} for a high-level description of the algorithms for the applications. 
%(although the analysis of correctness can be non-trivial). 

Below, we discuss the contribution of each application. 
We say a dynamic algorithm is \emph{fully dynamic} if it handles both
edge insertions and deletions. Otherwise, it is \emph{incremental}
or \emph{decremental}, meaning that it handles only insertions
or only deletions of the edges, respectively. 

\begin{table}
\begin{tabular}{|>{\raggedright}p{0.33\textwidth}|>{\raggedright}p{0.65\textwidth}|}
\hline 
\textbf{Applications}  & \textbf{How to obtain from the expander hierarchy $T$ }\tabularnewline
\hline 
\hline 
Tree flow sparsifier & Return $T$ itself.\tabularnewline
\hline 
Vertex cut sparsifier w.r.t.~a terminal set $C$  & Return the union of root-to-leaf paths of $T$ over all vertices $u\in C$.
Denoted it by $T_{C}$.\tabularnewline
\hline 
$s$-$t$ max flow, $s$-$t$ min cut, multi-commodity cut/flow with
a terminal set $C$  & Solve the problem on $T_{\{s,t\}}$ or $T_{C}$, i.e.~the vertex
cut sparsifier defined in the line above.\tabularnewline
\hline 
Conductance and sparsest cut  & Implement the $\mathsf{Top}$ $\mathsf{Tree}$ data structure on $T$.\tabularnewline
\hline 
Pairwise connectivity & Given $u$ and $v$, check if the roots of $u$ and of $v$ in $T$
are the same.\tabularnewline
\hline 
Treewidth decomposition on constant degree graphs  & Return $T$ itself. For each level-$i$ node $x\in V_{i}(T)$ that
corresponds to a vertex $u_{i}\in V(G^{i})$, the bag $B_{x}$ contains
the original endpoints in $G$ of the edges incident to $u_{i}$ in
$G^{i}$.\tabularnewline
\hline 
\end{tabular}

\caption{Applications of an expander hierarchy $T$ of depth $\protect\dep(T)$, that originates from the expander decomposition sequence $(G^{0},\dots,G^{\protect\dep(T)})$ of $G$. All problems are on unweighted graphs. \label{tab:application}}
\end{table}

\paragraph{Tree Flow Sparsifiers.}
Our first application is the \emph{first} non-trivial fully-dynamic
(or even decremental) algorithm for \emph{tree flow sparsifiers},
which will be used to obtain many other applications in the paper. Intuitively, a tree
flow sparsifier is a tree that approximately captures the flow/cut structure in a graph. As the formal definition of tree flow
sparsifiers is a bit hard to digest, here we define a simpler and an almost equivalent notion of \emph{tree cut sparsifiers. }A\emph{ tree
cut sparsifier $T$ }of a graph $G=(V,E)$ with quality $q$ is a
weighted rooted tree such that (i) the leaves of $T$ corresponds to the vertex set $V$ of $G$, and (2) for any pair $A,B$ of disjoint subsets of $V$, $\mincut_{T}(A,B)\le\mincut_{G}(A,B)\le q\cdot\mincut_{T}(A,B)$
where, for any graph $H$, $\mincut_{H}(A,B)$ denotes the value of a minimum cut separating $A$ from $B$ in $H$. 

%\zihan{last paragraph we have just defined the cut sparsifier, and in this section we switch to discuss flow sparsifiers?}
Tree flow sparsifiers have been extensively studied in the static
setting \cite{BartalL99,Racke02,HarrelsonHR03,BienkowskiKR03,RackeS14,RackeST14}
and have found many applications in approximation algorithms \cite{AndreevGGMM09,BansalFKMNNS14,ChekuriKS13,Racke08,RackeS14}
and fast algorithms for computing max-flow \cite{Sherman13,KelnerLOS14}. Currently,
the fastest algorithm for computing a tree flow sparsifier takes $\tilde{O}(m)$ time to produce a sparsifier of quality $O(\log^{4}n)$
with high probability \cite{RackeST14,Peng16}. 
However, only little progress has been obtained in the dynamic setting. 
Very recently, Goranci,
Henzinger, and Saranurak \cite{GoranciHS19inc} show that, by calling
the static algorithm of \cite{RackeST14} in a blackbox manner, they
can obtain an \emph{incremental} algorithm\footnote{They also show the same trade-off for the weaker \emph{offline} fully
	dynamic setting where the whole sequence of updates and queries is
	given from the beginning.}  for maintaining tree flow
sparsifiers with $\log^{O(\ell)}n$ quality in $\tilde{O}(n^{1/\ell})$
worst-case update time, for any $\ell>1$, but their technique inherently could not handle edge deletions.
The major reason for this lack of progress in dynamic algorithms is the fact that all existing static constructions for tree flow sparsifiers work in a \emph{top-down} manner, which is difficult to dynamize.
%===========
%Unfortunately, all previous algorithms inherently have the same
%drawback that it might require recomputing from scratch, after a few adversarial edge deletions. The high-level reason is that all previous work construct the tree flow sparsifier in a complicated \emph{top-down} manner, which
%is difficult to dynamize.

We show that an expander hierarchy is itself a tree flow sparsifier. In particular, this  hierarchy implies the first \emph{static}
tree flow sparsifier based on a \emph{bottom-up} clustering algorithm and is arguably the conceptually simplest of all
known constructions. The key feature of this construction is that it can be maintained dynamically, as summarized in the following corollary.

\begin{cor}
\label{thm:tree-flow-sparsifier}There is a fully dynamic deterministic
algorithm on an unweighted $n$-vertex graph that explicitly maintains
a tree flow sparsifier with quality $n^{o(1)}$ in $n^{o(1)}$
amortized update time.
\end{cor}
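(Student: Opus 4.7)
The plan is to invoke the deterministic and deamortized variant of Theorem~\ref{thm:main} to maintain, in $n^{o(1)}$ amortized update time, an $(\alpha,\phi)$-expander hierarchy $T$ of depth $d=O(\log^{1/4}n)$ with $\alpha=1/\polylog(n)$ and slack $s=2^{O(\log^{1/2}n)}$. The sparsifier returned is $T$ itself (up to an $n^{o(1)}$ rescaling of its tree-edge capacities), so once the hierarchy is maintained the update-time bound is immediate and the entire remaining task is to verify that $T$ has quality $q=n^{o(1)}$ as a tree flow sparsifier.

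By LP duality it suffices to exhibit two low-congestion flow embeddings between $T$ and $G$. For the forward direction $T\hookrightarrow G$, I would use the identity $\deg_{G^i}(u_i)=|E_G(V_{u_i},V\setminus V_{u_i})|$: the $\deg_{G^i}(u_i)$ capacity units of tree edge $(u_i,u_{i+1})$ are embedded by sending one unit along each of the $G$-edges on the boundary of the cluster $V_{u_i}$. Because a single $G$-edge $(x,y)$ lies on the boundary of at most $2d$ ancestor clusters (two per level, up through the level of $\mathrm{LCA}(x,y)$ in $T$), the resulting congestion in $G$ is $O(d)=n^{o(1)}$, which yields $\mincut_T(A,B)\le n^{o(1)}\cdot\mincut_G(A,B)$. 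For the backward direction $G\hookrightarrow T$, I would aggregate any $G$-feasible flow into $T$ level by level using an oblivious routing inside each cluster. At level $i$ the boundary-linked property says $G[U]^{\alpha/\phi_i}$ is a $(\phi_i/s)$-expander, so a standard oblivious expander routing handles any demand supported on $U$'s boundary with congestion $\tilde O(s/\phi_i)$ in $G[U]^{\alpha/\phi_i}$; crucially, since each boundary edge contributes $\alpha/\phi_i$ self-loops to the degree of its endpoint, the boundary demand is a $\phi_i/\alpha$-fraction of the vertex degree, so the congestion on actual $G$-edges of $G[U]$ is only $\tilde O(s/\alpha)$ per level, independent of $\phi_i$. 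Composing across the $d$ levels of $T$ gives total congestion $\bigl(\tilde O(s/\alpha)\bigr)^d=2^{O(\log^{3/4}n)}=n^{o(1)}$, i.e., $\mincut_G(A,B)\le n^{o(1)}\cdot\mincut_T(A,B)$.

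The main obstacle is the backward direction, and the reason it succeeds is precisely Property~\ref{enu:linked} of the boundary-linked decomposition. Without the $\alpha/\phi_i$ self-loops, a routing inside a raw $\phi_i$-expander cluster would cost $\tilde O(1/\phi_i)=2^{\Omega(\log^{3/4}n)}$ per level, and raising this to the $d$-th power would give $n^{\omega(1)}$ congestion, defeating the $n^{o(1)}$ goal. The self-loops neutralize the $1/\phi_i$ factor, leaving only a benign $\tilde O(s/\alpha)$ per level; Property~\ref{enu:sparse cut} complements this by ensuring that after passing through one level the residual demand at the next level is again proportional to that cluster's boundary, so the inductive step closes cleanly. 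Minor bookkeeping remains, such as rescaling tree capacities by the $O(d)$ factor from the forward direction so that both inequalities fit the definition in \Cref{subsec:applications}, and checking that the per-level oblivious routing can be computed deterministically so that the resulting sparsifier is deterministic as claimed.
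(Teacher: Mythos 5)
Your proposal takes essentially the same route as the paper: maintain the (derandomized) dynamic hierarchy and argue that the expander hierarchy is itself a tree flow sparsifier, bounding the quality by routing, at each level, the demands induced between a cluster's boundary edges inside that cluster with congestion $\tilde{O}(s\log m/\alpha)$ via boundary-linkedness and composing over the $t$ levels --- this is exactly the combination of \Cref{thm:deterministic_dynamic_ED}, \Cref{lem:restricted_demand}, \Cref{lem:helper} and \Cref{thm:EH_gives_R_tree}. Two harmless inaccuracies: the deterministic variant only yields $\alpha=2^{-\bar{O}(\log^{2/3}n)}$, $\phi=2^{-O(\log^{5/6}n)}$ and depth $O(\log^{1/6}n)$ rather than the parameters of \Cref{thm:main}, and the per-level cost is not independent of $\phi_i$ for demand endpoints at the leaves (these are degree-restricted, not boundary-restricted, so one factor $\tilde{O}(s/\phi)$ survives, matching the paper's quality $O(s\log m)^t\cdot O(\max\{1/\alpha,1/\phi\}/\alpha^{t-1})$), while the forward direction can in fact be done losslessly (congestion $1$) without your $O(d)$ rescaling --- none of which affects the $n^{o(1)}$ conclusion.
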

There has been recent interest in designing dynamic algorithms for maintaining trees that preserve important features or graphs, e.g., distances. One example is the work on dynamic low-stretch spanning trees that achieves sub-polynomial stretch~\cite{ForsterG19,ChechikZ20}, while in the static there are constructions that give nearly logarithmic stretch~\cite{Abraham2012using}. Driven by this, our work can be thought as a first step in understanding dynamic algorithms for maintaining trees that preserve the cut/flow structure of graphs.

\begin{comment}
Vertex Cut Sparsifiers.

Informally, vertex cut sparsifiers of a graph $G=(V,E)$ w.r.t. terminal
set $C\subseteq V$ are graphs whose number of vertices is much fewer
than $|V(G)|$ but preserves all relevant cut information about the
set $C$.

Given a $q$-quality tree cut sparsifier $T$ of $G$ and the set
$C\subseteq V$, let $T_{C}$ be the union of root-to-leaf paths in
$T$ over all vertices $u\in C$. One can prove that, for any two
disjoint sets $A,B\subset C$, $\mincut_{T_{C}}(A,B)\le\mincut_{G}(A,B)\le q\cdot\mincut_{T_{C}}(A,B).$
That is, $T_{C}$ is a vertex cut sparsifier with respect to $C$
with quality $q$. Combining this fact with 
\begin{cor}
[Vertex cut sparsifier]There is a fully dynamic deterministic algorithm
on an unweighted $n$-vertex graph with $n^{o(1)}$ worst-case update
time such that, given a terminal set $C$, it returns a $n^{o(1)}$-quality
vertex cut sparsifier w.r.t. $C$ in $O(|C|\log n)$ time.
\end{cor}

\end{comment}

\paragraph{Flow/Cut-based Problems.}
Using the above theorem, we improve upon the previous results on a wide-range of dynamic cut and flow problems whose previous fully dynamic (and even
decremental) algorithms either require $\Omega(n)$ update time\footnote{This includes the incremental exact max flow algorithm by \cite{GuptaK18}
and the dynamic conductance algorithm by \cite{BrandNS19}.} or $\Omega(n)$ query time\footnote{This is by using dynamic graph sparsifiers \cite{AbrahamDKKP16} and
running static algorithms on top of the sparsifier.}, as summarized in the following corollary.
\begin{cor}
\label{cor:cut and flow}There is a fully dynamic deterministic algorithm
on an unweighted $n$-vertex graph with $n^{o(1)}$ update
time that can return an $n^{o(1)}$-approximation to queries of the following problems:
\begin{enumerate}[noitemsep,topsep=0pt]
\item \label{maxflow} $s$-$t$ maximum flow, $s$-$t$ minimum cut;
\item \label{cut}lowest conductance cut, sparsest cut; and
\item\label{multi} multi-commodity flow, multi-cut, multi-way cut, and vertex cut sparsifiers.
\end{enumerate}
For problems in \ref{maxflow} and \ref{cut}, the query time is $O(\log^{1/6}n)$, while for problems in \ref{multi}, it is $O(|C|\log^{1/6}n)$ where $C$ is the terminal set of the respective problem.
For problems in \ref{maxflow} and \ref{multi}, the update time is worst-case, while it is amortized for problems in \ref{cut}.
\end{cor}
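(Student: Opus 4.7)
The plan is to maintain an expander hierarchy $T$ by invoking \Cref{thm:main} in its derandomized and deamortized form, and to recall that $T$ is itself a tree flow sparsifier of quality $q=n^{o(1)}$ by \Cref{thm:tree-flow-sparsifier}, which in particular yields the tree cut guarantee $\mincut_T(A,B)\le\mincut_G(A,B)\le q\cdot\mincut_T(A,B)$ for every pair of disjoint terminal sets $A,B\subseteq V$. Together with LP duality between multi-commodity flows and cuts, this inequality immediately gives the claimed $n^{o(1)}$-approximation for every listed problem, so what remains is purely data-structural: to answer each query efficiently on $T$.

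For $s$-$t$ min cut and $s$-$t$ max flow in (\ref{maxflow}), the $s$-$t$ min cut on $T$ equals the minimum-weight edge on the unique $s$-$t$ path and, on a tree, coincides with the $s$-$t$ max flow; I would therefore augment $T$ with a $\mathsf{Top\ Tree}$ supporting path-minimum queries. For (\ref{multi}), given a terminal set $C$, I would extract $T_C$, the union of root-to-leaf paths of the vertices in $C$; because $T$ has polylogarithmic depth, $|T_C|=\widetilde{O}(|C|)$, and on a tree each of multi-commodity max flow, multi-cut, multi-way cut, and vertex cut sparsifier admits an exact algorithm linear in $|T_C|$. For the global cut problems in (\ref{cut}), the optimal conductance (and sparsest) cut of $T$ is realized by deleting a single edge $e$ and comparing $w(e)$ against the volumes of the two resulting leaf-sets, so I would augment the $\mathsf{Top\ Tree}$ with subtree-volume aggregates and keep the current optimum accessible at the root. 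The stated $O(\log^{1/6}n)$ and $O(|C|\log^{1/6}n)$ query times will follow after tuning the hierarchy parameters so that depth and $\mathsf{Top\ Tree}$ cost combine favorably.

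The main obstacle I anticipate is propagating structural changes of $T$ into the $\mathsf{Top\ Tree}$ while staying within the $n^{o(1)}$ update budget: a single edge update in $G$ can touch several levels of the hierarchy, and each hierarchy edit must be absorbed by the $\mathsf{Top\ Tree}$ with consistent aggregates. A second subtlety is that, in the deamortized regime, \Cref{thm:main} does not materialize $T$ explicitly and exposes it only through root-to-leaf path queries, so both the $\mathsf{Top\ Tree}$ and the multi-commodity extraction routine must be adapted to this query-only interface. Finally, I would separately check that the conductance and sparsest-cut objectives, being ratios rather than pure cut values, are preserved up to a factor $q$; this does not follow from the cut-sparsifier inequality alone and requires a short independent argument that the leaf-volumes aggregated in $T$ correctly encode $\vol_G(\cdot)$ on both sides of the chosen tree-edge.
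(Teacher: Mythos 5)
Your high-level route is the paper's: maintain the hierarchy, use that it is a tree flow sparsifier of quality $n^{o(1)}$ (\Cref{thm:EH_gives_R_tree}), restrict to the union of leaf-to-root paths of the terminals, and solve the tree problem there; the ratio-objective subtlety you flag for conductance/sparsest cut is real and is handled exactly as you suspect, by observing that the leaf-count/volume on each side of a tree edge equals the corresponding quantity in $G$, so only the cut value (numerator) picks up the factor $q$.

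Where your plan diverges -- and where the unresolved ``main obstacle'' you name is a genuine gap -- is the data-structural layer for items~\ref{maxflow} and \ref{multi}. You propose a persistent $\mathsf{Top\ Tree}$ over $T$ for $s$-$t$ path-minimum queries, but the corollary claims \emph{worst-case} update time for these items, and in the deamortized regime (\Cref{thm:derand_deamort_main}) the hierarchy is not maintained explicitly: the algorithm switches between hierarchies built in the background, so a single update can implicitly change $T$ by a large amount, and any auxiliary structure that must track $T$ edge-by-edge (your Top Tree) forfeits the worst-case bound. The paper sidesteps this entirely: since the depth is $O(\log^{1/6}n)$, a query for terminal set $C$ simply issues $|C|$ leaf-to-root path queries, assembles the subtree $T_C$ of size $O(|C|\log^{1/6}n)$ from scratch, and computes the answer on $T_C$ directly (for $s$-$t$ flow/cut this is just a minimum over the two paths); no persistent structure on top of $T$ is needed, which is precisely why items~\ref{maxflow} and \ref{multi} get worst-case updates while item~\ref{cut}, which needs a global aggregate over the whole explicit tree (subtree leaf-counts/volumes and the minimum-sparsity edge, as in your plan), falls back to the amortized, explicitly maintained hierarchy of \Cref{thm:deterministic_dynamic_ED}. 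One further nit: minimum multicut is NP-hard even on trees, so your claim of an exact linear-time tree algorithm is not right as stated; a constant-factor tree approximation (e.g.\ Garg--Vazirani--Yannakakis) suffices since the overall guarantee is only $n^{o(1)}$.
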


\begin{comment}
All algorithms above exploit the fact that a expander hierarchy is
a tree flow sparsifier and that we can maintain expander hierarchy
quickly. 
\end{comment}
Previous sub-linear time algorithms are known only for the incremental
setting. There are incremental $\log^{O(\ell)}n$-approximation algorithms
with $O(n^{1/\ell})$ update time for all the problems
in \Cref{cor:cut and flow} \cite{GoranciHS19inc}.

Although our approximation ratio of $n^{o(1)}$ is moderately high, we believe that our results might serve as an efficient building block to $(1+\epsilon)$-approximation dynamic max flow and minimum cut algorithms,
analogous to the previous development in the static setting: A fast
static $n^{o(1)}$-approximate (multi-commodity) max flow algorithm
was first shown by Madry \cite{Madry10}, and later, the $(1+\epsilon)$-approximate
algorithms were devised \cite{Sherman13,KelnerLOS14,Peng16,Sherman17}
by combining Madry's technique with the \emph{gradient-descent-based
method}. Although the gradient-descent-based method in the dynamic
setting is currently unexplored, we hope that our result will motivate
further investigation in this interesting direction.

\paragraph{Connectivity: Bypassing the NSW Framework.}
Very recently, Chuzhoy et al.~\cite{ChuzhoyGLNPS19det} combine their
new balanced cut algorithm with the framework of Nanongkai, Saranurak,
and Wulff-Nilsen (NSW) \cite{NanongkaiSW17}, and obtain a deterministic
dynamic connectivity algorithm with $n^{o(1)}$ worst-case update
time, answering a major open problem of the field. 

Here, we show a significantly simplified algorithm which completely bypasses 
the complicated framework by \cite{NanongkaiSW17}.
% and  to some extent tailored to (minimum) spanning forests.
Our algorithm simply follows from the observation
that a graph $G$ is connected iff the top level of an expander hierarchy
$T$ of $G$ contains only one vertex. Also, two vertices $u$ and
$v$ are connected iff the roots of $u$ and $v$ in $T$ are the
same. Interestingly, our algorithm is the first algorithm for dynamic connectivity problem that does \emph{not} explicitly maintain a spanning forest.
\begin{cor}
\label{cor:conn}There is a fully dynamic deterministic algorithm
on an $n$-vertex graph $G$ that maintains connectivity of $G$ using
$n^{o(1)}$ worst-case update time and also supports pairwise connectivity
in $O(\log^{1/6}n)$ time.
\end{cor}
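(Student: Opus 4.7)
The plan is to obtain the algorithm as an almost immediate corollary of the (derandomized, deamortized) version of \Cref{thm:main}. Invoke that algorithm on $G$ to maintain, in $n^{o(1)}$ worst-case time per edge insertion/deletion, an expander hierarchy $T$ of depth $O(\log^{1/4}n)$ together with its associated expander decomposition sequence $(G^{0},\dots,G^{t})$. The first step is to verify the following structural fact: for every level $i$, two vertices of $G^{i}$ belong to the same connected component of $G^{i}$ if and only if they are contracted to a common super-vertex at some level $j \ge i$ of the hierarchy. This is because every cluster $U_{i}$ in a boundary-linked expander decomposition is, by virtue of Property~\ref{enu:linked}, a $\phi_{i}$-expander (and hence connected) on $G[U_{i}]$, and contracting connected subgraphs preserves the connectivity partition of the ambient graph. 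Iterating from $i=0$ up to the root, we obtain the two equivalences advertised before the corollary: (a) $u,v \in V(G)$ are connected in $G$ iff they share the same top-level ancestor (``root'') in $T$, and (b) $G$ is connected iff the top level $V(G^{t})$ of $T$ consists of a single node.

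Given this, maintaining \emph{global} connectivity is immediate: alongside the hierarchy, the algorithm keeps a counter storing $|V(G^{t})|$, updated whenever the top level of $T$ changes during the $n^{o(1)}$-time update; $G$ is connected precisely when this counter equals $1$. For \emph{pairwise} connectivity queries $(u,v)$, the task reduces to deciding whether the roots of $u$ and $v$ in $T$ coincide. A naive implementation would walk from each leaf up to the root and compare identifiers, which already yields $O(\log^{1/4}n)$ query time using the leaf-to-root path primitive of \Cref{thm:main}. To sharpen this to $O(\log^{1/6}n)$, I would layer a $\mathsf{Top}$ $\mathsf{Tree}$ data structure on top of $T$ (the same device used for the conductance application in \Cref{tab:application}); the top tree lets us retrieve the identifier of the root of any leaf in time proportional to the depth of the top-tree representation of $T$, which is $O(\log^{1/6}n)$, and equality of the two retrieved identifiers answers the query. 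All changes to $T$ produced by the dynamic algorithm of \Cref{thm:main} translate into $n^{o(1)}$ elementary link/cut operations on the top tree, each handled in polylogarithmic time, so the overall update time remains $n^{o(1)}$ worst-case.

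The main conceptual obstacle is not in the query logic — which is a one-line consequence of the structure of $T$ — but in ensuring that the connectivity-relevant parts of $T$ are actually maintained explicitly and quickly enough: the deamortized variant of \Cref{thm:main} only promises to answer leaf-to-root path queries in $O(\log^{1/4}n)$ time rather than explicitly store $T$, so we must argue that it additionally exposes enough interface (top-level vertex count, and the sequence of link/cut events applied to $T$) to drive both the global counter and the top tree. This is routine given the description of the dynamic hierarchy but is the one place where care is needed. Determinism and the worst-case nature of the bounds both follow directly from the derandomized, deamortized version of \Cref{thm:main}, completing the corollary.
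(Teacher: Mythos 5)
Your core reduction is exactly the paper's: $G$ is connected iff the top level of the hierarchy is a single vertex, and $u,v$ are connected iff their leaf-to-root paths end at the same root (the paper leaves even the contraction-preserves-connectivity justification implicit). The divergence, and the genuine gap, is in how you reach the $O(\log^{1/6}n)$ query bound. You assume the hierarchy has depth $O(\log^{1/4}n)$, which is the parameter setting of the \emph{randomized} algorithm in \Cref{thm:main}; but the corollary demands determinism, so you are forced to the derandomized, deamortized data structure of \Cref{thm:derand_deamort_main}, which uses $\phi=2^{-O(\log^{5/6}n)}$, has depth $O(\log^{1/6}n)$, and already answers leaf-to-root path queries in $O(\log^{1/6}n)$ time. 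The paper's proof is therefore just: query the two leaf-to-root paths and compare roots. No top tree is needed, and the ``sharpening'' you add is where your argument breaks.

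Concretely, the top-tree layer does not deliver what you claim. A top tree over an $n$-node tree has height $\Theta(\log n)$ and supports its operations (including retrieving the root of a component) in $\Theta(\log n)$ time, independent of the depth of $T$; so this route gives $O(\log n)$, not $O(\log^{1/6}n)$, query time. Worse, maintaining a top tree requires that each graph update translate into an explicit, small set of link/cut changes to $T$, but the deamortized algorithm expressly does \emph{not} maintain $T$ explicitly: it builds replacement hierarchies in the background and switches to them, and at a switch the tree can change almost entirely, so the recourse to $T$ per update is not $n^{o(1)}$ in the worst case. This is exactly the caveat stated after \Cref{thm:main}, and it is why the paper phrases all query-time applications (including this one) purely in terms of the leaf-to-root path interface rather than an auxiliary dynamic tree structure. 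Dropping the top tree and reading off the bounds of \Cref{thm:derand_deamort_main} repairs the proof and recovers the paper's argument; your global-connectivity counter on $|V(G^{t})|$ is fine at the paper's level of detail.
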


\paragraph{Treewidth.}
Computing a \emph{treewidth decomposition} with approximately minimum
width is a core problem in the area of fixed-parameter tractable algorithms
\cite{RobertsonS95b,Bodlaender96,Reed92,FeigeM06,Amir10,BodlaenderDDFLP16,FominLSPW18}.
We observe that, on constant degree graphs, an expander hierarchy
itself gives a treewidth decomposition. Hence, we obtain the first
non-trivial dynamic algorithm for this problem.\footnote{Dvorak, Kupec, and Tuma \cite{DvorakKT13} show a fully dynamic algorithm
for maintaining a \emph{treedepth decomposition}, which is closely
related to a treewidth decomposition. Let $\text{td}(G)$ denote the
\emph{treedepth} of a graph $G$. It is known that $\tw(G)\le\text{td}(G)\le O(\tw(G)\log n)$
\cite{BodlaenderGHK95}. However, the update time of the algorithm
\cite{DvorakKT13} is proportional to a tower of height $\text{td}(G)$,
which is super-linear when $\text{td}(G)=\omega(\log^{*}n)$. So this
algorithm might take super-linear time even when $\tw(G)=O(1)$.} Our result is summarized in the following corollary, where $\tw(G)$ is the treewidth of a graph $G$ (i.e.~the
minimum width over all tree decomposition of $G$). 
\begin{cor}
\label{cor:treewidth}There is a fully dynamic deterministic algorithm
on a constant degree $n$-vertex graph $G$ that explicitly maintains
a treewidth decomposition of width $\tw(G)\cdot n^{o(1)}$ in $n^{o(1)}$ amortized update time.
\end{cor}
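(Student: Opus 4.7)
My plan is to prove that the expander hierarchy $T$ from Theorem~\ref{thm:main}, endowed with the bag rule from Table~\ref{tab:application}, is itself a tree decomposition of $G$ of width $\tw(G)\cdot n^{o(1)}$. The corollary then follows immediately from Theorem~\ref{thm:main}: $T$ is maintained in $n^{o(1)}$ amortized update time, and the bag $B_x$ at each node $x$ of level $i$ (corresponding to $u_i\in V(G^i)$) is completely determined by the at most $\deg_{G^i}(u_i)$ edges incident to $u_i$ in $G^i$, and so it can be tracked alongside $T$ at no extra asymptotic cost.

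First, I would verify the three tree-decomposition axioms. Vertex coverage is immediate since every $v\in V(G)$ lies in its own leaf bag $B_v$ (each incident edge contributes both endpoints, including $v$). Edge coverage is immediate since for every $(u,v)\in E(G)$ both endpoints lie jointly in $B_u$. For the subtree axiom, I would trace each incident $G$-edge $(v,w)$ of a fixed $v$ through the contraction sequence: at each level this edge witnesses $v$ in the bags of exactly its two current endpoints in $G^i$, and these two super-vertices merge at the level where the edge becomes a self-loop. Combined with the initial root-to-leaf segment of own-super-vertex bags $p_0(v),p_1(v),\ldots$ up to the level at which $v$'s super-vertex first loses all incident edges, these witness paths aggregate into a single connected subtree of $T$.

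The main technical step is the width bound. Fix $u_i$ obtained by contracting a cluster $U\subseteq V(G^{i-1})$ of the boundary-linked expander decomposition $\uset_{i-1}$; then $|B_{u_i}|\le 2|E_{G^{i-1}}(U,V(G^{i-1})\setminus U)|$. Since $G^{i-1}$ is a minor of $G$, $G^{i-1}[U]$ has treewidth at most $k:=\tw(G)$ and so admits a balanced vertex separator $Z\subseteq U$ with $|Z|\le k+1$. Regrouping the components of $G^{i-1}[U]\setminus Z$ into a balanced bisection $(A,U\setminus A)$ whose crossing edges all touch $Z$ gives cut size at most $\vol_{G^{i-1}}(Z)$. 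On the other hand, the boundary-linked expansion of $G^{i-1}[U]^{\alpha/\phi_U}$ with slack $s$ forces any balanced cut to have at least $\Omega(\alpha/s)\cdot|E_{G^{i-1}}(U,V(G^{i-1})\setminus U)|$ crossing edges, because the $\lceil\alpha/\phi_U\rceil$ self-loops added per boundary edge contribute a matching $\Omega(\alpha)$-fraction of the smaller-side volume. Combining,
\[
\deg_{G^i}(u_i) \;\le\; O(s/\alpha)\cdot\vol_{G^{i-1}}(Z).
\]

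The main obstacle is controlling $\vol_{G^{i-1}}(Z)$ without accumulating a factor of $\tw(G)$ at each of the $L=O(\log^{1/4}n)$ levels. The naive induction $\vol_{G^{i-1}}(Z)\le(k+1)\cdot\max_z\deg_{G^{i-1}}(z)$ yields a bound $(sk/\alpha)^L\cdot O(d)$, which equals $\tw(G)\cdot n^{o(1)}$ only for $\tw(G)\le 2^{o(\log^{3/4}n)}$. For larger $\tw(G)$, I would instead pick the balanced separator inside $S(u_i)\subseteq V(G)$ itself: the constant-degree assumption gives this separator $G$-volume $O(k)$. Lifting to a near-balanced partition of $U\subseteq V(G^{i-1})$ produces at most $O(k)$ ``mixed'' super-vertices (one per $G$-edge crossing the separator, of which there are at most $d(k+1)=O(k)$); charging these mixed super-vertices against the level-$(i-1)$ bound turns the multiplicative $k$-per-level blowup into an additive one, yielding $|B_{u_i}|\le k\cdot(s/\alpha)^{O(L)}=\tw(G)\cdot n^{o(1)}$ uniformly in $\tw(G)$.
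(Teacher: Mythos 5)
Your bag rule and your width argument both diverge from what actually works, and the width bound is where the proposal breaks. The level-by-level separator argument gives, as you note, $|B_{u_i}|\le O(s/\alpha)\cdot\vol_{G^{i-1}}(Z)$ with $Z$ a set of at most $\tw(G)+1$ super-vertices, and the proposed rescue does not remove the multiplicative loss: choosing the separator inside $S(u_i)\subseteq V(G)$ indeed yields only $O(\tw(G))$ ``mixed'' super-vertices (each cluster's preimage is connected in $G$), but the only bound available on the crossing edges of the lifted partition of $U$ is the total $G^{i-1}$-degree of those mixed super-vertices, and each of them can have degree as large as the maximum level-$(i-1)$ bag. So the recursion is still $B_i\le O(s/\alpha)\cdot O(\tw(G))\cdot B_{i-1}$; no concrete charging is specified that would make the $\tw(G)$ contribution additive, and I do not see one, so the claimed bound $\tw(G)\cdot(s/\alpha)^{O(L)}$ is unsubstantiated (there is also a secondary issue: the lower bound $\Omega(\alpha/s)\,|E_{G^{i-1}}(U,V(G^{i-1})\setminus U)|$ requires the cut to be balanced w.r.t.\ the self-loop-weighted volume of $G^{i-1}[U]^{\alpha/\phi_U}$, which the regrouped separator cut need not be when $Z$ or the mixed super-vertices carry most of that weighted volume). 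The paper avoids any per-level induction: by \Cref{thm:EH_gives_R_tree} the hierarchy is a tree flow sparsifier of quality $q=n^{o(1)}$; every bag can route any $O(1)$-restricted demand to its tree node with congestion $O(1)$ in $T$, hence $O(q)$ in $G$, so each bag is $\Omega(1/q)$-flow-linked and hence well-linked (\Cref{lem:bag is linked}); and a $\gamma$-well-linked set of size $\beta$ in a degree-$\Delta$ graph forces $\tw(G)\ge\Omega(\gamma\beta/\Delta)$ (\Cref{fact:tw linked}), giving $|B_x|\le O(\Delta q)\cdot\tw(G)=\tw(G)\cdot n^{o(1)}$ uniformly in $\tw(G)$. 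Any repair of your route would essentially have to prove a multi-level linkedness statement for $S(u_i)$ in $G$ itself, which is the flow-sparsifier bound in disguise.

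Separately, the bag rule you adopt (original endpoints of the edges incident to $u_i$ in $G^i$ only) does not satisfy the subtree axiom, and your aggregation argument fails exactly at the merge level: when the last crossing edge at an original vertex $v$ becomes a self-loop at level $\ell$, the level-$\ell$ ancestor's bag no longer contains $v$, while $v$ still sits in bags on the neighbor's side one level below. Concretely, for a triangle on $\{v,w,x\}$ with level-$0$ clusters $\{w,x\},\{v\}$ and a single level-$1$ cluster, $v$ lies in every bag except the root's (which is empty), so its witness set is disconnected. The construction the paper actually analyzes uses larger bags: for a node whose cluster is $U\subseteq V(G^{i})$, the bag contains the original endpoints of \emph{all} $G^{i}$-edges incident to $U$, including those that become self-loops at this contraction; with that definition the two endpoint paths of every edge meet at the merge node, whose bag contains both endpoints, and connectivity holds. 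This is an easy fix, but as written your verification of the subtree axiom is incorrect, and the width bound must then be proved for these larger bags (the well-linkedness argument handles this, since each bag vertex is an endpoint of a boundary edge of the node's cluster or of one of its children's clusters).
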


\subsection{Comparison of Techniques}

\label{sec:related} 

The expander hierarchy is strictly stronger than the so-called \emph{low-diameter
	hierarchy} appeared in the algorithms for constructing low-stretch spanning trees in both the static \cite{AlonKPW95} and dynamic setting \cite{ForsterG19,ChechikZ20}.
The low-diameter hierarchy is similar to the expander hierarchy, except
that each cluster is only guaranteed to have low diameter. Structurally,
the expander hierarchy is strictly stronger since every $\phi$-expander
automatically has low diameter $\tilde{O}(1/\phi)$, but some low-diameter
graph has very bad expansion. This is why the low-diameter hierarchy
could not be applied to cut/flow-based problems. Algorithmically,
previous approaches for maintaining the low-diameter hierarchy \cite{ForsterG19,ChechikZ20}
inherently have amortized update time guarantee and assume an \emph{oblivious}
adversary, while our algorithm using the expander hierarchy can be
made worst-case and deterministic.

Expander decomposition has almost never been used in a hierarchical
manner. Many algorithms perform the decomposition only once \cite{Trevisan05,ChekuriKS05,ChekuriKS13}
and some recursively decompose the graph by removing all edges inside
clusters, instead of contracting each cluster (e.g.~\cite{SpielmanT11-SecondJournal,AndoniCKQWZ16,JambulapatiS18,ChangPZ19,EdenFFKO19}
for static algorithms and~\cite{bernstein2020fullydynamic,ChuzhoyK2019new}
for dynamic ones). An exception is the sensitivity connectivity oracle
by Patrascu and Thorup~\cite{PatrascuT07}, which decomposes a graph
into a certain hierarchy of expanders. Unfortunately, their hierarchy
is not robust and inherently can handle only a single batch
of updates, so it does not work in the standard dynamic setting.

The dynamic connectivity and minimum spanning forest algorithm by
Nanongkai et al.~\cite{NanongkaiSW17} repeatedly applies the expander
decomposition and has a bottom-up flavor as in ours, but their underlying
structure does \emph{not} actually yield a hierarchy. More specifically,
while each cluster in our expander hierarchy is contracted into a single vertex in the next level,
their cluster can only be ``compressed'' into a smaller set, which
might even be cut through in the next level. This leads to a much
more complicated structure and requires an ad hoc treatment. The similar
structure appears in the very recent work on dynamic $c$-edge connectivity
by Jin and Sun~\cite{JinS2020fully}. We bypass such complication
via the boundary-linked expander decomposition and obtain the simplified
dynamic connectivity algorithms and other applications. We expect
that our clean hierarchy will be easy to work with and lead to more
interesting applications in the future.

Very recently, the concurrent work of Chen et al.~\cite{ChenGHPS20vertexsparsifier}
shows how to dynamize several known constructions of vertex sparsifiers
for various problems. One of their applications is a fully dynamic
algorithm for $s$-$t$ maximum flow and minimum cuts. Their algorithm
works against an oblivious adversary, has $\Ot(n^{2/3})$ amortized
update time and, given a query, returns an $O(\log n(\log\log n)^{O(1)})$-approximation
in $\Ot(n^{2/3})$ time. They also extend the algorithm to work against
an adaptive adversary while supporting updates and queries in $\tilde{O}(m^{3/4})$
time. Comparing with Item~\ref{maxflow} from \Cref{cor:cut and
	flow}, our algorithm has $n^{o(1)}$ worst-case update time and $O(\log^{1/6}n)$
query time and is deterministic, but our approximation factor is worse.

\section{Technical Overview}

\label{sec:overview}This overview is divided into two parts. In \Cref{sec:overview_capture} we show that an expander hierarchy is itself a tree flow sparsifier and faithfully captures the cut/flow structure of a graph. 
In \Cref{sec:overview_robust} we show how to maintain an expander hierarchy under dynamic edge updates. 
Below, we will write \emph{$(\alpha,\phi)$-decomposition}
as a shorthand for $(\alpha,\phi)$-boundary-linked expander decomposition.

\subsection{Tree Flow Sparsifiers}
\label{sec:overview_capture} 
We start by showing how to construct a tree flow sparsifier for an expander, a very special case. 
Along the way to generalize the idea to general graphs, we will see
how expander hierarchies arise naturally. We will also explain why
the approaches based on the standard expander decomposition or even
the slightly stronger decomposition from \cite{SaranurakW19expander}
fail.

Tree flow sparsifiers can be informally defined as follows. Let $G=(V,E)$
be an $n$-vertex $m$-edge graph. Let $D:V\times V\rightarrow\mathbb{R}_{\ge0}$
be a demand (for multi-commodity flow) between pairs of vertices in $G$.
We say that $D$ can be routed with congestion $\eta$ in $G$ if
there is a multi-commodity flow that routes $D$ with congestion $\eta$.
If $\eta=1$, we say that $D$ is \emph{routable} in $G$. A tree $T$ is
a \emph{tree flow sparsifier of $G$ with quality $q$}, iff (i) any
routable demand in $G$ is routable in $T$, and (ii) any routable
demand in $T$ can be routed with congestion $q$ in $G$. 

\paragraph{Special Case: Expanders.}
Suppose $G$ is a $\phi$-expander. We can construct a very simple
tree flow sparsifier $T$ with quality $q_{1}=O(\log (m)/\phi)$ as
follows. Let $T$ be a star with a root $r$ that connects to each vertex $v\in V$ with an edge $(r,v)$ of capacity $\deg_{G}(v)$.
Observe that any routable demand in $G$ is routable in $T$ because
of the way we set the capacities of edges in $T$. On the other hand,
if a demand $D$ is routable in $T$, then the total demand on each
vertex $v$ is at most $\deg(v)$. But it is well-known from the multi-commodity
max-flow/min-cut theorem \cite{LeightonR99} that, on $\phi$-expanders,
any demand $D$ with such a property can be routed with congestion $O(\log (m)/\phi)$.

\paragraph{Intermediate Case: Two Levels of Expanders.}
Next, we suppose that $G$ satisfies the following: there is a partition
$\uset=(U_{1},\dots U_{k})$ of $V$ such that $G[U_{i}]^{1}$ is
a $\phi$-expander for each $U_{i}\in\uset$ and the contracted graph
$G_{\uset}$ is also a $\phi$-expander. We write $V(G_{\uset})=\{u_{1},\dots,u_{k}\}$.
We can naturally construct a tree $T$ corresponding to the partition
$\uset$ as follows. The set of level-$0$ vertices of $T$ is $V_{0}(T)=V(G)$.
The set of level-$1$ vertices is $V_{1}(T)=V(G_{\uset})$. The level-$2$
contains only the root $r$ of $T$. For each pair $v\in V_0(T)$ and $u_i\in V_1(T)$ such that $v\in U_{i}$, we add an edge $(v,u_{i})$ with capacity $\deg_{G}(v)$. And each $u_{i}\in V_1(T)$, we add an edge $(u_{i},r)$ with capacity $\deg_{G_{\uset}}(u_{i})=|E(U_{i},V\setminus U_{i})|$.
We claim that $T$ has quality $q_{2}=O((\log(m)/\phi)^{2})$. 

By the choice of edge capacity in $T$, any routable demand in $G$
is routable in $T$. For the other direction, suppose that a demand
$D$ is routable in $T$, then we will show a multi-commodity flow
that route $D$ in $G$ with congestion $O((\log(m)/\phi)^{2})$. The
idea is to first consider the \emph{projected demand} $D_{\uset}$
where $D_{\uset}(u_{i},u_{j}):=\sum_{x\in U_{i},y\in U_{j}}D(x,y)$.
Using the argument from the first case, there is a multi-commodity
flow $F_{\uset}$ that routes $D_{\uset}$ in $G_{\uset}$ with congestion
$q_{1}$. Our goal is to extend this flow $F_{\uset}$ to another
flow that routes $D$ in $G$ with congestion $O((\log(m)/\phi)^{2}).$

For each vertex $u_{i}\in V(G_{\uset})$, consider the flow paths
in $F_{\uset}$ going through $u_{i}$ in $G_{\uset}$. Observe that
these paths corresponds to a demand $D_{U_{i}}$ between \emph{boundary
	edges} of $U_{i}$ (i.e.~$E(U_{i},V\setminus U_{i})$). Our main
task now is to route $D_{U_{i}}$ within $G[U_{i}]$. Once we have
obtain the flow $F_{i}$ within $G[U_{i}]$ that routes $D_{U_{i}}$
for all $U_{i}$, this would extend $F_{\uset}$ to a flow in $G$
as desired, and we are essentially done (some details are omitted here).

As $G[U_{i}]^{1}$ is a $\phi$-expander, the max-flow/min-cut theorem
implies that any demand $D'$ between the boundary edges of $U_{i}$
can be routed \emph{within} $G[U_{i}]$ with congestion $O(\log(m)/\phi)$
as long as the total demand of $D'$ on each edge is at most $1$.
However, the total demand of $D_{i}$ on each boundary edge of $U_{i}$
can be as large as $q_{1}$ (since the flow $F_{\uset}$ causes congestion $q_{1}$).
Therefore, $D_{i}$ can be routed in $G[U_{i}]$ with congestion $q_{1}\cdot O(\log(m)/\phi)=O((\log(m)/\phi)^{2})$.
As this holds for all $U_{i}$, the tree $T$ has quality $O((\log(m)/\phi)^{2})$.
Note that we crucially exploit the conductance bound on $G[U_{i}]^{1}$
and not just on $G[U_{i}]$. From the above discussion, the standard expander decomposition cannot give the partition $\uset$ as we need. 

An important observation is that, this quality can be improved if
we are further promised that, for each $U_{i}\in\uset$, $G[U_{i}]^{w}$
is a $\phi$-expander, for some $w>1$. This promise implies that we
could route a demand $D'$ between the boundary edges of $U_{i}$
within $G[U_{i}]$ with congestion $O(\log(m)/\phi)$ as long as the
total demand of $D'$ on each edge is at most $w$ (instead of $1$),
so the demand $D_{i}$ can be routed in $G[U_{i}]$ with congestion
$q_{1}\cdot O(\frac{\log(m)/\phi}{w})$. Therefore, if $\uset$ is
an $(\alpha,\phi)$-decomposition which guarantees that $G[U_{i}]^{\alpha/\phi}$
is a $\phi$-expander, the quality of $T$ will be $q_{1}\cdot O(\frac{\log(m)}{\alpha})$.\footnote{Actually, we have a guarantee that $G[U_{i}]^{\alpha/\phi_{i}}$ is
	a $\phi_{i}$-expander for some $\phi_{i}\ge\phi$. This implies the
	same bound of $q_{1}\cdot\frac{O(\log m/\phi_{i})}{w}=q_{1}\cdot O(\frac{\log(m)}{\alpha}).$} That is, we lose only a factor of $O(\log(m)/\alpha)$ per level,
instead of $O(\log(m)/\phi)$. 

\paragraph{General Case.}
Now, we are ready to consider an arbitrary graph $G$. Let $(G^{0},\dots,G^{t})$
be such that (i) $G^{0}=G$; (ii) $E(G^{t})=\emptyset$; and (iii) $G^{i+1}=G_{\uset_{i}}^{i}$
for some $(\alpha,\phi)$-decomposition $\uset_{i}$ of $G^i$. Observe that
if we define a tree from $(G^{0},\dots,G^{t})$ using the same idea
as above, we would exactly obtain an $(\alpha,\phi)$-expander hierarchy
$T$ of $G$. %
\begin{comment}
That is, let $(G^{0},\dots,G^{t})$ be a sequence where $G^{0}=G$,
$G^{t}$ has no edges, and for each $i\ge0$, there is an $(\alpha,\phi)$-boundary-linked
expander decomposition $\uset_{i}$ where $G^{i+1}=G_{\uset_{i}}^{i}$.
Then, the set of vertices at level $i$ of $T$ corresponds to vertices
of $G^{i}$ (i.e.~$V_{i}(T)=V(G^{i})$) and if a vertex $u_{i}\in V(G^{i})$
is contracted to a super-vertex $u_{i+1}\in V(G^{i+1})$, then we
add an edge $(u_{i},u_{i+1})\in T$ with weight $\deg_{G^{i}}(u_{i})$.
\end{comment}
{} Let $t$ be the depth of $T$. We can argue inductively that $T$
is a tree flow sparsifier of $G$ with quality $O(\frac{\log m}{\phi})\cdot O(\frac{\log m}{\alpha})^{t-1}$. 

Note that that $t=O(\log_{1/\phi}m)$ by Property \ref{enu:boundary}
of the $(\alpha,\phi)$-decomposition. From \Cref{thm:decomp}, we can compute an
$(\alpha,\phi)$-expander hierarchy where $\alpha=1/\polylog(n)$
and $\phi=2^{-O(\log^{1/2}m)}$, this implies that $T$ has quality
$O(\frac{\log m}{\phi})\cdot O(\frac{\log m}{\alpha})^{t-1} = 2^{O(\log^{1/2}m\log\log m)}=n^{o(1)}$. Note that we need $\phi\ll\alpha$
to obtain the quality of $n^{o(1)}$. For example, if $\alpha=\phi$,
the quality we obtain would be $(\frac{\log m}{\phi})^{t}=\Omega(m)$.
This is the reason why we cannot use the expander decomposition algorithm
by \cite{SaranurakW19expander}, because their algorithm only returns
a (weaker version of) $(\phi,\phi)$-decomposition.

In the dynamic setting, our algorithm from \Cref{thm:main} maintains
an $(\alpha,\phi)$-expander hierarchy $T$ that has small slack $s$
where $\alpha=1/\polylog(n)$, $\phi=1/2^{O(\log^{3/4}m)}$ and $s=2^{O(\log^{1/2}m)}$.
Following the same analysis, the final quality of $T$ degrades slightly
to $O(\frac{s\log m}{\alpha})^{t-1}\cdot O(\frac{s\log m}{\phi})=2^{O(\log^{3/4}m)}=n^{o(1)}$.

\subsection{Robustness Against Updates}

\label{sec:overview_robust}

\begin{comment}
An expander hierarchy does not only faithfully capture cut/flow structure
of graphs, but it is also very robust against updates. 
\end{comment}
In this section, we show how an $(\alpha,\phi)$-expander hierarchy
$T$ with small slack $s$ can be maintained in $n^{o(1)}$ update
time, where $\alpha=1/\polylog(n)$, $\phi=2^{-O(\log^{3/4}n)}$,
and $s=2^{O(\log^{1/2}n)}$. Recall from above that $T$ is a tree
flow sparsifier with quality $2^{O(\log^{3/4}m)}$. 

Our goal is to illustrate that, because of the right kind of guarantees from the $(\alpha,\phi)$-decomposition, the algorithm for maintaining the expander hierarchy can be obtained quite naturally, especially for people familiar with standard techniques in dynamic algorithms. 

\paragraph{Reduction to One Level.}
An $(\alpha,\phi)$-expander hierarchy $T$ corresponds to an $(\alpha,\phi)$-expander
decomposition sequence $(G^{0},\dots,G^{t})$. In particular, for
each $i\ge0$, $G^{i+1}=G_{\uset_{i}}^{i}$ is obtained from $G^{i}$
by contracting each cluster of an $(\alpha,\phi)$-decomposition $\uset_{i}$.
Therefore, the problem of maintaining an expander hierarchy reduces
to maintaining an $(\alpha,\phi)$-decomposition $\uset$ and $G_{\uset}$
on a dynamic graph $G$. There are two important measures:
\begin{itemize}
	\item \textbf{Update Time:} The time for computing the updated $\uset$
	and $G_{\uset}$.
	\item \textbf{Recourse:} The number of edge updates\emph{ }to $G_{\uset}$. 
\end{itemize}
Suppose that there is an algorithm with $\tau$ (amortized) update
time and $\rho$ (amortized) recourse. This would imply an algorithm
for maintaining an $(\alpha,\phi)$-expander decomposition sequence
$(G^{0},\dots,G^{t})$ with $O(\rho^{t}\cdot\tau)$ (amortized) update
time, because the number of updates can be multiplied by $\rho$ per
level. We note that the depth $t=O(\log_{1/\phi}m)$ so we need $\rho=(1/\phi)^{o(1)}$
and $\tau=n^{o(1)}$ to conclude that the final update time is $n^{o(1)}$. 

\paragraph{Two Key Tools.}
From now, we focus on a dynamic graph $G$ and how to maintain an
$(\alpha,\phi)$-decomposition $\uset$ of $G$ with small slack.
To do this, we need two algorithmic tools. First, \Cref{thm:decomp} gives
a static algorithm for computing an $(\alpha,\phi)$-decomposition
$\uset$ of a graph $G$ with \emph{no slack} in time $\tilde{O}(m/\phi)$.
This algorithm strengthens the previous expander decomposition by Saranurak and Wang~\cite{SaranurakW19expander}.% but does not lose any factor in the running time.\thatchaphol{Check if we really have this running time.}

Our second tool is the new expander pruning algorithm from \Cref{thm:dynamic_expander_pruning} with
the following guarantee. Suppose $G[U]^{w}$ is a $\phi$-expander
where $w\le1/(10\phi)$. Suppose there is a sequence of $k\le\phi\vol(U)/2000$
edge updates to $U$ (i.e.~these edges have at least one endpoint
in $U$). Then, the algorithm maintains a small set $P\subseteq U$
such that $G[U\setminus P]^{w}$ is still a $(\phi/38)$-expander in total
time $\tilde{O}(k/\phi^{2})$. More precisely, after the $i$-th update
to $U$, we have $\vol_{G}(P)=O(i/\phi)$ and $|E_{G}(P,U\setminus P)|=O(i)$.
\Cref{thm:dynamic_expander_pruning} generalizes the previous expander pruning algorithm by \cite{SaranurakW19expander}
that works only when $w=1$.

\paragraph{A Simple Algorithm with Too Large Recourse.}
Both tools above suggest the following simple approach. First, we compute
an $(\alpha,\phi)$-decomposition $\uset=\{U_{1},\dots,U_{k}\}$ of
$G$ with no slack where each $G[U_{i}]^{\alpha/\phi_{i}}$ is a $\phi_{i}$-expander.
Next, given an edge update to $U_{i}$, we maintain a pruned set $P_{i}\subseteq U_{i}$
so that $G[U_{i}\setminus P_{i}]^{\alpha/\phi_{i}}$ is a $(\phi_{i}/38)$-expander.
We update $\uset$ by adding the singleton sets $\{\{u\}\}_{u\in P_{i}}$
and replacing $U_{i}$ by $U_{i}\setminus P_{i}$. 

For simplicity, we assume that all edge updates
have at least one endpoint in the cluster $U_{1}\in\uset$. Furthermore,
assume that there are less than $k_{1}\ll\phi^{2}\vol_G(U_{1})$ updates
to $U_{1}$. With this assumption, the updated $\uset$ is an $(\alpha,\phi)$-decomposition
of the updated $G$ with slack $38$. To see this, observe that the number of new
inter-cluster edges is at most $\vol_G(P_{1})$, and so the total number
of inter-cluster edges becomes $\tilde{O}(\phi m)+\vol_G(P_{1})=\tilde{O}(\phi m)$
satisfying Property \ref{enu:boundary}. Let $U_{1}'=U_{1}\setminus P_{1}$.
We have that $G[U'_{1}]^{\alpha/\phi_{1}}$ is a $(\phi_{i}/38)$-expander
by the expander pruning algorithm. This satisfies Property \ref{enu:linked}.
For Property \ref{enu:sparse cut}, observe that $|E_G(U'_{1},V\setminus U'_{1})|\le|E_G(U{}_{1},V\setminus U{}_{1})|+|E_{G}(P_{1},U_{1}\setminus P_{1})|=\tilde{O}(\phi_{i}\vol_G(U_{1}))$
because $|E_{G}(P_{1},U_{1}\setminus P_{1})|\le O(k_{1})$. Note that
all singleton clusters $\{\{u\}\}_{u\in P_{1}}$ satisfies Properties
\ref{enu:boundary} and \ref{enu:sparse cut} vacuously. Lastly,
the recourse on $G_{\uset}$ is $O(\vol_G(P_{1}))=O(k_{1}/\phi_{1})$.

Therefore, we have that the amortized recourse and update time are
$O(1/\phi)$ and $\tilde{O}(1/\phi^{2})$ respectively. While both
of them seem small for maintaining a one-level decomposition, the
recourse is in fact too large if the expander hierarchy has many levels. 
After composing the algorithm for $t$ levels
using the reduction in the beginning of this section, the update time
is at least $\Omega(1/\phi)^{t}=m^{\Omega(1)}$, which is too large
for us.

Previous dynamic algorithms with hierarchical structure have faced
the same issue. This is why the dynamic low-stretch spanning trees
algorithm of \cite{ForsterG19} has $O(\sqrt{n})$ update time. Chechik
and Zhang \cite{ChechikZ20} fixed this issue and improved the update
time of \cite{ForsterG19} to $n^{o(1)}$. However, they only require
each cluster $U\in\uset$ to have a small diameter of $\tilde{O}(1/\phi)$,
which is a much weaker guarantee than being a $\Omega(\phi)$-expander.
Unfortunately, their technique is specific to this weaker guarantee
(and is also inherently amortized). In the dynamic minimum spanning
forests algorithm by Nanongkai et al.~\cite{NanongkaiSW17}, they
require each cluster to be an expander like us, and can only guarantee
$\Omega(1/\phi)$ recourse per update. As we mentioned in \Cref{sec:related},
they fix the issue using an ad-hoc and complicated tool tailored to (minimum) spanning forests. Below, we will see how the $(\alpha,\phi)$-decomposition
allows us to bound the recourse in a simple way.

\begin{comment}
The expander pruning algorithm suggests a natural approach for maintaining
$(\alpha,\phi)$-decomposition with slack $38$. Let $\uset=\{U_{1},\dots,U_{k}\}$
be an $(\alpha,\phi)$-decomposition where $G[U_{i}]^{\alpha/\phi_{i}}$
is a $\phi_{i}$-expander. Given an edge update to $U_{i}$, we maintain
a pruned set $P_{i}\subset U_{i}$ so that $G[U_{i}-P_{i}]^{\alpha/\phi_{i}}$
is a $(\phi_{i}/38)$-expander, and add a set of singletons $\{u\mid u\in P_{i}\}$
into the partition $\uset$. 

If the total number of updates is $O(\phi^{2}\vol(G))$, then we have
that $\sum_{i=1}^{k}\vol_G(P_{i})=O(\phi\vol(G))$. It is ea 

with the following guarantee. 

Suppose that $G[U]^{w}$ is a $\phi$-expander for some $w=\Omega(1/\phi)$.
Suppose that there is a

there are $k$ edges updates to the underlying graph $G$ where all
these edges have at least one endpoint in $U$. Then, the algorithm
outputs the set $P$ where $\vol_{G}(P)=O(k/\phi)$ and $E_{G}(P,U\setminus P)=O(k)$.
Most importantly, 

As $t=\Omega(\log_{1/\phi}m)$, we must have $\rho=(1/\phi)^{o(1)}$
otherwise the update time is $m^{\Omega(1)}$.
\end{comment}

\paragraph{One-batch Updates.}
First, let us simplify the situation even more by assuming that all
$k_{1}$ updates are simultaneously given to $U_{1}$ in \emph{one
	batch}. We will also need a slight generalization of $(\alpha,\phi)$-decomposition
defined on a subset of vertices here, instead of the whole graph.
For a set $P\subseteq V$, an \emph{$(\alpha,\phi)$-decomposition
	of $P$ in $G$ }is a partition $\uset'=\{U'_{1},\dots,U'_{k}\}$
of $P$ that satisfies the properties of the $(\alpha,\phi)$-decomposition,
except that Property \ref{enu:boundary} is now $\sum_{i=1}^{k}|E(U'_{i},V\setminus U'_{i})|\le O(|E(P,V\setminus P)|)+\tilde{O}(\phi\vol_G(P))$.
Note that the term $\tilde{O}(\phi\vol_G(P))=\tilde{O}(\phi m)$ as
before when $P=V$ and the term $O(|E(P,V\setminus P)|)$ is unavoidable. 

Now, we describe the algorithm. Given the batch of $k_{1}$ updates,
we compute the pruned set $P_{1}\subseteq U_{1}$. Then, we compute
an $(\alpha,\phi)$-decomposition of $P_{1}$ in $G$ and obtain a
partition $\uset'=\{U'_{1},\dots,U'_{k'}\}$ of $P_{1}$. Finally,
we replace $U_{1}$ in $\uset$ with $\{U_{1}\setminus P_{1},U'_{1},\dots,U'_{k'}\}$.
It follows from the description that the updated $\uset$ is an $(\alpha,\phi)$-decomposition
of $G$ with slack at most $38$.

The key step is to bound the total recourse, which is at most 
\[
\sum_{i=1}^{k}|E(U'_{i},V\setminus U'_{i})|\le O(|E(P_{1},V\setminus P_{1})|)+\tilde{O}(\phi\vol_G(P_{1})).
\]
Recall that, by the pruning algorithm, $\vol_G(P_{1})\le O(k_{1}/\phi)$
and $|E_G(P_{1},U\setminus P_{1})|\le O(k_{1})$. So it remains to bound
$|E_G(P_{1},V\setminus U_{1})|$. This is where we exploit Property
\ref{enu:linked} of the $(\alpha,\phi)$-decomposition. Before any
update, observe that 
\[
|E_G(P_{1},U_{1}\setminus P_{1})|\ge\phi_{1}\vol_{G^{\alpha/\phi_{1}}[U_{1}]}(P_{1})\ge\phi_{1}\cdot\frac{\alpha}{\phi_{1}}|E_G(P_{1},V\setminus U_{1})|
\]
where the first inequality is because $G[U_{1}]^{\alpha/\phi_{1}}$
was a $\phi_{1}$-expander and we assume $\vol_{G[U_{1}]^{\alpha/\phi_{1}}}(P_{1})\le\vol_{G[U_{1}]^{\alpha/\phi_{1}}}(U_{1}\setminus P_{1})$
(as $k_{1}$ is small enough), and the second inequality is by the
definition of $G[U_{1}]^{\alpha/\phi_{1}}$. So we have $|E_G(P_{1},V\setminus U_{1})|\le O(k_{1}/\alpha)$
after the updates, because $|E_G(P_{1},U\setminus P_{1})|\le O(k_{1})$
and there are $k_{1}$ updates. This implies that the total recourse
is $O(k_{1}/\alpha)+\tilde{O}(k_{1})$, which is $\tilde{O}(1/\alpha)$
amortized. 

\Cref{thm:decomp} shows that the decomposition $\uset'$ of $P_{1}$ can be computed
in $\tilde{O}(|E_G(P_{1},V\setminus P_{1})|/\phi^{2}+\vol_G(P)/\phi)=\tilde{O}(k_{1}/(\alpha\phi^{2}))$
time. Also, the total time for pruning is $\tilde{O}(\vol_G(P)/\phi)=\tilde{O}(k_{1}/\phi^{2})$.
So the amortized update time is $\tilde{O}(1/(\alpha\phi^{2}))$.
Plugging these bounds into the reduction, the update time for maintaining
an $(\alpha,\phi)$-expander hierarchy is $\tilde{O}(1/\alpha)^{t}\cdot\tilde{O}(1/(\alpha\phi^{2}))=n^{o(1)}$.

\paragraph{Removing Assumptions.}
In fact, in the analysis above, we did not require a strong upper
bound on $k_{1}\ll\phi^{2}\vol_G(U_{1})$, but we did require $k_{1}\le\phi_{1}\vol_G(U_{1})/2000$
because the expander pruning algorithm can handle that many updates
and we also need $\vol_{G[U_{1}]^{\alpha/\phi_{1}}}(P_{1})\le\vol_{G[U_{1}]^{\alpha/\phi_{1}}}(U_{1}\setminus P_{1})$.
This requirement can be removed as follows. If $k_{1}>\phi_{1}\vol_G(U_{1})/2000$,
then we just ``reset'' the cluster $U_{1}$ by computing an $(\alpha,\phi)$-decomposition
$\uset'$ of $U_{1}$ in $G$. Then, we remove $U_{1}$ from $\uset$
and add the new clusters in $\uset'$ to $\uset$. The key point is
again to bound the recourse which is $\sum_{i=1}^{k}|E_G(U'_{i},V\setminus U'_{i})|\le O(|E_G(U_{1},V\setminus U_{1})|)+\tilde{O}(\phi\vol_G(U_{1}))=\tilde{O}(k_{1})$.
This is where we exploit Property \ref{enu:sparse cut} of the $(\alpha,\phi)$-decomposition,
which says $|E_G(U_{1},V\setminus U_{1})|=\tilde{O}(\phi_{1}\vol_G(U_{1}))=\tilde{O}(k_{1})$.
So the amortized recourse is $\tilde{O}(1)$ in this case.

To remove the assumption that all updates have an endpoint in $U_{1}$,
we simply perform the same algorithm on each cluster $U_{i}\in\uset$.
If the number of updates is larger than $\phi\vol(G)$, we just compute
the $(\alpha,\phi)$-decomposition of the updated graph in $\tilde{O}(\vol(G)/\phi)$
time so that Property \ref{enu:boundary} of the $(\alpha,\phi)$-decomposition
is satisfied. In this case, the amortized recourse and update time
is $\frac{O(\phi\vol(G))}{\phi\vol(G)}=\tilde{O}(1)$ and $\frac{\tilde{O}(\vol(G)/\phi)}{\phi\vol(G)}=\tilde{O}(1/\phi^{2})$
respectively. 

From all cases above, we conclude that, given an $(\alpha,\phi)$-decomposition
$\uset$ of $G$ with no slack and one batch of updates, we obtain
an $(\alpha,\phi)$-decomposition $\uset$ of $G$ with slack $38$
with $\rho=\tilde{O}(1/\alpha)$ amortized recourse and $\tau=\tilde{O}(1/(\alpha\phi^{2}))$
amortized update time. This implies an algorithm for an $(\alpha,\phi)$-expander
hierarchy with slack $38$ and $O(\rho^{t}\tau)=2^{O(\log^{3/4}m)}$
amortized update time.

\paragraph{Sequence of Updates.}
Lastly, we remove the final assumption that the updates are given
in one batch. The main issue arises from the expander pruning algorithms.
Recall that, to bound the recourse, it was enough to bound $O(|E_G(P,V\setminus P)|)+\tilde{O}(\phi\vol_G(P))$
where $P$ is the pruned set of some cluster. However, given a sequence
of updates to \Cref{thm:dynamic_expander_pruning}, the set $P$ is a dynamic set that changes through
time. Although we can bound $O(|E_G(P,V\setminus P)|)+\tilde{O}(\phi\vol_G(P))$
at any point of time, the total recourse throughout the algorithm can become much larger. To
fix this, we use the known trick from \cite{NanongkaiSW17} (Section
5.2.1) so that the resulting pruned set changes in a much more controlled
way. 

At a very high level, let $\psi=2^{O(\log^{1/2}m)}$ and $h=\log_{\psi}m=O(\log^{1/2}m)$.
Our algorithm, called $\mlp$ from \Cref{sec:dynamic_expander_decomposition}, will partition $P=P_{h-1}\cup\dots\cup P_{0}$
into $h$ parts such that, for each $i$, $\vol_G(P_{i})\le\vol_G(P_{i+1})/\psi$
and $P_{i}$ can change only every $\psi^{i}$ updates. In words,
the bigger the part, the less often it changes. At the end, $\mlp$
has the amortized recourse $\rho=O(2^{O(\log^{1/2}m)}/\alpha)$ and
update time $\tau=O(2^{O(\log^{1/2}m)}/\alpha\phi^{2})$. This bound
still implies a dynamic $(\alpha,\phi)$-expander hierarchy with update
time $O(\rho^{t}\tau)=2^{O(\log^{3/4}m)}$. However, with this technique,
the slack of the maintained $(\alpha,\phi)$-decomposition become
$s=38^{h}=2^{O(\log^{1/2}m)}$, which gives \Cref{thm:main}.

\paragraph{Derandomization \& Deamortization.}
The only randomized component in this paper is the cut-matching game
\cite{khandekar2009graph,RackeST14} that is used in our static algorithm
for computing an $(\alpha,\phi)$-decomposition. By plugging in the
new deterministic balanced cut algorithm by Chuzhoy et al.~\cite{ChuzhoyGLNPS19det}
into our framework, this immediately derandomizes the whole algorithm.

We can also make our update time to be worst-case using the standard
``building in the background'' technique (although this technique
prevents us from explicitly maintaining the expander hierarchy). The
reason we are allowed to do this is as follows. The only component
that is inherently amortized is the expander pruning algorithm from
\Cref{thm:dynamic_expander_pruning}. However,
\Cref{thm:dynamic_expander_pruning} is only called by $\mlp$, and we can apply
the ``building
in the background'' technique to each level of the algorithm, so
that the input to \Cref{thm:dynamic_expander_pruning} is always in one-batch (not a sequence of updates) and so the running time is worst-case. For other parts of the
algorithms, it is clear when the algorithm needs to spend a lot of
time to reset or re-preprocess the graph, so we can apply ``building
in the background'' in a straight-forward manner.

%%% Local Variables:
%%% mode: latex
%%% TeX-master: "main_RaeckeTree.tex"
%%% End:

        \section{Preliminaries}
\label{sec:prelim}
By default, all logarithms are to the base of $2$. 
%All graphs are finite, undirected and are allowed to have parallel edges. Sometimes graphs are also allowed to have self-loops when clearly specified, and sometimes we call graphs with parallel edges \emph{multi-graphs} explicitly. 
Normally we use $n$ to denote the number of nodes of a graph, and use $m$ to denote the number of edges of a graph. Even when we allow parallel edges and self-loops, we will assume in this paper that $m = \poly(n)$. We use $\Ot(\cdot)$ to hide $\polylog(n)$ factors.% and $\Oh(\cdot)$ to hide $n^{o(1)}$ factors.\thatchaphol{remove it if we don't use it.}
 
\paragraph{General Notation.}
%\thatchaphol{Can we just define everything for unweighted multigraph? Then, we define the weight only when we need it.}
Let $G=(V,E)$ be an unweighted graph. %The weight on an edge is also refered to as its \emph{capacity}. In this paper, our algorithms usually run on unweighted graphs, i.e.~$\capacity(e)=1$ for all $e\in E$. We sometimes work with weighted graphs for analysis. Note that, there is a natural correspondence between weighted graphs and multi-graphs. 
For a vertex $v\in V$, we denote $\deg_G(v)$ as the number of edges incident to $v$ in $G$. 
%and let $\capacity(v)$ be the sum of the capacities of all edges that are incident to node $v$, namely $\capacity(v)=\sum_{v'\in V:(v,v')\in E}\capacity(v,v')$. 
%\thatchaphol{Define it for any set $A,B$ that maynot be disjoint.}
For two subsets $A, B\subseteq V$ of vertices, we denote by $E_G(A, B)$ the set of
edges with one endpoint in $A$ and the other endpoint in $B$. 
%and denote $\capacity(A, B)=\sum_{(v,v')\in E: v\in A, v'\in B}\capacity(v,v')$. Clearly, if $G$ is unweighted, then $\deg_G(v)=\capacity(v)$ for each $v\in V$ and $|E_G(A,B)|=\capacity(A, B)$ for all pairs $A, B\subseteq V$ of disjoint subsets of vertices.
For a subset $S\subseteq V$, we denote by $E_G(S)$ the subset of edges of $E$ with both endpoints in $S$. 
%and we denote by $\Gamma_G(S)$ the union of $E_G(S,\overline{S})$ and the set of self-loops on vertices of $S$, where $\overline{S}=V\setminus S$.\thatchaphol{remove $\Gamma_G(S)$. Check that we never use it.}

\makeatletter
\DeclareDocumentCommand{\induced}{ o m o}{%
        \IfNoValueTF{#1}%
            {G}%
            {\@ifmtarg{#1}{G}{#1}}%
        [#2]%
        \IfNoValueTF{#3}%
            {}%
            {\@ifmtarg{#3}{}{^{#3}}}%
}
\makeatother
\def\border{\operatorname{border}}
\def\cut{\operatorname{cut}}
\def\out{\operatorname{out}}

To reduce notational clutter we sometimes use the following shorthand notation
for the cardinality of certain edge-sets:
the cardinality of edges incident to $S\subseteq V$ is denoted with
$\out_G(S):=|E_G(S,V\setminus S)|$;
the \emph{border of $S\subseteq U$ w.r.t.\ $U$} is denoted with
$\border_{G,U}(S):=|E_G(S,V\setminus U)|$; the
\emph{cut of $S\subseteq U$ w.r.t.\ $U$} is denoted with
$\cut_{G,U}(S) := |E_G(S,U\setminus S)|$. We drop the subscript $G$ if the
graph is clear from the context and we write, e.g., $\out_G(v)$ instead
of $\out_G(\{v\})$, i.e., we drop the brackets if the respective set contains
just a single vertex.

For an unweighted graph $G$ and a cluster $S\subseteq V$ we use
$\induced{S}[w]$ to denote the subgraph of $G$ induced by the vertex set $S$
where we add $\lceil w\rceil$ self-loops to a vertex $v\in S$ for every
boundary edge $(v,x)$, $x\notin S$ that is incident to $v$ in $G$. Note that
$\induced{S}$ is just the standard notion of an induced subgraph and that in
the graph $\induced{U}[1]$ the degree of all vertices is the same as in the
original graph $G$ (each self-loop contributes $1$ to the degree of the node
that it is incident to).

%denote $G\{S\}$ as the graph obtained from $G[S]$ by adding self-loops to
%vertices of $S$, so that each vertex in $S$ has the same degree as its degree
%in $G$ (each self-loop contributes $1$ to the degree of the node that it is
%incident to). We remark that, when $G$ is a weighted graph, a self loop of $v$
%can be viewed as an edge $(v,v)$ with weight $1$, so it also contributes $1$ to
%$c(v)$.

Let $T$ be a tree and denote $r$ as its root. We denote by $L(T)$ the set of
leaves of $T$. For each $i\ge 0$, we say that a node $v\in V(T)$ is at the
$i$th level of $T$ if the length of the unique path connecting $v$ to $r$ in
$T$ is $i$. So the root $r$ is at the $0$th level of $T$, and all its children
are at the $1$st level of $T$, and so on. For each $i\ge 0$, we let $V_i(T)$ be
the set of all nodes that lie on the $i$th level of the tree $T$.

\paragraph{Conductance and Expander.}
For a weighted graph $G$ and a subset $S\subseteq V$ of its vertices, we define the \emph{volume} of $S$ in $G$ to be $\vol_G(S)=\sum_{v\in S}\deg_G(v)$. 
We refer to a bi-partition $(S,\overline{S})$ of $V$ by a \emph{cut} of $G$ if both $S$ and $\overline{S}$ are not $\emptyset$, and we define the capacity of the cut to be $|E(S,\overline{S})|$.
The \emph{conductance} of a cut $(S,\overline{S})$ in $G$ is defined to be $\Phi_G(S)=\frac{|E(S,\overline{S})|}{\min\{\vol_G(S),\vol_G(\overline{S})\}}$. %Clearly, if $G$ is unweighted, then for each cut $(S,\overline{S})$ of $G$, $\vol_G(S)=\sum_{v\in S}\deg_G(v)$, $\capacity(S,\overline{S})=|E_G(S,\overline{S})|$ and $\Phi_G(S)=\frac{|E_G(S,\overline{S})|}{\min\{\vol_G(S),\vol_G(\overline{S})\}}$.
The conductance of a graph $G$ is defined to be $\Phi_G=\min_{S\subsetneq V,S\ne\emptyset}\Phi_G(S)$. For a real number $\phi>0$, we say that $G$ is a $\phi$-expander if $\Phi_G\ge \phi$.
We will omit the subsript $G$ in the notations above if the graph is clear from the context.

\begin{definition}[Near Expander]
\label{defn:nearlyexp}
Given an unweighted graph $G=(V,E)$ and a subset of vertices $A\subseteq V$, we say that $A$ is a {\em near $\phi$-expander} in $G$ for some real number $\phi>0$ if for all $S\subseteq A$ such that $\vol(S)\leq \vol(A)/2$, we have $|E(S,V\setminus S)| \geq \phi\cdot\vol(S)$.
\end{definition}

\paragraph{Contracted Graph.}
Given an unweighted graph $G=(V,E)$ and a partition $\mathcal{U}=(U_1,\ldots,U_r)$ of its vertices, such that the subgraph $G[U_i]$ is connected for each $1\le i\le r$, we define the graph $G_{\cal{U}}$ to be the \emph{contracted graph} of $G$ by contracting each cluster $U_i$ to a single vertex, while keeping the parallel edges that connect vertices from the same pair of subsets in $\mathcal{U}$. %Specifically, $V(G_{\cal{U}})=\{u_1,u_2,\ldots,u_r\}$, and there is an edge between $u_i$ and $u_j$ in $E(G_{\cal{U}})$ iff there is at least one edge in $E(G)$ that has one endpoint in $U_i$ and the other endpoint in $U_j$. Moreover, the capacity of the edge $(u_i,u_j)$ in $G_{\cal{U}}$ is defined to be the sum of the capacities of all edges that has one endpoint in $U_i$ and the other endpoint in $U_j$. Namely, if we denote by $\capacity': E(G_{\mathcal{U}})\to \mathbb{R}^{+}$ the weight function of $G_{\cal{U}}$, then $\capacity'(u_i,u_j)=c(U_i,U_j)=\sum_{(v,v')\in E_G(U_i,U_j)}\capacity(v,v')$.

\paragraph{(Single-commodity) Flow Notation.}
A flow problem $\Pi = (\Delta, T, c)$ on a graph $G = (V,E)$ consists of (i) a
source function $\Delta: V\to\mathbb{R}^{\ge 0}$, (ii) a sink capacity function
$T: V\to \mathbb{R}^{\ge 0}$, and (iii) an edge capacity function
$\capacity: E\to \mathbb{R}^{\ge 0}$. Specifically, for each node $v\in V$, we
denote $\Delta(v)$ to be the amount of mass that is placed on $v$, and we
denote $T(v)$ to be the capacity of $v$ as a sink. For an edge $e$, the
capacity $\capacity(e)$ limits how much flow can be routed along $e$ in both
directions.

Given a \emph{single-commodity} flow $f$ on $G$, we define its
\emph{edge-formulation} by a function $f: V\times V\to \mathbb{R}$, such that
for any pair $(u,v)$ of nodes with $(u,v)\in E$, $f(u,v)$ equals the total
amount of flow sent from $u$ to $v$ along the edge $(u,v)$ minus the total
amount of flow sent from $v$ to $u$ along the edge $(u,v)$. Note that for all
pairs $(u,v)$ such that $(u,v)\in E$, $f(u,v)=-f(v,u)$, and $f(u,v)=0$ for all
pairs $(u,v)$ with $(u,v)\notin E$. We will also refer to an edge-formulation
$f$ by a flow.
Given a flow problem $\Pi=(\Delta, T, \capacity)$ and a flow $f$ on $G$, for
each node $v\in V$, we define $f_\Delta(v)=\Delta(v)+\sum_{u}f(u,v)$ to be the amount
of mass ending at $v$ after routing the flow $f$ from the initial source
function $\Delta$. We say that $f$ is a \emph{feasible flow} of $\Pi$ if
$|f(u,v)|\le \capacity(u,v)$ for each edge $(u,v)\in E$,
$\sum_{u}f(v,u)\le \Delta(v)$ for each $v\in V$, and $0 \leq f_\Delta(v)\le T(v)$ for each
$v\in V$. %\harry{Shouldn't the latter just be $0\le f_\Delta(v)\le T(v)$}

%We say that an edge $(u,v)$ of $E$ is \emph{saturated} (in the direction from $u$ to $v$) by $f$ if $f(u,v)=\capacity(u,v)$. If $f(v) \geq T(v)$, we say that $v$'s sink is saturated.

%The following proposition is a generalization of Proposition A.3 in \cite{SaranurakW19expander} and it is crucial for proving the last property of Lemma~\ref{lem:dynamic_expander_pruning}. We defer its proof to Section~\ref{subsec:proof_of_proposition}.
%
%\begin{proposition}
%	\label{pro:dynamic_flow}
%	Let $G=(V,E)$ be a $\phi$-expander for some $\phi \in(0,1)$. Let $E^{-},E^{+}$ be two disjoint sets of edges such that $E^{-}\subseteq E$ and $E^{+}\cap E=\emptyset$. Let $G'=(G\cup E^{+})\setminus E^{-}$. For any subset $S \subseteq V$, if $G'\{A\}$ is not an $(\phi/6)$-expander, then the following flow problem in $G\{A\}$ is not feasible:
%	\begin{enumerate}
%		\item each node $v$ is a sink of capacity equal to its degree $\deg(v)$;
%		\item each edge has capacity $2/\phi$;
%		\item each edge in $E(A,V\setminus A)$ is a source of $2/\phi$ units. For each endpoint $u$ of an edge in $E^{-}\cup E^{+}$, add an initial mass of $8/\phi$ units at $u$.
%	\end{enumerate} 
%\end{proposition}

%We will prove the following in the Appendix.
The following subroutine is implicit in \cite{SaranurakW19expander}. They use it as a key subroutine for implementing expander trimming and pruning.
\begin{lemma}[Incremental Flow]  \label{lem: incrementalFlow}
	Given an $m$-edge graph $G=(V,E)$, and a flow problem $(\Delta,T,c)$ on $G$ where (i) each edge has integral capacity $1 \leq c_e \leq c_{\max}$, and (ii) each vertex $v$ can absorb $T(v) = \deg(v)$ mass of flow, there is a deterministic algorithm that maintains an incremental, initially empty set $P \subseteq V$ (i.e., vertices can only join $P$ through time) under a sequence of source-injecting operations of the following form: given $v \in V$, increase $\Delta(v)$. 
	
	At any time, as long as $\sum_{v \in V} \Delta(v) \leq \vol(V)/3$, the algorithm guarantees that  
	\begin{enumerate}
		\item the flow problem $(\Delta',T',c')$ on $G[V \setminus P]^1$ is feasible, where $\Delta'(v) = \Delta(v) + c(E(\{v\},P))$ for all $v \in V$, and $T'$, $c'$ are $T$, $c$ restricted to $V \setminus P$, respectively, and
		\item $\vol(P) \leq 2\sum_{v \in V} \Delta(v)$ and $|E(P, V \setminus P)| \leq \frac{2 \sum_{v \in V} \Delta(v)}{\min_e \{c_e\}}$.
	\end{enumerate}
	The total update time is $O(c_{\max} \sum_{v \in V} \Delta(v) \log m)$.
\end{lemma}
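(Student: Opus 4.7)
My plan is to implement the algorithm as an incremental push-based flow routine in the style of Saranurak--Wang's unit-flow / expander-pruning. I maintain throughout a flow $f$ on $G[V\setminus P]^1$, an excess $x(v)\ge 0$ and a height $h(v)\in\{0,\dots,h_{\max}\}$ for each $v\in V\setminus P$ (with $h_{\max}=\Theta(\log m)$), and the incremental pruned set $P$, together with the modified source $\Delta'(u)=\Delta(u)+c(E_G(\{u\},P))$. The invariant is that whenever the algorithm is idle, $x\equiv 0$ and $f$ is feasible for $(\Delta',T',c')$ on $G[V\setminus P]^1$.

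On an injection that increments $\Delta(v)$ by $1$: if $v\in P$, merely record it; otherwise set $x(v)\gets x(v)+1$ and run push-relabel against residual sink capacities $T'(u)-f_\Delta(u)$: push one unit of excess from $u$ downhill on a residual edge, absorb at any sink with spare capacity, relabel $u$ if no downhill residual neighbour exists. A vertex about to exceed $h_{\max}$ triggers a prune: let $R\subseteq V\setminus P_{\mathrm{old}}$ be the set of vertices residually reachable from the stuck vertex, set $P\leftarrow P_{\mathrm{old}}\cup R$, and for every $u\in V\setminus P$ add $c(E_G(\{u\},R))$ to $\Delta'(u)$. Feasibility is preserved at each prune because (a) every edge from $R$ to $V\setminus P_{\mathrm{old}}\setminus R$ is saturated outward (else the residual would extend), (b) every sink in $R$ is full, i.e., $f_\Delta(v)=T'(v)$ for $v\in R$ (else the excess would be absorbed), and (c) the $\Delta'$ update precisely internalises the new boundary into $P$.

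For the bounds on $P$, summing mass conservation over $R$ at its prune moment, and using (a)--(b), gives $\Delta'(R) \ge f_\Delta(R) + c(E_G(R, V\setminus P_{\mathrm{old}}\setminus R)) \ge \vol_G(R)$; substituting $\Delta'(R) = \Delta(R) + c(E_G(R,P_{\mathrm{old}}))$ yields $\Delta(R) \ge \vol_G(R) - c(E_G(R,P_{\mathrm{old}}))$. Summing over all prune batches, the terms $\sum_R c(E_G(R,P_{\mathrm{old}}))$ telescope to the capacity of cross-batch internal edges of $P$, and combined with $\vol_G(P) = 2c(E_G(P,P)) + c(E_G(P,V\setminus P))$ one obtains $\sum_v \Delta(v) \ge \vol_G(P)/2$, i.e., $\vol(P) \le 2\sum_v\Delta(v)$. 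For the cut bound, the outward-saturation in (a) applied at each prune yields $c(E_G(P,V\setminus P)) \le \sum_v\Delta(v)$, so $|E_G(P,V\setminus P)| \cdot \min_e c_e \le c(E_G(P,V\setminus P)) \le 2\sum_v\Delta(v)$. Runtime follows from standard push-relabel amortisation with height cap $O(\log m)$: each unit of source contributes $O(h_{\max})$ pushes/relabels along its trajectory, and saturating pushes on any edge total at most $c_{\max}$ over the lifetime, giving $O(c_{\max}\sum_v\Delta(v)\log m)$ in total.

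The principal obstacle I foresee is the telescoping step in the volume accounting: one must carefully analyse whether an internal edge of $P$ has its two endpoints in the same or in different prune batches (edges in the same batch contribute nothing to $\sum_R c(E_G(R,P_{\mathrm{old}}))$), and rule out double-counting when a single injection cascades into nested prunes. Processing prunes one batch at a time and exploiting $R \cap P_{\mathrm{old}} = \emptyset$ at each moment of creation should suffice. If the direct push-relabel runtime is difficult to bound by $O(c_{\max}\sum\Delta\log m)$, I can instead subdivide each edge of capacity $c$ into $c$ parallel unit edges and invoke the classical unit-flow analysis, at the cost of only an extra $c_{\max}$ factor in the edge count.
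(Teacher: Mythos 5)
First, note that the paper does not actually prove this lemma: it is imported as ``implicit in \cite{SaranurakW19expander}'', so the relevant comparison is with the Unit-Flow--based trimming/pruning machinery there, and your plan is indeed in that spirit (bounded-height push--relabel, prune a blocked region, convert the capacities of cut edges into new sources). The telescoping step you flag as the main obstacle is in fact the easy part: per prune batch you should keep the term $c(E_G(R,\text{later-pruned}))$ inside the outward-saturated cut, so that the between-batch capacity terms cancel exactly and $\sum_v\Delta(v)\ge \vol_G(P)$ follows (your identity $\vol_G(P)=2c(E_G(P,P))+c(E_G(P,V\setminus P))$ conflates degrees with capacities, since $T(v)=\deg(v)$ while $c_e$ can be as large as $c_{\max}$, but this is cosmetic).

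The genuine gap is your claim (b). With a height cap $h_{\max}=\Theta(\log m)$, a vertex being stuck at $h_{\max}$ only certifies that every residual path from it to a vertex with remaining sink capacity has length at least $h_{\max}$: vertices with spare sink capacity never accumulate excess, hence never relabel and sit at height $0$, and valid heights are only a lower bound on residual distance. It does not certify that no such path exists. Consequently the set $R$ of \emph{all} residually reachable vertices can contain height-$0$ vertices whose sinks are not full, and the justification ``else the excess would be absorbed'' is invalid --- push--relabel only routes along admissible (height-decreasing) edges, not along arbitrary residual paths of length $\ge h_{\max}$. Since (b) is exactly what gives $\mathrm{absorbed}(R)=\vol_G(R)$, both the inequality $\Delta'(R)\ge\vol_G(R)+c(E_G(R,V\setminus P_{\mathrm{old}}\setminus R))$ and hence the bounds on $\vol(P)$ and $|E(P,V\setminus P)|$ (and the factor-$2\sum_v\Delta(v)$ mass bound you later feed into the running time to absorb the $c_{\max}$ factor) are unsupported. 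If you instead drop the height cap so that a prune really happens only when no augmenting path exists (then (a)+(b) are the classical min-cut certificate and everything above goes through), the claimed $O(c_{\max}\sum_v\Delta(v)\log m)$ time is no longer justified, because residual paths and heights can be $\Theta(n)$ long; and subdividing edges into unit-capacity copies does not change this. Reconciling an $O(\log m)$ height cap with a pruning rule that still admits a saturated/full-sink certificate (or an alternative accounting) is precisely the non-trivial content of the Unit-Flow analysis in \cite{SaranurakW19expander}; your write-up assumes this reconciliation rather than proving it. A smaller point: you never use the hypothesis $\sum_v\Delta(v)\le\vol(V)/3$, which is needed to guarantee the process can always restore feasibility on the unpruned part.
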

Let us give some intuition about this subroutine. We are given a graph $G$ that undergoes a sequence of ``injecting'' mass operations, after some time the total mass will not be routable (i.e.~the flow problem is not feasible) and get stuck, the above subroutine will maintain a growing set $P$ such that, the mass in the remaining part $G[V\setminus P]^1$ is routable. 
Moreover, this remains feasible even if we inject additional mass through the cut edges $E(P,V\setminus P)$ at full capacity.

%\thatchaphol{Add intuitive why it is useful?}

%%% Local Variables:
%%% mode: latex
%%% TeX-master: "main_RaeckeTree.tex"
%%% End: 
        
         \def\ivol{\operatorname{ivol}}
 
\def\gkrv{\gamma_{\textsc{cmp}}}
\def\indGraph #1#2#3{%
  #1[#2]^{#3}
}
\def\U{\mathcal{U}}
\def\expansion{\operatorname{expansion}}

 \def\fI{f_I}
  \def\fB{f_B}

% Usage:
%$\induced{U}$         %--> G[U]
%$\induced[G_i]{U}$    %--> G_i[U]
%$\induced[G_i]{U}[w]$ %--> G_i[U]^w
%$\induced{U}[w]$      %--> G[U]^w

\section{Boundary-Linked Expander Decomposition and Hierarchy}
In this section we formally introduce the notion of a boundary-linked expander
decomposition, which is the main concept of this paper.

\begin{definition}(Boundary-Linkedness)
For a graph $G=(V,E)$ and parameters $\alpha, \phi\in(0,1)$
we say that a cluster $U\subseteq V$ is $(\alpha,\phi)$-boundary-linked
in $G$ if the graph $\induced{U}[\alpha/\phi]$ is a $\phi$-expander.
\end{definition}

%\thatchaphol{I add this version because I need the definition of expander
%  hierarchy where $\alpha$ is not fixed.}For a graph $G=(V,E)$ and a parameter
%$\alpha>0, 1/2\le \gamma <1$, we say that a cluster $C\subseteq V$ is
%\emph{$(\alpha,\gamma)$-boundary-linked in $G$} if for any set $S \subseteq C$ where $\vol(S) \le \gamma \vol(C)$,
%\begin{equation}
%\label{eqn:boundary-linkedness}
%|E_G(S,C\setminus S)|\ge\alpha\cdot |E_G(S,V\setminus C)|\enspace.
%\end{equation}
%We say that $C$ is \emph{$(\alpha,\gamma, \phi)$-linked in $G$} if $C$ is
%$(\alpha,\gamma)$-boundary-linked in $G$ and $\Phi_{G\{C\}}\ge \phi$.
%Throughout the paper, $\gamma$ is very close to $1/2$.

Intuitively, the conductance $\Phi_{\induced{U}[1]}$ (i.e., when we choose
$\alpha=\phi$) measures how well the edges of the cluster $U$ (including the
boundary edges $\Gamma_G(U)$) are connected \emph{inside} the cluster.
$\Phi_{\induced{U}[1]}\ge\phi$ means that we can solve an all-to-all
multicommodity flow problem between the edges of $E_G(U,V)$ (i.e.~edges incident to $U$)
inside $\induced{U}$ with congestion at most $\ot(1/\phi)$.\footnote{In an
  all-to-all multicommodity flow problem \emph{between a subset of edges $E'$} in
  $G[U]$, there is a weight $w(v):= \sum_{e\in E'}|\{v\}\cap e|$ assigned to
  every vertex $v\in U$. Then the demand between two vertices $u,v\in U$ is
  $w(u)w(v)/w(U)$.
}
Boundary linkedness with a parameter $\alpha\gg\phi$, means that the boundary
edges themselves have higher connectivity. We can solve an all-to-all
mutlicommodity flow problem between boundary-edges with congestion
$\ot(1/\alpha)$ inside $U$.

Next, we define the notion of a boundary-linked expander decomposition and
that of an expander hierarchy. These are the central definitions in this paper.

\begin{definition}[Boundary-Linked Expander Decomposition]
\label{def:decomp}
Let $G=(V,E)$ be a graph and
$\alpha,\phi\in(0,1)$ be parameters. Let $U\subseteq V$ be a cluster in $G$.
%where $\vol_{G}(U)=m$ and $|E_{G}(U,V\setminus U)|\le b$.

An \emph{$(\alpha,\phi)$-boundary-linked expander decomposition of $U$ in $G$} with slack $s\ge1$
consists of a partition $\uset=\{U_{1},\dots,U_{k}\}$ of $U$ together with a
conductance-bound $\phi_i\ge\phi$ for every $i\in1,\dots,k$ such that the
following holds:
\begin{enumerate}
\item $\sum_{i=1}^{k}\out_G(U_{i})\le O(\out_G(U))+\Ot(\phi\vol_G(U))$.\label{ed:globalboundary}
\item For all $i$: $\induced{U_i}[\alpha/\phi_i]$ is a $(\phi_i/s)$-expander.\label{ed:expanding}
\item For all $i$: $\out_G(U_i)\le \Ot(\phi_i\vol_G(U_i))$.\label{ed:localboundary}
\end{enumerate}
% 
% 
%Throughout this section, our static algorithm can compute the above
%decomposition with slack $1$ (i.e.~no slack) and we will omit writing it. The
%slack parameter is introduced for the dynamic algorithm where we start with the
%decomposition with no slack, but after update we only maintain the
%decomposition with some small slack.
\end{definition}

When we have an expander-decomposition with slack $1$, we will usually not
mention the slack and just call it an $(\alpha,\phi)$-boundary linked expander
decomposition. The notion of slack will not be important for our static
constructions but only becomes important for maintaining boundary-linked
expander decompositions dynamically. Instead of writing
\enquote{$(\alpha,\phi)$-boundary-linked expander decomposition}, we sometimes
write \enquote{$(\alpha,\phi)$-expander decomposition} or just
\enquote{$(\alpha,\phi)$-ED}.
If $U=V$, then we say that $\uset$ is an $(\alpha,\phi)$-ED of $G$.

\begin{definition}[Expander Decomposition Sequence] Let $G=(V,E)$ be a graph
with $m$ edges and $\alpha,\phi\in(0,1)$ be parameters. We say that a sequence
of graphs $(G^{0},G^{1},\dots,G^{t})$ is an \emph{$(\alpha,\phi)$-expander
  decomposition sequence of $G$ with slack $s$} or $(\alpha,\phi)$-ED-sequence of $G$ if (1)
$G^{0}=G$, (2) $G^{t}$ has no edge, and (3) $G^{i+1}=G_{\uset^{i}}^{i}$ is the
contracted graph of $G^{i}$ w.r.t.\ to $\uset^{i}$ where $\uset^{i}$ is an
$(\alpha,\phi)$-ED of $G^{i}$ with slack $s$.
\end{definition}

\begin{definition}[Expander Hierarchy]	     
An $(\alpha,\phi)$-ED sequence $(G^{0},G^{1},\dots,G^{t})$ naturally
corresponds to a tree $T$ where (1) the set of nodes at level $i$ of $T$ is
$V_{i}(T)=V(G^{i})$ and (2) a node $u_{i}\in V_{i}(T)$ has a parent
$u_{i+1}\in V_{i+1}(T)$ if $u_{i}\in V(G^{i})$ is contracted into the
super-vertex $u_{i+1}\in V(G^{i+1})$. The edge $(u_i,u_{i+1})$ is assigned a
capacity of $\deg_{G^i}(u_i)$. We call $T$ an \emph{$(\alpha,\phi)$-expander
  hierarchy} or $(\alpha,\phi)$-EH (with slack $s$).
\end{definition}

\noindent
The next theorem is the main result in this section. Throughout this section,
we define $\gkrv=O(\log^2m)$. This is a value derived from the approximation
guarantee for sparsest cut of the cut-matching-game \cite{khandekar2009graph}
on an $m$-edge graph.
	
\begin{thm}\label{thm:decomp}
There is a randomized algorithm that, given a graph $G=(V,E)$, a cluster
$U\subseteq V$ with $\vol_{G}(U)=m$ and $\out_G(U)= b$, and
parameters $\alpha,\phi$, with $\alpha\le{1}/(4\gkrv\log_2 m)$ computes an
$(\alpha,\phi)$-ED of $U$ in $\tilde{O}(b/\phi^2+m/\phi)$ time with high
probability. In particular
\begin{enumerate}
\item $\sum_{i=1}^{k}\out_G(U_{i})\le 4\out_G(U)+O(\log^3m\cdot\phi\vol_G(U))$.
\item For all $i$: $\induced{U_i}[\alpha/\phi_i]$ is a $\phi_i$-expander.
\item For all $i$: $\out_G(U_i)\le O(\log^6m\cdot\phi_i\vol_G(U_i))$.
\end{enumerate}
\end{thm}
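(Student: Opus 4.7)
The plan is to adapt the recursive expander-decomposition framework of Saranurak-Wang \cite{SaranurakW19expander} so that the cut-matching game is run on the boundary-augmented graph $G[W]^{\alpha/\phi_W}$ rather than on the plain induced subgraph $G[W]$. The algorithm maintains a current cluster $W\subseteq U$ (initialized to $U$), and at each recursive call I first set a local conductance target $\phi_W=\Theta(\max\{\phi,\out_G(W)/(\polylog(m)\cdot\vol_G(W))\})\ge\phi$. This choice guarantees that if $W$ is eventually output as one piece then Property 3 is satisfied, and it also keeps the self-loop weight $\alpha/\phi_W$ small enough that $\vol_{G[W]^{\alpha/\phi_W}}(W)$ is within a constant factor of $\vol_G(W)$. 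I then invoke the cut-matching game on $G[W]^{\alpha/\phi_W}$ with target conductance $\phi_W$; after $\gkrv=O(\log^2m)$ rounds it returns one of three outcomes: (i) a certificate that $G[W]^{\alpha/\phi_W}$ is a $\phi_W$-expander, in which case $W$ becomes an output cluster with $\phi_i=\phi_W$; (ii) a roughly balanced sparse cut $(A,B)$ in $G[W]^{\alpha/\phi_W}$ of conductance $O(\gkrv\phi_W)$, in which case we recurse on both $A$ and $B$; or (iii) a most-balanced sparse cut whose large side $B$ is only a near-$\phi_W$-expander in $G[W]^{\alpha/\phi_W}$.

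Case (iii) is the heart of the algorithm and requires a trimming procedure built on the incremental-flow primitive of \Cref{lem: incrementalFlow}. I would set up a single-commodity flow problem on $G[B]^{\alpha/\phi_W}$ in which each edge of $E_G(A,B)$ and each boundary edge of $W$ incident to $B$ injects $O(1/\phi_W)$ units of source, with sink capacity $T(v)=\deg_{G[B]^{\alpha/\phi_W}}(v)$, and then apply \Cref{lem: incrementalFlow} incrementally as the sources are placed. The lemma returns a pruned set $P\subseteq B$ with $\vol_G(P)$ and $|E_G(P,B\setminus P)|$ bounded by a constant times the total injected mass divided by $\phi_W$, and certifies feasibility on $G[B\setminus P]^{\alpha/\phi_W}$. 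A standard argument then promotes this feasibility into the statement that $G[B\setminus P]^{\alpha/\phi_W}$ is itself a $\Omega(\phi_W)$-expander, using that $B$ was a near-$\phi_W$-expander to begin with. The surviving cluster $B\setminus P$ is output with conductance $\phi_W$, while $A\cup P$ is recursed on.

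For the analysis, Property 2 is immediate from case (i) and the trimming guarantee in case (iii). Property 3 follows from the choice of $\phi_W$ at the call that outputs each cluster, combined with the fact that trimming enlarges $\out_G$ of the surviving piece by only $O(|E_G(P,B\setminus P)|)$, which is within the budget implied by $\phi_W$. Property 1 follows from a recursion-tree charging: every balanced split in case (ii) creates $O(\gkrv\phi_W\vol_G(W))$ new boundary and the recursion has $O(\log m)$ balanced levels, multiplying to the $O(\log^3 m\cdot\phi\vol_G(U))$ term, with the $4\out_G(U)$ term absorbing the original boundary and the lower-order recursion charges from case (iii). The running time $\Ot(b/\phi^2+m/\phi)$ follows because each cut-matching round and each trimming call on a subproblem of volume $v$ costs $\Ot(v/\phi_W)$, and amortization against the input boundary $b$ (for the $b/\phi^2$ term) and against the $\Ot(\phi m)$ newly created boundary (for the $m/\phi$ term) yields the bound. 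The main obstacle I foresee is the trimming step of case (iii): one must argue that inflating boundary edges to weight $\alpha/\phi_W$ does not blow up the pruned volume or the newly exposed boundary, and this is precisely what the hypothesis $\alpha\le 1/(4\gkrv\log_2 m)$ is designed to enable, by keeping the total injected source mass at most a constant fraction of $\vol_G(B)$ and hence within the regime of applicability of \Cref{lem: incrementalFlow}.
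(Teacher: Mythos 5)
Your high-level building block (cut-matching plus a trimming step run on the boundary-augmented graph $\induced{W}[\alpha/\phi_W]$, with feasibility certified via \Cref{lem: incrementalFlow}) is exactly the paper's key subroutine, but the way you organize the recursion and, crucially, the way you account for Property~1 has a genuine gap. The paper does not recurse with a per-cluster target $\phi_W$ and a simple level-by-level charge; it runs rounds with a \emph{globally} adapted $\varphi$ and an explicit charging scheme, and the reason is precisely the step you wave through. When the cut-matching game is run on $\induced{W}[\alpha/\phi_W]$, the sparsity guarantee is relative to the \emph{augmented} volume, so the newly created inter-cluster edges are bounded by $\gkrv\phi_W\cdot\ivol(S)+\gkrv\alpha\cdot\border_{W}(S)$, not by $\gkrv\phi_W\vol_G(W)$ alone. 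The second term means that previously created cut edges (which sit on cluster boundaries and drive up both $\out_G(W)$ and your adaptive $\phi_W$) spawn further cut edges in descendant calls; summing ``$O(\gkrv\phi_W\vol_G(W))$ per split over $O(\log m)$ levels'' ignores this multiplicative cascade, and ``absorbing the lower-order recursion charges into $4\out_G(U)$'' is exactly the claim that needs proof. Controlling the cascade is where the hypothesis $\alpha\le 1/(4\gkrv\log_2 m)$ is actually used: in the paper's charging argument the condition $4\gkrv\alpha Z\le 1$ (with $Z=\log_2\vol(U)$) guarantees that the charge freed on border edges of the smaller side dominates the $\gkrv\alpha\cdot\border$ portion of the new cut, so the total charge never increases and Property~1 follows by comparing against the initial charge. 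You instead attribute this hypothesis to the trimming step; for trimming only the much weaker condition $w=\alpha/\phi_i<1/(8\phi_i)$, i.e.\ $\alpha<1/8$, is needed (this is the precondition of \Cref{thm:cut-matching-trimming}). So as written, your proof of Property~1 is missing its central argument, and the stated role of the $\alpha$-hypothesis is misplaced.

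Two further points worth fixing. First, your trimming sets up sources of $O(1/\phi_W)$ on \emph{every} boundary edge of $W$ incident to $B$, not just on the cut $E_G(A,B)$; since your $\phi_W$ only guarantees $\out_G(W)\le\polylog(m)\cdot\phi_W\vol_G(W)$, the injected mass can be $\polylog(m)\cdot\vol_G(W)$, violating the $\vol/3$ mass precondition of \Cref{lem: incrementalFlow}. The paper avoids this by running the whole cut-matching+trimming inside the augmented graph, where the old boundary is encoded as self-loops and only the (unbalanced, hence small) new cut injects mass, so the bound $|E(A,\bar A)|\le\phi m/16$ keeps the flow problem feasible. Second, the paper enforces Property~3 not by fixing $\phi_W$ up front but by a post-hoc check ($\out(U_i)\le 80\gkrv\log^4 m\cdot\varphi\vol(U_i)$) and by re-running unhappy clusters in the next round with a larger global $\varphi$, with an averaging argument showing the active volume halves per round (so $R\le\log_2 m$); your per-cluster choice of $\phi_W$ could plausibly be made to work, but it feeds the larger $\phi_W$ back into the Property~1 accounting, which again lands you on the cascade issue above. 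In short: right subroutine, but the global bookkeeping that makes Properties~1 and~3 coexist — the paper's round structure and charge-redistribution argument, keyed to $\alpha\le 1/(4\gkrv\log_2 m)$ — is the actual content of the proof and is absent from your sketch.
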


	As an $(\alpha,\phi)$-ED-sequence and its corresponding $(\alpha,\phi)$-expander
	hierarchy can be naturally computed bottom up given the above algorithm,
	we immediately get the following corollary.
	\begin{cor}\label{cor:hierarchy}
		There is a randomized algorithm that, given a graph $G$ with $m$
		edges and parameters $\alpha,\phi$, with
                $\alpha\le{1}/(2\gkrv\log_2 m)$ computes an $(\alpha,\phi)$-expander
		decomposition sequence of $G$ and its corresponding expander hierarchy
		in $\tilde{O}(m/\phi)$ time with high probability. 
	\end{cor}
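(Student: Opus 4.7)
The plan is to construct the sequence $(G^0,\dots,G^t)$ bottom-up by iterating Theorem~\ref{thm:decomp}. Starting with $G^0:=G$, at each level $i\ge 0$ I invoke Theorem~\ref{thm:decomp} on the cluster $U:=V(G^i)$ (the entire vertex set of the current graph) to obtain an $(\alpha,\phi)$-ED $\uset^i$ of $G^i$, and then set $G^{i+1}:=G^i_{\uset^i}$. I halt at the first level $t$ at which $G^t$ has no edges. The corresponding expander hierarchy $T$ is then read off directly from this sequence by its definition, so the task reduces to verifying legality of the invocations, termination, and the total running-time bound.

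The crucial simplification is that when the decomposed cluster is the full vertex set we have $\out_{G^i}(V(G^i))=0$, so Property~\ref{ed:globalboundary} of Theorem~\ref{thm:decomp} collapses to
\[
\vol(G^{i+1})\;=\;\sum_{j}\out_{G^i}(U^i_j)\;\le\;O(\log^3 m)\cdot\phi\cdot\vol(G^i).
\]
In the intended regime, where $\phi$ is small enough that the factor $O(\log^3 m)\cdot\phi$ is bounded away from $1$ (forced implicitly by $\phi\le\alpha\le 1/(2\gkrv\log_2 m)$ with $\gkrv=O(\log^2 m)$), this gives geometric shrinkage, yielding termination at some $t=O(\log m/\log(1/(\phi\log^{O(1)} m)))$ and $\sum_i\vol(G^i)=O(m)$.

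For the running time, the $i$th invocation of Theorem~\ref{thm:decomp} costs $\tilde O(b_i/\phi^2+\vol(G^i)/\phi)=\tilde O(\vol(G^i)/\phi)$, again because $b_i=0$; summing the geometric series yields the claimed total $\tilde O(m/\phi)$. Legality at level $i$ requires $\alpha\le 1/(4\gkrv(m_i)\log_2 m_i)$, but since both $\gkrv$ and $\log_2$ are monotone in the graph size and $m_i\le m$, the top-level hypothesis $\alpha\le 1/(2\gkrv\log_2 m)$ of the corollary dominates the per-level requirement (absorbing the constant). The main (mild) obstacle is this bookkeeping: pinning down that the shrinkage factor is strictly below $1$ so the geometric sum converges, and checking that the single top-level bound on $\alpha$ certifies the precondition of Theorem~\ref{thm:decomp} at every level. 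Everything else is a mechanical consequence of the previous theorem.
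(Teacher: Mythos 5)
Your proposal is correct and follows essentially the same route as the paper: the paper obtains \Cref{cor:hierarchy} immediately by applying \Cref{thm:decomp} bottom-up with $U=V(G^i)$ at every level, exactly as you do, and your observations (that $\out_{G^i}(V(G^i))=0$ makes the contracted volume shrink geometrically, that the per-level costs sum to $\tilde O(m/\phi)$, and that the $\alpha$-precondition at lower levels is absorbed into the $O$-notation for $\gkrv$) are just the bookkeeping the paper leaves implicit.
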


\noindent
The remainder of this section is devoted to proving \Cref{thm:decomp}.

\subsection{The Key Subroutine}

The major building block for our algorithm is the following sub-routine that
when applied to a cluster either (1) finds a sparse balanced cut, or (2) finds
a sparse unbalanced cut such that the large side of the cut forms a cluster
with good expansion. The sub-routine uses the cut matching game due to
Khandekar, Rao, and Vazirani\cite{khandekar2009graph} and adds a pruning step
(\cite{SaranurakW19expander}) for the case that the cut-matching step returns
a very unbalanced cut. The Pruning step is the same as
in~\cite{SaranurakW19expander} but here we give a different analysis that shows
a stronger guarantee.

\begin{lemma}[Cut-Matching + Trimming]
\label{thm:cut-matching-trimming}
Given an unweighted graph $G=(V,E)$ with $m$ edges and parameters $\phi,w$ with
$w<{1}/(8\phi)$, a cut-matching+trimming step runs in time $O(m\log
m/\phi)$ and must end in one of the following two cases:
\begin{enumerate}
%\item We conclude that $G$ has conductance $\Phi_G\geq \phi$. This conclusion
%is wrong with probability $o(m^{-10})$.
\item We find a cut $(A,\bar{A})$ of $G$
with $cut_G(A,\bar{A})\le \gkrv\cdot\phi\min\{\vol_G(A),\vol_G(\bar{A})\}$,
and $\vol_G(A)$, $\vol_G(\bar{A})$
are both $\Omega(m/\log^2 m)$, i.e., we find a relatively balanced low
conductance cut.
\item We find a cut $(A,\bar{A})$,
with $cut_G(A,\bar{A})\le \gkrv\cdot\phi\min\{\vol_G(A),\vol_G(\bar{A})\}$, and
$\vol_G(\bar{A})=m/10$. Moreover, we conclude that
$\induced{A}^w$ is a $\phi$-expander. This conclusion may be wrong with probability
$o(m^{-10})$. 
\end{enumerate}
\end{lemma}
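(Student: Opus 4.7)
The plan is to combine a cut-matching-game construction with a trimming step based on \Cref{lem: incrementalFlow}, generalizing the analysis of \cite{SaranurakW19expander} from the special case $w=1$ to the regime $w<1/(8\phi)$. The proof has two phases. In the first phase I would run the Khandekar--Rao--Vazirani cut-matching game on $G$ with target conductance $\phi$: the game proceeds over $T=O(\log^2 m)$ rounds, each solving one (approximate) max-flow in $G$ to route a prescribed matching with congestion $1/\phi$. With probability $1-o(m^{-10})$, one of two things happens. Either all $T$ rounds succeed, in which case the union of the matchings embeds an $\Omega(1)$-expander into $G$ with congestion $T/\phi$, certifying that $G$ itself is a $\phi$-expander; or some round's flow is not fully routable, and LP duality produces a sparse cut $(A_0,\bar A_0)$ with $\cut_G(A_0,\bar A_0)\le \gkrv\phi\cdot\min\{\vol_G(A_0),\vol_G(\bar A_0)\}$. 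If both sides already have volume $\Omega(m/\log^2 m)$, output it and stop in Case~1. Otherwise proceed to trimming with the small side as the initial $\bar A$.

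The trimming phase maintains $A=V\setminus\bar A$ and uses \Cref{lem: incrementalFlow} to grow $\bar A$ until $G[A]^w$ is certifiably a $\phi$-expander. For the flow setup I would inject $\gkrv\phi w$ units of source mass at the $A$-endpoint of each boundary edge in $E_G(A,\bar A)$, set each sink capacity to $\deg_G(v)$, and use unit edge capacities. Intuitively, this mass represents the demand that the boundary vertices must absorb in order to simulate, inside $G[A]^w$, those KRV matching endpoints that sat in $\bar A$. If all mass is routed, then composing this flow with the KRV embedding yields an embedding of an $\Omega(1)$-expander into $G[A]^w$ with congestion $O(1/\phi)$, certifying Case~2. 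Otherwise \Cref{lem: incrementalFlow} returns an obstruction set $P\subseteq A$ with $\vol_G(P)$ and $|E_G(P,A\setminus P)|$ both at most twice the total injected mass; I would add $P$ to $\bar A$ and iterate.

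The main obstacle, and the reason for the hypothesis $w<1/(8\phi)$, is controlling $\vol_G(\bar A)$ across iterations. Each iteration injects at most $\gkrv\phi w\cdot |E_G(A,\bar A)|$ units of mass, and the sparse-cut guarantee bounds $|E_G(A,\bar A)|\le\gkrv\phi\vol_G(\bar A)$; combined with $w<1/(8\phi)$, this keeps the total injected mass well below the $\vol(V)/3$ feasibility threshold of \Cref{lem: incrementalFlow} and forces a geometric-style contraction on $\vol_G(\bar A)$. Consequently $\bar A$ either stabilizes with $\vol_G(\bar A)\le m/10$ (Case~2, with the $\gkrv\phi$ sparsity of the final cut inherited through the iterations) or crosses $m/10$ (Case~1, since at that point the cut is balanced). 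The running time is dominated by $O(\log^2 m)$ approximate max-flows of $O(m/\phi)$ each, plus $O(m\log m/\phi)$ for the trimming. I expect the trickiest bookkeeping to be the self-loop accounting in the composition step: checking that the larger self-loop weight $w$ does not inflate the congestion budget of the spliced expander embedding, which is exactly what the hypothesis $w<1/(8\phi)$ buys.
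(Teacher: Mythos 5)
There is a genuine gap, and it sits exactly at the heart of the trimming step. First, your cut-matching phase never extracts the property that the paper's argument (and \cite{SaranurakW19expander}) actually relies on: in the unbalanced case, the cut-matching step of \Cref{thm:cut-matching} certifies (w.h.p.) that the \emph{large} side $A$ is a \emph{near $8\phi$-expander} in $G$, i.e., every low-volume $S\subseteq A$ satisfies $\cut_{A}(S)+\border_{A}(S)\ge 8\phi\,\vol_G(S)$. You only say ``LP duality produces a sparse cut,'' and your certification of $G[A]^w$ instead goes through an unproven ``compose the trimming flow with the KRV embedding'' step. But in the failure case there is no expander embedded into $G$ to compose with --- that is precisely why only near-expansion of $A$ is available --- so the probabilistic statement your Case~2 is supposed to inherit is never identified or used. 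The paper's proof replaces your composition step by a purely combinatorial deduction: near-expansion of $A$ plus feasibility of a boundary-flow problem on $G[A']$ implies, for every candidate cut $S$, first $\border_{A'}(S)\le 2\cut_{A'}(S)$ and then $\cut_{A'}(S)\ge\phi\bigl(\vol_G(S)+w\,\border_{A'}(S)\bigr)\ge\phi\,\vol_{G[A']^w}(S)$, which is where the hypothesis $w\le 1/(8\phi)$ is actually consumed (together with the volume bookkeeping $\vol_G(S)\le\tfrac23\vol_G(A')$), not in bounding total injected mass.

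Second, your flow problem is mis-scaled, so even granting near-expansion the needed inequality does not follow. To force $\border_{A'}(S)=O(\cut_{A'}(S))$ --- which is indispensable because $\vol_{G[A']^w}(S)$ contains the term $w\,\border_{A'}(S)$ with $w$ as large as $\Theta(1/\phi)$ --- one must inject $\Theta(1/\phi)$ units of mass per boundary edge and give edges capacity $\Theta(1/\phi)$ (the paper uses $2/\phi$ for both, via \Cref{lem: incrementalFlow} run on $G[A]^1$). Feasibility restricted to $S$ then reads $\tfrac{2}{\phi}\border_{A'}(S)\le \vol_G(S)+\tfrac{2}{\phi}\cut_{A'}(S)$, and combined with near-expansion this yields $\border_{A'}(S)\le 2\cut_{A'}(S)$. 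With your choice of $\gkrv\phi w\le\gkrv/8=O(\log^2 m)$ units per boundary edge and \emph{unit} edge capacities, the corresponding inequality $\gkrv\phi w\,\border_{A'}(S)\le\vol_G(S)+\cut_{A'}(S)$ is vacuous in the relevant regime $\phi\ll 1/\log^2 m$ (the coefficient on the left is dwarfed by $\vol_G(S)\le\tfrac{1}{4\phi}(\cut+\border)$), and moreover with unit capacities the flow problem will typically be infeasible even when $A$ is a perfectly good near-expander, so your iterative ``geometric contraction'' of $\bar A$ is both unsubstantiated and unnecessary --- the paper needs only a single trimming call, whose guarantees $\vol_G(P)\le\tfrac{4}{\phi}|E_G(A,\bar A)|$ and $|E_G(A',\bar{A'})|\le 2|E_G(A,\bar A)|$ come directly from \Cref{lem: incrementalFlow}. (Minor: your time bound $O(\log^2 m)$ max-flows of cost $O(m/\phi)$ overshoots the claimed $O(m\log m/\phi)$, which the paper inherits from \Cref{thm:cut-matching}.)
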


%\begin{comment}
%\bigskip
%\hrule
%\bigskip

\noindent
In the remainder of this section we prove the above theorem.

%{\color{red} bla bla}

\def\gorig{\gamma_{\textsc{krv}}}

\bigskip
\noindent
We  use a standard adaptation by
Saranurak and Wang~\cite{SaranurakW19expander} of the {\em cut-matching
  framework}, which was originally proposed by Khandekar, Rao and
Vazirani~\cite{khandekar2009graph}. The following \emph{cut-matching step} was proved
in~\cite{SaranurakW19expander}.
\begin{lemma}[Adapted Statement of Theorem 2.2 in~\cite{SaranurakW19expander}]
\label{thm:cut-matching}
Given an unweighted graph $G=(V,E)$ with $m$ edges and a parameter $\phi>0$, the
\emph{cut-matching step} takes $O(m\log m/\phi)$ time and must end with one of
three cases:
\begin{enumerate}
\item We conclude that $G$ has conductance $\Phi_G\geq8\phi$. This conclusion
is wrong with probability $o(m^{-10})$.
\item We find a cut $(A,\bar{A})$ of $G$ with conductance
$\Phi_G(A)\le\gorig\phi$, and $\vol_G(A),\vol_G(\bar{A})$
are both at least $m/(100\gorig)$, i.e., we find a relatively balanced low
conductance cut.
%{\color{red}
%\item We find a cut $(A,\overline{A})$, such that
%$\Phi_G(A)\le c_0\phi\log^2 m$ for some constant $c_0$ and
%$\vol_G(\overline{A})\leq m/(10c_0\log^2 m)$. Moreover, with probability
%$1-o(m^{-10})$, $A$ is a near $\phi$-expander.
%}
\item We find a cut $(A,\bar{A})$, such that
$\Phi_G(A)\le\gorig\phi$ and
$\vol_G(\bar{A})\le m/(100\gorig)$. Moreover, we conclude that
$A$ is a near $8\phi$-expander. This conclusion may be wrong with probability
$o(m^{-10})$.

%with probability
%$1-o(m^{-10})$, $A$ is a near $\phi$-expander.
\end{enumerate}
Here, $\gorig = O(\log^2 m)$. 
\end{lemma}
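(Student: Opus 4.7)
My plan is to invoke Lemma~\ref{thm:cut-matching} as a black box and convert each of its three possible outcomes into one of the two outcomes required here; only the unbalanced case requires any real work. If Lemma~\ref{thm:cut-matching} returns the balanced cut of its case~2, I output the same cut as case~1 of the present lemma, noting that $\gorig = \gkrv = O(\log^2 m)$ up to constants and that the volume bounds $\Omega(m/\gorig)$ on both sides match the required $\Omega(m/\log^2 m)$ balance. If Lemma~\ref{thm:cut-matching} concludes that $G$ itself is an $8\phi$-expander (its case~1), I output $A=V,\bar A=\emptyset$ under case~2: there are no boundary edges, so $\induced{V}^w = G$ is an $8\phi$-expander and hence a $\phi$-expander, and the trivial cut vacuously satisfies the bounds. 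Only the unbalanced outcome of Lemma~\ref{thm:cut-matching}, yielding a cut $(A,\bar A)$ with $\vol(\bar A)\le m/(100\gorig)$ and $A$ merely a \emph{near} $8\phi$-expander, needs the trimming subroutine below.

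For the trimming step I follow the flow-based recipe of~\cite{SaranurakW19expander}. I instantiate Lemma~\ref{lem: incrementalFlow} on the subgraph $\induced{A}$ with uniform edge capacity $c_e = c_{\max} := \lceil 2/\phi\rceil$, sink capacity $T(v) = \deg_G(v)$, and initial source $\Delta(v) = 2|E_G(v,\bar A)|$ at every $v \in A$. Since $\sum_v\Delta(v) = 2|E_G(A,\bar A)| \le 2\vol(\bar A) \le m/(50\gorig)$, which is far below $\vol(A)/3$, the hypothesis of Lemma~\ref{lem: incrementalFlow} holds and the routine returns an incremental pruned set $P\subseteq A$ in time $O(c_{\max}\sum_v\Delta(v)\log m) = O(m\log m/\phi)$; I output the cut $(A',V\setminus A')$ with $A':=A\setminus P$ as the witness of case~2.

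The main technical obstacle — and the only place the parameter $w$ and the hypothesis $w<1/(8\phi)$ actually enter — is certifying that $\induced{A'}^w$ really is a $\phi$-expander. The plan is a cut-versus-flow argument: for any $S\subseteq A'$ with $\vol_{\induced{A'}^w}(S)\le\vol_{\induced{A'}^w}(A')/2$, the desired expansion inequality unfolds (using $\vol_{\induced{A'}^w}(S) = \vol_G(S) + (w-1)\border_{G,A'}(S)$ and $\out_G(S) = \cut_{G,A'}(S) + \border_{G,A'}(S)$) to $\out_G(S) \ge \phi\vol_G(S) + \bigl(1+\phi(w-1)\bigr)\border_{G,A'}(S)$. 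Near-expansion of $A$ gives $\out_G(S) \ge 8\phi\vol_G(S)$, so it suffices to establish $\border_{G,A'}(S) \le 7\phi\vol_G(S)/\bigl(1+\phi(w-1)\bigr)$, and the hypothesis $\phi w \le 1/8$ collapses this lower bound to $\Theta(\phi\vol_G(S))$, matching (up to constants) the $w=1$ bound already enforced by the flow-feasibility conclusion of Lemma~\ref{lem: incrementalFlow}(1); if it failed, more than $c_{\max}\cdot\cut_{G,A'}(S)$ units of the injected source would have to escape through $\border_{G,A'}(S)$, contradicting feasibility. To close, Lemma~\ref{lem: incrementalFlow}(2) yields $\vol(P) \le 2\sum_v\Delta(v) \le m/(25\gorig)$ and $|E_G(P,A\setminus P)| \le 2\sum_v\Delta(v)/c_{\max} = O(\phi\vol(\bar A))$, which together with $\vol(\bar A)\le m/(100\gorig)$ give $\vol(V\setminus A')\le m/10$ and $\out_G(A') \le \gkrv\phi\vol(V\setminus A') = \gkrv\phi\min\{\vol(A'),\vol(V\setminus A')\}$ for a suitable choice of hidden constants, finishing the proof within the advertised $O(m\log m/\phi)$ time budget.
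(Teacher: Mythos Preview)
You have proved the wrong statement. The lemma you were asked to prove is \Cref{thm:cut-matching}, the three-case cut-matching step; this is a cited black box from \cite{SaranurakW19expander}, and the paper offers no proof of it at all. Your proposal instead proves \Cref{thm:cut-matching-trimming} (Cut-Matching + Trimming), the two-case lemma that \emph{uses} \Cref{thm:cut-matching} as a subroutine: your opening sentence ``invoke Lemma~\ref{thm:cut-matching} as a black box and convert each of its three possible outcomes into one of the two outcomes required here'' makes this explicit, and the parameter $w$ with the hypothesis $w<1/(8\phi)$ that you rely on appears only in \Cref{thm:cut-matching-trimming}, not in the statement at hand.

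If one reads your write-up as an attempt at \Cref{thm:cut-matching-trimming}, the overall architecture matches the paper's, but there is a quantitative slip in the trimming step: you inject $\Delta(v)=2|E_G(v,\bar A)|$ units of source, whereas the paper (and \cite{SaranurakW19expander}) injects $\tfrac{2}{\phi}$ units \emph{per boundary edge}, i.e.\ $\Delta(v)=\tfrac{2}{\phi}|E_G(v,\bar A)|$. The factor $1/\phi$ is not cosmetic: the feasibility argument that bounds $\border_{A'}(S)$ in terms of $\cut_{A'}(S)$ hinges on the injected mass at a boundary vertex being comparable to the edge capacity $2/\phi$, so that the flow cannot simply be absorbed locally but must be pushed across the cut. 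With your $\Delta$ the flow problem is trivially feasible (each vertex absorbs its own source since $2|E_G(v,\bar A)|\le\deg_G(v)$), $P=\emptyset$, and you learn nothing about $\border_{A'}(S)$.
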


Recall the definition of near expanders from \Cref{defn:nearlyexp}.
We remark that this is the only building block that is randomized in
our algorithms. Once we derandomize it, all our algorithms become
deterministic. In fact, in a recent paper~\cite{ChuzhoyGLNPS19det}, a
deterministic counterpart of the cut-matching step was constructed. We will use
their results and roughly show how to make our algorithms deterministic in
Section~\ref{sec:deterministic-EH}.

  In order to obtain \Cref{thm:cut-matching-trimming}, we proceed as
  follows. We run the cut-matching step from \Cref{thm:cut-matching} on the graph $G$. If we are in Case~1
  of \Cref{thm:cut-matching} we obtain a valid set $A$ for Case~2 in
  \Cref{thm:cut-matching-trimming}, where $\bar{A}$ is the empty set. If
  we are in Case~2 we obtain valids set $A,\bar{A}$ for Case~1 in
  \Cref{thm:cut-matching-trimming}. If we are in Case~3 we perform a
  \emph{trimming operation} on the set $A$ to obtain a set $A'$. 
   We will need to prove that the set $A'$ fulfills all
  properties required for \Cref{thm:cut-matching-trimming}.
  
  The trimming operation (stated below in \Cref{lem:trimming}) is algorithmically exactly the same as in \cite{SaranurakW19expander}. 
  The only difference is in the analysis; we open their black-box and state the guarantee about flow explicitly. 
  Then, we give a new analysis and conclude a stronger statement than the one in \cite{SaranurakW19expander}.
  More precisely, we show that $G[A']^w$ is a $\phi$-expander while they only show that $G[A']^1$ is a $\phi$-expander.

%\begin{theorem}[trimming step from \cite{SaranurakW19expander}]
%\label{thm:trimming}
%Given a graph $H=(U,E)$, and a subset $A\subseteq U$ such that
%$A$ is a near $\phi$-expander in $H$ and $E_H(A,\bar{A})\le\phi\vol_H(A)/10$.
%The trimming step finds $A'\subseteq A$
%such that $H[A']^1$ is a $\phi/6$-expander,
%$\vol_H(A')\ge\vol_H(A)-4|E_H(A,\bar{A})|/\phi$, and
%$|E_H(A',\bar{A'})|\le 2|E_H(A,\bar{A})|$. The running time is
%$O(\frac{1}{\phi^2}|E_H(A,\bar{A})|\log m)$.
%\end{theorem}

\begin{lemma}[Trimming]\label{lem:trimming}
We can compute a pruned set $P \subset A$ in time $O(\log m|E_G(A,\bar{A})|/\phi^2)$ with the following properties:
\begin{enumerate}
\item $\vol_G(P)\le\frac{4}{\phi}|E_G(A,\bar{A})|$\label{pruning:A}
\item $|E_G(A',\bar{A'})|\le 2|E_G(A,\bar{A})|$\label{pruning:B}
\end{enumerate}
where $A'=A\setminus P$. In addition the following flow problem is feasible
in $\induced{A'}$. 
\begin{itemize}
\item $\Delta(v)=\frac{2}{\phi}|E_G(\{v\},V\setminus A')|$ 
\item $T(v)=\vol_{G}(v)$ 
\item $c(e)=2/\phi$ for every edge in $G[A']$. 
\end{itemize}
\end{lemma}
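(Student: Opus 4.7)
The plan is to instantiate the incremental flow subroutine of \Cref{lem: incrementalFlow} on the induced graph $G[A]$, treating boundary edges of $A$ in $G$ as external sources. Concretely, on $G[A]$ I would set the uniform edge capacity $c(e)=2/\phi$, sink capacity $T(v)=\deg_G(v)=\deg_{\induced{A}[1]}(v)$, and initial source $\Delta(v)=(2/\phi)\cdot|E_G(\{v\},\bar{A})|$ at each $v\in A$. Intuitively, each boundary edge $(v,x)$ with $x\in\bar{A}$ wants to push $2/\phi$ units of flow into the cluster, and the subroutine will incrementally grow $P$ just enough to absorb the mass that cannot be routed while maintaining the sink constraints.

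Before invoking the subroutine I would verify its single precondition, $\sum_v \Delta(v)\le\vol_G(A)/3$. Since trimming is only triggered in Case~3 of the cut-matching step, we have $|E_G(A,\bar{A})|\le \gorig\phi\vol_G(\bar{A})\le \gorig\phi\cdot m/(100\gorig)=\phi m/100$, so $\sum_v\Delta(v)=(2/\phi)|E_G(A,\bar{A})|\le m/50\le\vol_G(A)/3$. With this established, \Cref{lem: incrementalFlow} returns the desired $P\subseteq A$ along with its quantitative guarantees: the volume bound $\vol_G(P)\le 2\sum_v\Delta(v)=4|E_G(A,\bar{A})|/\phi$ gives Property~\ref{pruning:A}; and the cut bound $|E_{G[A]}(P,A\setminus P)|\le 2\sum_v\Delta(v)/\min_e c_e=2|E_G(A,\bar{A})|$, combined with the trivial estimate $|E_G(A\setminus P,\bar{A})|\le|E_G(A,\bar{A})|$, yields Property~\ref{pruning:B} (up to the constant stated there). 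The flow feasibility on $\induced{A'}$ is read off directly from the subroutine: its adjusted source is $\Delta'(v)=\Delta(v)+c\cdot|E_{G[A]}(\{v\},P)|=(2/\phi)\bigl(|E_G(\{v\},\bar{A})|+|E_G(\{v\},P)|\bigr)=(2/\phi)|E_G(\{v\},V\setminus A')|$, and the sink capacity $T(v)=\deg_G(v)=\vol_G(v)$ and edge capacity $2/\phi$ on every internal edge already match the lemma's target flow problem.

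For the running time I would invoke the $O(c_{\max}\sum_v\Delta(v)\log m)$ guarantee of \Cref{lem: incrementalFlow}, which evaluates to $O((2/\phi)\cdot(2/\phi)|E_G(A,\bar{A})|\cdot\log m)=O(\log m\cdot|E_G(A,\bar{A})|/\phi^2)$, as claimed. This is essentially a reshuffling of the Saranurak--Wang trimming, so the main work is not algorithmic but expositional: the key conceptual step (and the only point to get right) is recognizing that the $2/\phi$ capacity slack on internal edges is precisely what upgrades the original SW feasibility statement---which only certifies expansion of $\induced{A'}[1]$---into a flow that additionally tolerates $2/\phi$ units of throughput per internal edge, the ingredient needed later to conclude that $\induced{A'}[w]$ is a $\phi$-expander for every $w\le 1/(8\phi)$.
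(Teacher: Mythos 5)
Your proposal is correct and follows essentially the same route as the paper: the paper likewise runs the incremental-flow subroutine (Lemma~\ref{lem: incrementalFlow}) with capacities $2/\phi$, injects $2/\phi$ units of source per edge of $E_G(A,\bar{A})$, and reads off Property~1, the feasibility of the flow problem with $\Delta'(v)=\frac{2}{\phi}|E_G(\{v\},V\setminus A')|$, and the $O(\log m\,|E_G(A,\bar{A})|/\phi^2)$ running time directly from that lemma's guarantees. The only differences are cosmetic or minor: the paper feeds $G[A]^1$ rather than $G[A]$ so that the subroutine's sink capacities $T(v)=\deg(v)$ automatically equal $\deg_G(v)$ (which you effectively acknowledge), and your derivation of Property~2 via $|E_G(A',\bar{A})|\le|E_G(A,\bar{A})|$ plus the black-box cut bound yields constant $3$ rather than the stated $2$ --- a harmless slack you flag explicitly, and which the paper's equally terse proof does not derive either.
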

\begin{proof}
We run the algorithm from 
Lemma~\ref{lem: incrementalFlow} on $\induced{A}[1]$ with $c(e)=2/\phi$  for every edge.
Then we increase $\Delta(v)$ by $2/\phi$ for every edge in $E_G(A,\bar{A})$.

The resulting pruned set $P$ fulfills the properties. Property~\ref{pruning:A}
follows as
\begin{equation*}
\vol_{\induced{A}[1]}(P)=\vol_{G}(P)\le 2\textstyle{\sum_v}\Delta(v)=\tfrac{4}{\phi}|E_G(A,\bar{A})|\enspace.\qedhere
\end{equation*}
%The second property follows because
%$  |E_{G}(A',\bar{A'})|
%=  |E_{\induced{A}[1]}(A',\bar{A'})|
%\le 2\sum_v\Delta(v)/\tfrac{2}{\phi}
%\le 2|E_G(A,\bar{A})|\tfrac{2}{\phi}/\tfrac{2}{\phi}
%=2|E_G(A,\bar{A})|
%$
\end{proof}

%\begin{theorem}[trimming step from \cite{SaranurakW19expander}]
%\label{thm:trimming}
%Given a graph $H=(U,E)$, and a subset $A\subseteq U$ such that
%$A$ is a near $\phi$-expander in $H$ and $E_H(A,\bar{A})\le\phi\vol_H(A)/10$.
%The trimming step finds $A'\subseteq A$
%such that $H[A']^1$ is a $\phi/6$-expander,
%$\vol_H(A')\ge\vol_H(A)-4|E_H(A,\bar{A})|/\phi$, and
%$|E_H(A',\bar{A'})|\le 2|E_H(A,\bar{A})|$. The running time is
%$O(\frac{1}{\phi^2}|E_H(A,\bar{A})|\log m)$.
%\end{theorem}

We have to argue that $A'=A\setminus P$ fulfills all requirements of set $A$ in
Case~2 of \Cref{thm:cut-matching-trimming}.

\begin{itemize}
\item $\Phi_G(A')\le\gkrv\phi$.\hfill\\
The conductance of the cut is $|E_G(A',\bar{A'})|/\vol_G(\bar{A'})$. We have
\begin{equation*}
\begin{split}
\vol_G(\bar{A'})
&\ge\vol_G(\bar{A})
\ge\tfrac{1}{\gorig\phi}|E_G(A,\bar{A})|
\ge\tfrac{1}{2\gorig\phi}|E_G(A',\bar{A'})|\enspace.
\end{split}
\end{equation*}
Hence, setting $\gkrv=2\gorig=O(\log^2 n)$ is sufficient.
\item
$\vol_G(\bar{A'})\le m/10$.\\
\begin{equation*}
\begin{split}
\vol_G(\bar{A'})
&\le \vol_G(P) + \vol_G(\bar{A})
\le \tfrac{4}{\phi}|E_G(A,\bar{A})| + \vol_G(\bar{A})
\le \tfrac{4}{\phi}\gorig\phi\vol_G(\bar{A}) + \vol_G(\bar{A})\\
&\le 5\gkrv\vol_G(\bar{A}) = m/10\enspace.
\end{split}
\end{equation*}
\end{itemize}

The final property is given by the following lemma.
  \begin{lemma}
  If $w\le1/(8\phi)$ then $\induced{A'}[w]$ is a $\phi$-expander.
  \end{lemma}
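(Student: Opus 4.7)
The plan is to bound $\cut_{G,A'}(S)$ for any $S\subseteq A'$ by combining two complementary inequalities: one from the near $8\phi$-expansion of $A$ (Case~3 of \Cref{thm:cut-matching}) and one from the feasibility of the auxiliary flow produced by \Cref{lem:trimming}. Since self-loops do not cross cuts, the $\phi$-expansion of $\induced{A'}[w]$ is equivalent to showing
\[
\cut_{G,A'}(S)\ge\phi\bigl(\vol_{G[A']}(S)+\lceil w\rceil\border_{G,A'}(S)\bigr)
\]
for every cut $(S,A'\setminus S)$ where $S$ is the side of smaller $\induced{A'}[w]$-volume. Using $\vol_G(S)+\vol_G(A'\setminus S)=\vol_G(A')\le\vol_G(A)$ and $\cut_{G,A'}(S)=\cut_{G,A'}(A'\setminus S)$, it suffices to prove the displayed bound under the additional hypothesis $\vol_G(S)\le\vol_G(A)/2$: the conductance bound for the $\induced{A'}[w]$-smaller side then follows by applying it to whichever side has small $\vol_G$.

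Under this hypothesis, near-expansion of $A$ yields $\out_G(S)\ge 8\phi\vol_G(S)$. Splitting $\out_G(S)=\cut_{G,A'}(S)+\border_{G,A'}(S)$ and $\vol_G(S)=\vol_{G[A']}(S)+\border_{G,A'}(S)$ produces
\[
\cut_{G,A'}(S)\ge 8\phi\vol_{G[A']}(S)+(8\phi-1)\border_{G,A'}(S).
\]
The flow from \Cref{lem:trimming} routes mass $(2/\phi)\border_{G,A'}(v)$ at every $v\in A'$ through edges of $\induced{A'}$ of capacity $2/\phi$ into sinks of capacity $\vol_G(v)$. Mass balance applied to $S$ forces the excess source to cross the cut, i.e.\ $(2/\phi)\cut_{G,A'}(S)\ge(2/\phi)\border_{G,A'}(S)-\vol_G(S)$, which after substituting $\vol_G(S)=\vol_{G[A']}(S)+\border_{G,A'}(S)$ becomes
\[
\cut_{G,A'}(S)\ge(1-\phi/2)\border_{G,A'}(S)-(\phi/2)\vol_{G[A']}(S).
\]

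Adding the first inequality to twice the second gives $3\cut_{G,A'}(S)\ge 7\phi\vol_{G[A']}(S)+(7\phi+1)\border_{G,A'}(S)$, i.e.\ $\cut_{G,A'}(S)\ge(7\phi/3)\vol_{G[A']}(S)+(1/3)\border_{G,A'}(S)$. Since $w\le 1/(8\phi)$ forces $\phi\lceil w\rceil\le 1/8+\phi$, the coefficients $7\phi/3$ and $1/3$ dominate $\phi$ and $\phi\lceil w\rceil$ respectively whenever $\phi$ is below an absolute constant; the residual regime of large $\phi$ is covered by the first inequality alone, because $8\phi-1\ge\phi\lceil w\rceil$ once $\phi\ge 1/7$ (which already forces $\lceil w\rceil=1$). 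The main subtlety is the WLOG step, which could break because boundary edges are weighted by $w\gg 1$ in $\vol_{\induced{A'}[w]}$; the unweighted identity $\vol_G(S)+\vol_G(A'\setminus S)=\vol_G(A')$ sidesteps this cleanly, and this is the only place where we depart meaningfully from the original Saranurak--Wang analysis, which handled only $w=1$.
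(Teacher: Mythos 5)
Your proof is correct, and while it uses the same two ingredients as the paper's proof (the near $8\phi$-expansion of $A$ from Case~3 of \Cref{thm:cut-matching} and the feasibility of the trimming flow from \Cref{lem:trimming}), the way you handle the balance issue is genuinely different and somewhat cleaner. The paper works with the side $S$ that is smaller in $\induced{A'}[w]$-volume and must therefore first prove an auxiliary claim that this side satisfies $\vol_G(S)\le\tfrac{2}{3}\vol_G(A')$; that claim consumes the unbalancedness bound $|E_G(A,\bar{A})|\le\phi\vol_G(A)/16$ (\Cref{eq:small boundary}) together with Properties~1 and~2 of \Cref{lem:trimming}, and it costs a factor $2$ when invoking near-expansion (yielding $4\phi\vol_G(S)$ rather than $8\phi\vol_G(S)$). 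You instead observe that it suffices to prove the per-side bound $\cut_{A'}(S)\ge\phi\,\vol_{\induced{A'}[w]}(S)$ for every $S\subseteq A'$ with $\vol_G(S)\le\vol_G(A)/2$, since $\vol_G(S)+\vol_G(A'\setminus S)\le\vol_G(A)$ guarantees one side satisfies this hypothesis and its weighted volume dominates the minimum; this makes the lemma independent of the cut's unbalancedness and of the pruned-volume bounds, and lets you apply near-expansion at full strength. Your finish is also slightly different: rather than first deriving $\border_{A'}(S)\le 2\cut_{A'}(S)$ and then assembling the final bound as the paper does, you take a direct linear combination of the two inequalities and compare coefficients, with a small case split on $\phi$ (the combined bound for $\phi\le 5/24$, the near-expansion inequality alone for $\phi\ge 1/7$, using that $w\le 1/(8\phi)$ forces $\lceil w\rceil\le 1$ there); I checked the arithmetic and both regimes are covered. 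The trade-off is minor: the paper's route keeps all constants tied to a single chain and reuses facts it needs anyway for verifying $\vol_G(\bar{A'})\le m/10$, whereas your route is more self-contained and avoids the auxiliary volume claim entirely.
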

  \begin{proof}
  %Since $A$ forms a sparse cut in $G$ 
  Directly from the guarantee of the cut-matching step from \Cref{thm:cut-matching}, we get that 
  \begin{equation}
  |E_{G}(A,\bar{A})|\le\gorig\phi\vol_{G}(\bar{A})\le\phi m/16\le\phi\vol_{G}(A)/16. \label{eq:small boundary}
  \end{equation}
Now, consider a subset $S\subseteq{A'}$ such that $\vol_{\induced{A'}[w]}(S) \le
  \vol_{\induced{A'}[w]}(A')/2$. We first show a helpful claim:
     \begin{claim}
     $\vol_G(S) \le \frac{2}{3}\vol_G(A')$.
     \end{claim}
     \begin{proof}
       By the guarantee of the trimming operation from \Cref{lem:trimming} and \Cref{eq:small boundary},
       $\vol_G(P)\le\frac{4}{\phi}\cdot|E_G(A,\bar{A})| \le
       \frac{4}{\phi}\cdot\frac{\phi}{16}\vol_{G}(A)\le\vol_{G}(A)/4$.
       So $\vol_{G}(A')=\vol_G(A)-\vol_G(P)\ge3\vol_{G}(A)/4$.
       Again by \Cref{lem:trimming} and \Cref{eq:small boundary}, we have $|E_G(A',\bar{A'})|\le 2|E_G(A,\bar{A})|\le 2\cdot\frac{\phi}{16}\vol_{G}(A)\le\frac{\phi}{6}\vol_G(A')$.
       We get
       \begin{equation*}\begin{split}
         \vol_G(S)&
         \le\vol_{\induced{A'}[w]}(S)
         \le\vol_{\induced{A'}[w]}(A')/2
           =\tfrac{1}{2}(\vol_{G}(A')+(w-1)|E_{G}(A',\bar{A'})|)\\
        & \le\tfrac{1}{2}(\vol_{G}(A')+\tfrac{1}{8\phi}\cdot\tfrac{\phi}{6}\vol_{G}(A'))
         \le\tfrac{2}{3}\vol_{G}(A')\enspace.\\
        \end{split}
        \end{equation*}
        The equality holds because edges between $A'$ and
        $\bar{A'}=V\setminus A'$ are turned into self-loops of weight $w$ in
        $\induced{A'}[w]$ while having weight $1$ in $G$. Hence, the degree of a vertex in $A'$ incident to
        such an edge increases by $w-1$. The following inequality uses $w-1\le
        1/(8\phi)$ and our previous bound on $|E_{G}(A',\bar{A'})|$.
        \end{proof}

        \def\H{\induced{A'}[w]}

        \noindent        
Recall that $\border_{A'}(S):=|E_G(S,V\setminus A')|$ and
$\cut_{A'}(S):=|E_G(S,A'\setminus S)|$.       
We have to show that $\cut_{A'}(S)\ge \phi\cdot\vol_{\H}(S)$.
From $\vol_G(S)\le\frac{2}{3}\vol_G(A')$ we get
$\vol_G(A'\setminus S)=\vol_G(A')-\vol_G(S)\ge\tfrac{3}{2}\vol_G(S)-\vol_G(S)\ge\vol_G(S)/2$. The fact that $A$ is a near
${8\phi}$-expander in $G$
gives that
\begin{equation}
\label{eqn:helper}
\begin{split}
\cut_{A'}(S)+\border_{A'}(S)
&\ge 8\phi\cdot\min\{\vol_G(S),\vol_G(A\setminus S)\}\\
&\ge 8\phi\cdot\min\{\vol_G(S),\vol_G(A'\setminus S)\}\\
&\ge 4\phi\cdot\vol_G(S)\enspace.
\end{split}
\end{equation}
By the feasibility of the flow problem for $\induced{A'}$ we obtain
\begin{equation*}
\begin{split}
\tfrac{2}{\phi}\cdot\border_{A'}(S)\le\Delta(S)
&\le T(S)+\tfrac{2}{\phi}\cut_{A'}(S)\\
&= \vol_G(S)+\tfrac{2}{\phi}\cut_{A'}(S)\\
&\le \tfrac{1}{4\phi}\border_{A'}(S)+(\tfrac{1}{4\phi}+\tfrac{2}{\phi})\cut_{A'}(S)\enspace,
\end{split}
\end{equation*}
which yields $\border_{A'}(S)\le 9/7\cdot\cut_{A'}(S)\le 2\cut_{A'}(S)$.
Here, the first inequality is due to the fact that the flow problem
injects $2/\phi$ units of flow for every border edge.
The second inequality follows because
the total flow that can be absorbed at the vertices of $S$ is at most $T(S)$
and the flow that can be send to $A'\setminus S$ is at most $\tfrac{2}{\phi}\cut_{A'}(S)$ as each edge has capacity $2/\phi$.
The final step uses Equation~\ref{eqn:helper}.
        
Finally, we obtain
\begin{equation*}
\begin{split}
\tfrac{1}{\phi}\cut_{A'}(S)
&\ge\tfrac{1}{4\phi}(\cut_{A'}(S)+2\cut_{A'}(S))+\tfrac{1}{4\phi}\cut_{A'}(S)\\
&\ge\tfrac{1}{4\phi}(\cut_{A'}(S)+\border_{A'}(S))+\tfrac{1}{8\phi}\border_{A'}(S)\\
&\ge\vol_G(S)+w\border_{A'}(S)
\ge\vol_{\H}(S)\enspace,
\end{split}
\end{equation*}
as desired.
\end{proof}

\paragraph{Running time.} The running time of the cut-matching step from \Cref{thm:cut-matching} is $O(m\log m/\phi)$. The
running time of the trimming step from \Cref{lem:trimming} is $O(\log m |E_G(A,\bar{A})|/\phi^2) = O(m\log m/\phi)$ by \Cref{eq:small boundary}. Hence, the total running time is $O(m\log m/\phi)$.

\subsection{The Algorithm}
The algorithm maintains an expansion parameter $\varphi$ and a partitioning
$\U$ that initially just contains the set $U$ (i.e., the cluster $U\subseteq V$
on which we startet the algorithm) as an active
cluster. Recall that $\vol_G(U)=m$.
Then the algorithm proceeds in rounds, where a round is
an iteration of the outer while loop. During a round the algorithm tries to
certify for all active clusters $U_i$ in $\U$ that
$\induced{U_i}^{\alpha/\varphi}$ is $\varphi$-expanding. For this it uses the
cut-matching+trimming algorithm from \Cref{thm:cut-matching-trimming} with
parameter $\varphi$ on the graph
$\induced{U_i}^{\alpha/\varphi}$. From \Cref{thm:cut-matching-trimming}, there
are two possible outcomes:

\smallskip
\noindent
\textbf{Case 1.} The framework finds a sparse fairly balanced cut
$(A, \bar{A})$. Then the algorithm just replaces $U_i$ by $A$ and $\bar{A}$
in $\U$.

\begin{algorithm}[t]%
\setstretch{1.1}%
%\SetAlgoLined
\SetKwInOut{Input}{input}\SetKwInOut{Output}{output}%
\SetKwData{false}{false}\SetKwFunction{expansion}{expands}%
\SetKwData{true}{true}%

\Input{graph $G=(V,E)$, cluster $U\subseteq V$, parameters $\alpha,\phi$}%
\Output{partition $\U=(U_1,\dots,U_k)$ of $U$, expansion bounds $\phi_1,\dots,\phi_k$}%
\BlankLine%

%$\varphi\leftarrow 6\phi$\;
define $\U$ to contain only $U$ as an active cluster\;
\While{$\exists$ active sub-cluster in $\U$}{%
    $\varphi\leftarrow\max\Big\{\frac{1}{8\gkrv\log_2^2m}\cdot{\sum_{\text{act.\ $i$}}\out(U_i)}/{\sum_{\text{act.\ $i$}}\vol(U_i)}~,~\phi\Big\}$\;
    \lFor{ $U_i\in \U$}{ $\expansion(U_i,\varphi)\gets \false$}
    \While{$\exists$ active cluster $U_i$ with $\expansion(U_i,\varphi)=\false$}{%
        apply cut-matching + trimming from \Cref{thm:cut-matching-trimming} to $\induced{U_i}[\alpha/\varphi]$\;
        \textbf{case~1}: replace $U_i$ by active sets $A$ and $\bar{A}$ in $\U$\;
        \textbf{case~2}: $\expansion(A,\varphi)\gets \true$\tcp*{$A$ is $\varphi$-expanding, w.h.p.}
        \hphantom{\textbf{case~2}:} replace $U_i$ by active sets $A$ and $\bar{A}$ in $\U$\;
    }
    \For{every active set $U_i\in \U$}{%
      \If(\tcp*[f]{check Property~\ref{ed:localboundary}}){$\out(U_i)\le 80\gkrv\log^4m\cdot\varphi\vol(U_i)$}{%
         $\phi_i\leftarrow\varphi$\tcp*{set expansion bound for $U_i$}
        deactivate $U_i$\tcp*{$U_i$ fulfills Property~\ref{ed:expanding} (w.h.p.)~and \ref{ed:localboundary}}    
      }
    }
}
\caption{An algorithm to compute an $(\alpha,\phi)$-expander decomposition.}%
\label{algo}
\end{algorithm}

\smallskip
\noindent
\textbf{Case 2.} The framework finds an unbalanced cut $(A, \bar{A})$ and concludes that the larger
part $A$ forms a $\varphi$-expander. Then the algorithm
replaces $U_i$ by $A$ and $\bar{A}$ and remembers the conclusion that $A$ is
expanding, i.e., the algorithm will not work on $A$ again during a round.

\medskip
\noindent
After the algorithm has determined that w.h.p.\ all active clusters in $\U$ are
$\varphi$-expanding it checks for every cluster whether
Property~\ref{ed:localboundary} from \Cref{def:decomp} of the boundary-linked expander-decomposition holds. If this is
the case for a cluster $U_i$ the algorithm sets $\phi_i$ to the current value
of $\varphi$ and deactivates the cluster.

%After the end of the inner while loop all active clusters fulfill that
%$\induced{U_i}^{\alpha/\varphi}$ is $\varphi$-expanding (as long as the
%algorithm did not perform a wrong conclusion due to an error in the
%cut-matching framework). Now, the algorithm deactivates all clusters that
%fulfill Property~\ref{ed:localboundary} of an expander decomposition. This
%means these clusters fulfill Property~\ref{ed:expanding} and
%Property~\ref{ed:localboundary}.

The algorithm then proceeds to the next round (possibly increasing $\varphi$)
and continues until no active clusters are left. Algorithm~\ref{algo} gives an
overview of the algorithm.
%Then it increases the value of $\varphi$
%and continues on the still active clusters.

\subsection{The Analysis}
In the following we assume that all conclusions by the algorithm that are
correct with high probability are indeed correct.

It is clear that when the algorithm terminates all clusters fulfill
Property~\ref{ed:expanding} and Property~\ref{ed:localboundary}, i.e., we only
have to prove that the partition $\U$ fulfills
Property~\ref{ed:globalboundary} and that the algorithm indeed terminates.

\def\cluster{\operatorname{cluster}}

  Let for a subset $X$ $\ivol_G(X):=\sum_{x\in X}\out{\{x\}}$ denote the
  \emph{internal volume} of the set, i.e., the part of the volume that is due
  to the edges for which both endpoints are in $X$. In the following the
  notation $\cut_U(A)$, $\border_U(A)$, and $\out(A)$ are always w.r.t.\ the
  graph $G$. Further we use 
  $Z:=\log_2(\vol(U))$ as a shorthand notation. For $x\in V$
  we use $\cluster(x)$ to denote the cluster from the partion $\U$ that $x$ is
  contained in. If $x\notin U$ then this evaluates to $\texttt{undefined}$.

  %Further, let for a subset
  %$S\subseteq X$ $\border_X(S):=|\Gamma_G(S,X\setminus S)|$ denote the
  %\emph{border of $S$} w.r.t.\ $X$, and $\cut_X(S):=|\Gamma_G(S,X\setminus S)|$\thatchaphol{Why not $E$ instead of $\Gamma$?}
  %denote the \emph{cut of $S$} w.r.t.\ $X$. Further we use
  %$\out(U_i)=\border_{U_i}(U_i)$ to denote the total weight of edges leaving
  %$U_i$ and we use

  In order to derive a bound on $\sum_{i=1}^k\out(U_i)$ we proceed as
  follows. We distribute an initial charge to the edges incident to vertices in
  $U$. Whenever we \emph{cut} edges, i.e., we partition a subset $U_i$ into
  $A$ and $\bar{A}$ we redistribute charge to the edges in the cut. In the
  end we compare the charge on edges leaving sub-clusters to the initial charge
  and thereby obain a bound on $\sum_i\out(U_i)$. In addition we will
  redistribute charge whenever we adjust the value of $\varphi$ in the
  beginning of a round. However, importantly we will never increase the total
  charge, hence, in the end we can derive a bound on the number of cut-edges by
  comparing the charge on a cut-edge to the total \emph{initial charge}.

  We call one iteration of the outer while-loop a round of the algorithm. 
  Let $R$ denote an upper bound on the number of rounds. Later we will show
  that $R\le\log_2m$. For any round $r \le R$, we maintain the following
  invariant concerning the distribution of charge on the edges that have at
  least one end-point from the set $U$:

  \begin{quote}
  \smallskip\noindent
  \emph{border edges}\hfill\\
  An edge $(u,v)$ for which not both endpoints are in the same sub-cluster of $\U$ is
  assigned a charge of $\fB(r)(Z+\log_2(\ivol(\cluster(x))))$ for each end-point
  $x\in\{u,v\}\cap U$. We call a charge \emph{active} if it comes from an
  endpoint within an active cluster. This means that an edge could be assigned
  active as well as inactive charge.

  \smallskip\noindent
  \emph{internal edges}\hfill\\
  An edge $(u,v)$ for which both endpoints are in the same sub-cluster $U_i$ of $\U$ is
  assigned an active charge of $\fI(r)\log_2(\ivol(U_i))$ if this cluster is
  active. Otherwise, it is assigned a charge of $0$.
  \end{quote}
  
  %\begin{align*}
  %\text{edge between cluster $U_i$ and $V\setminus U$:} &\qquad\text{charge:~~}
  %&\fB(r)&\cdot\log_2(\ivol(U_i))\\
  %\text{edge between clusters $U_i$ and $U_j$:} &\qquad\text{charge:~~}
  %&\fB(r)&\cdot(\log_2(\ivol(U_i))+\log_2(\ivol(U_j)))\\
  %\text{internal edge in active cluster $U_i$:}        &\qquad\text{charge:~~}
  %&\fI(r)&\cdot\log_2(\ivol(U_i))
            %    \end{align*}

  \noindent
  We refer to the charge on border edges as \emph{border charge} and to the
  charge on internal edges as \emph{internal charge}.
  The factors $\fI(r)$ and $\fB(r)$ in the above definition depend on the round and are chosen as follows:
  \begin{align*}
    \fB(r) &= 2R-r   \\
    \fI(r) &= 4\gkrv Z\fB(r)\varphi(r).
  \end{align*}
  Recall the parameter $\gkrv=O(\log^2 m)$ from the cut-matching + trimming algorithm in \Cref{thm:cut-matching-trimming}. 
  When we call the algorithm from \Cref{thm:cut-matching-trimming} with conductance parameter $\varphi$ then the non-empty cuts returned in
  have conductance at most $\gkrv\varphi$. Below, let $\varphi(r)$ denote the value of $\varphi$ during round $r$.
  For technical reasons we also introduce a round $r=0$, which is the beginning
  of the algorithm. We set $\varphi(0)=\phi$. This gives that the total
  \emph{initial charge} is
  \begin{equation*}
  \begin{split}
  \textit{initial-charge}
  &=\fB(0)\cdot b\cdot(Z+\log_2(\ivol(U)))+4\gkrv Z\fB(0)\varphi(0)\cdot\ivol(U)\log_2(\ivol(U))\\
  &\le 4RZ\cdot b+8\gkrv RZ^2\cdot \phi m.
  \end{split}
  \end{equation*}

  \noindent
  The following claim gives Property~\ref{ed:globalboundary} provided that we
  can establish the above charge distribution without generating new charge
  during the algorithm.

\begin{claim}
Suppose that no charge is generated during the algorithm. Then
at the end of the algorithm $\sum_i\out(U_i)\le 4b+O(\log^3m)\cdot\phi
m$, i.e., Property~\ref{ed:globalboundary} holds.
\end{claim}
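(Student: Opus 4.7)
The plan is to compare the total border charge at the end of the algorithm against the initial total charge. Because the hypothesis guarantees that charge is never created, the invariant implies that the final total charge is no larger than the initial total charge. Since every border edge counted by $\sum_i\out(U_i)$ must still carry a nontrivial amount of charge at the end, solving the resulting inequality for $\sum_i\out(U_i)$ yields the desired bound.

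First I would give a lower bound on the final border charge in terms of $\sum_i\out(U_i)$. Each edge contributing to $\sum_i\out(U_i)$ is, at round $R$, a border edge: intra-$U$ cross-cluster edges contribute two endpoint-charges (one for each of the two endpoints in $U$), and edges from some $U_i$ to $V\setminus U$ contribute one endpoint-charge. Hence the total number of border-endpoint-charges at the end is exactly $\sum_i\out(U_i)$. By the invariant, each such endpoint-charge equals $\fB(R)(Z+\log_2\ivol(\cluster(x)))\ge R\cdot Z$, using $\fB(R)=2R-R=R$ and $\log_2\ivol(\cdot)\ge 0$. Therefore the final border charge is at least $RZ\cdot\sum_i\out(U_i)$.

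Next I would upper-bound the initial total charge directly from the invariant at round $r=0$. At that point the only cluster in $\U$ is $U$ itself, so the only border edges are the $b$ outgoing edges; each receives a border charge of $\fB(0)(Z+\log_2\ivol(U))\le 2R\cdot 2Z=4RZ$ from its unique endpoint in $U$. The internal charge contributed by all internal edges of $U$ totals at most $\fI(0)\cdot\ivol(U)\log_2\ivol(U)\le 4\gkrv Z\cdot 2R\cdot\phi\cdot m\cdot Z = 8\gkrv RZ^2\phi m$. Summing, the initial total charge is at most $4RZ\cdot b + 8\gkrv RZ^2\phi m$.

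Combining the two bounds via the no-generation hypothesis gives
\[
RZ\cdot\sum_i\out(U_i)\;\le\;4RZ\cdot b + 8\gkrv RZ^2\phi m,
\]
and dividing by $RZ$ together with $\gkrv=O(\log^2 m)$ and $Z=\log_2 m$ yields
$\sum_i\out(U_i)\le 4b + 8\gkrv Z\cdot\phi m = 4b + O(\log^3 m)\cdot\phi m$, as required. The present argument is essentially bookkeeping; the genuine obstacle — to be handled in subsequent claims — is establishing the hypothesis itself, namely that (i) splitting a cluster via the cut-matching+trimming step from \Cref{thm:cut-matching-trimming} does not generate charge (the extra border charge on the new cut edges must be absorbed by the internal charge that disappears across the cut, using the conductance bound $\gkrv\varphi$ on the cut), and (ii) the recomputation of $\varphi$ at the start of a round does not generate charge (the decrease in $\fB$ must compensate both the increase in $\fI$ and any growth in $\log_2\ivol$ terms). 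It is precisely the carefully chosen dependence $\fI(r)=4\gkrv Z\fB(r)\varphi(r)$ that makes this invariant preservation possible.
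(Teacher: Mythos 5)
Your proof is correct and follows essentially the same route as the paper: lower-bound the final border charge by $RZ$ per boundary-edge endpoint, upper-bound the initial charge by $4RZ\,b+8\gkrv RZ^2\phi m$, and divide by $RZ$ using the no-charge-generation hypothesis. Your explicit per-endpoint accounting of $\sum_i\out(U_i)$ is just a spelled-out version of the paper's bookkeeping, so there is nothing substantive to add.
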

\begin{proof}
Observe that in the end every inter-cluster edge will have a charge of
at least $\fB(R)Z\ge RZ$. Therefore
%\begin{equation*}
%\begin{split}
$\sum\nolimits_i \out(U_i)
   \le\tfrac{1}{RZ}\textit{initial-charge}
   \le4b+8\gkrv Z\cdot\phi m
   =4b+O(\log^3m)\cdot\phi m\enspace.\hfill
$
% \end{split}
%\end{equation*}
\end{proof}
Observe that the number of rounds performed by our algorithm is not important
for the above claim. This is only important for the running time analysis.

\paragraph{No charge increase during a round.}
The main task of the analysis is to establish the charging scheme and to show that we do not
generate charge throughout the algorithm. We first show that we do not generate
charge \emph{during} a round.
  
  Suppose \Cref{thm:cut-matching-trimming} finds a cut of conductance at most
  $\gkrv\cdot\varphi(r)$ within the graph $H:=\induced{U_i}^{\alpha/\varphi(r)}$.
 This means we
  have a set $S\subseteq U_i$ with
  \begin{equation*}
  \cut_{U_i}(S) < \gkrv\cdot\varphi(r)\cdot\min\{\vol_H(S),\vol_H(U_i\setminus S)\}\enspace.
  \end{equation*}
  W.l.o.g.\ assume that $\ivol(S)\le\ivol(U_i\setminus S)$. Then
  
  \begin{equation*}
  \cut_{U_i}(S) <  \gkrv\cdot\varphi(r)\vol_H(S)
               = \gkrv\cdot\varphi(r)\cdot(\ivol(S)+\tfrac{\alpha}{\varphi(r)}\border_{U_i}(S))\enspace.
  % +\alpha\border{U_i}{S}%|\Gamma_G(S,V\setminus U_i)|
  \end{equation*}
  By performing the cut, every edge that contributes to $\border_{U_i}(S)$
  reduces its required charge by at least $\fB(r)$ because one of its endpoints
  reduces the internal volume of its cluster by a factor of $2$. A similar
  argument holds for the edges with both endpoints in $S$, which reduce their
  required charge by at least $\fI(r)$. This means we obtain a charge of at
  least
  \begin{equation}\label{availablecharge}
  \fB(r)\cdot\border_{U_i}(S)+\fI(r)\cdot\ivol(S)
  \end{equation}
  that we can redistribute to the edges in the cut so that these fulfill their
  increased charge requirement. The new charge for the cut edges (i.e., edges in
  $\Gamma_G(S,U_i\setminus S)$) is at most
  \begin{equation*}
  \fB(r)(Z+\log_2(\ivol(S)))+\fB(r)(Z+\log_2(\ivol(U_i\setminus S)))\le 4\fB(r) Z,
  \end{equation*}
  where $Z = \log_2(\vol(U))$. 
  This means the new required charge is
  \begin{equation*}\begin{split}
  4\fB(r) Z\cdot\cut_{U_i}(S)
      &< 4\gkrv\fB(r) Z\cdot(\varphi(r)\ivol(S)+\alpha\border_{U_i}(S))\\
      &\le 4\gkrv\alpha Z\fB(r)\border_{U_i}(S) + 4\gkrv\varphi(r) Z\fB(r)\ivol(S)\\
      &\stackrel{!}{\le}
\fB(r)\cdot\border_{U_i}(S)+\fI(r)\cdot\ivol(S)\enspace,
  \end{split}
  \end{equation*}
where we want to ensure the last inequality so that the new charge on cut-edges
is at most the charge that we have for redistribution according to Equation~\ref{availablecharge}.
We ensure the last inequality by requiring that
\begin{equation*}
4\gkrv\alpha Z\le 1
\end{equation*}
as a precondition of the theorem and by setting
\begin{equation*}
\fI(r):=4\gkrv\fB(r) Z\varphi(r).
\end{equation*}
This shows that we can redistribute enough charge to border edges and the total
charge does not increase.

\def\iact{i\in I}

\paragraph{No charge increase between rounds.} Let $I$ denote the index set of
active clusters at the start of round $r$. At the
beginning of a round all border edges decrease their charge as $\fB(r)$
decreases. If we choose $\varphi(r)=\phi$ then the charge on internal edges
does not increase because in the previous round we had $\varphi(r-1)\ge\phi$
and the charge on an internal edge is increasing with $\varphi$. Hence, we only
need to consider the case if $\varphi(r)$ is chosen larger than $\phi$ and
hence
$$\varphi(r)=\frac{1}{8\gkrv
  RZ}\cdot{\sum_{\iact}\out(U_i)}/{\sum_{\iact}\vol(U_i)}.$$

For this case we show that the decrease in charge on active border edges is
more than the increase in charge on internal edges. This is sufficient as only
active internal edges increase their charge. Every edge in the boundary
of an active cluster $U_i$ decreases its charge by at least
$(\fB(r-1)-\fB(r))Z\ge Z$. This means the border charge decreases by at least
$Z\sum_{\iact}\out(U_i)$.

What is the total internal charge?
Every edge inside an active cluster $U_i$ is assigned a charge of
$\fI(r)\log_2(\ivol(U_i))$. Recall that we set internal charge of inactive cluster to be zero. Therefore, the total \emph{internal charge} is
\begin{equation}
\label{eqn:internal-charge}
\begin{split}
\textit{internal-charge}&=\fI(r)\sum_{\iact}\ivol(U_i)\log_2(\ivol(U_i))\\
                        &\le4\gkrv\fB(r) Z^2\cdot\varphi(r)\cdot\sum_{\iact}\ivol(U_i)\\
                        &\le8\gkrv R Z^2\cdot\varphi(r)\cdot\sum_{\iact}\vol(U_i)\\ 
                        &=Z\sum_{\iact}\out(U_i)\\
\end{split}
\end{equation}
where the last step follows by the choice of $\varphi(r)$.
%\frac{1}{8\gkrv RZ}\cdot{\sum_{\iact}\out(U_i)}/{\sum_{\iact}\vol(U_i)}$.
This means the reduction of charge on border edges is even lower bounded by the
\emph{total} internal charge (not just the increase of internal charge). Hence,
the overall charge is not increasing.

%Clearly, Equation~\ref{eqn:border-charge} together with the above equation up
%to Step~$(*)$ gives Equation~\ref{eqn:totalcharge}.

\paragraph{Bound on the number of rounds.}
In order to keep the total number of rounds small we guarantee that the active
volume, i.e., $\sum_{\iact}\vol(U_i)$ decreases by a constant factor between
two rounds. In order to guarantee this we first show that the choice of
$\varphi(r)$ fulfills the following inequality.
\begin{equation}\label{eqn:totalcharge}
\textit{active-charge}\le 40\gkrv R^2Z^2\cdot\varphi(r)\cdot\sum_{\iact}\vol(U_i)\enspace.
\end{equation}
The active charge consists of the total internal charge and the active border
charge. Equation~\ref{eqn:internal-charge} gives that
\begin{equation*}
\textit{internal-charge} \le8\gkrv R Z^2\cdot\varphi(r)\cdot\sum_{\iact}\vol(U_i)\enspace.
\end{equation*}

\noindent
The active border charge is
\begin{equation}
\label{eqn:border-charge}
\begin{split}
\textit{active-border-charge} &=   \fB(r)\sum_{\iact}\out(U_i)(Z+\log_2(\ivol(U_i)))\\
                       &\le 2\fB(r)Z\sum_{\iact}\out(U_i)\le 4RZ\sum_{\iact}\out(U_i)\\
                       &\le 32\gkrv R^2Z^2\varphi(r)\cdot \sum_{\iact}\vol(U_i)
\end{split}
\end{equation}
where the last inequality follows as the algorithm chooses
$\varphi\ge
\frac{1}{8\gkrv RZ}\cdot{\sum_{\iact}\out(U_i)}/{\sum_{\iact}\vol(U_i)}$.
Combining both inequalities gives Inequality~\ref{eqn:totalcharge}.

%At the beginning of a round we first initialize the charge on
%boundary edges (note that this charge does not depend on $\varphi$). 
%Then we choose a value for $\varphi$ (at least $\phi$) such that the following
%inequality holds.

%Observe that increasing $\varphi$ by
%$\mathrm{d}\varphi$ increases the LHS by at least
%$17\gkrv RZ^2\sum_{\iact}\vol(U_i)\,\mathrm{d}\varphi$ and the RHS by at most
%$4\gkrv\fB Z\cdot\sum_{\iact}\ivol(U_i)(\log_2(\ivol(U_i))+Z)\,\mathrm{d}\varphi
%\le 16\gkrv RZ^2\cdot\sum_{\iact}\ivol(U_i)\,\mathrm{d}\varphi$. This means we
%can always fulfill the equation if we choose $\varphi$ large enough.

%$2\gkrv\fB(r)
%$Z^2\sum_{\iact}\ivol(U_i)\,\mathrm{d}\varphi$, i.e.,
%$by choosing $A:= 2\gkrv\fB(r) Z^2$ we can always fulfill the equation by choosing
%$$\varphi$ sufficiently large.

\begin{claim}
The term $\sum_{\iact}\vol(U_i)$ decreases by a factor of at least
$1/2$ between two rounds of the algorithm. This gives that $R\le\log_2m$.
\end{claim}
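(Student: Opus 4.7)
The plan is to combine the end-of-round deactivation criterion with the total active-charge bound from Inequality~(\ref{eqn:totalcharge}) to conclude halving of the active volume, and then deduce the bound on $R$ by a trivial integrality argument.

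Let $I_r$ denote the active set at the start of round $r$ and write $V_r=\sum_{i\in I_r}\vol(U_i)$. First, I would observe that every cluster $U_i\in I_{r+1}$ must have failed the deactivation test at the end of round $r$, so
\[
\out(U_i)\;>\;80\gkrv\log^4m\cdot\varphi(r)\cdot\vol(U_i),
\]
and summing gives $\sum_{i\in I_{r+1}}\out(U_i)>80\gkrv\log^4m\cdot\varphi(r)\,V_{r+1}$. Next I would lower-bound the active border charge at the end of round $r$: each edge contributing to $\out(U_i)$ for $i\in I_{r+1}$ carries charge at least $\fB(r)(Z+\log_2\ivol(U_i))\ge \fB(r)Z\ge RZ$ from its endpoint inside $U_i$, using $\fB(r)=2R-r\ge R$. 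Therefore
\[
\textit{active-border-charge}\;\ge\;RZ\sum_{i\in I_{r+1}}\out(U_i)\;>\;80RZ\gkrv\log^4m\cdot\varphi(r)\,V_{r+1}.
\]

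For the matching upper bound I invoke the fact, already established in the within-round and between-round arguments, that no charge is created during round $r$, and additionally that deactivating a cluster only removes its contribution from the active accounting. Hence the active charge at the end of round $r$ is bounded by the active charge at the start of round $r$, which by Inequality~(\ref{eqn:totalcharge}) is at most $40\gkrv R^2Z^2\varphi(r)V_r$. Combining the two bounds yields
\[
80RZ\gkrv\log^4m\cdot\varphi(r)\,V_{r+1}\;\le\;40\gkrv R^2Z^2\varphi(r)\,V_r,
\]
so $V_{r+1}\le\tfrac{RZ}{2\log^4m}V_r$. Taking $R\le\log_2m$ as the standing convention and $Z=\log_2\vol(U)\le\log_2m$, we have $RZ\le\log^2m\le\log^4m$, and therefore $V_{r+1}\le V_r/2$, as desired.

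Finally, the bound $R\le\log_2m$ closes the apparently circular use of $R$ in $\fB(r)=2R-r$: one fixes $R:=\lceil\log_2m\rceil$ at the outset, and from the halving bound just established together with $V_1\le m$ one obtains $V_{R+1}<1$; since each active cluster of positive volume contributes an integer at least $1$ to $V_{R+1}$, no active clusters can remain. The main subtlety in carrying this out is to justify the monotonicity of the \emph{active} (as opposed to total) charge through a round; this reduces to observing that each cluster split only redistributes internal active charge into border active charge within the split cluster, while deactivations only drop terms from the active-charge sum.
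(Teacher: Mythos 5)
Your proof is correct and follows essentially the same charging argument as the paper: you combine the active-charge bound of Inequality~(\ref{eqn:totalcharge}) at the start of the round, the monotonicity of active charge within a round, and the per-endpoint lower bound of $RZ$ on border charge against the failed deactivation test; summing that test directly over the surviving clusters is just a rearrangement of the paper's averaging (Markov) step over the ``bad'' clusters. Your explicit fixing of $R:=\lceil\log_2 m\rceil$ up front and the remark on why \emph{active} (not just total) charge is monotone are welcome elaborations of points the paper treats tersely, but they do not change the approach.
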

\begin{proof}
The active charge on a boundary edge is at least $RZ$. Since we do not generate
charge during a round and we do not redistribute inactive charge we obtain that at the end of the round
\begin{equation*}
\begin{split}
\sum_{\iact'}\out(U_i)
&\le \frac{1}{RZ}\textit{active-charge}'
\le \frac{1}{RZ}\textit{active-charge}
\le 40\gkrv RZ\varphi(r)\sum_{\iact}\vol(U_i)\\
&= 40\gkrv RZ\varphi(r)\sum_{\iact}\vol(U_i)\enspace,
\end{split}
\end{equation*}
where $I'$ denotes the set of active cluster after the first inner while-loop
(i.e., before we start deactivating clusters). We use $\textit{active-charge}'$
to denote the active charge at this time. The last equation holds because the
active volume does not change during the first while-loop.

Now a simple averaging argument gives that the volume in ``bad'' clusters (i.e.~active
clusters that have $80\gkrv\varphi(r) RZ\vol(U_i)<\out(U_i)$)
is at most half of the active volume. These are the clusters that make it to
the next round. Hence, the claim follows.
\end{proof}

\paragraph{Running time.}
We first derive a bound on the running time of a single round. When we apply
the cut-matching+trimming algorithm from \Cref{thm:cut-matching-trimming} to a
subgraph $G[U_i]^w$ we can charge the running time to the edges in
$G[U_i]^w$. %Let $m_{U_i}$ denote the number of these edges.
We charge $O(\frac{1}{\varphi}\log m)= O(\frac{1}{\phi}\log m)$ to every edge.
Whenever we charge an edge $e$ at least one cluster $U_i$ that contains an
endpoint of $e$ changed. We either set
$\expansion(U_i,\varphi) \gets \operatorname{true}$ for this cluster (and,
hence, stop processing this cluster for this round) or $\vol_G(U_i)$ decreases
by a $(1-1/\log^2m)$ factor. This implies that an edge can be charged at most
$O(\log^3 m)$ times during a round.

It remains to derive a bound on the total number of edges in active clusters.
Note that we cannot simply use $m$ as an upper bound because the algorithm acts
on sub-cluster $G[U]^w$, i.e., graphs where $w$ self-loops are added for each
border-edge. 

The total number of border edges during a round is at most
\begin{equation*}
\textit{initial-charge}/RZ\le  O(b+\gkrv Z\phi m)
\end{equation*}
because every border-edge has charge at least $RZ$.

For each such border edge we add $w=\lceil\alpha/\varphi\rceil\le 1/(\gorig
Z\phi)$ self-loops (where we use $\phi\le\alpha\le 1/(\gkrv Z)$).
Therefore there are at most $\ot(b/\phi+m)$ edges in all graphs
$\induced{U_i}[\alpha/\phi]$. Hence, the cost of a single round is only
$O(\log^4m(b/\phi^2+m/\phi))$. Since, the number of rounds is logarithmic the running
time follows.

\section{Tree Flow Sparsifier From Expander Hierarchy}
\label{sec:flow sparsifier}

In this section, we show that an expander hierarchy of a graph $G$ is itself a
tree flow sparsifier of $G$. Usually the concept of a flow sparsifier is
defined for weighted graph. In order to simplify the notation and keep it
consisten throughout the paper our definitions and proofs only consider
unweighted (multi-)graphs. However the extension to weighted graphs is
straightforward.
%\harry{maybe move this to prelim?}

%In this section we show an immediate application of expander hierarchies. 
%We show that, by simply assigning proper weights to the edges of the expander
%hierarchy of a graph $G$, we obtain a tree flow sparsifier of the graph $G$.

\paragraph{Multicommodity Flow.}
Given an unweighted (multi-)graph $G=(V,E)$, let $\cal{P}$ be the set of all paths in
$G$. A \emph{multicommodity flow} (that is also referred to as a \emph{flow})
$F$ is an assignment of non-negative values $F_P$ to all paths $P\in \cal{P}$.
Each path in $\cal{P}$ has one of its endpoints being the \emph{originating
  vertex}, and the other endpoints being the \emph{terminating vertex}. When we
assign the value $F_P$ to the path $P$, we are sending $F_P$ unit of flow from
its originating vertex to its terminating vertex. For two vertices $v,v'\in V$,
we denote by $\pset_{v,v'}\subseteq\mathcal{P}$ the set of paths that
originate at $v$ and terminate at $v'$, and we say that the amount of flow
that $F$ sends from $v$ to $v'$ is $\sum_{P\in \pset_{v,v'}}F_P$. The
\emph{congestion} of the flow $F$ is defined to be
$\text{cong}_G(F)=\max_{e\in E}\{F(e)\}$, where $F(e)$ is the
total amount of flow sent along the edge $e$. Given a flow $F$ on $G$ and two
subsets $A,B\subseteq V(G)$ of vertices, we define $F(A,B)$ to be the total
amount of flow of $F$ that is sent along an edge $e\in E(A,B)$ from its
endpoint in $A$ to its endpoint in $B$. Note that, however, for two vertices
$v,v'\in V$ such that $(v,v')\in E$, $F(\{v\},\{v'\})$ can be smaller than the
amount of flow that $F$ sends from $v$ to $v'$.
%For a subset $S\subseteq V(G)$ of vertices, we denote $\out_F(S)=F(S,\overline{S})$ and denote
%$\fin_F(S)=F(\overline{S},S)$.\thatchaphol{Remove out and in notation here? Kinda conflict with previous section.}

%Given a flow $F$ on $G$ and a partition $\mathcal{U}=(U_1,\ldots,U_r)$ of vertices of $G$, we define the \emph{induced flow} $F_{\cal{U}}$ on the contracted graph $G_{\cal{U}}$ as the flow that, for each edge  $(u_i,u_j)\in E(G_{\cal{U}})$, sends $F(U_i,U_j)$ unit of flow from $u_i$ to $u_j$ and $F(U_j,U_i)$ unit of flow from $u_j$ to $u_i$.

\paragraph{Cut and Flow Sparsifiers.}
Given a (multi-)graph $G=(V,E)$ and a subset $S\subseteq V$ of
vertices, let $H=(V',E')$ be a (multi-)graph with
$S\subseteq V(H)$. We say that the graph $H$ is a \emph{cut sparsifier of
  quality $q$} for $G$ with respect to $S$, if for each partition $(A,B)$ of
$S$ such that both $A$ and $B$ are not empty, we have
$\mincut_{H}(A,B)\le\mincut_{G}(A,B)\le q\cdot\mincut_{H}(A,B)$, where
$\mincut_{H}(A,B)$ ($\mincut_{G}(A,B)$, resp.) is the capacity of a minimum cut
that separates the subsets $A$ and $B$ of vertices in $H$ ($G$, resp.).
If $H$ is a tree, then $H$ is called a \emph{tree cut sparsifier}.

Given a (multi-)graph $G=(V,E)$ and a subset $S\subseteq V$ of
vertices, a set $D$ of \emph{demands} on $S$ is a function
$D:S\times S\to \mathbb{R}_{\ge 0}$, that specifies, for each pair $u,v\in V$
of vertices, a demand $D(u,v)$. We say that the set $D$ of demands is
\emph{$\gamma$-restricted}, iff for each vertex $v\in S$,
$\sum_{u\in S}D(v,u)\le \gamma\out(v)$ and
$\sum_{u\in S}D(u,v)\le \gamma\out(v)$, i.e., the demand
entering or leaving $v$ is at most $\gamma$ times the number of edges leaving
$v$. We call it $\gamma$-boundary
restricted (w.r.t.\ $S$) if
$\sum_{u\in S}D(v,u)\le \gamma\border_S(v)$ and
$\sum_{u\in S}D(u,v)\le \gamma\border_S(v)$.
Given a subset $S\subseteq V$ and a set $D$ of demands on $S$, a
\emph{routing} of $D$ in $G$ is a flow $F$ on $G$, where for each pair
$u,v\in S$, the amount of flow that $F$ sends from $u$ to $v$ is $D(u,v)$. We
define the congestion $\eta(G,D)$ of a set $D$ of demands in $G$ to be the
minimum congestion of a flow $F$ that is a routing of $D$ in $G$. We say that a
set $D$ of demands is \emph{routable} on $G$ if $\eta(G,D)\le 1$.

We say that a graph $H$ is a \emph{flow sparsifier of quality $q$} for $G$ with
respect to $S$, if $S\subseteq V(H)$, and for any set $D$ of demands on $S$,
$\eta(H,D)\le \eta(G,D)\le q\cdot\eta(H,D)$. A flow sparsifier $H$ of $G$
w.r.t.\ subset $V(G)$ is just called a flow sparsifier for $G$. If $H$ is a
tree we call $H$ a \emph{tree flow sparsifier}.

%\begin{definition}[Tree Cut/Flow Sparsifier]
%A weighted tree $T$ is said to be a \emph{tree cut (flow, resp.) sparsifier} of
%\emph{quality} $q$ for a graph $G$ if $T$ is a cut (flow, resp.) sparsifier of
%quality $q$ for $G$, and $L(T)=V(G)$.
%\end{definition}
We will use the following lemma, which is a direct consequence of approximate
max-flow mincut ratios for multicommodity flows. The proof appears in
Appendix~\ref{apd:restricted}.
%\begin{lemma}
%        \label{lem:restricted_demand}
%        Given an unweighted $\phi$-expander $G$ with $m$ edges, let $G'$ be the
%        unweighted graph obtained from $G$ by replacing each edge
%        $e=(v,v')\in E(G)$ by a new node $u_e$ and two new edges $(v,u_e)$ and
%        $(v',u_e)$. Then any $O(1)$-restricted set $D$ of demands on the set
%        $\{u_e\mid e\in E(G)\}$ of vertices of $G'$ can be routed in $G'$ with
%        congestion $O(\frac{\log m}{\phi})$.
%\end{lemma}

\begin{lemma}
	\label{lem:restricted_demand}
Given a graph $G$ together with a subset $S\subseteq
V$ that is $(\alpha,\phi)$-linked in $G$. Then the following
two statements hold.
\begin{itemize}
\item 
We can route a $\gamma$-restricted set of demands $D$ on $S$ with congestion
$O(\frac{\gamma}{\phi}\log m)$ inside $G[S]$. 
\item 
We can route a $\gamma$-boundary restricted set of demands $D$ on $S$ with congestion
$O(\frac{\gamma}{\alpha}\log m)$ inside $G[S]$. 
\end{itemize}
\end{lemma}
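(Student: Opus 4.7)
The plan is to reduce both bullets to the classical multicommodity max-flow/min-cut theorem of Leighton and Rao applied to the auxiliary graph $H := G[S]^{\alpha/\phi}$. By the hypothesis that $S$ is $(\alpha,\phi)$-linked in $G$, this $H$ is a $\phi$-expander. The Leighton--Rao theorem guarantees that on any $\phi$-expander on $m$ edges, any set of demands $D'$ with $\sum_{u}D'(v,u)\le \deg_H(v)$ and $\sum_{u}D'(u,v)\le\deg_H(v)$ for every vertex $v$ can be routed with congestion $O(\log m/\phi)$. I would first note that $H$ differs from $G[S]$ only by self-loops; self-loops carry no flow between distinct vertices, so any feasible multicommodity flow in $H$ restricts, with the same per-edge congestion, to a feasible flow in $G[S]$. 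Hence it suffices to route the demand in $H$.

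The numerical content of the proof is then to compare the demand bound at each vertex $v\in S$ to $\deg_H(v)$. Using the definition of $G[S]^w$, we have $\deg_H(v) = \deg_{G[S]}(v) + \lceil\alpha/\phi\rceil \cdot \border_S(v)$, while $\deg_G(v) = \deg_{G[S]}(v) + \border_S(v)$. For the first bullet, $\gamma$-restricted means $\sum_u D(v,u)\le \gamma\out_G(v) = \gamma\deg_G(v)\le \gamma\deg_H(v)$ (since $\lceil\alpha/\phi\rceil\ge 1$). Scaling the demand by $1/\gamma$, applying Leighton--Rao to the $\phi$-expander $H$, and scaling back yields congestion $O(\gamma\log m/\phi)$ inside $G[S]$. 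For the second bullet, I would instead use the lower bound $\deg_H(v)\ge \lceil\alpha/\phi\rceil\border_S(v)\ge (\alpha/\phi)\border_S(v)$. Thus a $\gamma$-boundary-restricted demand satisfies $\sum_u D(v,u)\le \gamma\border_S(v)\le (\gamma\phi/\alpha)\deg_H(v)$, i.e., it is $(\gamma\phi/\alpha)$-degree-restricted in $H$. Leighton--Rao then gives a routing in $H$ of congestion $O((\gamma\phi/\alpha)\cdot\log m/\phi) = O(\gamma\log m/\alpha)$, which again descends to a routing in $G[S]$.

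The key conceptual point, and the reason the boundary-linked strengthening appears in the second bullet, is exactly that the self-loops added in $H$ absorb a factor of $\alpha/\phi$ of the boundary into the vertex degree, so the boundary demand looks much smaller relative to $\deg_H(v)$ than it would relative to $\deg_G(v)$. I do not anticipate a hard step: the main things to be careful about are (i) checking that Leighton--Rao can be quoted in the form above for multigraphs with self-loops (standard; self-loops only inflate degrees and are irrelevant for any cut), and (ii) the symmetric inequality on incoming demand $\sum_u D(u,v)$, which is handled identically. The $\lceil\cdot\rceil$ in the definition of $G[S]^{\alpha/\phi}$ only improves the inequalities, so no separate case analysis for $\alpha/\phi<1$ versus $\ge 1$ is needed.
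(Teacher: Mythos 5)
Your proof is correct and follows essentially the same route as the paper's: both reduce to the Leighton--Rao approximate max-flow/min-cut theorem and exploit that $G[S]^{\alpha/\phi}$ is a $\phi$-expander, so that the self-loops absorb a factor $\alpha/\phi$ of the boundary into the vertex volumes. The only (cosmetic) difference is that the paper bounds the sparsity of $D$ directly in $G[S]$, whereas you invoke the degree-restricted-routing corollary on the self-loop graph and then observe the flow descends to $G[S]$; the inequalities are identical.
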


%        Given an unweighted $\phi$-expander $G$ with $m$ edges, let $G'$ be the
%        unweighted graph obtained from $G$ by replacing each edge
%        $e=(v,v')\in E(G)$ by a new node $u_e$ and two new edges $(v,u_e)$ and
%        $(v',u_e)$. Then any $O(1)$-restricted set $D$ of demands on the set
%        $\{u_e\mid e\in E(G)\}$ of vertices of $G'$ can be routed in $G'$ with
%        congestion $O(\frac{\log m}{\phi})$.

%{\color{blue}\harry{is this needed?}
%We always assume that the initial input graph is an unweighted simple graph on
%$n$ vertices and $m$ edges, so we have $m=O(n^2)$ and $\log m=O(\log n)$. In
%this section we assume that the parameter $\phi$ is ``inversely
%super-polylog'', namely $1/\phi=\Omega(\log^c n)$ for any constant $c>0$. For
%example, $\phi$ can be $2^{-O(\sqrt{\log m})}=n^{-o(1)}$. We also set
%$t=O(\log_{1/(\phi\log^5m)}m)=O(\log_{1/\phi}m)$.}

The main theorem of this section is to show the following theorem that 
a $(\alpha,\phi)$-EH of a graph is automatically a tree flow sparsifier. It is
well known that flow sparsifiers are a stronger notion than cut-sparsifiers and
that the quality of the two version may differ by a logarithmic factor.
\begin{theorem}
\label{thm:EH_gives_R_tree}
The $(\alpha,\phi)$-EH of an undirected, connected graph $G$ with $m$ edges
forms a tree flow sparsifier for $G$ with quality
$O(s\log m)^t\cdot O(\max\{\frac{1}{\alpha},\frac{1}{\phi}\}/\alpha^{t-1})$,
%$O(\log^{t}m\cdot1/(\min\{\alpha,\phi\}\alpha^{t-1}))$,
%\harry{unfortunately
%  this is wrong if $t$ is not constant; we could write $\exists$ a constant $c$
%such that quality is $O(c^t\log^{t}m\cdot\max\{\frac{1}{\alpha},\frac{1}{\phi}\}/\alpha^{t-1})$}, 
where $t$ denotes the depth and $s$ the slack of the hierarchy.
%For any simple unweighted connected graph $G$ on $n$ vertices and $m$ edges,
%its $\phi$-EH is a tree flow sparsifier for graph $G$ of quality
%$O(\log^{6t+1} m/\phi)$, where $t$ is the depth of the $\phi$-EH.
\end{theorem}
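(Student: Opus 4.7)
The forward inclusion---every demand routable in $G$ is routable in $T$---is essentially by construction. For every tree edge $(u_i,u_{i+1})$, the capacity $\deg_{G^i}(u_i)$ equals $|E_G(S_{u_i},V\setminus S_{u_i})|$, where $S_{u_i}\subseteq V$ is the set of original vertices contracted into $u_i$ along the sequence $(G^0,\dots,G^t)$. A unit-congestion flow in $G$ ships at most $|E_G(S_{u_i},V\setminus S_{u_i})|$ units across this cut, and the routing in $T$ sends exactly this total quantity across $(u_i,u_{i+1})$, so the induced tree flow has congestion at most one.

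The reverse inclusion is the heart of the argument, and I will prove it by induction on the depth $t$. Let $q_t$ denote the resulting congestion. For the base case $t=1$, the contracted graph $G^1$ has no edges and $G$ is connected, forcing $\uset^0=\{V\}$ and $G=\induced{V}[\alpha/\phi_1]$ to be a $(\phi_1/s)$-expander; equivalently $V$ is $(\alpha/s,\phi_1/s)$-boundary-linked in $G$. Now $T$ is a star whose leaf-to-root edge $(v,r)$ has capacity $\deg_G(v)$, so any $D$ routable in $T$ satisfies $\sum_u D(v,u)\le\deg_G(v)$ and is $1$-restricted on $V$. By the first bullet of \Cref{lem:restricted_demand}, $D$ can be routed in $G$ with congestion $O(s\log m/\phi_1)\le O(s\log m/\phi)$, establishing $q_1=O(s\log m/\phi)$.

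For the inductive step, let $T^{\ge 1}$ be the $(\alpha,\phi)$-EH of depth $t-1$ for $G^1=G_{\uset^0}$ obtained by dropping level $0$. Given $D$ routable in $T$, split $D=D_{\mathrm{in}}+D_{\mathrm{out}}$ into intra- and inter-cluster demand w.r.t.\ $\uset^0$. The restriction of $D_{\mathrm{in}}$ to each $U_i$ is $1$-restricted on $U_i$ (since $\sum_{u}D(v,u)\le\deg_G(v)$ for every $v\in U_i$), so by the first bullet of \Cref{lem:restricted_demand} applied to the $(\alpha/s,\phi_i/s)$-boundary-linked cluster $U_i$, it can be routed within $G[U_i]$ with congestion $O(s\log m/\phi_i)\le O(s\log m/\phi)$. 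For $D_{\mathrm{out}}$, define the projected demand $D_{\uset^0}(u_i,u_j)=\sum_{x\in U_i,\,y\in U_j}D(x,y)$ on $V(G^1)$; a routing of $D$ in $T$ contracts to a routing of $D_{\uset^0}$ in $T^{\ge 1}$, so by the inductive hypothesis there is a flow $F_{\uset^0}$ in $G^1$ routing $D_{\uset^0}$ with congestion $q_{t-1}$. To lift $F_{\uset^0}$ from $G^1$ to $G$, for each $U_i$ the $F_{\uset^0}$-paths traversing $u_i$, together with the endpoints of $D_{\mathrm{out}}$ inside $U_i$, induce a demand supported essentially on boundary-edge endpoints of $U_i$ that is $O(q_{t-1})$-\emph{boundary}-restricted, since no boundary edge carries more than $q_{t-1}$ flow. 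By the second bullet of \Cref{lem:restricted_demand} this demand can be routed in $G[U_i]$ with congestion $O(q_{t-1}s\log m/\alpha)$. Superposing the routings of $D_{\mathrm{in}}$ and of the boundary-edge demand inside each $G[U_i]$, together with $F_{\uset^0}$ on the boundary edges, yields a routing of $D$ in $G$ with $q_t=O(s\log m/\phi)+O(q_{t-1}s\log m/\alpha)$.

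Unrolling this recurrence gives $q_t=O(s\log m/\alpha)^{t-1}\cdot O(s\log m)\cdot\max\{1/\alpha,1/\phi\}$, matching the claimed bound. The main obstacle I anticipate is the lifting step: one must verify that the paths of $F_{\uset^0}$ traversing $U_i$ really induce a demand that is $O(q_{t-1})$-\emph{boundary}-restricted on $U_i$ rather than merely $O(q_{t-1})$-restricted. This is exactly where Property~2 of \Cref{def:decomp}, i.e.\ that $\induced{U_i}[\alpha/\phi_i]$ (and not just $\induced{U_i}[1]$) is a $(\phi_i/s)$-expander, is essential, since it replaces a per-level loss of $1/\phi$ by $1/\alpha$ and prevents the quality from blowing up to $\Omega(m)$ with depth.
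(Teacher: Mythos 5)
Your overall route is the same as the paper's: the forward direction via matching cut capacities, and the reverse direction by induction on the depth, routing the projected demand in $G^1$ by the inductive hypothesis and lifting it back into $G$ by routing auxiliary demands inside each cluster $U_i$ via \Cref{lem:restricted_demand}, with the boundary-linkedness of $G[U_i]^{\alpha/\phi_i}$ giving the $1/\alpha$ (rather than $1/\phi$) loss per level. The recurrence $q_t=O(s\log m/\phi)+O(q_{t-1}\,s\log m/\alpha)$ and the resulting bound match the paper, which organizes the same argument as a one-level lifting lemma (\Cref{lem:helper}) applied inductively.

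There is, however, a genuine gap in your lifting step. You claim that the flow paths of $F_{\uset^0}$ traversing $u_i$, \emph{together with the endpoints of $D_{\mathrm{out}}$ inside $U_i$}, induce a demand that is $O(q_{t-1})$-boundary-restricted on $U_i$. That is false for the second part: an inter-cluster demand of $D$ may originate at an interior vertex $x\in U_i$ with $|E_G(\{x\},V\setminus U_i)|=0$, while its total out-demand can be as large as $\deg_G(x)$; such a demand cannot be bounded by $q_{t-1}\cdot|E_G(\{x\},V\setminus U_i)|$, so the second bullet of \Cref{lem:restricted_demand} does not apply to the combined demand. The lifted flow $F_{\uset^0}$ only delivers flow to/from vertices of $U_i$ incident to boundary edges; connecting the actual sources and sinks of $D_{\mathrm{out}}$ in $U_i$ to these ``portals'' is a separate demand which is $O(1)$-restricted (since $D$ is routable in $T$) but \emph{not} boundary-restricted, and it is exactly this piece that accounts for the additive $O(s\log m/\phi)$ term in the per-level loss. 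The paper handles it by splitting: the boundary-to-boundary reconnection demand is $q$-boundary-restricted and is routed with congestion $O(\frac{q}{\alpha}s\log m)$ by the second bullet, while the source/sink-to-portal matching (the demand $R''$ in the claim inside \Cref{lem:helper}) is routed with congestion $O(\frac{s}{\phi}\log m)$ using the first bullet (equivalently, one can prove a combined statement: a demand on $U_i$ whose incident load at each $v$ is at most $a\deg_G(v)+b\,|E_G(\{v\},V\setminus U_i)|$ routes in $G[U_i]$ with congestion $O((\frac{a}{\phi_i}+\frac{b}{\alpha})s\log m)$). Since your recurrence already contains the $O(s\log m/\phi)$ term (you attribute it only to $D_{\mathrm{in}}$), the final bound survives once this step is repaired, but as written the boundary-restrictedness claim is the missing piece.
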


If we set $\phi=2^{-\sqrt{\log m}}$ and so $t=O(\sqrt{\log m})$, then together with
\Cref{cor:hierarchy}, we immediately obtain the following corollary:

\begin{corollary}
There is an algorithm, that, given any unweighted $m$-edge graph $G$, with high
probability, computes a tree flow sparsifier for graph $G$ of quality
$O(\log n)^{O(\sqrt{\log n})}$ in time $m^{1+o(1)}$.
\end{corollary}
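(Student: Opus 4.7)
The plan is to instantiate \Cref{cor:hierarchy} and \Cref{thm:EH_gives_R_tree} at carefully chosen parameter values, so that the running time and the quality both simultaneously come out to the claimed bounds. Specifically, I will set $\phi := 2^{-\sqrt{\log m}}$ and $\alpha := 1/(4\gkrv\log_2 m) = 1/O(\log^3 m)$. This choice of $\alpha$ satisfies the precondition of \Cref{cor:hierarchy}, so we can invoke it to obtain an $(\alpha,\phi)$-EH $T$ of $G$ with slack $s=1$ (the statically constructed hierarchy in \Cref{cor:hierarchy} has no slack) in time $\tilde{O}(m/\phi) = \tilde{O}(m\cdot 2^{\sqrt{\log m}}) = m^{1+o(1)}$, with high probability.

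The main quantitative step is to bound the depth $t$ of $T$. By definition of the $(\alpha,\phi)$-ED sequence, $G^{i+1}$ is obtained from $G^i$ by contracting the clusters of an $(\alpha,\phi)$-ED $\uset^i$ of $G^i$. Property \ref{ed:globalboundary} of \Cref{def:decomp}, applied with $U=V(G^i)$ so that $\out_{G^i}(V(G^i))=0$, gives $\sum_j \out_{G^i}(U^i_j) \le \tilde{O}(\phi)\cdot\vol(G^i)$. Since $\vol(G^{i+1}) = 2\sum_j \out_{G^i}(U^i_j)$, we conclude $\vol(G^{i+1}) \le \tilde{O}(\phi)\cdot \vol(G^i)$, i.e., the total volume shrinks by a factor $\tilde{O}(\phi) = 2^{-\sqrt{\log m} + O(\log\log m)}$ per level. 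Starting from $\vol(G) \le 2m$, the graph becomes empty after $t = O(\log m / \sqrt{\log m}) = O(\sqrt{\log m})$ levels.

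Now I plug these parameters into the quality bound from \Cref{thm:EH_gives_R_tree}. With $s=1$, the quality is
\[
O(\log m)^t \cdot O\!\left(\tfrac{\max\{1/\alpha, 1/\phi\}}{\alpha^{t-1}}\right).
\]
Since $1/\phi = 2^{\sqrt{\log m}}$ dominates $1/\alpha = O(\log^3 m)$, the second factor is $2^{\sqrt{\log m}}\cdot O(\log^3 m)^{O(\sqrt{\log m})} = 2^{O(\sqrt{\log m}\cdot \log\log m)}$, and the first factor is $(\log m)^{O(\sqrt{\log m})} = 2^{O(\sqrt{\log m}\cdot \log\log m)}$. Multiplying gives a total quality of $2^{O(\sqrt{\log m}\cdot\log\log m)} = (\log m)^{O(\sqrt{\log m})}$. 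Finally, because the paper's convention is $m = \poly(n)$, we have $\log m = \Theta(\log n)$ and $\sqrt{\log m} = \Theta(\sqrt{\log n})$, yielding the claimed quality $O(\log n)^{O(\sqrt{\log n})}$.

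There is essentially no hard step here: the whole argument is an explicit calculation, and the only thing to be careful about is the depth bound. The main potential obstacle is checking that the polylogarithmic factors hidden in the $\tilde{O}(\phi)$ per-level shrinkage do not spoil the depth estimate; but since $\log\log m \ll \sqrt{\log m}$, the logarithmic overheads are absorbed, and $t$ remains $O(\sqrt{\log m})$. Together with the $m^{1+o(1)}$ construction time from \Cref{cor:hierarchy}, the corollary follows.
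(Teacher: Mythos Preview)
Your proof is correct and follows exactly the paper's approach: set $\phi=2^{-\sqrt{\log m}}$, invoke \Cref{cor:hierarchy} to build the hierarchy in $\tilde{O}(m/\phi)=m^{1+o(1)}$ time, observe that the per-level volume shrinkage forces $t=O(\sqrt{\log m})$, and plug into \Cref{thm:EH_gives_R_tree}. One tiny slip: $\vol(G^{i+1})=\sum_j\out_{G^i}(U^i_j)$, not $2\sum_j$, since each inter-cluster edge is counted twice in the sum; this of course does not affect the argument.
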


Observe that stronger results than the above theorem are known
because~\cite{RackeST14} gives a polylogarithmic guarantee on the quality with a
running time of $O(m\polylog m)$. However, our approach here is simpler and
because we are able to efficiently maintain an expander hierarchy we also
obtain a result for dynamic graphs.
The main tool for proving \Cref{thm:EH_gives_R_tree} is the following lemma that shows how to
construct a flow sparsifier for a graph $G$ if one is given a flow sparsifier
for some contraction $G_\U$ of $G$.

\def\GU{G_{\mathcal{U}}}
\def\DU{D_{\mathcal{U}}}
\def\HU{H_{\mathcal{U}}}
\begin{lemma}\label{lem:helper}
Let $G$ be an unweighted graph and
$\mathcal{U}=(U_1,\ldots,U_r)$ be an $(\alpha,\phi)$-ED of $G$ with slack $s$. Given a flow sparsifier $\HU$ for the contracted graph
$\GU$ we can construct a flow sparsifier $H$ for $G$ as follows:
\begin{enumerate}[itemsep=0pt]
\item Add vertices of $V(G)$ to $V(H_{\mathcal{U}})$.
\item Connect a newly added vertex $v\in U_i$ to the vertex $u_i\in
V(H_{\mathcal{U}})$ with $\out_G(v)$ parallel edges.%
\label{steptwo}%
\end{enumerate}
The quality $q_H$ of the resulting flow sparsifier $H$ is
$O((\frac{q}{\alpha}+\frac{1}{\phi})s\log m)$, where $q$ denotes the quality of
the flow sparsifier $\HU$.
\end{lemma}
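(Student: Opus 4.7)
The plan is to verify both inequalities of the flow-sparsifier definition, namely $\eta(H,D)\le\eta(G,D)$ and $\eta(G,D)\le q_H\cdot\eta(H,D)$, by reducing in each direction to the \emph{projected demand} $D_{\mathcal{U}}(u_i,u_j):=\sum_{u\in U_i,\,v\in U_j}D(u,v)$ (summed over $i\neq j$) on the contracted graph $G_{\mathcal{U}}$, and then handling the remainder inside each cluster $U_i$ using its boundary-linkedness.

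For the easy direction, I would scale $D$ to be routable in $G$ with congestion $1$. Projecting the corresponding flow through the contraction routes $D_{\mathcal{U}}$ in $G_{\mathcal{U}}$ with congestion $1$, and the flow-sparsifier property of $H_{\mathcal{U}}$ then yields a routing of $D_{\mathcal{U}}$ in $H_{\mathcal{U}}$ of congestion $\le 1$. Composing this with the direct edges $(v,u_i)$ added in Step~\ref{steptwo} (of total capacity $\deg_G(v)$) produces a routing of $D$ in $H$; the only check is that the load on $(v,u_i)$ is $\sum_{v'}\!\bigl(D(v,v')+D(v',v)\bigr)\le\deg_G(v)$, which holds because any congestion-$1$ multicommodity flow on an undirected graph satisfies this vertex bound.

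For the hard direction, assume $D$ is routable in $H$ with congestion $1$. Restricting the routing to the subgraph $H_{\mathcal{U}}\subseteq H$ gives a routing of $D_{\mathcal{U}}$ in $H_{\mathcal{U}}$ of congestion $\le 1$, which by the sparsifier property lifts to a routing $F_{\mathcal{U}}$ of $D_{\mathcal{U}}$ in $G_{\mathcal{U}}$ of congestion $\le q$; un-contracting, each boundary edge of every $U_i$ carries at most $q$ units. It remains, for each $U_i$, to route inside $G[U_i]$ a local demand $\hat D_i$ combining (a)~the intra-cluster demands $D(u,v)$ for $u,v\in U_i$, and (b)~for each $v\in U_i$ and each boundary edge $e=(x,y)$ with $x\in U_i$, the amount of inter-cluster flow that originates at $v$ and exits $U_i$ through $e$ in $F_{\mathcal{U}}$ (and symmetrically for incoming flow). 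Part~(a) is $1$-restricted on $U_i$, so since $G[U_i]^{\alpha/\phi_i}$ is a $(\phi_i/s)$-expander, \Cref{lem:restricted_demand} routes it inside $G[U_i]$ with congestion $O(s\log m/\phi_i)\le O(s\log m/\phi)$. For part~(b), the total flux at each $w\in U_i$ is bounded by $\deg_G(w)+q\cdot\border_{U_i}(w)=\deg_{G[U_i]}(w)+(1+q)\border_{U_i}(w)$, while the reference degree in $G[U_i]^{\alpha/\phi_i}$ is $\deg_{G[U_i]}(w)+(\alpha/\phi_i)\border_{U_i}(w)$. Applying the max-flow/min-cut ratio in $G[U_i]^{\alpha/\phi_i}$ (conductance $\ge\phi_i/s$), the demand in~(b) routes in $G[U_i]$ with congestion at most $O(s\log m/\phi_i)\cdot\max\!\bigl(1,(1+q)\phi_i/\alpha\bigr)=O\!\bigl(s\log m\,(1/\phi+q/\alpha)\bigr)$; adding the boundary-edge congestion $q\le O\!\bigl((q/\alpha+1/\phi)s\log m\bigr)$ from $F_{\mathcal{U}}$ yields the claimed $q_H$.

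The step I expect to be the most delicate is routing the inter-cluster part~(b), because its restriction is asymmetric: the source side is $1$-restricted in the usual $\deg_G$ sense (from the capacities $\deg_G(v)$ of the edges $(v,u_i)$ in $H$), while the sink side is $q$-boundary-restricted (from the saturation of boundary edges by $F_{\mathcal{U}}$). This form does not fit either case of \Cref{lem:restricted_demand} directly; the fix is the two-regime analysis sketched above, in which the $(\phi_i/s)$-expansion of $G[U_i]^{\alpha/\phi_i}$ absorbs the boundary term when $q\le\alpha/\phi_i$ (giving $O(s\log m/\phi)$) and yields the scaled bound $O(qs\log m/\alpha)$ otherwise, the two regimes combining into the single bound that matches $q_H$.
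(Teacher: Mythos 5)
Your proposal is correct and, at the top level, takes the same route as the paper: the easy direction (route the source/target demand over the new edges $(v,u_i)$, route the projected demand $D_{\mathcal{U}}$ through $H_{\mathcal{U}}$, observe the two edge sets are disjoint) is identical, and in the hard direction both arguments lift $D_{\mathcal{U}}$ to a congestion-$q$ flow in $G_{\mathcal{U}}$ and then finish inside the clusters using the $(\phi_i/s)$-expansion of $G[U_i]^{\alpha/\phi_i}$. The difference is the cluster-level bookkeeping. The paper performs two separate routings, each a black-box call to \Cref{lem:restricted_demand}: a reconnection step whose induced per-cluster demand is $q$-boundary-restricted, followed by a ``portal'' step that replaces the demand routed so far by $D$ (which has the same projection) via an additional intra-cluster demand. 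You instead fold everything into a single mixed demand per cluster and re-run the sparsity/max-flow--min-cut argument in $G[U_i]^{\alpha/\phi_i}$ with the per-vertex flux bound $\deg_G(w)+q\,\border_{U_i}(w)$; this is sound, and it confronts head-on the degree-versus-boundary asymmetry that you rightly single out as the delicate point (the paper's portal step must cope with the same asymmetry). Your version buys a self-contained one-shot estimate; the paper's buys verbatim reuse of \Cref{lem:restricted_demand}.

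Two repairs to the write-up. First, part~(b) must also include the \emph{transit} reconnections: a flow path of $F_{\mathcal{U}}$ that merely passes through an intermediate cluster $U_k$ has to be stitched inside $U_k$ from its entry boundary endpoint to its exit boundary endpoint; as stated, (b) covers only source-to-exit and entry-to-target pairs, so the lifted flow would be disconnected in intermediate clusters. This costs nothing, since transit only loads the $q\,\border_{U_i}(w)$ term of your flux bound, so the congestion analysis is unchanged. Second, ``the flow that originates at $v$'' presupposes an assignment of the $u_i$-to-$u_j$ flow of $F_{\mathcal{U}}$ to the original pairs $(x,y)\in U_i\times U_j$; such an assignment exists because $F_{\mathcal{U}}$ routes exactly $D_{\mathcal{U}}(u_i,u_j)=\sum_{x\in U_i,y\in U_j}D(x,y)$, but it should be stated explicitly (it is the analogue of the paper's portal matching).
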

\begin{proof}
For a given set $D$ of demands on $V(G)$ we use $\DU$ to denote 
the \emph{projection of $D$} to $V(\GU)$, i.e., for two nodes
$u_i,u_j\in V(\GU)$ we define $\DU(u_i,u_j):=\sum_{x\in U_i,y\in U_j}D(x,y)$.

\def\Dsource{D_s}
\def\Dtarget{D_t}
We first show that for all demands $D$ we have $\eta(H,D)\le\eta(G,D)$. Fix
some demand $D$ and assume w.l.o.g.\ that the congestion $\eta(G,D)$ for routing
$D$ in $G$ is $1$ (this can be obtained by scaling).

For routing between two vertices $x\in U_i$ and
$v\in U_j$ from $H$ we split their demand into three parts: $x\rightarrow u_i$,
$u_i\rightarrow u_j$, and $u_j\rightarrow y$. Doing this for all demand-pairs
gives us three sets of demands: 
the \emph{source demand} $D_s$ defined by 
$D_s(x,u_i):=\sum_{y\in V(G)}D(x,y)$ (where $x\in U_i$),
the \emph{projected demand $\DU$}
%with $\DU(u_i,u_j):=\sum_{x\in U_i}\sum_{y\in U_j}D(x,y)$, 
and the \emph{target demand}
$D_t(u_j,y):=\sum_{x\in V(G)}D(x,y)$ (where $y\in U_j$). 
We route theses demands in $H$ as follows.

\begin{itemize}
\item
The source and target demand can be routed in $H$ via the edges
that were added in Step~\ref{steptwo}. The total traffic that
is generated on the edge $(x,u_i)$ is the total demand that leaves
or enters vertex $x$ in $D$. However, the latter is at most $\out_G(x)$
as otherwise the demand could not be routed in $G$ with congestion $1$. Hence,
the congestion caused by this step in $H$ is at most $1$.

\item
The projection demand $\DU$ can be routed only along edges belonging to $\HU$.
Clearly, this demand can be routed in $\GU$ with congestion at most $1$, and,
hence, it can also be routed in $\HU$ with congestion at most $1$ as $\HU$ is
a flow sparsifier for $\GU$.
\end{itemize}
Observe, that the edges used for routing in the above two steps are disjoint.
Hence, we can concurrently route demands $D_s, D_t$, and $\DU$ with congestion 
$1$, and, hence, we can also route $D$ with this congestion.

Now, we show that $\eta(G,D)\le q_H\cdot\eta(H,D)$. Fix some demand $D$ and
assume w.l.o.g.\ that $\eta(H,D)=1$. From this it follows that we can route the
projected demand $\DU$ in $\HU$ with congestion $1$. Since, $\HU$ is a flow
sparsifier for $G_\U$ (with quality $q$) this implies that we can also route
$\DU$ in $\GU$ with congestion $q$. 

In the following we describe how to extend a routing $F$ for the demand $\DU$
in $\GU$ to a routing of $D$ in $G$.
%
%In a first step we map the demands $\DU$
%to demands $D'$ between vertices of $G$. This is done by the following process
%starting with an empty demand $D'$. For every $u_i-u_j$ path $P$ used in $F$ we
%choose vertices $x\in U_i$ and $y\in U_j$ and increase $D'(x,y)$ by $F_P$. We
%can do this in such a way that for every vertex $v\in U_i$ the total incoming
%(outgoing) demand in $D'$ is at most $q|\Gamma(v)|$.
%{\color{blue} More precisely for a
%$u_i-u_j$ flow-path (with $u_i\neq u_j$) that starts with edge $e_i$ and ends
%with edge $e_j$ in $\tGU$ we identify the corresponding edges $e_i'=(x,u)$ and
%$e_j'=(v,y)$ in $G$, with $x\in U_i$ and $y\in U_j$; then we increase demand
%$D'(x,y)$. For a $u_i-u_i$ flow path we identify any vertex $x$ in $U_i$ for which
%$\sum_yD'(x,y)$ is still at most $\sum_yD(x,y)$ (such a vertex must exist because
%$\sum_{x\in U_i}\sum_yD(x,y)=\sum_y\DU(u_i,y))$.}\harry{do we need this?}
%
%
In a first step we map the non-empty flow paths of $F$ to $G$ (note that the edges of
the contracted multigraph $\GU$ also exist in $G$; we simply map the flow from
edges in $\GU$ to the corresponding edge in $G$).
Consider such a flow path
$u_i=u_{s_1},u_{s_2},\dots,u_{s_k}=u_j$. In $G$ its edges connect subsets
$U_{s_1},U_{s_2},\dots,U_{s_k}$ but they do not form paths. For example we
could have an edge $e=(x_1,x_2)$ followed by an edge $e'=(x_2',x_3)$, with
$x_1\in U_{s_1}$, $x_2,x_2'\in U_{s_2}$, and $x_3\in U_{s_3}$. In order to
obtain \emph{paths} in $G$ we have to connect $x_2$ to $x_2'$. Performing this
reconnection step for all routing paths from $F$ induces a flow problem for
every cluster $U_i$. The total demand (incoming and outgoing) for a vertex
$x\in U_i$ in this flow problem is the total value of all flow-paths that $x$
participates in. But this can be at most $q\border_{U_i}(v)$ as each of
these flow paths uses an edge incident to $x$ that leaves $U_i$ and the
congestion is at most $q$. 
Since $U_i$ is $(\alpha/s,\phi/s)$-linked and this set of demands is $q$-boundary-restricted, by \Cref{lem:restricted_demand},
we can route such a set of demands in $G[U_i]$ with congestion $O(\frac{q}{\alpha}s\log m)$. 
As all clusters $U_i$ are vertex-disjoint, performing all reconnections results in 
congestion $O(\frac{q}{\alpha}s\log m)$.

We also map the empty flow paths of $F$ to empty paths in $G$ as follows.
A $u_i-u_i$ path in $F$ is mapped to $x-x$ paths in $G$ with $x\in U_i$ such
that the total flow that starts at a vertex $x$ (including empty paths) is
exactly $\sum_{y\in V(G)}D(x,y)$. This can be obtained because
$\sum_{y\in V(G)}\DU(u_i,y)=\sum_{x\in U_i}\sum_{y\in V(G)}D(x,y)$ and because $F$ routes demands
$\DU$.

Let $D'$ denote the set of demands routed by the flow system that we have constructed so far.
%This means we now have constructed a flow system that routes a demand $D'$ in $G$ 
Observe that $D'$ has the same projection as our demand $D$, i.e., $\DU'=\DU$. The 
following claim shows that one can extend a routing for $D'$ to a routing for
$D$ with a small increase in congestion.

\begin{claim}
Suppose we are given $\gamma$-restricted demands $D$ and $D'$ that fulfill
$\DU'=\DU$ and assume that $D'$ can be routed with congestion $C'$. Then we can route
$D$ with congestion at most
$O(C'+s\gamma/\phi\cdot\log m)$.
\end{claim}
\begin{proof}
%Fix demand $D$. We construct new demands $R'$ and $R''$ such that
%$R'=D'$ and routing $R'$ and $R''$ will result in routing $D$. We
%start with $R'$ and $R''$ being empty. Then we perform the following:
% 
% 
% 
% 
% 
%%\bigskip
%%\hrule

Since, the projection of demands $D$ and $D'$ are equal we know that
$\sum_{(x,y)\in U_i\times U_j}D(x,y)=\sum_{(x,y)\in U_i\times U_j}D'(x,y)$.
We successively route $D$ using the flow-paths of the routing for $D'$. For
every pair $(x,y)$ that we want to connect in $D$ we find portals
$(x',y')\in U_i\times U_j$ that are connected in $D'$. Then we add flow paths
from $x$ to $x'$ and from $y'$ to $y$. Formally, we use the following algorithm
to compute a demand $R''$ such that $D'$ together with $R''$ can route $D$. 

{%
\intextsep0pt
\begin{algorithm}
$R\leftarrow D$\;
$R'\leftarrow D'$\;
\While{$\exists x\in U_i, y\in U_j$ with $R(x ,y)>0$}{
	choose $x'\in U_i, y'\in U_j$ with $R'(x',y')>0$              \tcp*{choose pair of portals}
	decrease $R(x,y)$ and $R'(x',y')$ by $\epsilon$            \tcp*{route flow $\epsilon$ via pair $(x,y)$}
	increase $R''(x,x')$ and $R''(y',y)$ by $\epsilon$\;
        \tcp{R'' stores demand for connecting to portals}
}
\end{algorithm}
}

\noindent
The demands in $R''$ are just between vertex pairs inside clusters $U_i$,
$i\in\{1,\dots, s\}$. The total demand that can enter or leave a vertex $v$ (in
$R''$) is at most $2\gamma\out_G(v)$, because each such demand either occurs
in $R''$ because $v$ is used as an original source/target for demand in $D$ or
as a portal (i.e., as a source/target of a demand in $D'$). Since, both $D$ and
$D'$ are $\gamma$-restricted we get that $R''$ is $2\gamma$-restricted.
Therefore, we can route $R''$ with congestion $O(s\gamma/\phi\cdot\log m)$ by
\Cref{lem:restricted_demand} using the fact that each $U_i$ is $(\alpha/s,\phi/s)$-linked.
\end{proof}
%\hrule
%\bigskip

%\begin{algorithm}
%$R\leftarrow D$\;
%\While{$\exists x\in U_i, y\in U_j$ with $R(x,y)>0$}{
%        choose $x'\in U_i, y'\in U_j$ with $D'(x,y)>0$\;
%        decrease $D'(x',y')$ and $R(x,y)$ by $\epsilon$\;
%        increase $R'(x',y')$ and $R''(x,x')$ and $R''(y',y)$ by $\epsilon$\;
%}
%\end{algorithm}
%The above process guarantees that at any time the demands in $R$, $R'$ and
%$R''$ can route demand $D$. Hence, in the end $R'$ and $R''$ can route $D$
%as $R$ is empty. The demands in $R''$ are just between vertex pairs inside
%clusters $U_i$, $i\in\{1,\dots, s\}$. The total demand that can enter or
%leave a vertex $v$ (in $R''$) is at most $2\gamma|\Gamma(v)|$.\thatchaphol{Explain a bit why?}
%Therefore, we can route $R''$ with congestion $O(\gamma/\phi\cdot\log m)$ by \Cref{lem:restricted_demand}.

Using the fact that demands $D$ and $D'$ are $O(1)$-restricted we can route $D$
with congestion at most $q_H:=O((\frac{sq}{\alpha}+\frac{s}{\phi})\log m)$. This
gives the bound on the quality of the sparsifier $H$.
\end{proof}

\paragraph{Proof of Theorem~\ref{thm:EH_gives_R_tree}:}
Let $(G^{0},G^{1},\dots,G^{t})$ be some $(\alpha,\phi)$-expander decomposition
sequence with slack $s$ and let $T$ denote the associated
$(\alpha,\phi)$-expander-hierarchy. Recall that $G^{i+1}$ is the contraction of
$G^i$ w.r.t.\ some $(\alpha/s,\phi/s)$-linked partition $\mathcal{U}_i$ of
$G_i$, i.e., $G^{i+1}=G_{\mathcal{U}_i}^i$. $G^t$ corresponds to the root of
the tree and consists of just a single vertex\footnote{For simplicity we assume that
$G$ is connected; the proof easily generalizes to graphs with several connected
components.}, while $G^0$ is identical to $G$. Let $T_{\ge i}$ denote the
subgraph of $T$ that just contains vertices that have at least distance $i$ to
the leaf-level, i.e., $T_{\ge t}$ is just the single root vertex and
$T_{\ge 0}=T$. Note that the leaf vertices in $T_{\ge i}$ are the vertices on
level $i$, which correspond to the nodes in $G^i$.

Let $c$ denote the hidden constant in Lemma~\ref{lem:helper} and define
$a:=\frac{cs}{\alpha}\log m$ and $b:=\frac{cs}{\phi}\log m$. This means
$q_H\le aq+b$ in Lemma~\ref{lem:helper}. We show by induction that $T_{\ge i}$
is a sparsifier for $G^i$ with quality $b\frac{a^{t-i}-1}{a-1}+a^{t-i}$. This
clearly holds for $i=t$ as then both graphs are identical (just a single
vertex) and, hence, $T_{\ge t}$ is a sparsifier of quality $1$. Now assume that
the statement holds for $i+1>0$. We prove it for $i$. We want to show that
$T_{\ge i}$ is a sparsifier for $G^{i}$. We know that $T_{\ge i+1}$ is a
sparsifier for $G^{i+1}=G_{\mathcal{U}_{i}}^i$; in addition $T_{\ge i}$ is obtained
from $T_{\ge i+1}$ by adding vertices of $V(G^i)$ and attaching each
vertex $v\in V(G^i)$ to the leaf vertex in $T_{\ge i+1}$ that corresponds to 
the cluster in $\mathcal{U}_i$ that contains $v$. This means we can apply
Lemma~\ref{lem:helper} and obtain that $T_{\ge i}$ is a sparsifier for
$G^i$ of quality
\begin{equation*} 
a\cdot\bigg(b\frac{a^{t-i-1}-1}{a-1}+a^{t-i-1}\bigg)+b
=
b\frac{a^{t-i}-1}{a-1}+a^{t-i}\enspace.
\end{equation*}

Hence, for $i=0$ we obtain that $T=T_{\ge 0}$ is a sparsifier for 
$G^0=G$. The quality is 
$b\frac{a^t-1}{a-1}+a^t=O(c^ts^t\log^t
m\max\{\frac{1}{\alpha},\frac{1}{\phi}\}/\alpha^{t-1})$. This finishes
the proof of Theorem~\ref{thm:EH_gives_R_tree}.
\qed

\section{Fully Dynamic Expander Pruning}
\label{sec:dynamic_expander_pruning}

In this section we prove the following theorem, which generalizes Theorem 1.3 in~\cite{SaranurakW19expander}.

% $\phi \leq \alpha \leq 1/4$

\begin{theorem}[Fully Dynamic Expander Pruning]
\label{thm:dynamic_expander_pruning} Let $0 \leq \alpha, \phi \leq 1$ and $\alpha/\phi \leq w \leq 3/(5\phi)$. There is a deterministic algorithm that given a graph $G = (V,E)$, a cluster $U \subseteq V$ such that $G[U]^{w}$ is an $\phi$-expander, and an online sequence of $k \le \phi \vol_{G[U]^{w}}(U)/120$ edge updates, where each update is an edge insertion or deletion for which at least one of the endpoints is contained in $U$, maintains a pruned set $P \subseteq U$ of vertices such that the following property holds. For each $1\le i\le k$,  let $G_i=(V,E_i)$ be the graph after the $i$th update, and denote by $P_i$ the set $P$ after the $i$-th update. We have 
	\begin{enumerate}
		\item $P_0=\emptyset$, and $P_i\subseteq P_{i+1}$.
		\item $\vol_{G[U]^w}(P_i)\le 32i/\phi$, and $|E_{G}(P_i, U \setminus P_i )| \leq 16 i$.
		%\item the cluster $V\setminus P_i$ is $(1/4)$-boundary-linked in $G_0$; and
		\item $|E_G(P_i, V \setminus U)| \leq 16i/\alpha.$
		\item The graph $G_i[U \setminus P_i]^{w}$ is an $(\phi/38)$-expander. 
		%\item for any partition $(A,B)$ of $V\setminus P_i$ such that $\vol_{G_0}(A)\le \vol_{G_0}(B)$, we have $|E_{G_0}(A,B)|-|E^-_i(A,B)|\ge (\phi/6)\cdot(\vol_{G_0}(A)+|E^+_i(A,B)|)$; as a corollary, the graph $G^-_i\{V\!\setminus\! P_i\}$ is an $(\phi/6)$-expander, where $G^-_i=(V,E\setminus E^-_i)$.
	\end{enumerate}
	Moreover, the total running time for updating $P_1,\ldots,P_k$ is $O(k \log m/\phi^2)$.
\end{theorem}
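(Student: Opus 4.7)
The plan is to reduce the theorem to a single invocation of the incremental flow subroutine of \Cref{lem: incrementalFlow} on the graph $G_0[U]^w$. We initialize the subroutine with integral edge capacity $c(e) = \lceil 2/\phi \rceil$ on every edge (including self-loops), sink capacity $T(v) = \deg_{G_0[U]^w}(v)$, and zero initial source $\Delta \equiv 0$. Each of the $k$ updates is then translated into a small number of source injections: an internal update, i.e.\ insertion or deletion of an edge $(u,v)$ with both endpoints in $U$, triggers an injection of $\Theta(1/\phi)$ units at each of $u$ and $v$, the standard trick being to simulate a deletion by leaving the edge in place and injecting the at-most-$2/\phi$ units of flow it previously carried; a boundary update at $v\in U$, i.e.\ insertion or deletion of an edge $(v,x)$ with $x\notin U$, triggers an injection of $\Theta(w)$ units at $v$, reflecting the addition or removal of $w$ self-loops at $v$ in $G[U]^w$ and the corresponding change of $T(v)$ by $w$. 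The incremental subroutine then maintains $P = P_i$ so that the adjusted flow problem stays feasible on $G_i[U\setminus P_i]^w$.

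The quantitative guarantees follow directly from those of \Cref{lem: incrementalFlow}. Item 1 (monotonicity of $P_i$) is built into the incremental subroutine. The total injected mass after $i$ updates is $\sum_v\Delta(v) = O(i/\phi) + O(iw) = O(i/\phi)$ using $w\le 3/(5\phi)$; together with the hypothesis $k \le \phi\vol_{G[U]^w}(U)/120$ this keeps $\sum_v\Delta(v) \le \vol_{G[U]^w}(U)/3$ throughout the sequence, as the lemma requires. The lemma then yields $\vol_{G[U]^w}(P_i) \le 2\sum_v\Delta(v) \le 32i/\phi$ and $|E_G(P_i,U\setminus P_i)| \le 2\sum_v\Delta(v)/\min_e c(e) \le 16i$, which is Item 2. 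Item 3 follows for free: each boundary edge $(v,x)$ with $v\in P_i$ and $x\notin U$ contributes $w$ self-loops to $\vol_{G[U]^w}(P_i)$, so $|E_G(P_i,V\setminus U)| \le \vol_{G[U]^w}(P_i)/w \le 32i/(\phi w) \le 16i/\alpha$ using $w\ge\alpha/\phi$ (after tuning the absolute constant hidden in the injection amounts). The running-time bound $O(k\log m/\phi^2)$ then follows from the final clause of \Cref{lem: incrementalFlow} with $c_{\max} = O(1/\phi)$ and $\sum_v\Delta(v) = O(k/\phi)$.

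The main obstacle is Item 4: verifying that $G_i[U\setminus P_i]^w$ remains an $(\phi/38)$-expander. The plan is to convert the feasibility of the adjusted flow problem into a conductance bound via max-flow/min-cut duality. Fix any $S \subseteq U\setminus P_i$ with $\vol_{G_i[U\setminus P_i]^w}(S)$ at most half of $\vol_{G_i[U\setminus P_i]^w}(U\setminus P_i)$. Feasibility says that all the mass present inside $S$ (the original injection plus the inflow through $E(S,P_i)$) can be routed to sinks inside $S$ (total capacity $\vol_{G_i[U\setminus P_i]^w}(S)$) together with the cut edges $E_{G_i}(S,(U\setminus P_i)\setminus S)$ (capacity $\lceil 2/\phi\rceil$ per edge). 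On the other hand, since $G_0[U]^w$ was a $\phi$-expander and the updates perturb each vertex's degree and source contribution by a controlled amount, $\vol_{G_i[U\setminus P_i]^w}(S)$ stays close to $\vol_{G_0[U]^w}(S)$, and the amount of flow that must leave $S$ is $\Omega(\phi)\cdot\vol_{G_i[U\setminus P_i]^w}(S)$. Balancing these two estimates yields $|E_{G_i}(S,(U\setminus P_i)\setminus S)| \ge (\phi/38)\cdot\vol_{G_i[U\setminus P_i]^w}(S)$. The delicate step is bookkeeping all sources of volume change in $G_i[U\setminus P_i]^w$ relative to $G_0[U]^w$ (boundary updates changing self-loop counts, internal updates changing real-edge counts, and pruning removing mass) so that the same slack factor $38$ suffices uniformly; this is where both the assumption $k \le \phi\vol_{G[U]^w}(U)/120$ and the conditions $\alpha/\phi \le w \le 3/(5\phi)$ are used at their tightest.
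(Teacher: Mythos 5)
Your overall architecture is the same as the paper's: run the incremental-flow subroutine of \Cref{lem: incrementalFlow} once on the \emph{static} graph $G[U]^w$ with edge capacities $\Theta(1/\phi)$ and sinks $T(v)=\deg_{G[U]^w}(v)$, never modify the flow network, translate each update into source injections, read Items 1--3 off the subroutine's guarantees, and derive Item 4 from feasibility of the (adjusted) flow problem on $G[U]^w[U\setminus P_i]^1$ combined with the $\phi$-expansion of the original $G[U]^w$. The differences are in the calibration, and that is where the gap lies. First, Item 4 --- the entire content of the theorem --- is only a plan in your write-up: the ``delicate bookkeeping'' you defer is precisely the paper's proof (the bound $\Delta'(S)\le \vol_{G[U]^w}(S)+\tfrac{2}{\phi}|E_G(S,A\setminus S)|$ from feasibility, the volume-transfer lemma between $G_i[A]^w$ and $G[U]^w$, and the step showing that $\vol_{G_i[U]^w}(S)$, $|E_{G_i}(S,P_i)|$, $|E_{G_i}(S,A\setminus S)|$ each differ from their values in $G$ by at most $\tfrac12|E_G(S,A\setminus S)|$, which is what produces the constant $38$).

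Second, your injection rule is not strong enough to make that bookkeeping go through. The paper injects a uniform $8/\phi$ at every affected endpoint in $U$; this is needed because the perturbation argument requires the injected mass to dominate, with a sufficiently large margin, every quantity that can drift: the cut-type counts $|E(S,P_i)|$, $|E(S,A\setminus S)|$ change by $1$ per update and need an $\Omega(1/\phi)$ buffer per unit (your $\Theta(1/\phi)$ internal injection is fine for these), while $\vol_{\cdot[U]^w}(S)$ jumps by $\lceil w\rceil$ per \emph{boundary} update, and the paper's $8/\phi\ge \tfrac{40}{3}w$ still dominates that jump by a constant factor. Your $\Theta(w)$ injection for boundary updates does not: when $w<1$ (permitted by the hypotheses $\alpha/\phi\le w$, $\alpha$ arbitrarily small) each boundary update adds $\lceil w\rceil=1$ to the volume but injects only $\Theta(w)\ll 1$. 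Concretely, repeatedly insert boundary edges $(v,x)$, $x\notin U$, at a single low-degree vertex $v$: the total injected mass stays below $T(v)+\tfrac{2}{\phi}\deg_{G[U]}(v)$ for about $\deg(v)/(w\phi)$ updates, so the flow problem remains feasible, \Cref{lem: incrementalFlow} never forces $v$ into $P$ (its pruned volume is bounded by twice the injected mass), yet $\vol_{G_i[U\setminus P_i]^w}(\{v\})$ grows by $1$ per insertion while its cut stays $\deg_{G[U]}(v)$, so the conductance of $\{v\}$ falls below $\phi/38$ well within the budget $k\le\phi\vol_{G[U]^w}(U)/120$. Even in the intended regime $w\ge 1$, the $\Theta(w)$ rule only matches the paper's margin up to constants, so you would still have to redo the balancing to see what replaces $38$; the clean fix is simply the paper's uniform $8/\phi$ injection. (Minor: for Item 3 the paper first uses the $\phi$-expansion of $G[U]^w$ to sharpen $\vol_{G[U]^w}(P_i)\le 16i/\phi$ and then divides by $w$; your direct $\vol/w$ route gives $32i/\alpha$, and ``tuning the injection constant'' to fix this would disturb the constants you already used for Item 2.)
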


While the proof of Theorem~\ref{thm:dynamic_expander_pruning} is similar to the proof of Theorem 1.3 in~\cite{SaranurakW19expander}, there are subtle differences as we need to work with a cluster in a graph and not the whole graph, and more importantly, we need to show that a stronger notion of a graph defined on a cluster remains an expander.

Our algorithmic construction behind Theorem~\ref{thm:dynamic_expander_pruning} uses Incremental Flow algorithm from Lemma~\ref{lem: incrementalFlow} as a subroutine. Concretely, let $G=(V,E)$ be a graph and let $U \subseteq V$  be a cluster that is $G[U]^{w}$ is an $\phi$-expander. Let $\Pi = (\Delta,T,c)$ be a flow problem defined on $G[U]^{w}$ with $\Delta(v) = 0$ for all $v \in U$, $T(v) = \deg_{G[U]^{w}}(v)$ for all $v \in U$ and $c(e) = 2/\phi$ for all $e \in E(G[U]^{w})$. We give $G[U]^{w}$ and $\Pi$ as inputs to the Incremental Flow algorithm of Lemma~\ref{lem: incrementalFlow}. 

We next show how to handle updates in $G$. Consider the insertion or deletion of an edge $e=(u,v)$ in $G$ for which at least one of the endpoints is contained in $U$. For each endpoint $w \in \{u,v\}$ of $e$ such that $w \in U$, we add $8/\phi$ unit of source mass at $w$, i.e., we set $\Delta(w) = \Delta(w) + 8/\phi$, and pass these source injecting operations to Incremental Flow. This completes the description of an iteration and the algorithm.

We next verify that the above algorithm satisfies the properties of Theorem~\ref{thm:dynamic_expander_pruning}. To prove the first property, note that from the Incremental Flow, it is clear that the maintained incremental set $P$ satisfies $P_0=\emptyset$, and $P_i\subseteq P_{i+1}$ for all $1\le i\le k$ and thus $P \subseteq U$ serves as a pruned set in Theorem~\ref{thm:dynamic_expander_pruning}. Next, observing that (i) $\sum_{v \in V} \Delta(v) \leq 16i/\phi$ after $i$ edge updates and (ii) $\min_e \{c_e\} = 2/\phi$, and using the second guarantee of Incremental Flow in Lemma~\ref{lem: incrementalFlow}, we get that $\vol_{G[U]^w}(P_i) \leq 32 i / \phi$ and $|E_G(P_i, U \setminus P_i)| = |E_{G[U]^w}(P_i, U \setminus P_i)| \leq 16 i$, thus proving the second property of Theorem~\ref{thm:dynamic_expander_pruning}. 

The third property, i.e., the bound on the connectivity between the pruned set $P_i$ and $V \setminus U$, is proved in the lemma below. Throughout, recall that $k \leq \phi \vol_{G[U]^w} (U)/120$ from Theorem~\ref{thm:dynamic_expander_pruning}.

\begin{lemma}
Let $P_i$ be the pruned set. Then $|E_G(P_i, V \setminus U)| \leq 16i/\alpha$. 
\end{lemma}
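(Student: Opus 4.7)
The plan is to reduce the third property to the already-established second property by exploiting how boundary edges of $G$ are encoded as self-loops in $G[U]^w$. The key observation is that every edge $(v,x)\in E_G(P_i, V\setminus U)$ with $v\in P_i$ contributes $\lceil w\rceil$ self-loops at $v$ in $G[U]^w$, and, by the convention fixed in \Cref{sec:prelim} that each self-loop adds $1$ to the degree of its host vertex, each such boundary edge adds exactly $\lceil w\rceil$ to $\deg_{G[U]^w}(v)$. Summing over $v\in P_i$ and over all its boundary-crossing edges, we therefore get
\[
\lceil w\rceil\cdot |E_G(P_i, V\setminus U)|\;\le\;\vol_{G[U]^w}(P_i).
\]

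Next, I would invoke Property~2, namely $\vol_{G[U]^w}(P_i)\le 32i/\phi$, which was proved in the preceding paragraph by combining the injection rule ($8/\phi$ mass per $U$-endpoint of each update, so $\sum_v\Delta(v)\le 16i/\phi$ after $i$ updates) with the volume guarantee $\vol(P)\le 2\sum_v\Delta(v)$ from \Cref{lem: incrementalFlow}. Together with the hypothesis $w\ge \alpha/\phi$, and hence $\lceil w\rceil\ge \alpha/\phi$, this immediately yields
\[
|E_G(P_i, V\setminus U)|\;\le\;\frac{\vol_{G[U]^w}(P_i)}{\lceil w\rceil}\;\le\;\frac{32i/\phi}{\alpha/\phi}\;=\;\frac{32i}{\alpha},
\]
which is within a constant factor of the claimed bound $16i/\alpha$.

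To squeeze out the claimed constant, I would sharpen the bookkeeping as follows: the factor-$2$ in $\vol(P)\le 2\sum_v\Delta(v)$ from \Cref{lem: incrementalFlow} already double-counts flow that is absorbed \emph{inside} $P_i$ (i.e., flow traveling along non-self-loop edges of $G[U]^w$ between $P_i$-vertices), whereas the self-loop contribution $\lceil w\rceil\cdot|E_G(P_i, V\setminus U)|$ is only single-counted in $\vol_{G[U]^w}(P_i)$. Isolating the self-loop contribution on the left-hand side and using $\sum_v \Delta(v) \le 16i/\phi$ (rather than the doubled bound) then recovers the clean factor of $16$.

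The main obstacle, such as it is, is purely conceptual: one must recognize that Property~3 is \emph{not} about cuts inside the expander graph $G[U]^w$ (which is already handled by Property~2's bound on $|E_G(P_i,U\setminus P_i)|$), but rather about edges crossing $\partial U$ in the original graph $G$. Once this is unpacked and the self-loop encoding is explicit, the argument is a one-line consequence of the already-proved volume bound and the boundary-linkedness parameter $w\ge\alpha/\phi$.
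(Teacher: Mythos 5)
Your first two steps are sound and match the paper's skeleton: each boundary edge of $U$ incident to $P_i$ contributes $\lceil w\rceil$ to $\vol_{G[U]^w}(P_i)$, so $w\,|E_G(P_i,V\setminus U)|\le \vol_{G[U]^w}(P_i)$, and dividing by $w\ge\alpha/\phi$ reduces everything to a volume bound on $P_i$. The divergence is in which volume bound you feed in. You use only the black-box guarantee of \Cref{lem: incrementalFlow}, namely $\vol_{G[U]^w}(P_i)\le 2\sum_v\Delta(v)\le 32i/\phi$, which rigorously yields $|E_G(P_i,V\setminus U)|\le 32i/\alpha$ — a factor $2$ short of the stated $16i/\alpha$.

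Your proposed repair of that factor is a genuine gap. You assert that the $2$ in $\vol(P)\le 2\sum_v\Delta(v)$ ``double-counts flow absorbed inside $P_i$'' while the self-loop mass is ``only single-counted,'' and that one can therefore isolate the self-loop contribution and bound it by $\sum_v\Delta(v)$ without the doubling. Nothing in the statement of \Cref{lem: incrementalFlow} licenses this: the lemma is used as a black box and provides no decomposition of $\vol(P)$ into self-loop and edge parts, nor any refined bound on either part; the claimed refinement would require reopening the proof of that lemma (which lives in prior work) and is not argued. The paper closes the factor-$2$ gap differently and without touching the flow subroutine: since $k\le\phi\vol_{G[U]^w}(U)/120$, the crude bound $\vol_{G[U]^w}(P_i)\le 32k/\phi$ already shows $\vol_{G[U]^w}(P_i)\le\vol_{G[U]^w}(U)/2$, so $P_i$ is the small side of the cut in the $\phi$-expander $G[U]^w$; applying the expansion to the cut bound $|E_G(P_i,U\setminus P_i)|\le 16i$ gives the sharper $\vol_{G[U]^w}(P_i)\le\frac{1}{\phi}|E_G(P_i,U\setminus P_i)|\le 16i/\phi$, and then your own self-loop inequality delivers $|E_G(P_i,V\setminus U)|\le 16i/(\phi w)\le 16i/\alpha$. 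So the missing idea is precisely to invoke the expansion of $G[U]^w$ (together with the half-volume check) rather than to try to sharpen the incremental-flow guarantee; as written, your argument proves the lemma only with constant $32$, and the constant $16$ is what \Cref{thm:dynamic_expander_pruning} states and propagates downstream.
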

\begin{proof}
By the second guarantee of Incremental Flow in Lemma~\ref{lem: incrementalFlow}, we have that 
\begin{equation} 
\label{eq:volPrunBound}
	\vol_{G[U]^w} (P_i) \leq 2 \cdot 16k/\phi \leq \vol_{G[U]^{w}}(U)/2,
\end{equation}
and
\begin{equation} 
\label{eq:boundaryPrunBound}
	|E_G(P_i, U \setminus P_i)| = |E_{G[U]^{w}}(P_i, U \setminus P_i)| \leq 2 \cdot 16/\phi \cdot \phi/2 = 16i.
\end{equation} 

As $G[U]^{w}$ is an $\phi$-expander, it follows that $\vol_{G[U]^{w}}(P_i) \leq 1/\phi \cdot |E_G(P_i, U \setminus P_i)|$, and thus $\vol_{G[U]^{w}}(P_i) \leq 16i/\phi$. Moreover, $\vol_{G[U]^{w}}(P_i) \geq w \cdot |E_G(P_i, V \setminus U|)$ by definition of $G[U]^{w}$. Combining these two bounds and since $w \geq \alpha/\phi$, it follows that $|E_G(P_i, V \setminus U)|  \leq 16i/(\phi w) \leq 16i/\alpha$. \qedhere
\end{proof}

It remains to show the fourth property, i.e., the graph $G_i[U \setminus P_i]^{w}$ is an $(\phi/38)$-expander for all $1 \leq i \leq k$.

\begin{lemma}
\label{lem:expansionPruning}
	Let $\alpha/\phi \leq w \leq 3/(5\phi)$. The graph $G_i[U \setminus P_i]^{w}$ is an $(\phi/38)$-expander.
\end{lemma}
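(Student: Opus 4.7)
The plan is to argue by contradiction: suppose some cut $S \subseteq U\setminus P_i$ with $\vol_{H_i}(S) \leq \vol_{H_i}(U\setminus P_i)/2$ has $a := |E_{G_i}(S,(U\setminus P_i)\setminus S)| < (\phi/38)\vol_{H_i}(S)$, where I abbreviate $H_i := G_i[U\setminus P_i]^w$ and $H_0 := G[U]^w$. Let $a_0 := |E_G(S,(U\setminus P_i)\setminus S)|$ denote the counterpart of $a$ in the initial graph and let $k_S$ be the number of updates among the first $i$ with at least one endpoint in $S$; by construction $\Delta(S) \geq (8/\phi)k_S$ and $|a - a_0| \leq k_S$.

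The first step is to use the Incremental Flow feasibility (Lemma~\ref{lem: incrementalFlow}) on $H_0[U\setminus P_i]^1$, with sources $\Delta'(v) = \Delta(v) + (2/\phi)|E_G(\{v\},P_i)|$ and sinks $T'(v) = \deg_{H_0}(v)$. Summing the excess-versus-cut inequality for the cut $S$ gives
\[
\Delta(S) + (2/\phi)|E_G(S,P_i)| \;\leq\; \vol_{H_0}(S) + (2/\phi)\,a_0.
\]
Assuming $\vol_{H_0}(S) \leq \vol_{H_0}(U)/2$ (the complementary case is symmetric, using $\phi$-expansion of $H_0$ applied to $U\setminus S$ together with $\vol_{H_i}((U\setminus P_i)\setminus S) \geq \vol_{H_i}(S)$), the $\phi$-expansion of $H_0$ yields $a_0 + |E_G(S,P_i)| \geq \phi\,\vol_{H_0}(S)$. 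Substituting this into the flow inequality and using $\Delta(S) \geq 8k_S/\phi$ and $a_0 \leq a+k_S$ produces the key bound $\vol_{H_0}(S) \leq (4/\phi)\,a - (4/\phi)\,k_S$.

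The second step is to translate the volume back to $H_i$. The discrepancy $\vol_{H_i}(S) - \vol_{H_0}(S)$ has two sources: the $k_S$ update-induced changes to the four edge counts defining the volume, which amount to $O(wk_S)=O(i/\phi)$ after substituting $w \leq 3/(5\phi)$; and a surplus $(w-1)|E_{G_i}(S,P_i)|$ coming from the fact that edges from $S$ to $P_i$ appear as weight-$w$ self-loops in $H_i$ but as ordinary cross-edges in $H_0$. Using the already-established property $|E_{G_i}(S,P_i)| \leq |E_{G_i}(P_i,U\setminus P_i)| \leq 16\,i$, both contributions are $O(i/\phi)$, so $\vol_{H_i}(S) \leq (4/\phi)\,a + C\cdot i/\phi$ for an absolute constant $C$. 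A parallel use of expansion---propagating $a_0 + |E_G(S,P_i)| \geq \phi\,\vol_{H_0}(S)$ directly through the same translation---also yields $a \geq \phi\,\vol_{H_i}(S) - O(i)$. Combining these two estimates with the contradictory assumption $a < (\phi/38)\vol_{H_i}(S)$ produces the desired contradiction, provided the numerical constants are tuned correctly.

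The main technical obstacle will be the careful balancing of constants so that exactly the factor $\phi/38$ falls out. The surplus $(w-1)|E_{G_i}(S,P_i)|$ can be as large as $\Theta(i/\phi)$ under the extreme parameter choice $w=3/(5\phi)$ with $|E_{G_i}(S,P_i)|$ saturating its $16i$ bound, which is of the same order as the entire slack on the right-hand side of the flow-based bound; the cancellation $-(4/\phi)k_S$ coming from the boosted source injection $\Delta(S) \geq 8k_S/\phi$ is precisely what absorbs this self-loop surplus. The degenerate regime $\vol_{H_0}(S)=0$---where $S$ consists of initially isolated vertices that have acquired edges only through updates---is handled separately via the stronger consequence $a \geq 3k_S$ that the flow inequality yields in that regime, combined with the trivial bound $\vol_{H_i}(S) \leq (2+w)k_S = O(k_S/\phi)$.
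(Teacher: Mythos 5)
There is a genuine gap, and it sits exactly where you flag the ``balancing of constants'': your two final estimates cannot yield a contradiction no matter how the absolute constants are tuned. Because you bound $|E_{G_i}(S,P_i)|$ by the \emph{global} quantity $|E_{G_i}(P_i,U\setminus P_i)|\le O(i)$, your inequalities read $\vol_{H_i}(S)\le \tfrac{4}{\phi}a + C_1\tfrac{i}{\phi}$ and $a\ge \phi\,\vol_{H_i}(S)-C_2\,i$, with slack scaling in the total update count $i$. Together with the assumption $a<(\phi/38)\vol_{H_i}(S)$ these are simultaneously satisfiable (e.g.\ $\vol_{H_i}(S)\asymp i$ and $a\asymp \phi i/100$): all they imply is $\vol_{H_i}(S)=O(i/\phi)$, which is no contradiction, since $S$ may be a small subset of $U\setminus P_i$ with volume far below $i/\phi$. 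Relatedly, the claim that the $-\tfrac{4}{\phi}k_S$ term ``absorbs the self-loop surplus'' is quantitatively false: in your accounting the surplus is $\Theta(w\cdot i)=\Theta(i/\phi)$, while the cancellation is only $\tfrac{4}{\phi}k_S$, and $k_S$ can be much smaller than $i$ --- indeed zero, since $S$ can have many original edges into vertices that were later pruned, without any update ever touching $S$.

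What is missing is a bound on the $S$-to-$P_i$ border \emph{in terms of the cut of the same set $S$}, not in terms of $i$; this is the heart of the paper's argument. Writing $b:=|E_G(S,P_i)|$ and $b':=|E_{G_i}(S,P_i)|$, the paper combines the flow feasibility with the expansion-based bound $\vol_{G[U]^w}(S)\le\tfrac{3}{2\phi}(b+a_0)$ to get $\Delta'(S)\le\tfrac{3}{2\phi}b+\tfrac{7}{2\phi}a_0$, hence $b\le 7a_0$, and then shows that the update-induced deviations $|b'-b|$, $|a-a_0|$, $|\vol_{G_i[U]^w}(S)-\vol_{G[U]^w}(S)|$ are each at most $a_0/2$ (otherwise the injected mass would exceed this feasible total). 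These per-$S$ bounds give $\vol_{H_i}(S)\le \vol_{G_i[U]^w}(S)+w\,b'\le\tfrac{19}{\phi}a_0\le\tfrac{38}{\phi}a$ with \emph{no additive error}, so the multiplicative bound survives for arbitrarily small $S$. Your own ingredients nearly suffice to recover this: plugging your ``key bound'' back into the flow inequality yields $b\le O(a+k_S)$, which is the kind of estimate you need in place of the $16i$ bound; but as written the argument does not close. A secondary point: dismissing the complementary volume case as ``symmetric'' hides real work --- converting $\vol_{H_i}(S)\le\vol_{H_i}(U\setminus P_i)/2$ into something to which the $\phi$-expansion of $G[U]^w$ can be applied is exactly the paper's volume-stability lemma ($\vol_{G[U]^w}(S)\le\tfrac{3}{5}\vol_{G[U]^w}(U\setminus P_i)$), which itself consumes the hypotheses $k\le\phi\vol_{G[U]^w}(U)/120$ and $w\le 3/(5\phi)$.
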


We will prove the above lemma through several steps. We start by bounding the total amount of mass injected in any subset of the cluster $U \setminus P_i$. To this end, let $A := U \setminus P_i$ be the cluster after pruning
the set $P_{i}$. The Incremental Flow subroutine guarantees that the flow
problem $(\Delta',T',c')$ is feasible on $G{[U]^w}[A]^1$\footnote{To explain the notation, let $H = G{[U]^w}$. We have $G{[U]^w}[A]^1 = H[A]^1$.} where $\Delta'(v)=\Delta(v)+2/\phi\cdot|\{e\in E_G(P_{i},A)\mid v\in e\}|$
(it is crucial to note here that the flow problem is defined on $G{[U]^w}[A]^1$ and not on $G_i{[U]^w}[A]^1$), and $T'(v)=\deg_{G[U]^{w}}(v)$ for all $v\in A$ and $c'(e)=2/\phi$ for all $e\in E(G{[U]^w}[A]^1)$. Let $\Delta'(S) := \sum_{u\in S}\Delta'(u)$ be the total amount of source mass in $S$. We next prove a proposition, which will be instrumental in proving Lemma~\ref{lem:expansionPruning}.

\begin{prop}
	\label{prop:flow in out}
	For any set $S\subseteq A$, $\Delta'(S)\le\vol_{G[U]^w}(S)+\frac{2}{\phi}|E_{G}(S,A\setminus S)|$.
\end{prop}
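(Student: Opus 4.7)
The plan is to derive the inequality directly from the feasibility of the flow problem $(\Delta',T',c')$ on the graph $H := G[U]^{w}[A]^{1}$ guaranteed by the Incremental Flow subroutine (Lemma~\ref{lem: incrementalFlow}), by applying flow conservation to the subset $S$.

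First I would observe that the key fact provided by Lemma~\ref{lem: incrementalFlow} is that, at the current time step, the flow problem $(\Delta',T',c')$ on $H$ is feasible. Let $f$ denote such a feasible flow. I would then isolate the set $S \subseteq A$ and write the conservation identity at $S$: the total source mass placed on $S$ equals the mass absorbed at vertices of $S$ plus the net amount of flow pushed along edges from $S$ to $A \setminus S$. That is,
\[
\Delta'(S) \;=\; \sum_{v\in S} f_\Delta(v) \;+\; \sum_{\substack{(u,v)\in E(H)\\ u\in S,\,v\in A\setminus S}} f(u,v).
\]

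Next I would bound each of the two terms on the right-hand side using feasibility. By definition of a feasible flow, $f_\Delta(v) \le T'(v) = \deg_{G[U]^{w}}(v)$ for every $v \in S$, and therefore $\sum_{v\in S} f_\Delta(v) \le \vol_{G[U]^{w}}(S)$. For the net outflow term, each edge carries at most $c'(e) = 2/\phi$ units of flow, so the second sum is bounded by $(2/\phi)\,|E_H(S, A\setminus S)|$.

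The final step is to identify $|E_H(S, A \setminus S)|$ with $|E_G(S, A \setminus S)|$. Since $H$ is obtained from $G[U]$ by adding only self-loops (first the $w$ self-loops for boundary edges of $U$ in $G$, then one self-loop per boundary edge of $A$ in $G[U]^{w}$), and self-loops contribute nothing to cuts between distinct vertex sets, the non-self-loop edges between $S$ and $A \setminus S$ in $H$ are exactly the edges of $G$ with one endpoint in $S$ and the other in $A \setminus S$. Combining the three bounds yields the claimed inequality. There is no real obstacle here; the proposition is essentially a direct application of flow conservation to the feasibility guarantee of the Incremental Flow algorithm, once one correctly identifies the edge set across the cut $(S, A \setminus S)$ in $H$.
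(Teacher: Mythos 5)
Your proof is correct and follows essentially the same route as the paper: both arguments take a feasible flow for $(\Delta',T',c')$ on $G[U]^w[A]^1$, apply conservation over $S$, bound the absorbed mass by $\sum_{v\in S}T'(v)=\vol_{G[U]^w}(S)$ and the net outflow by the capacity $2/\phi$ on each cut edge, noting that self-loops do not cross the cut.
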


\begin{proof}
	Consider a feasible flow $f$ for the flow problem $(\Delta',T',c')$ defined on $G{[U]^w}[A]^1$. Recall that $f(v)=\Delta'(v)+\sum_{u}f(u,v)$ and $f(u,v) = - f(v,u)$. It follows that 
\begin{align*}
	\Delta'(S) & =\sum_{v\in S} \Big[f(v)+\sum_{u}f(v,u)\Big]\\
	& \le\sum_{v\in S}T'(v)+\sum_{e\in E_{G[U]^{w}}(S,A\setminus S)}c(e)\\
	& =\vol_{G[U]^w}(S) +\tfrac{2}{\phi}|E_{G}(S,A \setminus S)|. \qedhere
	\end{align*}
\end{proof}

In order to leverage the $\phi$-expansion of the graph $G[U]^{w}$, the following lemma shows how to relate the volume of a subset defined on $G_i[A]^{w}$ with the volume of that subset defined on $G[U]^w$. 

\global\long\def\cluster#1#2#3{#1[#2]^{#3}}
\begin{lem}
\label{lem:stable prune} Let $S\subset A\subset U$. If $\vol_{\cluster{G_i}Aw}(S)\le\frac{1}{2}\vol_{\cluster{G_i}Aw}(A)$,
then $\vol_{\cluster GUw}(S)\le\frac{3}{5}\vol_{\cluster GUw}(A)$. 
\end{lem}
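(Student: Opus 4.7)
The plan is to compare the volumes vertex by vertex in the two graphs $G[U]^w$ and $G_i[A]^w$. For each $v \in A$, I would define $\delta(v) := \deg_{G_i[A]^w}(v) - \deg_{G[U]^w}(v)$ and decompose $\delta(v) = \delta_p(v) + \delta_u(v)$, where $\delta_p(v) := \deg_{G[A]^w}(v) - \deg_{G[U]^w}(v)$ accounts for shrinking the cluster from $U$ to $A = U\setminus P_i$, and $\delta_u(v) := \deg_{G_i[A]^w}(v) - \deg_{G[A]^w}(v)$ accounts for the $k$ edge updates.

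Expanding both degrees yields the clean identity $\delta_p(v) = (w-1)\,|E_G(\{v\},P_i)| \ge 0$, because each edge from $v$ to $P_i$ is internal in $G[U]$ (contributing $1$ to $v$'s degree) but becomes a boundary edge in $G[A]^w$ (contributing $w$ via self-loops). Summing and invoking Property~2 of \Cref{thm:dynamic_expander_pruning} gives $\sum_{v\in A}\delta_p(v) = (w-1)|E_G(A,P_i)| \le 16(w-1)i$. For $\delta_u$, each of the $k$ updates touches at most two vertices of $A$ and flipping one incident edge changes its contribution to $\deg_{G_i[A]^w}(v)$ by at most $w$ (worst case: a boundary edge to $V\setminus A$). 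Hence $\sum_{v\in A}|\delta_u(v)| \le 2wk$, and altogether $\sum_{v\in A}|\delta(v)| \le 18wk$ (using $i\le k$).

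Next I would rewrite the hypothesis $\vol_{G_i[A]^w}(S) \le \tfrac12\vol_{G_i[A]^w}(A)$ as $\vol_{G_i[A]^w}(S) \le \vol_{G_i[A]^w}(A\setminus S)$ and substitute $\deg_{G_i[A]^w}(v) = \deg_{G[U]^w}(v) + \delta(v)$ on both sides, yielding
\[
\vol_{G[U]^w}(S) - \vol_{G[U]^w}(A\setminus S) \;\le\; \sum_{v\in A\setminus S}\delta(v) - \sum_{v\in S}\delta(v) \;\le\; \sum_{v\in A}|\delta(v)| \;\le\; 18wk.
\]
Finally, I would make the right-hand side quantitatively small. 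Using $w\le 3/(5\phi)$ and $k\le \phi\vol_{G[U]^w}(U)/120$ gives $18wk \le (9/100)\vol_{G[U]^w}(U)$, while $\vol_{G[U]^w}(P_i)\le 32i/\phi \le (4/15)\vol_{G[U]^w}(U)$ (again by Property~2 and the hypothesis on $k$) yields $\vol_{G[U]^w}(A) \ge (11/15)\vol_{G[U]^w}(U)$. Together, $18wk \le \epsilon\,\vol_{G[U]^w}(A)$ with $\epsilon \le 3/22 < 1/5$. Writing $S' := A\setminus S$ and using $\vol_{G[U]^w}(A) = \vol_{G[U]^w}(S)+\vol_{G[U]^w}(S')$, the displayed inequality becomes $(1-\epsilon)\vol_{G[U]^w}(S) \le (1+\epsilon)\vol_{G[U]^w}(S')$, so $\vol_{G[U]^w}(S) \le \tfrac{1+\epsilon}{1-\epsilon}\vol_{G[U]^w}(S') \le \tfrac{3}{2}\vol_{G[U]^w}(S')$, which is equivalent to the desired conclusion $\vol_{G[U]^w}(S)\le \tfrac{3}{5}\vol_{G[U]^w}(A)$.

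The main obstacle is bookkeeping: cleanly separating the pruning-induced discrepancy $\delta_p$ (where the per-edge blow-up is $w-1$, controlled by the smallness of $|E_G(A,P_i)|$ from Property~2) from the update-induced discrepancy $\delta_u$ (where the per-update blow-up is $w$, controlled by $k$), and then verifying that the constants $16,32,120$ in Theorem~\ref{thm:dynamic_expander_pruning} leave enough slack under $w\le 3/(5\phi)$ to pass from the $1/2$ threshold on $G_i[A]^w$ to the $3/5$ threshold on $G[U]^w$.
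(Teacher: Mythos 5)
Your proof is correct and takes essentially the same route as the paper's: both arguments bound the discrepancy between $\vol_{G_i[A]^w}$ and $\vol_{G[U]^w}$ by $O(wk)$ using the pruning guarantees ($|E_G(P_i,A)|\le 16i$ and $\vol_{G[U]^w}(P_i)\le 32i/\phi$), and then use $w\le 3/(5\phi)$ together with $k\le\phi\vol_{G[U]^w}(U)/120$ to conclude that this discrepancy is a small constant fraction of $\vol_{G[U]^w}(A)$, so the $1/2$ threshold degrades only to $3/5$. The paper chains one-sided inequalities through the intermediate quantities $\vol_{G_i[U]^w}(S)$ and $\vol_{G_i[A]^w}(A)$, whereas you use a per-vertex decomposition $\delta=\delta_p+\delta_u$ and a symmetric comparison of $S$ with $A\setminus S$; this is a difference in bookkeeping only, not in substance.
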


\begin{proof}
From Equation~(\ref{eq:volPrunBound}), note that $\vol_{\cluster GUw}(A) = \vol_{\cluster GUw}(U) - \vol_{\cluster GUw}(P_i)\ge\vol_{\cluster GUw}(U)/2$, which in turn implies that  $k\le\phi\vol_{\cluster GUw}(A)/60$.
We also have that $|E_{G_i}(P_i,A)|\le|E_{G}(P_i,A)|+k \le 17k$ by Equation~(\ref{eq:boundaryPrunBound}). It follows that
\begin{align*}
\vol_{\cluster GUw}(S) & \le\vol_{\cluster{G_i}Uw}(S)+wk && (|\vol_{\cluster GUw}(S) -  \vol_{\cluster{G_i}Uw}(S)| \leq wk )\\
 & \le\vol_{\cluster{G_i}Aw}(S)+wk && (\text{since } A \subset U )\\
 & \le\tfrac{1}{2}\vol_{\cluster{G_i}Aw}(A)+wk && (\text{by assumption of the lemma}) \\
 & \le\tfrac{1}{2}(\vol_{\cluster{G_i}Uw}(A)+w|E_{G_i}(P_i,A)|)+wk && (\text{by definition of } \cluster{G_i}Aw) \\
 & \le\tfrac{1}{2}(\vol_{\cluster GUw}(A)+wk+17wk)+wk &&  (|\vol_{\cluster{G_i}Uw}(A) - \vol_{\cluster GUw}(A)| \leq wk)\\
 & \le\tfrac{1}{2}\vol_{\cluster GUw}(A)+10wk &&  \\
 & \le\tfrac{1}{2}\vol_{\cluster GUw}(A)+\tfrac{1}{10}\vol_{\cluster GUw}(A) && (k\le\phi\vol_{\cluster GUw}(A)/60   \text{ and } w \leq 3/(5\phi))\\
 & \le\tfrac{3}{5}\vol_{\cluster GUw}(A) && \qedhere
\end{align*}
\end{proof}

Before proceeding to the proof of Lemma~\ref{lem:expansionPruning}, we introduce some useful notation. Fix an arbitrary subset $S\subseteq A$. Let $a:=|E_{G}(S,P_{i})|$ and  $c:=|E_{G}(S,A\setminus S)|$ be the number boundary edges of $S$ that cross different parts in the original graph $G$.
Similarly, let $a':=|E_{G_{i}}(S,P_{i})|$ and $c':=|E_{G_{i}}(S,A\setminus S)|$ be the boundary edges of $S$ that cross different parts in the current graph $G_i$. Let $v_a:=\vol_{G[A]^w}(S) , v_u:=\vol_{G[U]^w}(S)$ and $v'_a:=\vol_{G_i[A]^w}(S) , v'_u:=\vol_{G_i[U]^w}(S)$. Note that by Proposition~\ref{prop:flow in out}, we have that $\Delta'(S)\le v_u + \frac{2}{\phi}c$.

\begin{proof}[Proof of Lemma~\ref{lem:expansionPruning}]
Recall that $A = U \setminus P_i$ and consider any $S \subset A$ such that $\vol_{G_i[A]^w}(S) \leq \vol_{G_i[A]^w}(A)/2$. To prove that $G_i[A]^w$ is an $(\phi/38)$-expander, we need to show that $|E_{G_i}(S,A \setminus S)| \geq (\phi/38) \cdot \vol_{G_i[A]^w}(S)$, i.e., $c' \geq (\phi/38) \cdot v'_a$.

To this end, we first show a useful relation using the $\phi$-expansion of $G[U]^{w}$. As $\vol_{G_i[A]^w}(S) \leq \vol_{G_i[A]^w}(A)/2$ holds, by Lemma~\ref{lem:stable prune} we have that $\vol_{\cluster GUw}(S)\le\frac{3}{5}\vol_{\cluster GUw}(A)$. From the latter we get $\vol_{\cluster GUw} (A \setminus S) = \vol_{\cluster GUw} (A) - \vol_{\cluster GUw}(S) \geq \frac{5}{3} \vol_{\cluster GUw}(S) - \vol_{\cluster GUw}(S) \geq \frac{2}{3}\vol_{\cluster GUw}(S)$.

 Therefore,
\begin{align*}
	 |E_G(S, P_i)| + |E_G(S, A \setminus S)| & = | E_{G[U]^{w}}(S, U \setminus S)| \\
	& \geq \phi \cdot \min\{\vol_{\cluster GUw}(S), \vol_{\cluster GUw}(U \setminus S)\} \\
	& \geq \phi \cdot \min\{\vol_{\cluster GUw}(S), \vol_{\cluster GUw}(A \setminus S)\} \\
	& \geq \tfrac{2}{3} \phi \cdot \vol_{\cluster GUw}(S),
\end{align*} 
or $\frac{3}{2\phi}a + \frac{3}{2\phi}c \geq v_u$. The latter, together with Proposition~\ref{prop:flow in out}, implies that the total amount of source mass in $S$ is bounded by 
\begin{equation}
\label{eq:flowOut}
\Delta'(S) \leq v_u + \tfrac{2}{\phi}c \leq \tfrac{3}{2\phi}a + \tfrac{7}{2\phi}c.
\end{equation}

Now, by construction, recall that our algorithm increases the source mass of the endpoints in $U$ from the inserted and deleted edges by $8/\phi$. Moreover, by the first property of Lemma~\ref{lem: incrementalFlow}, the flow problem $(\Delta',T',c')$ on $G[U]^w[A]^1$ is feasible. These together imply that $\Delta'(S) \geq \frac{2}{\phi} a + \frac{8}{\phi}|v_u'-v_u|$, $\Delta'(S) \geq \frac{2}{\phi} a + \frac{8}{\phi}|a'-a|$, and $\Delta'(S) \geq \frac{2}{\phi} a + \frac{8}{\phi}|c'-c|$. We claim that
\begin{equation}
\label{eq: absoluteValueBound}
	|v_u'-v_u|, |a'-a|, |c'-c| \leq c/2,
\end{equation}
for otherwise $\Delta'(S) \geq \frac{2}{\phi} a + \frac{8}{\phi} (c/2) = \frac{2}{\phi}a + \frac{4}{\phi}c$, which contradicts Equation~(\ref{eq:flowOut}). Since $\frac{2}{\phi} a \leq \Delta'(S) \leq \frac{1}{\phi} a + \frac{3}{\phi} c$, we get that $a \leq 3c$. It follows that
\begin{align*}
	v'_a = \vol_{G_i[A]^{w}}(S) & \leq \vol_{G_i[U]^{w}}(S) + w \cdot |E_{G_{i}}(S,P_{i})| &&  (\text{by definition of } G_i[A]^{w})\\
	& = v'_u + w a' && \\
	& \leq v_u + \tfrac{c}{2} + w\left(a + \tfrac{c}{2}\right) &&  (\text{Equation~(\ref{eq: absoluteValueBound})})\\
	& \leq \left( \tfrac{3}{2\phi} a + \tfrac{7}{2\phi}c \right) + \tfrac{1}{\phi}\tfrac{c}{2} + \tfrac{3}{5\phi}\left(a + \tfrac{c}{2}\right) && (v_u \leq \tfrac{3}{2\phi}a + \tfrac{7}{2\phi}c \text{ and } w \leq 3/(5\phi)) \\
	& \leq \tfrac{19}{\phi}  c && (a \leq 7c) \\
	& \leq \tfrac{38}{\phi} c', && (\text{Equation~(\ref{eq: absoluteValueBound})})
\end{align*}
what we wanted to show.
\end{proof}

Finally, we analyse the running time. Note that over the course of the algorithm, there are at most $k = \phi \vol_{G[U]^w} (U)/120$ iterations and thus the total amount of mass $\sum_{v \in V} \Delta(v)$ injected in the graph $G[U]^w$ is at most $16/\phi \cdot \phi \vol_{G[U]^w}(U)/120 \leq \vol_{G[U]^w}(U)/3$. The latter implies that the condition on the total mass of Lemma~\ref{lem: incrementalFlow} is met and by the same lemma we get that the running time is bounded by $O(c_{\max} \sum_{v \in V} \Delta(v) \log m) = O(k \log m /\phi^2)$. This completes the proof of Theorem~\ref{thm:dynamic_expander_pruning}.

\def\tO{\tilde{O}}
%\def\out{\operatorname{out}}
%\def\vol{\operatorname{vol}}
%\def\U{{\cal U}}

% height of the hierarchy; 
\def\height{{\hbar}}

% parameters for the pruning process
\def\paralpha{\alpha}
\def\parphi{\phi'}

% corresponding parameters for level $s$
\def\paralphaS{\alpha_s}
\def\parphiS{\phi'_s}
\def\paralphaSp{\alpha_{s+1}}
\def\parphiSp{\phi'_{s+1}}
\def\cA{\rho}

\def\barP{P}
\def\P{\tilde{P}}
\def\poly{\operatorname{poly}}

\def\hatn{\bar{n}}
\def\hatm{\bar{m}}

\section{Fully Dynamic Expander Hierarchy}
\label{sec:dynamic_expander_decomposition}

In this section we deal with an undirected unweighted (multi-)graph $G$ that
undergoes a sequence of fully adaptive vertex and edge updates with the
restriction that only isolated vertices may be deleted.

We assume that at any time $G$ contains at most $\hatn$ vertices and at most
$\hatm=\poly(\hatn)$ edges. Further, we fix the following parameters throughout
this section: $\phi=2^{-\Theta(\log^{3/4}\hatn)}$, $\psi=2^{\Theta(\log^{1/2}\hatn)}$,
$\alpha=1/\poly(\log\hatn)$ and we let $h=\log_{\psi}(\hatm)=\Theta(\log ^{1/2}\hatn)$
and $\rho=38^h\psi/\alpha$. The main result of this section is the following theorem.

%We
%emphasize that these parameters are fixed throughout this section. (However, we
%will use parameter $\alpha'$ and $\phi'$ as parameters that can depend on the
%cluster $U$ we are referring to.)
% 
% 
% 
% 
% 
% 
%updates consisting of vertex and edge insertions and deletions.
% 
% 
% 
%vertex insertion
%and deletions 

\begin{theorem}
\label{thm:dynamic_R_tree}
There is a randomized algorithm, for maintaining %that explicitly maintains
an $(\alpha,\phi)$-expander hierarchy of $G$ with slack $2^{\log^{1/2}(\hatn)}$
in amortized
update time $2^{O(\log^{3/4}(\hatn))}$. 
\end{theorem}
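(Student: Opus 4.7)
The plan is to reduce the dynamic hierarchy maintenance to maintaining, at each level, a dynamic $(\alpha,\phi)$-boundary-linked expander decomposition with small amortized recourse, and then compose these one-level algorithms across the $O(\log^{1/4}\hatn)$ levels of the hierarchy. Concretely, let $(G^0,G^1,\dots,G^t)$ be an $(\alpha,\phi)$-ED-sequence where $G^{i+1}=G^i_{\uset^i}$. By \Cref{def:decomp}, $\vol(G^{i+1})\le \Ot(\phi)\vol(G^i)$ so $t=O(\log_{1/\phi}\hatm)=O(\log^{1/4}\hatn)$. Thus it suffices to exhibit a one-level algorithm which, given a dynamic $G$ of volume $m$, maintains an $(\alpha,\phi)$-ED $\uset$ of $G$ with slack $s$, amortized update time $\tau$, and amortized recourse $\rho$ on $G_{\uset}$; composing across $t$ levels then gives update time $O(\rho^{t}\tau)$ and slack $s^t$ for the hierarchy.

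For the one-level algorithm, I would run the static algorithm of \Cref{thm:decomp} as a preprocessing step to obtain $\uset=\{U_1,\dots,U_k\}$ with per-cluster expansion bounds $\phi_i\ge\phi$ such that $G[U_i]^{\alpha/\phi_i}$ is a $\phi_i$-expander. Then for each cluster $U_i$ I would invoke the dynamic expander pruning algorithm of \Cref{thm:dynamic_expander_pruning} (with $w=\alpha/\phi_i$, which is feasible since $\alpha\le 1/(4\gkrv\log_2\hatm)\ll 1/(10\phi)$), maintaining a pruned set $P_i\subseteq U_i$ so that $G[U_i\setminus P_i]^{\alpha/\phi_i}$ is a $(\phi_i/38)$-expander. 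To avoid the $\Omega(1/\phi)$-per-update recourse that naive pruning would incur on the contracted graph, I would adopt the multi-level pruning scheme ($\mlp$) described in \Cref{sec:overview_robust}: partition the pruned set into $h=\log_\psi\hatm$ layers where layer $j$ has volume at most $1/\psi^{h-j}$ times the total and is only ``refreshed'' every $\psi^{j-1}$ updates, with $\psi=2^{\Theta(\log^{1/2}\hatn)}$. Each time a layer $\P^j$ is refreshed, I compute an $(\alpha,\phi)$-ED of $\P^j$ in $G$ via \Cref{thm:decomp} and replace the old sub-partition of $\P^j$ in $\uset$ by the new one, starting fresh pruning processes on the resulting new clusters; this yields an $(\alpha,\phi)$-ED of $G$ with slack $38^h=2^{O(\log^{1/2}\hatn)}$ by stacking the pruning slack bound $h$ times.

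The recourse analysis is where Properties \ref{ed:linked} and \ref{ed:localboundary} from \Cref{def:decomp} become crucial. For a pruned set $P$ from cluster $U_i$ after $k_i$ updates, \Cref{thm:dynamic_expander_pruning} gives $\vol_{G}(P)\le O(k_i/\phi_i)$, $|E_G(P,U_i\setminus P)|\le O(k_i)$, and $|E_G(P,V\setminus U_i)|\le O(k_i/\alpha)$ (the last bound exploiting Property~\ref{ed:linked}). Combined with the $\Ot(\phi\vol(P))$ bound on internal boundary from the static decomposition of $P$, this gives amortized recourse $\Ot(\psi/\alpha)$ per update (the $\psi$ factor accounting for the $h$-level pruning scheme). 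To handle the case when $k_i$ exceeds $\phi_i\vol(U_i)/1200$ (beyond the pruning budget), I would reset $U_i$ by computing a fresh $(\alpha,\phi)$-ED of $U_i$; Property~\ref{ed:localboundary} guarantees $\out(U_i)\le \Ot(\phi_i\vol(U_i))=\Ot(k_i)$ so the resulting recourse is $\Ot(1)$ amortized. Similarly, a global reset triggered when total updates exceed $\phi\vol(G)$ contributes $\Ot(1)$ amortized recourse and $\Ot(1/\phi^2)$ amortized time.

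Putting the pieces together, the one-level algorithm has $\tau=\Ot(\psi\cdot 38^{2h}/(\alpha\phi^2))$ amortized update time and $\rho=\Ot(\psi\cdot 38^h/\alpha)$ amortized recourse. Raising these to the $t=O(\log^{1/4}\hatn)$ power gives final update time $O(\rho^t\tau)=2^{O(\log^{3/4}\hatn)}$ and cumulative slack $(38^h)^t=2^{O(\log^{3/4}\hatn)}$; a sharper accounting keeps the slack at $2^{O(\log^{1/2}\hatn)}$ since slack accumulates only through the pruning layers and not through the level composition (each composed level begins with a fresh static decomposition of slack $1$, and the $38^h$ factor is incurred once per level but via the structural sub-partition, not multiplicatively). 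The main obstacle, as emphasized in the overview, is the recourse analysis under a \emph{sequence} (rather than a batch) of updates: the pruned set $P$ evolves over time, and although its volume and boundary are well-controlled at each snapshot, the total number of edge changes to $G_{\uset}$ over the whole sequence can blow up; the $\mlp$ layering is precisely the mechanism that enforces a controlled monotone evolution of $P$ so that the per-snapshot bounds integrate into $\Ot(\psi/\alpha)$ amortized recourse.
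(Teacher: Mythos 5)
Your proposal follows essentially the same route as the paper: a one-level dynamic $(\alpha,\phi)$-ED maintained by combining the static decomposition of \Cref{thm:decomp} with the pruning algorithm of \Cref{thm:dynamic_expander_pruning}, organized into the $h$-layer multi-level pruning scheme with slack $38^h$, per-cluster resets driven by Properties \ref{ed:expanding} and \ref{ed:localboundary}, and then composition across the $t=O(\log^{1/4}\hatn)$ levels paying $\rho^t\tau$; this is precisely the ED/CD/\mlp machinery of \Cref{lem:main} and \Cref{thm:dyn-expander-new}, and your bounds $\tau=\tO(\psi 38^{2h}/(\alpha\phi^2))$, $\rho=\tO(\psi 38^{h}/\alpha)$ match the paper's.

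Two small corrections. First, your global reset threshold of $\phi\vol(G)$ updates is too generous once the \mlp recourse is in play: with $\tO(\rho)$ recourse per update, after $\phi\vol(G)$ updates the total inter-cluster boundary can reach $\tO(\rho\cdot\phi m)\gg\tO(\phi m)$, so Property \ref{ed:globalboundary} would fail mid-phase and the maintained partition would not be a valid $(\alpha,\phi)$-ED at all times. The paper avoids this by capping each phase at $O(\phi m/\rho)$ updates (\Cref{lem:main}) and restarting, which costs $\tO(\rho/\phi^2)$ amortized rather than your $\tO(1/\phi^2)$ -- still well within the final $2^{O(\log^{3/4}\hatn)}$ bound, so the fix is cosmetic but needed for correctness of the invariant. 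Second, no ``sharper accounting'' is required for the slack: by the paper's definition the slack of the hierarchy is the maximum slack of the per-level decompositions $\uset^i$, not their product, so the bound $38^h=2^{O(\log^{1/2}\hatn)}$ is immediate once each level's decomposition has slack $38^h$; your initial claim of slack $s^t$ is simply not the relevant quantity.
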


%\begin{theorem}
%\label{thm:dynamic_R_tree}
%There is a randomized algorithm, that, given a fully dynamic unweighted graph
%$G$ on at most $\hatn$ vertices (undergoing adaptive edge insertions and deletions),
%explicitly maintains, with high probability, a
%$(2^{-O({\log^{1/2}n})},2^{-O(\log^{3/4}n)})$-expander hierarchy of $G$ in
%amortized update time $2^{O(\log^{3/4} n)}$.
%\end{theorem}

\noindent
The above theorem is based on the following theorem that shows that one can
efficiently maintain an expander-decomposition.

\begin{theorem}
\label{thm:dyn-expander-new}
There is a randomized algorithm that maintains an $(\alpha, \phi)$-expander decomposition
$\cal U$ of $G$  with slack $38^h$ together with its contracted graph $G_\U$ with the following properties:
\begin{itemize}[label=--,itemsep=0ex]
\item update time: $\tilde{O}(\psi\cdot 38^{2h}/\phi^2)$
\item amortized recourse (number of updates to $G_\U$):
$\tO(\rho)=\tO(38^{h}\cdot\psi/\alpha)$\enspace.
\end{itemize}
\end{theorem}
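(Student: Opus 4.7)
The plan is to reduce the theorem to maintaining one $(\alpha,\phi)$-boundary-linked expander decomposition on top of the static tools already developed in the paper, namely the static construction (\Cref{thm:decomp}) and the fully dynamic expander pruning algorithm (\Cref{thm:dynamic_expander_pruning}). At initialization, I would compute a slack-$1$ $(\alpha,\phi)$-ED $\U=\{U_1,\dots,U_k\}$ of $G$ via \Cref{thm:decomp}, paired with conductance bounds $\phi_i\ge\phi$ certifying that $G[U_i]^{\alpha/\phi_i}$ is a $\phi_i$-expander. For each cluster $U_i$, I would run \Cref{thm:dynamic_expander_pruning} with $w=\alpha/\phi_i$, so that, under updates touching $U_i$, a pruned set $P_i\subseteq U_i$ is maintained so that $G[U_i\setminus P_i]^{\alpha/\phi_i}$ stays a $(\phi_i/38)$-expander. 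The naive fix (make each pruned vertex a singleton) gives amortized recourse $O(1/\phi)$, which is too large; the entire point of the construction is to bring recourse down to $\tO(\rho)=\tO(38^h\psi/\alpha)$ at the price of losing a factor of $38^h$ in the expansion.

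To achieve this I would implement the \mlp{} subroutine sketched in \Cref{sec:overview_robust}. For each cluster $U_i$ of the current $\U$, run $h$ machines $M_1,\dots,M_h$ in parallel, where $M_s$ runs an instance of \Cref{thm:dynamic_expander_pruning} in stages of length $\psi^s$; at the end of each stage, $M_s$ hands its pruned set $P^s$ to $M_{s-1}$, which restarts pruning on $U_i\setminus\bigcup_{t\ge s}P^t$ with the relaxed parameter $\alpha/38^{h-s}$ and $\phi_i/38^{h-s}$. Inductively on $s$, the working cluster for $M_s$ at the start of a stage is $(\alpha/38^{h-s},\phi_i/38^{h-s})$-linked, which both justifies the parameters of the nested call and yields the final slack $38^h$ for $U_i\setminus P$. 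On top of \mlp{} I would run the \cp{} process: whenever $P^s$ changes, immediately invoke \Cref{thm:decomp} on the set $\tilde P^s:=P^s\setminus P^{s+1}$ to produce sub-clusters, insert them into $\U$, and launch fresh \mlp{}+\cp{} processes on each. When a cluster has absorbed $\Theta(\phi_i\vol(U_i))$ updates (the limit of \Cref{thm:dynamic_expander_pruning}), it is recycled: its whole subtree of processes is killed and an $(\alpha,\phi)$-ED is recomputed on the current $U_i$ via \Cref{thm:decomp}.

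The analysis splits into the three properties of \Cref{def:decomp} plus the two resource bounds. Property~\ref{ed:expanding} with slack $38^h$ follows directly from the invariant above. For the recourse, I would bound the amortized changes to $G_\U$ by summing over the $h$ levels of \mlp{}: at level $s$ the pruned piece $\tilde P^s$ is redecomposed once every $\psi^s$ updates, has volume $O(\psi^s/\phi_i)$ by the volume bound of \Cref{thm:dynamic_expander_pruning}, and its boundary into $V\setminus U_i$ is $O(\psi^s/\alpha)$ — this last bound crucially uses the $\alpha/\phi_i$ self-loop weighting (Property~\ref{ed:linked} of the original ED). After redecomposing $\tilde P^s$ via \Cref{thm:decomp} one gets at most $O(|E(\tilde P^s,V\setminus\tilde P^s)|)+\tO(\phi\vol(\tilde P^s))=\tO(\psi^s\cdot 38^h/\alpha)$ new inter-cluster edges, so the per-level amortized recourse is $\tO(38^h/\alpha)$, and summing gives $\tO(\psi\cdot 38^h/\alpha)=\tO(\rho)$. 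Recycling contributes only amortized $\tO(1)$ because Property~\ref{ed:localboundary} at the moment of recycling guarantees $\out(U_i)=\tO(\phi_i\vol(U_i))$. For update time, level $s\ge 2$ invokes \Cref{thm:decomp} in cost $\tO(\psi^s/\phi^2)$ every $\psi^s$ updates (amortized $\tO(38^{2h}/\phi^2)$), while $M_1$ re-runs pruning from scratch on each update, paying $\tO(\psi/\phi^2)$; summing and multiplying by the hidden factor $38^{2h}$ from the relaxed parameters yields $\tO(\psi\cdot 38^{2h}/\phi^2)$.

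The main obstacle I expect is the recourse analysis, specifically showing that the bound $|E(P^s,V\setminus U_i)|=O(\psi^s/\alpha)$ propagates correctly through the nested levels, since the pruned sets at different levels overlap and the expander $G[U_i\setminus\bigcup_{t>s}P^t]^{\alpha/\phi_i/38^{h-s}}$ that $M_s$ reasons about differs from the original $G[U_i]^{\alpha/\phi_i}$. This is where the boundary-linkedness exponent $w=\alpha/\phi_i$ in \Cref{thm:dynamic_expander_pruning} must be carried through the induction, and where the precondition $w\le 3/(5\phi)$ of that lemma has to be checked at every level — this forces the specific choice of $\alpha,\phi,\psi,h$ in the theorem. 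Everything else (initialization cost amortized over a super-phase of length $\Theta(\phi\vol(G)/\rho^{O(\log^{1/4}\hatn)})$, handling of updates on boundary edges across distinct $U_i$'s, and the proof that Property~\ref{ed:globalboundary} is preserved) should follow cleanly from the charging scheme of \Cref{thm:decomp} combined with the $\tO(\rho)$ recourse guarantee.
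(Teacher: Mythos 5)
Your architecture is the same as the paper's: compute a slack-$1$ $(\alpha,\phi)$-ED with \Cref{thm:decomp}, run on each cluster a multilevel pruning scheme with batch lengths $\psi^s$ and parameters degraded by a factor $38$ per level, re-decompose every pruned piece with \Cref{thm:decomp} and recurse, recycle expired clusters, and restart globally; the recourse and update-time accounting you sketch is the paper's as well. The one step that fails as written is your recycling threshold: you let a cluster absorb $\Theta(\phi_i\vol(U_i))$ updates, ``the limit of \Cref{thm:dynamic_expander_pruning}.'' That limit is valid for a \emph{single} pruning instance, but in the nested scheme the machine at level $s<\hbar$ runs \Cref{thm:dynamic_expander_pruning} with the degraded parameter $\phi_i/38^{\hbar-s}$, and its batch length $\psi^s$ can be within a factor $\psi$ of the full budget $N$ (take $N$ just above $\psi^{\hbar-1}$, so $\ell_{\hbar-1}\approx N$). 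Then the precondition $k\le \phi\,\vol(\cdot)/120$ of \Cref{thm:dynamic_expander_pruning} is violated already at level $\hbar-1$ (one needs roughly $38N\lesssim\phi_i\vol(U_i)/240$), and with it collapses exactly the induction you invoke — that the working cluster of $M_s$ is $(\alpha/38^{\hbar-s},\phi_i/38^{\hbar-s})$-linked — as well as the bound $\vol(\bigcup_s P^s)\le\vol(U_i)/2$ on which the slack, Property~\ref{ed:localboundary}, and the logarithmic recursion depth rest. The cluster's update budget must be taken strictly smaller than the single-instance limit and verified against a geometric sum over the levels; the paper does this generously by capping each CD-process at $N\le\phi_i\vol(U_i)/\rho$ updates (a small enough constant fraction would also work, with a sharper estimate), and this cap is also where $\rho$ enters the recycling amortization.

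Two smaller slips, fixable but worth noting. First, you amortize the global rebuild over a phase of length $\Theta(\phi\vol(G)/\rho^{O(\log^{1/4}n)})$; that short phase is what the hierarchy-level composition uses, but for this single-level theorem it yields amortized update time $\tilde{O}(\rho^{O(\log^{1/4}n)}/\phi^2)=2^{\Theta(\log^{3/4}n)}$, which overshoots the claimed $\tilde{O}(\psi\cdot 38^{2h}/\phi^2)$ (the paper restarts every $\Theta(\phi m/\rho)$ updates, adding only $\tilde{O}(\rho/\phi^2)$). Second, your level-$s$ recourse bookkeeping (``redecomposed every $\psi^s$ updates,'' per-level $\tilde{O}(38^{h}/\alpha)$) does not by itself sum to $\tilde{O}(\psi 38^{h}/\alpha)$; in the paper the handed-down snapshot of $P^s$ changes every $\psi^{s-1}$ updates, giving $\tilde{O}(\psi\,38^{\hbar-s}/\alpha)$ per level, which does sum to $\tilde{O}(\rho)$. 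None of this changes the overall plan, which is the paper's, but the budget constraint across levels is the missing verification rather than a routine check.
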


\noindent
With the help of Theorem~\ref{thm:dyn-expander-new} we obtain
Theorem~\ref{thm:dynamic_R_tree} almost immediately.
\begin{proof}[Proof of Theorem~\ref{thm:dynamic_R_tree}]
We maintain an $(\alpha,\phi)$-expander decompositions sequence
$(G^0,\dots,G^t)$ with slack $38^h$. For this we use algorithms
$\tilde A_1, \dots, \tilde A_{t_{\max}}$, where $t_{\max}=2^{O(\log^{1/4}\hatn)}$ is an
upper bound on the depth of the sequence for our choice of $\phi$. The
algorithm $\tilde A_i$ 
observes the updates for graph $G^{i-1}$, maintains an
$(\alpha,\phi)$-expander decomposition $\U^{i-1}$ with slack $38^h$ on this
graph and generates updates for the contracted graph $G^i:=G_{\U^{i-1}}$. The
graph $G^0$ corresponds to the input graph $G$.
The depth $t$ of the maintained expander-hierarchy is determined by
the first graph $G^t$ in this sequence that does not contain any edges.

Because of the bounded recourse the number of updates that have to be performed
for a graph $G^i$ in this sequence is at most $\tO(\rho)^i k$, where $k$ is the
length of the update sequence for $G=G^0$. This results in a total update time of
$\tO(k\sum_i\rho^i \psi 38^{2h}/\phi^2)=k2^{\tO(\log^{3/4}\hatn)}$.
\end{proof}

\subsection{Fully Dynamic Expander Decomposition}
In this section we prove Theorem~\ref{thm:dyn-expander-new}.
The theorem follows from the following main lemma.

\begin{restatable}[Main Lemma]{lemma}{main}
%\begin{lemma}
\label{lem:main}\label{lem:fully_dynamic_expander_decomposition}
Suppose a graph $G$ initially contains $m$ edges and undergoes a
sequence of at most $O(\phi m/\rho$) adaptive updates
such that $V(G)\le \hatn$ and $E(G)\le\hatm$ always hold. Then there
exists an algorithm that maintains an $(\alpha, \phi)$-expander decomposition
$\cal U$ with slack $38^h$ and its contracted graph $G_\U$ with the following properties:
\begin{enumerate}[itemsep=0ex]
\item update time: $\tilde{O}(\psi\cdot 38^{2h}/\phi^2)$
\item preprocessing time: $\tilde{O}(m/\phi)$
\item initial volume of $G_\U$ (after preprocessing): $\tO(\phi m)$
\item amortized recourse (number of updates to $G_\U$): $O(\rho)=O(38^{h}\cdot\psi/\alpha)$.
\end{enumerate}
%\end{lemma}
\end{restatable}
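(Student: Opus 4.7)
The plan is to combine three ingredients: the static $(\alpha,\phi)$-ED algorithm of \Cref{thm:decomp}, the fully dynamic expander pruning of \Cref{thm:dynamic_expander_pruning}, and a multi-level pruning schedule (\mlp) tailored to amortize the recourse across batches of updates. At preprocessing time we invoke \Cref{thm:decomp} on $G$ to obtain an $(\alpha,\phi)$-ED $\mathcal{U}=\{U_1,\dots,U_k\}$ in $\tilde{O}(m/\phi)$ time, with associated parameters $\phi_i\ge\phi$ satisfying Properties~\ref{ed:globalboundary}--\ref{ed:localboundary} of \Cref{def:decomp}; the initial contracted graph $G_\mathcal{U}$ then has volume $\tilde{O}(\phi m)$. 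For each cluster $U_i$ we launch an independent \mlp instance with parameters $(\alpha,\phi_i)$ that is allowed to run until it has absorbed $\phi_i\vol_G(U_i)/C$ updates touching $U_i$ (for a suitable constant $C$ from \Cref{thm:dynamic_expander_pruning}), after which we ``reset'' $U_i$ by recomputing an $(\alpha,\phi)$-ED on it with \Cref{thm:decomp}. This reset step is affordable because Property~\ref{ed:localboundary} ensures $\out_G(U_i)=\tilde{O}(\phi_i\vol_G(U_i))$, so \Cref{thm:decomp} runs in time $\tilde{O}(\vol_G(U_i)/\phi)$, which amortizes to $\tilde{O}(1/\phi^2)$ per update.

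The \mlp instance on $U_i$ uses $h=\log_\psi \hatm$ levels. Level $\ell$ runs an instance of \Cref{thm:dynamic_expander_pruning} starting from the cluster that the higher levels have left intact and refreshes its own output every $\psi^\ell$ updates; level $\ell+1$ is restarted (on the shrunk cluster) each time its own counter reaches $\psi^{\ell+1}$, so the higher the level, the larger but more rarely-changing its pruned set $\tilde P^\ell$. By reverse induction on $\ell$ (as in the proof of \Cref{obs:cluster_linkedness} in the earlier draft), at every moment the remaining kernel $U_i\setminus\bigcup_\ell \tilde P^\ell$ is $(\alpha/38^h,\phi_i/38^h)$-linked in $G$, which gives Property~\ref{ed:expanding} with slack $38^h$. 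For each $\tilde P^\ell$ we then invoke \Cref{thm:decomp} to split it into sub-clusters, each of which is itself $(\alpha,\phi')$-linked for some $\phi'\ge\phi$, and recursively starts its own \mlp; this is the \cp process. The total partition consisting of the kernels together with all sub-clusters produced inside the pruned parts is the $(\alpha,\phi)$-ED with slack $38^h$ maintained by the algorithm.

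The update time bound comes from two sources. \Cref{thm:dynamic_expander_pruning} applied at level $\ell$ costs $\tilde{O}(38^{2(h-\ell)}/\phi_i^2)$ amortized per update it sees, and since level $1$ is rerun from scratch after each update within its current stage we pay an extra factor $\psi$ there, yielding $\tilde{O}(\psi\cdot 38^{2h}/\phi^2)$ across all levels by \Cref{cor:mlp_update_time}. The \cp-style recomputations on each $\tilde P^\ell$ are charged to the $\psi^\ell$ updates in the stage and, using $\vol_G(\tilde P^\ell)=O(\psi^\ell 38^h/\phi)$ from \Cref{thm:dynamic_expander_pruning}, contribute $\tilde{O}(38^h/(\alpha\phi^2))$ per update per level, for a total of $\tilde{O}(\psi\cdot 38^{2h}/\phi^2)$ combined with the reset cost.

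The hardest part will be establishing the $O(\rho)$ amortized recourse bound, and this is precisely where the new definition of a boundary-linked expander decomposition pays off. For a single $\tilde P^\ell$, the recourse it contributes to $G_\mathcal{U}$ after being re-partitioned via \Cref{thm:decomp} is at most $4\out_G(\tilde P^\ell)+\tilde{O}(\phi\vol_G(\tilde P^\ell))$. The volume term is already $\tilde{O}(\psi^\ell 38^h)$ by the pruning guarantee; the crucial step is bounding $\out_G(\tilde P^\ell)$. Property~\ref{ed:expanding} (the boundary-linkedness of the ambient cluster before pruning) gives $|E_G(\tilde P^\ell,V\setminus U_i)|\le (\phi_i/\alpha)\cdot|E_G(\tilde P^\ell,U_i\setminus\tilde P^\ell)|$ via the self-loop weighting in $G[U_i]^{\alpha/\phi_i}$, and \Cref{thm:dynamic_expander_pruning} bounds the latter by $\tilde{O}(\psi^\ell)$, so $\out_G(\tilde P^\ell)=\tilde{O}(\psi^\ell/\alpha)$. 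Summing over the $h$ levels and dividing by the stage length gives amortized recourse $\tilde{O}(38^h\psi/\alpha)=\tilde{O}(\rho)$ per update; the reset events are controlled analogously using Property~\ref{ed:localboundary}, which ensures that recomputing an expired cluster also costs $\tilde{O}(1)$ amortized recourse. Finally, combining the initial $\tilde{O}(\phi m)$ inter-cluster edges with $O(\phi m/\rho)$ updates at recourse $\tilde{O}(\rho)$ each preserves Property~\ref{ed:globalboundary} throughout the update sequence.
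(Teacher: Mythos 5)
Your construction is essentially the paper's own: preprocess with \Cref{thm:decomp}, run a multi-level pruning hierarchy per cluster with parameters degraded by $38$ per level, re-decompose each pruned snapshot via \Cref{thm:decomp} and recurse, reset an expired cluster using Property~\ref{ed:localboundary}, and bound the recourse by using boundary-linkedness to control the border of the pruned sets; the update-time and recourse accounting you sketch matches \Cref{lem:cdcost} and \Cref{lem:bounded-recourse}. However, two points are genuine gaps rather than routine omissions. First, the per-cluster update limit: you let a cluster absorb $\phi_i\vol_G(U_i)/C$ updates for ``a suitable constant $C$ from \Cref{thm:dynamic_expander_pruning}.'' The constant $120$ from that theorem is not enough. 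The level-$\ell$ pruning instance runs with conductance $\phi_i/38^{h-\ell}$, so a single batch at level $\ell$ may prune volume $\Theta(38^{h-\ell}\psi^{\ell}/\phi_i)$, and summing over the $h$ levels the cumulative pruned volume is a large constant multiple of $N/\phi_i$; with $N=\phi_i\vol_G(U_i)/120$ this can exceed the cluster's volume, violating the update-limit precondition of \Cref{thm:dynamic_expander_pruning} at the lower levels and the half-volume invariant needed below. The paper avoids this by taking $N\le\phi_i\vol_G(U_i)/\rho$, which makes the batch-length and volume checks of \Cref{cla:correctness} and \Cref{cla:cluster-properties} immediate; a sufficiently large absolute constant could also be made to work, but then those invariants must be re-verified, and the earlier-draft induction you appeal to only yields a volume lower bound decaying like $2^{-h}$ per level, which is too weak for the next point.

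Second, you never verify Property~\ref{ed:localboundary} for the surviving kernels $U_i\setminus\bigcup_{\ell}\tilde P^{\ell}$; you only establish slack-$38^h$ linkedness (Property~\ref{ed:expanding}) and the global bound (Property~\ref{ed:globalboundary}). In the paper this is \Cref{cla:edproperties}, and it needs exactly the two facts tied to the update limit: the new boundary edges created by updates and by pruning are $O(N)=\tilde O(\phi_i\vol_G(U_i))$, and $\vol_G\bigl(U_i\setminus\bigcup_{\ell}\tilde P^{\ell}\bigr)\ge\vol_G(U_i)/2$ (\Cref{cla:cluster-properties}). If the kernel volume were only guaranteed to be $\vol_G(U_i)/2^{h}$, the out-degree-to-volume ratio degrades by a $2^{h}=2^{\Theta(\log^{1/2}\hatn)}$ factor, which Property~\ref{ed:localboundary} (whose $\tilde O(\cdot)$ hides only polylogarithmic factors, and which admits no slack) cannot absorb. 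A smaller quibble in the same vein: your border bound $|E_G(\tilde P^{\ell},V\setminus U_i)|\le(\phi_i/\alpha)\,|E_G(\tilde P^{\ell},U_i\setminus\tilde P^{\ell})|$ is only valid at the top level; at level $\ell$ the correct bound carries an extra $38^{h-\ell}$ factor (as in the paper's per-level bound $O(\psi^{\ell}/\alpha_{\ell})$), though since your final sum reinstates the $38^{h}$ the recourse conclusion is unaffected.
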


\begin{proof}[Proof of Theorem~\ref{thm:dyn-expander-new}]
We simply restart the algorithm from the above lemma whenever an update appears
that would exceed the update limit. This means
we have to perform this after $Z=\Theta(\phi m/\rho)+1$ updates.

We have to analyze how this increases
the update time and the recourse. First observe that before a restart the number of edges can be at
most $m+Z$. Thus, the restart requires preprocessing time $\tO((m+Z)/\phi)$.
Amortizing this against the $Z$ updates  increases the amortized update time
by $\tO(\frac{m+Z}{\phi Z})=\tO(m/(\phi Z)+1/\phi)=\tO(\rho/\phi^2)=\tO(\psi\cdot
38^{2h}/\phi^2)$, where the last step follows because $\alpha=O(38^h)$.

The amortized recourse increases as follows. Observe that 
before the restart the total number
of edges in $G_\U$ is at most $\tO(\phi m+Z\rho)$, because we only experienced
$Z$ updates and the amortized recourse is $O(\rho)$. We delete all these edges.
Then we perform a preprocessing step. Since we have at most $m+Z$ edges in $G$, Property~3 from the above lemma
guarantees that this step inserts at most $\tO(\phi(m+Z))$ edges.
Overall this increases the amortized recourse by
$\tO((\phi m+Z\rho+\phi m +\phi Z)/Z)=\tO(\phi m/Z+\rho)=\tO(\rho)$.

This means the restarts only increase the recourse to $\tO(\rho)$.
\end{proof}

In the remainder of this section we define the details for the ED-process,
i.e., the algorithm from Lemma~\ref{lem:main}. 
\subsection*{\mlp}

In order to define the details of the ED-process we first
define a different process called \mlp. A variant of this process
will serve as a sub-routine in the ED-process.

The input for the \mlp process is a cluster $U$ that is
$(\paralpha,\parphi)$-linked for parameters $\paralpha,\parphi$ that are known to the
process. Then the process receives up to $N\le\parphi\vol_G(U)/\cA$ many updates
for $G$ that are relevant for $U$, i.e., updates of edges for which at least
one endpoint is in $U$. We will refer to $N$ as the \emph{update limit}.
The process maintains a collection of pruned sets
$\barP^1,\ldots,\barP^\height$ such that $U\setminus \bigcup_s\barP^s$
is $(\paralpha/38^{\hbar},\parphi/38^{\hbar})$-linked in $G$. Here
$\height=\lceil\log_\psi(N)\rceil\le h$.

%\begin{itemize}
%\item Input: $(\alpha,\phi)$-linked cluster $U$ of $G$; sequence of updates of $G$ that are
%relevant for $U$
%\item Output: Pruned sets $\bar{P}^h_t,\dots,\bar{P}^1_t$ such that for every time step
%$t$, $U\setminus\cup_\ell \bar{P}^s_t$ is $(\alpha,\phi)$-expanding. 
%\end{itemize}

The pruned sets are generated by a hierarchy of algorithms $A_{\hbar},\dots,A_1$. The
algorithm $A_s$ maintains a set $\P^s$ and from time to time it changes
$\barP^s$ to the current value of $\P^s$. In this respect $\barP^s$ is a
\enquote{snapshot} of $\P^s$ from an earlier time step. In the following
$\P^s_t$ and $\barP^s_t$ denote the sets $\P^s$ and $\barP^s$ right after the
$t$-th update.

%where $A_\ell$ is responsible for generating/maintaining the pruned set $\barP^s$.
%$\barP^s_t$ is the pruned set on level $s$ right after the $t$-th update.

The precise relationship between $\P^s$ and $\barP^s$ is as follows. For
constructing/maintaining its sets the level $s$ algorithm $A_s$ partitions the
update sequence into \emph{batches} of length
\begin{equation*}
\ell_s:= \left\{\begin{array}{ll}N&\text{if $s=\hbar$}\\ \psi^s&\text{otherwise.}\end{array}\right.
\end{equation*}
each of which
is partitioned into sub-batches of length $\ell_{s-1}$ ($\ell_0=1$). The $i$-th batch on
level $s$ contains updates number $(i-1)\ell_s+1,\dots,i\ell_s$. The $j$-th
sub-batch of the $i$-th batch contains updates
$(i-1)\ell_s+(j-1)\ell_{s-1}+1,\dots,(i-1)\ell_{s}+j\ell_{s-1}$. As in general
$N\neq\psi^{\hbar}$ we allow the last batch for an
algorithm to be incomplete and contain less than $\ell_s$ updates.

The algorithm $A_s$ takes a \enquote{snapshot} of $\P^s$ at the start of every
sub-batch. This means we define $\barP^s_t:=\P^s_{\lfloor
  t/\ell_{s-1}\rfloor\ell_{s-1}}$ if $t$ does not start a new batch; otherwise
$\barP^s_t:=\emptyset$ as $\P^s_t$ is reset at the start of a batch.

%as the set $\P^s$ at the
%start of the current sub-batch. \hrnote{this is not completely correct; what if
%we are at a start of a batch; the value of the current formula is from the end
%of the last sub-batch but now we reset $\P^s$ to $\emptyset$}

How is a set $\P^s_t$ constructed? The construction of the set $\P^s_t$ on level $s$
depends on the sets $\barP^{s'}_t$, $s'>s$. Let $Q^s_t:=\bigcup_{s'>s}\barP^{s'}_t$
and observe that this set does not change during a batch for algorithm $A_s$.
%because algorithms $A_{s'}$, $s'>s$ do not generate snapshots during a batch
%for algorithm $A_s$. 
At the beginning of a batch $A_s$ initializes $\P^s:=\emptyset$ (since
  this is also the start of a sub-batch it means also $\barP^s=\emptyset$ at
  this point). Then it
simulates a run of the algorithm for fully dynamic expander pruning
(Theorem~\ref{thm:dynamic_expander_pruning}) on subset $U\setminus Q_t^s$
for the $\ell_s$ updates of the batch. For this run it uses parameters
$\paralphaS:=\paralpha/38^{\height-s}$ and $\parphiS:=\parphi/38^{\height-s}$
and $w:=\paralpha/\parphi$.

In order for the simulation to be valid we have to make sure that the
preconditions of Theorem~\ref{thm:dynamic_expander_pruning} are met.
In particular we require that $U\setminus Q_t^s$ is
$(\paralphaS,\parphiS)$-linked and that
the number of updates in a batch is at most the update limit of the expander
pruning algorithm in Theorem~\ref{thm:dynamic_expander_pruning}.

\begin{claim}[Correctness]
\label{cla:correctness}% we want this also for s=0
For $s\in \{0,\dots,\hbar\}$ the following properties hold.
\begin{enumerate}
\item $U\setminus Q_t^s$ is $(\paralphaS,\parphiS)$-linked; \label{correctA}
\item $\ell_{s}\le\parphiS\vol_G(U\setminus Q^s_t)/120\le\parphiS\vol(G[U\setminus Q^s_t]^w)/120$;   \label{correctB}
%\item $\vol_G(U\setminus Q^s_t)\ge \vol_G(U)/2^{\hbar-s}$; \label{correctC}
\end{enumerate}
\end{claim}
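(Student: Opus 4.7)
My plan is to prove the claim by reverse induction on $s$, from $s=\hbar$ down to $s=0$. The base case $s=\hbar$ is immediate: the union $Q^\hbar_t=\bigcup_{s'>\hbar}\barP^{s'}_t$ is empty, so $U\setminus Q^\hbar_t=U$, which is $(\paralpha,\parphi)$-linked by hypothesis; Property 2 reduces to $N\le\parphi\vol_G(U)/120$, weaker than the assumed update limit $N\le\parphi\vol_G(U)/\rho$ since $\rho=38^h\psi/\alpha\gg 120$.

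For the inductive step, I would first establish a structural alignment observation: batches of $A_s$ start at multiples of $\ell_s$, and these times coincide exactly with sub-batch starts of $A_{s+1}$ (since sub-batches of $A_{s+1}$ have length $\ell_s$). Moreover, since snapshots $\barP^{s'}$ for $s'>s+1$ change only at multiples of $\ell_{s'-1}\ge\ell_{s+1}>\ell_s$, the set $Q^s_t$ is constant throughout any single batch of $A_s$. For Property 1 at a batch-start time $t^{**}$ of $A_s$, I would use the two snapshot identities $\barP^{s+1}_{t^{**}}=\P^{s+1}_{t^{**}}$ (snapshot equals live set at its own snapshot time) and $Q^{s+1}_{t^{**}}=Q^{s+1}_{t^*}$, where $t^*$ is the start of the current batch of $A_{s+1}$. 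By the inductive hypothesis applied at $t^*$, $U\setminus Q^{s+1}_{t^*}$ is $(\paralpha_{s+1},\parphi_{s+1})$-linked, and by Property 2 at $s+1$ the $\ell_{s+1}$ updates of $A_{s+1}$'s batch fit within the update limit of Theorem~\ref{thm:dynamic_expander_pruning}; this theorem then yields that at time $t^{**}$ the graph $G_{t^{**}}[(U\setminus Q^{s+1}_{t^*})\setminus\P^{s+1}_{t^{**}}]^w$ is a $(\parphi_{s+1}/38)$-expander. Rewriting the vertex set as $U\setminus Q^s_{t^{**}}$ and noting that $\paralpha_s/\parphi_s=\paralpha/\parphi=w$ and $\parphi_{s+1}/38=\parphi_s$ gives exactly the required $(\paralpha_s,\parphi_s)$-linkedness.

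For Property 2 I would combine the pruning volume bound from Theorem~\ref{thm:dynamic_expander_pruning}, namely $\vol_G(\barP^{s+1}_t)\le\vol_{G[\cdot]^w}(\barP^{s+1}_t)\le 32\ell_{s+1}/\parphi_{s+1}$, with the inductive lower bound $\vol_G(U\setminus Q^{s+1}_t)\ge 120\ell_{s+1}/\parphi_{s+1}$, to obtain $\vol_G(U\setminus Q^s_t)\ge 88\ell_{s+1}/\parphi_{s+1}$. Substituting $\parphi_s=\parphi_{s+1}/38$, the target $\ell_s\le\parphi_s\vol_G(U\setminus Q^s_t)/120$ collapses to $\ell_s\le\ell_{s+1}/(120\cdot 38/88)$, i.e.\ $\ell_s\le\ell_{s+1}/52$. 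For $s<\hbar-1$ this is just $\psi\ge 52$, which holds trivially by our parameter setting $\psi=2^{\Theta(\log^{1/2}\hatn)}$. The transition $s=\hbar\to s=\hbar-1$ requires a small separate calculation using $\ell_\hbar=N$ and $N\le\parphi\vol_G(U)/\rho$: this yields $\vol_G(U\setminus Q^{\hbar-1}_t)\ge\vol_G(U)(1-32/\rho)\ge\vol_G(U)/2$ and in turn reduces the inequality to $\rho\ge 9120$, again satisfied by $\rho=38^h\psi/\alpha$. The outer inequality $\parphi_s\vol_G(U\setminus Q^s_t)/120\le\parphi_s\vol(G[U\setminus Q^s_t]^w)/120$ is immediate from $w\ge 1$.

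The main obstacle is the timing bookkeeping: carefully verifying that batch boundaries of $A_s$ align with sub-batch boundaries of $A_{s+1}$ across all levels, so that the two snapshot equalities $\barP^{s+1}_{t^{**}}=\P^{s+1}_{t^{**}}$ and $Q^{s+1}_{t^{**}}=Q^{s+1}_{t^*}$ hold \emph{exactly} at the batch-start moments where Property 1 must be verified, and tracking that the volumes measured by Theorem~\ref{thm:dynamic_expander_pruning} (in the graph $G[\cdot]^w$ at the time of pruning) transfer correctly to volumes in the ambient graph $G$. Once this alignment is explicit, both properties are immediate consequences of one application of Theorem~\ref{thm:dynamic_expander_pruning} to the level-$(s+1)$ pruning run.
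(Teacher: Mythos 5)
Your proof is correct and follows essentially the same route as the paper: reverse induction on $s$, with the base case from the update limit $N\le\parphi\vol_G(U)/\rho$, Property~1 obtained by applying Theorem~\ref{thm:dynamic_expander_pruning} to the level-$(s+1)$ pruning run (using the factor-$38$ parameter scaling), and Property~2 from combining the pruned-volume bound $O(\ell_{s+1}/\parphi_{s+1})$ with the inductive volume lower bound, split into the cases $s=\hbar-1$ and $s<\hbar-1$ with the parameter conditions on $\psi$ and $\rho$. Your explicit batch/sub-batch alignment and snapshot bookkeeping just spells out what the paper asserts implicitly ("$Q^s_t$ does not change during a batch" and the validity of $A_{s+1}$'s simulation), and your constants differ only immaterially from the paper's.
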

\begin{proof}
We prove the lemma via induction. For the base case $s=\hbar$ the
set $Q_t^{\hbar}$ is empty. Then the above properties directly follow from the
precondition of the input cluster $U$ and the fact that
$N\le\parphi\vol_G(U)/\cA$.

Now, suppose that the statement holds for $s+1$. We prove it for $s$.
From the fact that the statement holds for $s+1$ we are guaranteed that the
simulation of the dynamic expander pruning that is performed by algorithm
$A_{s+1}$ is valid.

Part~1 follows because $U\setminus Q_t^s$ is the
unpruned part that results from the execution of 
Theorem~\ref{thm:dynamic_expander_pruning} by $A_{s+1}$. This theorem guarantees
that $G[U\setminus Q_t^s]^w$ is a $(\phi_{s+1}/38)$-expander with $w=\paralpha_{s+1}/\parphi_{s+1}$. But this also means
that $G[U\setminus Q_t^s]^w$ is a $\phi_{s}$-expander with $w=\paralphaS/\parphiS$.

Theorem~\ref{thm:dynamic_expander_pruning} also gives the following property
for $\barP_t^{s+1}$:
%\begin{equation*}
%\item
\begin{equation*}
\vol_G(\barP_t^{s+1})\le\tfrac{32}{\parphiSp}\ell_{s+1}\le\tfrac{32}{\parphiSp}\parphiSp\vol_G(U\setminus
Q^{s+1}_t)/120
\le \tfrac{1}{2}\vol_G(U\setminus Q_t^{s+1})\enspace,
\end{equation*}
where Step~1 is due to Theorem~\ref{thm:dynamic_expander_pruning} and Step~2 is
due to induction hypothesis.
Hence,
\begin{equation*}
\begin{split}
\vol_G(U\setminus Q_t^s)
&=\vol_G(U\setminus Q_t^{s+1})-\vol_G(\barP_t^{s+1})
\ge \tfrac{1}{2}\vol_G(U\setminus Q_t^{s+1})\enspace.
%&\ge \tfrac{1}{2}\vol_G(U)/2^{\hbar-s-1}
%\ge \vol_G(U)/2^{\hbar-s}\enspace.
\end{split}
\end{equation*}
We can use this relationship to obtain a bound on $\ell_s$, which gives the
second part of the claim. We differentiate two cases. If $s=\hbar-1$ we get
\begin{equation*}
\begin{split}
\ell_s
&\le N\le \parphi\vol_G(U)/\cA =  38\parphi_s\vol_G(U\setminus
Q_t^{s+1})/\cA
\le 76\phi_s'\vol_G(U\setminus
Q_t^{s})/\cA\\
&\le \parphiS\vol_G(U\setminus
Q_t^{s})/120\enspace,\\
\end{split}
\end{equation*}
where the equality uses the fact that $U\setminus Q_t^{s+1}=U$ for $s+1=\hbar$.
If $s<\hbar-1$ we have
\begin{equation*}
\begin{split}
\ell_s
&=\ell_{s+1}/\psi
\le\parphi_{s+1}\vol_G(U\setminus
Q_t^{s+1})/(120\psi)
\le38\parphi_{s}2\vol_G(U\setminus
Q_t^{s})/(120\psi)\\
&\le\parphiS\vol_G(U\setminus Q_t^s)/120\enspace,
\end{split}
\end{equation*}
for sufficiently large $n$ as $\psi=\omega(1)$. This gives Part~2 of the claim. 
\end{proof}

\noindent
Claim~\ref{cla:correctness}
guarantees that a pruned set $\barP^s_t$ is generated by a
valid run of the expander pruning algorithm from
Theorem~\ref{thm:dynamic_expander_pruning} on cluster $U\setminus Q_t^s$
with parameters $\parphiS, \paralphaS$.
Therefore it fulfills the following properties guaranteed by this
theorem.

\begin{claim}\label{cla:help}
A set $\barP_t^s$ fulfills the following properties.
\begin{enumerate}[itemsep=0pt]
\item $\vol_G(\barP^s_t)\le 32\ell_s/\parphiS=O(\psi^s/\parphiS)$\hfill(from Property~2a in Theorem~\ref{thm:dynamic_expander_pruning})\label{hpro:A}
\item $|E_G(\barP^s_t,U\setminus Q_t^s\setminus\barP_s^t)|\le 16\ell_s=O(\psi^s)$\hfill({from Property~2b in Theorem~\ref{thm:dynamic_expander_pruning}})\label{hpro:B}
\item $|E_G(\barP^s_t,V\setminus (U\setminus Q_t^s))|\le 16\ell_s/\paralphaS=O(\psi^s/\paralphaS)$\hfill({from
  Property~3 in Theorem~\ref{thm:dynamic_expander_pruning}})\label{hpro:C}
\item $\out_G(\barP^s_t)\le 32\ell_s/\paralphaS=O(\psi^s/\paralphaS)$ \label{hpro:D}
\end{enumerate}
\end{claim}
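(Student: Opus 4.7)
}
My plan is to observe that each of the four properties is essentially a direct translation of the guarantees in \Cref{thm:dynamic_expander_pruning}, applied to the particular run of the fully dynamic expander-pruning algorithm that $A_s$ simulates during the batch currently containing update $t$. The snapshot definition gives $\barP^s_t = \P^s_{t'}$ for $t' := \lfloor t/\ell_{s-1}\rfloor\,\ell_{s-1}$, and if we let $t_0 := \lfloor t/\ell_s\rfloor\,\ell_s$ denote the first update of the current batch (so that $\P^s$ was reset to $\emptyset$ just before update $t_0+1$), then the number of updates processed by this run of \Cref{thm:dynamic_expander_pruning} at the time the snapshot was taken is $i := t' - t_0$, which satisfies $0 \le i \le \ell_s$.

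First I would check the preconditions needed to invoke \Cref{thm:dynamic_expander_pruning} on this run. The underlying cluster is $U\setminus Q^s_t$ (which does not change during the batch, as required), and Claim~\ref{cla:correctness} gives exactly the two hypotheses: Part~1 says $U\setminus Q^s_t$ is $(\paralphaS,\parphiS)$-linked, i.e., $G[U\setminus Q^s_t]^{w}$ is a $\parphiS$-expander for $w=\paralphaS/\parphiS$; and Part~2 says $\ell_s \le \parphiS\,\vol(G[U\setminus Q^s_t]^{w})/120$, which is the update-limit required by \Cref{thm:dynamic_expander_pruning} (note that the $1/120$ we maintain is smaller than the $1/120$ the pruning theorem needs, so there is slack).

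With the preconditions in place, Property~\ref{hpro:A} follows from Property~2 of \Cref{thm:dynamic_expander_pruning}, which gives $\vol_G(\barP^s_t) \le \vol_{G[U\setminus Q^s_t]^{w}}(\barP^s_t) \le 32 i/\parphiS \le 32\ell_s/\parphiS$. Property~\ref{hpro:B} is $|E_G(\barP^s_t, (U\setminus Q^s_t)\setminus\barP^s_t)| \le 16 i \le 16\ell_s$, which is the second half of Property~2 of \Cref{thm:dynamic_expander_pruning}. Property~\ref{hpro:C} is exactly Property~3 of \Cref{thm:dynamic_expander_pruning}, using the bound $i \le \ell_s$. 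The asymptotic forms follow because $\ell_s = \psi^s$ for $s<\hbar$ and $\ell_\hbar = N \le \psi^\hbar$ by the choice $\hbar = \lceil\log_\psi N\rceil$.

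Finally, Property~\ref{hpro:D} is a one-line consequence of the previous two: writing
\[
\out_G(\barP^s_t) \;\le\; |E_G(\barP^s_t, (U\setminus Q^s_t)\setminus\barP^s_t)| \;+\; |E_G(\barP^s_t, V\setminus(U\setminus Q^s_t))|,
\]
Properties~\ref{hpro:B} and~\ref{hpro:C} yield $\out_G(\barP^s_t) \le 16\ell_s + 16\ell_s/\paralphaS \le 32\ell_s/\paralphaS$, since $\paralphaS\le1$. I do not anticipate any real obstacle: the claim is essentially bookkeeping that reconciles the snapshot definition of $\barP^s_t$ with the dynamic guarantees of the underlying pruning routine, and the only subtle point is verifying, via Claim~\ref{cla:correctness}, that the inductive preconditions are met.
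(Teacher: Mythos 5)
Your proposal is correct and matches the paper's own argument: the first three properties are read off directly from the guarantees of \Cref{thm:dynamic_expander_pruning} (with the preconditions supplied by Claim~\ref{cla:correctness}, exactly as you note), and Property~\ref{hpro:D} follows by splitting $\out_G(\barP^s_t)$ into the edge sets of Properties~\ref{hpro:B} and~\ref{hpro:C}. Your extra bookkeeping about batch offsets and snapshots is just a more explicit version of what the paper leaves implicit, so there is nothing to add.
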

\begin{proof}
The first three properties are directed consequences of
Theorem~\ref{thm:dynamic_expander_pruning}. The last one follows from
Property~2 and Property~3 because
$\out_G(\barP_s^t)
=|E_G(\barP_s^t,V\setminus (U\setminus Q_t^s))|
+|E_G(\barP_s^t,U\setminus Q_t^s\setminus\barP_s^t)|$.
\end{proof}

\def\cut{\operatorname{cut}}

\begin{claim}
\label{cla:cluster-properties}
At any time, the cluster $U\setminus\bigcup_s\barP^s_t=U\setminus Q_t^0$ maintained by the multilevel
pruning process fulfills the following properties:
\begin{enumerate}[itemsep=0pt]
\item $U\setminus Q_t^0$ is $(\paralpha/38^h,\parphi/38^h)$-linked in $G$ \label{cluster-proA}
\item $\vol(U\setminus Q_t^0)\ge\tfrac{1}{2}\vol_G(U)$.\label{cluster-proB}
\item $\cut_U(Q_t^0)\le 48N$.\label{cluster-proC}
\end{enumerate}
\end{claim}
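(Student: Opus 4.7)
My plan is to derive each of the three properties as a direct corollary of the earlier Claims~\ref{cla:correctness} and~\ref{cla:help}; no genuinely new ideas are needed beyond careful aggregation of the single-level pruning guarantees across the $\hbar$ levels of the multilevel pruning process.

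Property~\ref{cluster-proA} is essentially free: I would simply specialize Claim~\ref{cla:correctness}(\ref{correctA}) to $s=0$, which directly says that $U\setminus Q_t^0$ is $(\alpha_0,\phi_0)$-linked, i.e.\ $(\alpha/38^{\hbar},\phi/38^{\hbar})$-linked in $G$, exactly as claimed.

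For Property~\ref{cluster-proB}, I would first note that the $\barP^s_t$'s are pairwise disjoint, since $\barP^s_t\subseteq U\setminus Q_t^s$ by construction while $\barP^{s'}_t\subseteq Q_t^s$ for every $s'>s$. Hence $\vol_G(Q_t^0)=\sum_s\vol_G(\barP^s_t)$, which Claim~\ref{cla:help}(\ref{hpro:A}) bounds term by term by $32\ell_s/\phi_s$. The remaining work is a routine geometric-series estimate: since $\psi\gg 38$ for sufficiently large $\hatn$, the sum $\sum_{s<\hbar}\psi^s\cdot 38^{\hbar-s}/\phi$ is dominated by its top-index term $\psi^{\hbar-1}\cdot 38/\phi=O(N/\phi)$, and the top level $s=\hbar$ contributes $N/\phi$. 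Invoking the standing input assumption $N\le\phi\vol_G(U)/\rho$ with $\rho=38^{\hbar}\psi/\alpha\gg 1$, the total is $O(\vol_G(U)/\rho)\le\vol_G(U)/2$, which gives the desired lower bound on $\vol(U\setminus Q_t^0)$.

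Property~\ref{cluster-proC} uses the same disjoint decomposition: any edge contributing to $\cut_U(Q_t^0)$ has one endpoint in some $\barP^s_t$ and, since $U\setminus Q_t^0\subseteq (U\setminus Q_t^s)\setminus\barP^s_t$, its other endpoint lies in the set counted by Claim~\ref{cla:help}(\ref{hpro:B}), giving $|E_G(\barP^s_t,(U\setminus Q_t^s)\setminus\barP^s_t)|\le 16\ell_s$. Summing, $\cut_U(Q_t^0)\le 16\sum_s\ell_s\le 48N$, where I use $\sum_{s<\hbar}\psi^s\le 2\psi^{\hbar-1}\le 2N$ from $\hbar=\lceil\log_\psi N\rceil$ together with $\ell_{\hbar}=N$. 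I do not anticipate any real obstacle; the only point worth double-checking is the containment $U\setminus Q_t^0\subseteq (U\setminus Q_t^s)\setminus\barP^s_t$, which is immediate from the nesting $Q_t^0\supseteq Q_t^1\supseteq\cdots\supseteq Q_t^{\hbar}=\emptyset$ of the cumulative pruned sets.
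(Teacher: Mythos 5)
Your proof is correct and follows essentially the same route as the paper: Property~\ref{cluster-proA} is Claim~\ref{cla:correctness} specialized to $s=0$, and Properties~\ref{cluster-proB} and~\ref{cluster-proC} aggregate the per-level bounds of Claim~\ref{cla:help} over $s$ using $\sum_s\ell_s\le 3N$ together with the update limit $N\le\phi'\vol_G(U)/\rho$ (your geometric-series domination in Property~\ref{cluster-proB} is just a sharper form of the paper's cruder bound $32\cdot 38^{h}\cdot 3N/\phi'\le\vol_G(U)/10$). The only step to make explicit is the passage from $38^{\hbar}$ to $38^{h}$ in Property~\ref{cluster-proA}: since $\hbar\le h$ and the ratio $\alpha/\phi'$ is unchanged, $(\alpha/38^{\hbar},\phi'/38^{\hbar})$-linkedness implies $(\alpha/38^{h},\phi'/38^{h})$-linkedness, which the paper records with the remark ``using $\hbar\le h$''.
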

\begin{proof}
Part~1 directly follows by applying the above Claim~\ref{cla:correctness} for
$s=0$ and using $\hbar\le h$.
For the remaining parts first observe that
$\sum_{s=1}^\hbar\ell_s=\sum_{s=1}^{\hbar-1}\psi^s+N\le 3N$. For Part~2 we estimate
$\vol_G(Q_t^0)$ by
\begin{equation*}
\vol_G(Q_t^0)\le{\textstyle \sum_s}\vol_G(\barP^s_t)\le {\textstyle
  \sum_s}32\ell_s/\parphi_s\le32\cdot 38^h/\parphi\cdot 3N
\le \vol_G(U)/10\enspace,
\end{equation*}
where the second step uses Property~1 from Claim~\ref{cla:help}, and the last
step uses $N\le\phi'\vol_G(U)/\cA$. This implies $\vol_G(U\setminus Q_t^0)\ge \vol_G(U)/2$.
Part~3 follows because
\begin{equation*}
\begin{split}
\cut_U(Q_t^0)
&=|E_G(Q_t^0,U\setminus Q_t^0)|
=|E_G({\textstyle\bigcup_s}\barP_t^s,U\setminus Q_t^0)|
=\textstyle{\sum_s}|E_G(\barP_t^s,U\setminus Q_t^0)|\\
&\le\textstyle{\sum_s}|E_G(\barP_t^s,U\setminus Q_t^{s+1})|
\le\textstyle{\sum_s}16\ell_s
\le 48N\enspace,
\end{split}
\end{equation*}
where the second inequality is due to Claim~\ref{cla:help} (Part~\ref{hpro:B}).
\end{proof}

\subsection*{\ep and \cp}
The \ep process (ED-process) for maintaining the expander decomposition is a process that
uses a variant of the \mlp process as a sub-routine. We refer to
this variant as a \cp process (CD-process). The ED-process
gets as input a cluster $U$ in a graph $G$, parameters $\paralpha, \parphi$ and
a sequence of updates relevant for $U$. 
It first computes an $(\paralpha,\parphi)$-linked expander decomposition $\U$ of
$U$ using Theorem~\ref{thm:decomp}. For each $U_i\in\U$ with expansion
parameter $\phi_i$ it then starts a CD-process on $U_i$ with parameters
$\paralpha$ and $\phi_i$.

A CD-process is a \mlp process with a slight tweak. Whenever, the
\mlp process (with parameters $\paralpha,\parphi$) as described in
the previous section changes a set $\barP^s$, the CD-process starts an
ED-process on this set (with parameters $\paralpha,\parphi$). 
% 
%the CD-process has an update limit $N=O(1)$ and simply maintains 
%an expander-decomposition by doing a complete recomputation using the algorithm
%from Theorem~\ref{thm:decomp} upon each update.

There is one further complication in the definition of an ED-process, which
concerns the update limits of the CD-processes. An ED-process handles
CD-processes for several clusters. It may happen that one of the CD-processes
on some cluster $U_i$ reaches its update limit $N$---we say the CD-process
\emph{expires}. In this case if another update for the cluster appears
the ED-process does the following: it uses
Theorem~\ref{thm:decomp} on the cluster $U_i$ with parameters $\paralpha, \parphi$
and starts a new CD-process on each generated sub-cluster $U_{ij}$ (with parameters
$\paralpha, \phi_j$). We call this step a \emph{restart} of cluster $U_i$. 

There is one subtle issue about the above definition. The CD-process is
recursive. The non-recursive case happens when $\vol_G(U)< \rho/\phi'$. Then
the CD-process has an update limit $N=\lfloor\phi'\vol_G(U)/\rho\rfloor=0$.
This means any update triggers a restart of the CD-process, which results in
computing an expander decomposition for $U$ from scratch.

\begin{observation}
The parameter $\parphi$ passed to a CD-process or an ED-process on any level of
the recursion is at least $\phi$, where $\phi$ is the parameter for the root ED-process.
\end{observation}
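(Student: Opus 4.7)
The plan is to proceed by induction on the recursion depth, tracking how the parameter $\parphi$ is passed from a parent process to each of its child processes. The key ingredient is Property~2 of \Cref{thm:decomp}, which guarantees that whenever we invoke the static algorithm on a cluster $U$ with parameter $\parphi$, every returned cluster $U_i$ comes with an expansion bound $\phi_i \ge \parphi$.

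First I would set up the induction. The base case is the outermost ED-process, which is invoked with $\parphi = \phi$ by assumption. For the inductive step, suppose every process already spawned in the recursion tree has its parameter at least $\phi$. There are exactly three ways a new process is created, and I would check each:
\begin{enumerate}
\item An ED-process with parameter $\parphi$ invokes \Cref{thm:decomp} on its input cluster and, for every resulting cluster $U_i$ with expansion bound $\phi_i$, starts a CD-process with parameter $\phi_i$. By \Cref{thm:decomp} we have $\phi_i \ge \parphi \ge \phi$.
\item A CD-process with parameter $\parphi$ (that is, the underlying multi-level pruning uses $\parphi$) starts, whenever it changes a snapshot $\barP^s$, an ED-process on $\barP^s$ with the \emph{same} parameter $\parphi$; this inherits $\parphi \ge \phi$ directly.
\item Upon a restart, an ED-process with parameter $\parphi$ re-invokes \Cref{thm:decomp} on the expired cluster with parameter $\parphi$ and creates new CD-processes on the sub-clusters $U_{ij}$ with the corresponding $\phi_{ij}$, and again $\phi_{ij} \ge \parphi \ge \phi$.
\end{enumerate}

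Since these three operations cover every creation of a child process in the recursion, induction on the depth of the recursion tree finishes the proof. There is no real obstacle here; the only point worth double-checking is that Case~2 indeed passes the parameter unchanged (which is explicit in the description of the CD-process) and that the restart in Case~3 uses the parameter $\parphi$ of the parent ED-process rather than the $\phi_i$ of the expired CD-process, both of which are $\ge \phi$ by the induction hypothesis.
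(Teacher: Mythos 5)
Your proof is correct and follows exactly the reasoning the paper intends: the observation is stated there without an explicit proof, since it is immediate from the construction, and your induction over the three ways a child process is spawned (ED-process starting CD-processes via \Cref{thm:decomp}, a CD-process starting an ED-process with unchanged parameters, and a restart) together with the guarantee $\phi_i\ge\phi'$ from the $(\alpha,\phi')$-ED returned by \Cref{thm:decomp} is precisely that argument. The only cosmetic point is that the bound $\phi_i\ge\phi'$ comes from the definition of an $(\alpha,\phi')$-ED rather than from Property~2 of \Cref{thm:decomp} alone, but this does not affect correctness.
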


The following claim means that an ED-process automatically fulfills
Property~\ref{ed:expanding} and Property~\ref{ed:localboundary} of an
$(\alpha,\phi)$-boundary-linked expander decomposition with slack $38^h$.

\begin{claim}
\label{cla:edproperties}
An ED-process with parameters $\alpha,\phi$ maintains a cluster-partition $\U$ s.t.\
\begin{itemize}[label=--,itemsep=0pt]
\item A cluster $U_i\in \U$ is $(\alpha/38^{h},\phi_i/38^{h})$-linked, with $\phi_i\ge\phi$
\item A cluster $U_i\in \U$ fulfills $\out_G(U_i)\le\tO(\phi_i\vol_G(U_i))$.
\end{itemize}
\end{claim}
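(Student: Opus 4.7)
}
The plan is to argue by induction on the recursion depth of the ED-process. At any time, the partition $\U$ maintained by an ED-process on $U$ (with parameters $\alpha,\phi$) decomposes into two types of clusters: (a) the ``main'' unpruned remainder $U_i\setminus Q_t^0$ of an active CD-process started on some cluster $U_i$ (with parameters $\alpha,\phi_i$, where $\phi_i\ge\phi$ came from applying \Cref{thm:decomp}); and (b) clusters that belong to the partition maintained by a recursively-invoked ED-process on some pruned set $\barP^s$ (with parameters $\alpha,\phi$). Type-(b) clusters satisfy both required properties by the induction hypothesis, so the core of the argument is to verify both properties for type-(a) clusters. The base of the recursion (small-volume CD-process, $\vol_G(U)<\rho/\phi'$) is immediate: every update triggers a full restart via \Cref{thm:decomp}, whose Properties~2 and~3 give exactly what is needed.

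For boundary-linkedness of a type-(a) cluster $U_i\setminus Q_t^0$, I would invoke \Cref{cla:cluster-properties}~(\ref{cluster-proA}) directly: the underlying \mlp sub-routine inside the CD-process guarantees that $G[U_i\setminus Q_t^0]^{\alpha/\phi_i}$ is a $(\phi_i/38^{\hbar})$-expander. Since $\hbar\le h$, this gives the required $(\alpha/38^h,\phi_i/38^h)$-linkedness. Note that the parameter $\phi_i$ comes from the invocation of \Cref{thm:decomp} that produced $U_i$ and satisfies $\phi_i\ge\phi$, as required.

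The main obstacle is establishing the \emph{local boundary} property $\out_G(U_i\setminus Q_t^0)\le\tO(\phi_i\vol_G(U_i\setminus Q_t^0))$ for type-(a) clusters, because this bound must survive throughout the entire update sequence the CD-process handles. The plan is to track how $\out_G$ can grow relative to its initial value. Initially, \Cref{thm:decomp} guarantees $\out_G(U_i)\le O(\log^6 m\cdot\phi_i\vol_G(U_i))$. Over at most $N\le\phi_i\vol_G(U_i)/\rho$ updates, $\out_G$ can increase by at most $N$ from edge insertions, plus at most $\cut_U(Q_t^0)\le 48N$ from boundary edges newly exposed via pruning (by \Cref{cla:cluster-properties}~(\ref{cluster-proC})). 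Combining with the volume stability $\vol_G(U_i\setminus Q_t^0)\ge\vol_G(U_i)/2$ from \Cref{cla:cluster-properties}~(\ref{cluster-proB}), one obtains
\[
\out_G(U_i\setminus Q_t^0)\;\le\;\out_G(U_i)+49N\;\le\;\tO(\phi_i\vol_G(U_i))\;\le\;\tO(\phi_i\vol_G(U_i\setminus Q_t^0)),
\]
as desired. The chosen update limit $N$ and the bookkeeping behind $\rho$ are precisely what makes this bound close.

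Finally, I would note that whenever a CD-process \emph{expires} (reaches its update limit), the ED-process invokes \Cref{thm:decomp} afresh on the current cluster; this restart re-establishes Property~3 with a fresh constant and re-starts new CD-processes whose $\phi_j\ge\phi$, after which the induction applies again. Similarly, when a snapshot $\barP^s$ changes, the sub-ED-process spawned on $\barP^s$ is a new invocation handled by the inductive hypothesis. Thus both properties hold uniformly across $\U$ at all times.
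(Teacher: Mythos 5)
Your proof is correct and follows essentially the same route as the paper: both reduce every cluster to the unpruned remainder $U\setminus Q_t^0$ of a CD-process started on an output cluster of \Cref{thm:decomp} with $\phi_i\ge\phi$, then get linkedness from \Cref{cla:cluster-properties}~(\ref{cluster-proA}) and the local-boundary bound from the initial guarantee of \Cref{thm:decomp} plus the update limit, \Cref{cla:cluster-properties}~(\ref{cluster-proC}) for newly exposed cut edges, and \Cref{cla:cluster-properties}~(\ref{cluster-proB}) for volume stability. Your explicit induction over the recursion depth is just a more formal packaging of what the paper states directly about CD-processes "on various levels of the recursion."
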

\begin{proof}
The clusters that are maintained by the ED-process are the sets
$U\setminus Q_t^0$ that are maintained by the various CD-processes on various
levels of the recursion ($U$ being the set on which the process was started).

A CD-process is always started with parameters
$(\alpha,\varphi)$ for a set $U$ that results from the expander-decomposition
algorithm of Theorem~\ref{thm:decomp} (run with parameters $\alpha$ and
$\phi'\ge\phi$). This set is $(\alpha,\varphi)$-linked for $\varphi\ge\phi'$
according to this theorem. Then the first part is a direct consequence of
Claim~\ref{cla:cluster-properties} (Part~\ref{cluster-proA}).

For the second part again observe that a CD-process is started with parameters
$(\alpha,\varphi)$ for a set $U$ that results from the expander-decomposition
algorithm of Theorem~\ref{thm:decomp}. This gives that
$\out_G(U)\le\tO(\varphi\vol_G(U))$ at the start of the CD-process.

The update limit guarantees that at most $O(\varphi\vol_G(U))$ updates are performed.
This means while the CD-process is active it fulfills
$\out_G(U)\le\tO(\varphi\vol_G(U))$.
Claim~\ref{cla:cluster-properties} (Part~\ref{cluster-proC}) guarantees that
the additional edges that are added by the pruning process are
$O(N)=\tO(\varphi\vol_G(U))$, as well. This means
$\out_G(U\setminus Q^0_t)=\tO(\vol_G(U))$. Finally,
Claim~\ref{cla:cluster-properties} (Part~\ref{cluster-proB})
gives that $\vol_G(U\setminus Q^0_t)\ge\vol_G(U)/2$.
\end{proof}

\paragraph{Amortized Update Time of the CD-process.}
In the following we analyze the amortized update time of a CD-process. The cost
for a CD-process on some cluster $U$ with parameters $\paralpha,\parphi$
consists of the following parts:
\begin{enumerate}[A.] 
\item \label{costA} The cost for each algorithm $A_s$, which consists of
      \begin{enumerate}[1.]
          \item the cost for simulating the expander pruning algorithm from
                Theorem~\ref{thm:dynamic_expander_pruning};
          \item the cost for starting an ED-process on each generated set $\barP_t^s$.
      \end{enumerate}
\item \label{costD}The cost for maintaining pointers from vertices in $G$ to the
corresponding cluster vertex in $G_\U$.
\item \label{costB}The cost for performing a restart on the cluster $U$ when the CD-process expires.
\item \label{costC}The recursive cost incurred by CD-processes on lower levels of the
      recursion.
\end{enumerate}

\noindent
\textit{Part~\ref{costA}~}
The amortized cost for $A_s$ to simulate a step of the expander pruning algorithm is
$\tilde{O}((\parphiS)^2)=O((38^{\hbar-s}/\parphi)^2)$ due to Theorem~\ref{thm:dynamic_expander_pruning}.
Summing over all $s$ gives that the amortized simulation cost (i.e.\ cost A.1) over all
algorithms $A_s$ is only $O(38^{2\hbar}/\parphi^2)$.
% 
% 
%\medskip
%\noindent

The cost for starting an ED-process on each generated set $\barP_t^s$ is
dominated by the running time for an expander-decomposition algorithm from Theorem~\ref{thm:decomp}.
This has running time $\tO(\out_G(\barP_t^s)/\parphi^2+\vol_G(\barP_t^s)/\parphi)$.
From Part~\ref{hpro:B} and Part~\ref{hpro:C} of Claim~\ref{cla:help} we get
$\out_G(\barP_t^s)=O(\psi^s/\paralphaS)$ and $\vol_G(\barP_t^s)=O(\psi^s/\parphiS)$.
This gives a total running time of

\begin{equation*}
\tO\Big(\psi^s(\tfrac{1}{\paralphaS\parphi^2}+\tfrac{1}{\parphiS\parphi})\Big)
=\tO\Big(\psi^s38^{\hbar-s}(\tfrac{1}{\paralpha\parphi^2}+\tfrac{1}{\parphi^2})\Big)
=\tO\Big(\psi^s38^{\hbar-s}/(\alpha\parphi^2)\Big)\enspace.
\end{equation*}

We
have to perform this operation at the end of every sub-batch, i.e., we can amortize the cost
against the $\ell_{s-1}=\psi^{s-1}$ updates in the sub-batch. Therefore, the
amortized cost of algorithm $A_s$ to start an ED-process is
$\tO(\psi38^{\hbar-s}/(\alpha\parphi^2))$. Summing this over all $s$ gives
an amortized cost for Part~A.2 of $\tO(\psi38^\hbar/(\alpha\parphi^2))$.

Combining this with the simulation cost gives that the amortized cost for
Part~A is
$\tO(\psi38^{\hbar}/\parphi^2\cdot(38^{\hbar}+1/\alpha))
=\tO(\psi38^{2\hbar}/\parphi^2).
$

\medskip
\noindent
\textit{Part~\ref{costD}~} Whenever we initialize a CD-process we also
initialize a pointer for every vertex in $U$ to point to the cluster node in
$G_\U$ corresponding to the set $U$. Later this pointer is changed for the
nodes that are pruned as these then belong to different clusters. However, the
running time is $O(|U|)$ whenever we intialize a CD-process on some cluster
$U$. As a CD-process is always started on some cluster that results from the
static expander-pruning algorithm of Theorem~\ref{thm:decomp} we can amortize
the cost for maintaining poionters against the cost of the expander-pruning
algorithm

\medskip
\noindent
\textit{Part~\ref{costB}~}
The cost for a restart is dominated by running the expander-decomposition
algorithm from Theorem~\ref{thm:decomp} for the cluster $U$. This is 
$\tO(\out_G(U)/\parphi^2+\vol_G(U)/\parphi)$ due to Theorem~\ref{thm:decomp}.
Observe that a restart is only performed after $\lfloor\phi'\vol_{G'}(U)/\cA\rfloor+1$
updates (the first $\lfloor\phi'\vol_{G'}(U)/\cA\rfloor$ to reach the update
limit and one further update to trigger a restart).  Here,
$G'$ refers to the graph at the time that the CD-process on $U$ was started. 
We use the following claim.
\begin{claim}
\label{cla:superhelpful}
Let $G$ and $G'$ denote the current graph and the graph at the
time that the CD-process was started, respectively. Then 
\begin{itemize}[label=--,itemsep=0pt]
\item$\vol_G(U) = O(\vol_{G'}(U))$
\item$\out_G(U) = \tO(\phi'\vol_{G'}(U))$
\end{itemize}
\end{claim}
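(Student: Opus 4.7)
The plan is to establish both bullets by combining two routine observations: (i)~each update relevant to $U$ perturbs $\vol_G(U)$ and $\out_G(U)$ only by a constant, and (ii)~the CD-process's update limit $N \le \phi'\vol_{G'}(U)/\rho$ is much smaller than $\vol_{G'}(U)$, so these perturbations are negligible compared to the initial values inherited from the static expander decomposition guarantee.

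First I would bound the number of relevant updates. Since a restart is triggered only after at most $N+1$ updates relevant to $U$, and $N \le \phi'\vol_{G'}(U)/\rho$ with $\rho = 38^h \psi/\alpha$ large, the total number of updates seen so far is at most $\phi'\vol_{G'}(U)/\rho + 1$. Each such update is an edge insertion or deletion with at least one endpoint in $U$, and therefore changes $\vol_G(U)$ by at most $2$ (it either adds or removes one edge incident to $U$, possibly with both endpoints in $U$), and changes $\out_G(U)$ by at most $1$ (only if the other endpoint lies outside $U$).

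For the first bullet, this immediately gives $|\vol_G(U) - \vol_{G'}(U)| \le 2(N+1) \le O(\phi'\vol_{G'}(U)/\rho) \le \vol_{G'}(U)/2$, where the last inequality uses that $\phi'/\rho \ll 1$ for our parameter choices. Hence $\vol_G(U) = \Theta(\vol_{G'}(U))$, which in particular gives the desired $O(\vol_{G'}(U))$ upper bound. For the second bullet, I would invoke the fact that the CD-process is always initialized on a cluster $U$ produced by the static expander-decomposition algorithm of Theorem~\ref{thm:decomp} with parameters $(\alpha,\phi')$; property~3 of that theorem guarantees $\out_{G'}(U) \le O(\log^6 m)\cdot \phi'\vol_{G'}(U) = \tilde{O}(\phi'\vol_{G'}(U))$. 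Combining with the bound on the number of updates, $\out_G(U) \le \out_{G'}(U) + (N+1) \le \tilde{O}(\phi'\vol_{G'}(U)) + \phi'\vol_{G'}(U)/\rho + 1 = \tilde{O}(\phi'\vol_{G'}(U))$.

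There is no real obstacle here; the only point requiring a small amount of care is to justify that updates to $G$ that are \emph{not} relevant to $U$ do not affect $\vol_G(U)$ or $\out_G(U)$ at all (both quantities depend only on edges incident to $U$), so only the $\le N+1$ relevant updates need to be counted against the initial values measured in $G'$. Once that observation is made, both bounds follow by a direct arithmetic comparison of $\phi'\vol_{G'}(U)/\rho$ with the initial values $\vol_{G'}(U)$ and $\tilde{O}(\phi'\vol_{G'}(U))$, respectively.
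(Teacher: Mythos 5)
Your proposal is correct and follows essentially the same argument as the paper: bound the at most $N+1\le \phi'\vol_{G'}(U)/\rho+1$ relevant updates, note each changes $\vol_G(U)$ by at most $2$ (and $\out_G(U)$ by at most a constant), and use Property~3 of the boundary-linked expander decomposition (Theorem~\ref{thm:decomp}) to get $\out_{G'}(U)=\tO(\phi'\vol_{G'}(U))$ at the start of the CD-process. No gaps; the reasoning matches the paper's proof of Claim~\ref{cla:superhelpful}.
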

\begin{proof}
We have
%\begin{equation*}
%\begin{split}
$\vol_G(U)
\le \vol_{G'}(U)+2\parphi\vol_{G'}(U)/\rho=O(\vol_{G'}(U))$, %\enspace,
%\end{split}
%\end{equation*}
because an update can increase the volume by at most 2. Further, we have
%\begin{equation*}
%\begin{split}
$\out_G(U)
\le \out_{G'}(U)+2\parphi\vol_{G'}(U)/\rho
=\tO(\phi'\vol_{G'}(U))$, %\enspace,
%\end{split}
%\end{equation*}
because Property~\ref{ed:localboundary} of an expander
decomposition gives $\out_G'(U)\le\tO(\parphi\vol_{G'}(U))$.
\end{proof}

\noindent
Using these inequalities we get that the cost for the restart is at most
%\begin{equation*}
$\tO(\out_G(U)/\parphi^2+\vol_G(U)/\parphi)
=\tO(\vol_{G'}(U)/\phi')$. %\enspace.
%\end{equation*}
We can amortize this cost against $N+1$ 
updates. This gives an amortized
cost of $\tO(\rho/\parphi^2)$ for the restarts.

\medskip
\noindent
\textit{Part~\ref{costC}~}
The following claim shows that incorporating the cost for the recursion only
increases the cost by a logarithmic factor.
\begin{claim}
An update that is relevant for a CD-process {\cal C} on cluster $U$ can be relevant
for at most $O(\log(\vol_G(U)))$ sub CD-processes of {\cal C}.
\end{claim}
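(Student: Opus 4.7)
My plan is to reduce the claim to two structural facts about the hierarchy of CD-processes spawned by $\mathcal C$, and then to use a volume-halving argument along chains. First, I will establish that at any point in time, the working clusters of the currently-active (sub-)CD-processes form a laminar family. This follows by induction on the recursion depth directly from the construction: a CD-process on a cluster $W$ only spawns ED-processes on its own pairwise-disjoint pruned sets $\bar{P}^1,\dots,\bar{P}^{\hbar}\subseteq W$, and each such ED-process further partitions $\bar{P}^s$ into disjoint clusters on which it starts new CD-processes. Hence every working cluster of a sub-CD-process is contained in $W$, and sibling sub-CD-processes have disjoint working clusters. Consequently, the set of sub-CD-processes of $\mathcal C$ whose working cluster contains a fixed vertex $v$ forms a chain in the laminar family.

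Second, I will argue that volumes shrink by a constant factor when descending one level. If $C'$ is a direct sub-CD-process of $C$ with parent working cluster $W$, then $C'$'s working cluster $W'$ lies inside some pruned set $\bar{P}^s$ maintained by $C$, hence inside $Q^0=\bigcup_s \bar{P}^s$. Claim~\ref{cla:cluster-properties}~(Part~\ref{cluster-proB}) applied to $C$ gives $\vol_G(W\setminus Q^0)\ge \tfrac{1}{2}\vol_G(W)$, and therefore $\vol_G(W')\le\vol_G(Q^0)\le\tfrac{1}{2}\vol_G(W)$. Strictly speaking this inequality is stated relative to the volumes at the moment $C$ was started, while $C'$ was started at a later moment; to compare current volumes across the chain I will chain Claim~\ref{cla:superhelpful}, which ensures that the volume of each working cluster only changes by a constant factor during the lifetime of the corresponding CD-process. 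Absorbing this constant slack gives $\vol_G(W')\le c\,\vol_G(W)$ for some fixed $c<1$, where both volumes are measured in the current graph $G$.

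Combining these two facts, fix an edge update $e=(x,y)$ relevant to $\mathcal C$. Any sub-CD-process of $\mathcal C$ relevant to this update must contain $x$ or $y$ in its working cluster, so by laminarity these processes lie on at most two chains $U=W_0\supsetneq W_1\supsetneq\dots\supsetneq W_k$, one through $x$ and one through $y$. By the volume-halving fact, $\vol_G(W_k)\le c^{k}\vol_G(U)$. Since $x$ or $y$ is incident to the updated edge, it has positive degree in $G$ (measured consistently either just before or just after the update), so $\vol_G(W_k)\ge 1$, which forces $k=O(\log\vol_G(U))$. Summing over the two chains yields the desired bound of $O(\log\vol_G(U))$ relevant sub-CD-processes.

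\textbf{Main obstacle.} The main technical nuisance is the time mismatch mentioned above: Claim~\ref{cla:cluster-properties} is stated in terms of the graph at the moment each CD-process starts, but the chain argument needs the halving relation in a single common graph. The hard part of the writeup will be carefully chaining Claim~\ref{cla:superhelpful} along the hierarchy so that the constant slack in each step does not accumulate into something worse than a fixed $c<1$; once this is done, the rest of the argument is immediate.
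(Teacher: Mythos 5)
Your proposal is correct and follows essentially the same route as the paper: laminarity of the working clusters, a per-level constant-factor volume drop (the paper takes it directly from \Cref{cla:help} and Part~\ref{correctB} of \Cref{cla:correctness}, you from the derived Part~\ref{cluster-proB} of \Cref{cla:cluster-properties}), and the fact that one edge update lies on at most two chains of the laminar family, giving $O(\log\vol_G(U))$ relevant sub-CD-processes. The time-mismatch issue you single out as the main obstacle is only a minor technicality that the paper silently elides: the update limits ($N\le\phi'\vol_{G'}(U)/\rho$, cf.~\Cref{cla:superhelpful}) make all volume changes negligible relative to each cluster's volume, so each step retains a fixed halving constant and no slack accumulates along the chain.
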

\begin{proof}
At any time the set of subsets on which a CD-process is running forms a laminar
family. In addition suppose we have a CD-process running on subset $U$. A sub
CD-process will  be startet on some subset of a set $\barP_t^s$ maintained by
the CD-process (recall that the CD-process starts an ED-process on $\barP_t^s$,
which in turn partitions the set and then starts a CD-process on each part of
the partition).

According to Claim~\ref{cla:help} we have
$\vol_G(\barP_t^s)\le 32\ell_s/\parphiS\le 32\vol_G(U\setminus
Q_t^s)/120\le\vol_G(U)/2$, where the second Step uses Part~\ref{correctB} of
Claim~\ref{cla:correctness}. This implies that the height of the recursion is
only $\log(\vol_G(U))$.

An edge upate is relevant for a CD-process on set $U$ if at least one end-point of the edge is
contained in $U$. This can happen for at most $2\log(\vol_G(U))$ CD-processes.
\end{proof}

\medskip
\noindent
We get the following lemma.

\begin{lemma}
\label{lem:cdcost}
The amortized time for performing an update operation with the CD-process is
$\tO(\psi38^{2h}/\phi^2)$.
\end{lemma}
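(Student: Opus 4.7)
The plan is to assemble the bound by adding up the four cost contributions (A)--(D) that were identified above and then account for the recursion depth of the CD-process. The hope is that the three non-recursive contributions are each dominated by $\tO(\psi \cdot 38^{2h}/\phi^2)$, so that after multiplying by the logarithmic recursion depth the final answer is still $\tO(\psi\cdot 38^{2h}/\phi^2)$.

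First I would collect the non-recursive parts. For Part~A we already argued that each algorithm $A_s$ costs $\tO((38^{h-s}/\phi')^2)$ amortized for the simulation of \Cref{thm:dynamic_expander_pruning}, and $\tO(\psi\cdot 38^{h-s}/(\alpha(\phi')^2))$ amortized for launching a new ED-process at the end of every sub-batch. Summing these geometric series over $s\in\{1,\ldots,\hbar\}$ and using $\hbar\le h$ together with $\phi'\ge\phi$ and $\alpha\ge 38^{-h}$ gives the common bound $\tO(\psi\cdot 38^{2h}/\phi^2)$. Part~D (the pointer maintenance) is initialized whenever a CD-process is started on a cluster $U$ at cost $O(|U|)$; since such a start always comes from an invocation of \Cref{thm:decomp} whose running time is already $\Omega(|U|)$, this can be absorbed into Part~A without changing the bound.

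Next I would handle Part~B (restart). By Claim~\ref{cla:superhelpful} the cost of a restart is $\tO(\vol_{G'}(U)/\phi')$, which when amortized over the $N+1=\Omega(\phi'\vol_{G'}(U)/\rho)$ updates that preceded it gives $\tO(\rho/(\phi')^2)=\tO(38^{h}\psi/(\alpha\phi^2))$. Since $1/\alpha=\polylog(\hatn)$ while $38^h=2^{\Theta(\log^{1/2}\hatn)}$, we have $38^h/\alpha=\tO(38^{2h})$, so this too is $\tO(\psi\cdot 38^{2h}/\phi^2)$.

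Finally, for Part~C I would appeal to the recursion-depth claim already established: an edge update is relevant for at most $O(\log\vol_G(U))=O(\log\hatn)$ CD-processes on the laminar chain of sub-clusters containing its endpoints, because each sub CD-process is started on a cluster whose volume has dropped by at least a factor of $2$ (this halving is what Claim~\ref{cla:help} together with Part~\ref{correctB} of Claim~\ref{cla:correctness} provides). Therefore the total amortized time per update, summed over the whole recursion, is at most $O(\log\hatn)$ times the per-level bound, which is still $\tO(\psi\cdot 38^{2h}/\phi^2)$. I do not expect a real obstacle here: all the quantitative work has been carried out in the preceding paragraphs, and what remains is essentially bookkeeping; the only subtlety is to make sure that when one instantiates the per-level parameters $\alpha,\phi'$ along the recursion one still has $\phi'\ge\phi$ (observed just before Claim~\ref{cla:edproperties}), so that the bound is uniform across levels.
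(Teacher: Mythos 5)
Your proposal is correct and follows essentially the same route as the paper: the paper's proof of this lemma is precisely the bookkeeping step of combining the per-part costs (A: simulation plus ED-process starts, B: restarts amortized over the $N+1$ updates, C: the $O(\log\vol_G(U))$ recursion depth from the laminar/volume-halving claim, D: pointers absorbed into the decomposition cost) using $\phi'\ge\phi$ and $\hbar\le h$, exactly as you do. Your component bounds and the final combination match the paper's, so there is nothing to add.
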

\begin{proof}
The lemma simply follows by combining the costs from all parts and using
$\phi'\ge\phi$ and $\hbar\le h$.
%It only remains
%to also bound the update cost of a CD-process on a cluster of volume at most
%$\rho/\phi$, i.e., if we are in the non-recursive case. Then the cost for
%performing the expander decomposition according to
%Theorem~\ref{thm:decomp} is only $\tO(\rho/\phi^2)=\tO(\psi38^{2h}/\phi^2)$. 
\end{proof}

\begin{lemma}
\label{lem:edtime}
An ED-process with parameters $\alpha,\phi$ started
on a graph $G$ with $m$ edges has an
amortized update time of $\tO(\psi38^{2h}/\phi^2)$ and a pre-processing
time of $\tO(m/\phi)$.
\end{lemma}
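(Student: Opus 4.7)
The plan is to derive this lemma by stacking the two pieces that we already have in hand: the static expander-decomposition algorithm of \Cref{thm:decomp} for the preprocessing, and the per-update guarantee of \Cref{lem:cdcost} for the CD-processes that are spawned during the run. The whole ED-process has exactly the shape prescribed in its definition: first we compute a static $(\alpha,\phi)$-ED of the initial graph $G$, and then we maintain a CD-process on each resulting cluster while the sequence of updates arrives.

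First I would bound the preprocessing cost. The ED-process is started on the trivial cluster $U=V(G)$, so $\out_G(U)=0$ and $\vol_G(U)=2m$. Plugging this into the running-time guarantee of \Cref{thm:decomp} immediately yields a preprocessing cost of $\tO(0/\phi^2 + m/\phi)=\tO(m/\phi)$, which matches the claim. Initializing a CD-process on each returned cluster $U_i$ only requires setting up the pointer $v\mapsto u_i\in V(G_\U)$ for every $v\in U_i$, so the initialization cost of all CD-processes together is $O(\vol_G(V))$, which is dominated by the expander-decomposition time.

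Next I would bound the amortized update time. An edge update affects at most two vertices, hence it is relevant to at most two of the top-level clusters $U_i$ of the initial partition $\U$, and therefore gets dispatched to at most two top-level CD-processes. By \Cref{lem:cdcost} the amortized update time of each such CD-process is $\tO(\psi\cdot 38^{2h}/\phi^2)$; summing over the constant number of affected top-level CD-processes preserves this bound. The key point is that the recursive cost incurred by sub-CD-processes and the cost of periodic restarts (when a CD-process expires and \Cref{thm:decomp} is reinvoked on its cluster) are already internalized in the amortized bound of \Cref{lem:cdcost}, so we do not need to reanalyze them here.

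The only subtlety I anticipate is that \Cref{lem:cdcost} is stated under the preconditions of the CD-process (namely that its input cluster was $(\alpha,\phi_i)$-linked with $\phi_i\ge\phi$ at the moment the process was started). I would confirm that each top-level CD-process indeed satisfies this: the cluster $U_i$ it is started on is one of the parts returned by \Cref{thm:decomp} with corresponding bound $\phi_i\ge\phi$, so Property~2 of \Cref{thm:decomp} supplies the required $(\alpha,\phi_i)$-linkedness. With this check in place, combining the two ingredients gives the stated amortized update time $\tO(\psi\cdot 38^{2h}/\phi^2)$ and preprocessing time $\tO(m/\phi)$, completing the lemma.
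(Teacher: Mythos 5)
Your proposal is correct and follows essentially the same route as the paper: the preprocessing bound comes from invoking \Cref{thm:decomp} on the whole graph, and the update bound comes from observing that each update is relevant to at most two top-level CD-processes and applying \Cref{lem:cdcost} (whose amortization already absorbs restarts and recursive sub-CD-processes). Your extra check that each top-level cluster satisfies the $(\alpha,\phi_i)$-linkedness precondition with $\phi_i\ge\phi$ is a sensible addition but not a deviation from the paper's argument.
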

\begin{proof}
The ED-process triggers an update for at most 2 CD-processes on the top level.
Each of these incurs amortized cost $\tO(\psi38^{2\hbar}/\parphi^2)$ with
$\phi'\ge\phi$ according to Lemma~\ref{lem:cdcost}. The preprocessing consists
of executing the algorithm of Theorem~\ref{thm:decomp}, which has a running time of
$\tO(m/\phi)$.
\end{proof}

\paragraph{Amortized Recourse of the ED-process.}
In the following we derive a bound on the recourse generated by the root ED-process
running on the graph $G$.
We have to analyze how many edge insertions or deletions are generated for the
contracted graph $G_\U$, where $\cal U$ is the decomposition maintained by
the ED-process. Since we care about the amortized recourse we can focus on edge
insertions and amortize the deletions against the insertions at a loss of a
factor of $2$.

The partition maintained by the ED-process changes whenever a CD-process in
the recursion hierarchy changes one of its sets $\barP_t^s$. Fix a CD-process
${\cal C}$
on some subset $U$ and assume there is an update relevant for $U$. 

Let $\bar{s}$ denote the unique level for which the current time-step ends the
current sub-batch of algorithm $A_{\bar{s}}$ without also ending the current
batch. All sets $\barP_t^{s}$ with $s<\bar{s}$ will be reset to $\emptyset$
because the batch ends and all sets $\barP_t^{s}$ with $s>\bar{s}$ will not
change. We first issue edge-deletions for all edges that are at the border of
some partition of the ED-process inside $\bigcup_{s\le\bar{s}}\barP^s_t$ (i.e.,
they contain exactly one vertex of a sub-partition) and also issue vertex
deletions for the corresponding vertices of $G_\U$ (we do not have to count
these deletions because of the amortization described above).

Then we run the expander decomposition algorithm on the new set $P_t^{\bar{s}}$. For
each edge in $G$ that afterwards is at the border of a subset in the partition of $P_t^{\bar{s}}$
we issue an edge-insertion. This gives that the total number of insertions is
$\sum_i\out_G(U_i)$, where $U_1,\dots,U_k$ are the subsets of the partition.
According to Theorem~\ref{thm:decomp} this is at most
\begin{equation*}
\begin{split}
\tO\big(\out_G(\barP^s_t)+\phi'\vol_G(P^s_t)\big)
\le \tO\big(\psi^s/\paralphaS+\phi'\cdot\psi^s/\parphiS\big)
\le \tO(38^{\hbar-s}\psi^s/\alpha)\enspace.
\end{split}
\end{equation*}
This means that the algorithm $A_s$ of CD-process ${\cal C}$ generates on average
$\tO(38^{\hbar-s}\psi/\alpha)$ updates for $G_\U$ per relevant update for
${\cal C}$. This holds because the above updates for $G_\U$ are only incurred
after $\ell_{s-1}=\psi^{s-1}$
relevant updates for ${\cal C}$. Summing this over all $s$ gives that the
amortized recourse for $\cal C$ due to algorithms $A_s$ is only
$\tO(38^{\hbar}\psi/\alpha)=\tO(\rho)$.

It remains to derive a bound on the recourse that is generated by $\cal C$ when
a restart is triggered. We delete all edges in $G_\U$ that are incident to a current
sub-cluster and we also delete the vertices corresponding to these
sub-clusters. Then we repartition $U$ and start a CD-process on each cluster.
We have to insert all edges that are at the border of a sub-set of the
partition, i.e., $\sum_i\out_G(U_i)$ many edges, where $U_1,\dots,U_k$ denote
the subsets in the partition. We have
\begin{equation*}
{\textstyle\sum_i}\out_G(U_i) \le \tO(\out_G(U)+\phi'\vol_G(U))
\le 
\tO(\phi'\vol_G(U))\enspace,
\end{equation*}
where the first inequality is due to Theorem~\ref{thm:decomp} and the second
due to Claim~\ref{cla:superhelpful}. Since, we can amortize these costs
over $N+1=\lfloor\phi'\vol_{G'}(U)/\cA\rfloor+1$ many relevant updates we obtain
that the amortized recourse due to restarts is only $\tO(\rho)$.

Since a single update is relevant for at most at most $O(\log(\vol(G)))$
CD-processes we obtain the following lemma.

\begin{lemma}
\label{lem:bounded-recourse}
The amortized recourse of the ED-process is $\tO(38^{\hbar}\psi/\alpha)=\tO(\rho)$.
\end{lemma}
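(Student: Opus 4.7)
The plan is to bound the amortized recourse incurred by the root ED-process by decomposing it into two sources: (i) recourse generated by a single CD-process when one of its pruned sets $\barP_t^s$ changes, and (ii) recourse generated when a CD-process expires and triggers a restart. Because each relevant update cascades through at most $O(\log \vol_G(U))$ nested CD-processes (the recursion depth established in the running time analysis), it suffices to bound the per-process amortized recourse by $\tO(\rho)$ and absorb the logarithmic overhead into the $\tO$ notation. I would also absorb edge deletions against edge insertions up to a factor of $2$: whenever a CD-process alters its partition, we issue corresponding vertex/edge deletions for the old cluster boundaries in $G_\U$, so the total number of deletions is dominated by the number of insertions ever made.

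First I would handle source (i). Fix a CD-process $\mathcal{C}$ on a cluster $U$ with parameters $\alpha,\phi'$, and consider the unique level $\bar s$ at which the current step ends a sub-batch of $A_{\bar s}$ (all finer levels reset to $\emptyset$, all coarser levels are unchanged). The new $P^{\bar s}_t$ is repartitioned via Theorem~\ref{thm:decomp}, which introduces at most
\[
\tO\bigl(\out_G(\barP^{\bar s}_t) + \phi'\vol_G(\barP^{\bar s}_t)\bigr)
\]
new boundary edges. Plugging in Parts~\ref{hpro:A} and \ref{hpro:C} of Claim~\ref{cla:help}, together with $\paralphaS = \alpha/38^{\hbar-\bar s}$ and $\parphiS = \phi'/38^{\hbar-\bar s}$, the number of new boundary edges is $\tO(38^{\hbar-\bar s}\psi^{\bar s}/\alpha)$. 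Since the set $\barP^{\bar s}_t$ can change only once per $\ell_{\bar s -1} = \psi^{\bar s - 1}$ relevant updates, the amortized recourse contribution of algorithm $A_{\bar s}$ is $\tO(38^{\hbar-\bar s}\psi/\alpha)$; summing this geometric series over $\bar s \in \{1,\dots,\hbar\}$ gives $\tO(38^{\hbar}\psi/\alpha) = \tO(\rho)$.

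Next I would handle source (ii). When $\mathcal{C}$ expires and is restarted, we delete all currently maintained sub-cluster boundary edges (absorbed into insertions) and re-run Theorem~\ref{thm:decomp} on $U$ at the current time. The number of new boundary edges is at most $\tO(\out_G(U) + \phi'\vol_G(U))$. Applying Claim~\ref{cla:superhelpful} (which transfers the bounds $\out_{G'}(U) = \tO(\phi'\vol_{G'}(U))$ from Property~\ref{ed:localboundary} at initialization to the current graph $G$), this is $\tO(\phi'\vol_{G'}(U))$. Amortizing against the $N+1 = \Theta(\phi'\vol_{G'}(U)/\rho)$ updates the process handled before expiring yields an amortized contribution of $\tO(\rho)$.

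Finally, combining sources (i) and (ii), the amortized recourse charged to $\mathcal{C}$ per relevant update is $\tO(\rho)$. Since a single update to $G$ is relevant for at most $O(\log \vol(G))$ CD-processes along the recursion chain of the laminar family maintained by the ED-process, the total amortized recourse is $\tO(\rho) = \tO(38^{\hbar}\psi/\alpha)$. The main technical obstacle—bounding the volume and boundary of each $\barP^s_t$ in terms of the current ambient graph $G$ rather than the graph at process initialization—has already been handled by Claim~\ref{cla:help} and Claim~\ref{cla:superhelpful}, so the proof reduces to the clean summation above.
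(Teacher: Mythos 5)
Your proposal is correct and follows essentially the same route as the paper's proof: bound the insertions from repartitioning a changed $\barP^{\bar s}_t$ via Theorem~\ref{thm:decomp} and Claim~\ref{cla:help}, amortize over the $\psi^{\bar s-1}$ updates of a sub-batch, handle restarts via Claim~\ref{cla:superhelpful} amortized over the $N+1$ updates before expiry, and multiply by the $O(\log\vol(G))$ recursion depth. The only nitpick is that bounding $\out_G(\barP^{\bar s}_t)$ needs Part~\ref{hpro:D} (equivalently Parts~\ref{hpro:B} and~\ref{hpro:C} together) of Claim~\ref{cla:help} rather than Part~\ref{hpro:C} alone, which does not affect the resulting $\tO(\psi^{\bar s}/\paralphaS)$ bound.
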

%\begin{proof}
%It only remains to also derive a bound on the recourse of a CD-process in the
%non-recursive case. In this case the volume of the cluster is $O(\rho/\phi)$
%and consequently performing the expander decomposition algorithm from
%Theorem~\ref{thm:decomp} results in a decomposition for which the contracted
%graph has only $\tO(\phi\cdot\rho/\phi)=\tO(\rho)$ edges. Hence, in the
%non-recursive case we have to insert $\tO(\rho)$ edges for ever update of $G$,
%i.e., the recourse is $\tO(\rho)$.
%\end{proof}

\paragraph{Total boundary for the partition of the ED-process.}
So far we have only shown that the ED-process maintains a partition that
fulfills Property~\ref{ed:expanding} and Property~\ref{ed:localboundary} for a boundary-linked partition.
It remains to show that the total number of edges between subsets in the
partition fulfills Property~\ref{ed:globalboundary}.

\begin{claim}\label{cla:totalboundary}
An ED-process with parameters $\alpha,\phi$ on a graph $G$ with $m$ edges
that receives $Z=O(\phi m/\rho)$ updates
maintains a cluster-partition $\U$ such that
\begin{equation*}
{\textstyle\sum_{U_i\in\U}}\out_G(U_i)\le \tO(\phi m)\enspace.
\end{equation*}
\end{claim}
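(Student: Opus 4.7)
The plan is to combine the preprocessing guarantee of \Cref{thm:decomp} with the amortized recourse bound from \Cref{lem:bounded-recourse}. Since every inter-cluster edge of the partition $\U$ appears as an edge of $G_\U$, we have the identity $\sum_{U_i \in \U} \out_G(U_i) = 2|E(G_\U)|$, so it suffices to bound $|E(G_\U)|$ at all times.

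First, right after the preprocessing step, the root ED-process runs the static algorithm of \Cref{thm:decomp} on the whole graph $G$ (with $U = V$, $\out_G(V)=0$, $\vol_G(V)=2m$). By Property~1 of that theorem, the resulting decomposition satisfies $\sum_{U_i\in\U}\out_G(U_i) \le \tO(\phi m)$; equivalently $|E(G_\U)| \le \tO(\phi m)$ immediately after preprocessing.

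Next, I would track how $|E(G_\U)|$ evolves across the $Z=O(\phi m/\rho)$ updates. Each update to $G$ has two possible effects on $G_\U$: (i) it can directly add or remove at most one inter-cluster edge (contributing at most $1$ to $|E(G_\U)|$), and (ii) it may trigger repartitioning inside some CD-process, which in turn causes edge insertions/deletions into $G_\U$. Effect (i) contributes at most $Z = O(\phi m/\rho) = o(\phi m)$ total edge insertions. For effect (ii), \Cref{lem:bounded-recourse} gives that the amortized recourse on $G_\U$ per update is $\tO(\rho)$, so the cumulative number of insertions across the entire sequence is at most $Z \cdot \tO(\rho) = O(\phi m/\rho) \cdot \tO(\rho) = \tO(\phi m)$.

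Combining the initial value and the cumulative insertion budget, and using that deletions only decrease $|E(G_\U)|$, we obtain that at every point during the algorithm
\begin{equation*}
|E(G_\U)| \le \tO(\phi m) + \tO(\phi m) + o(\phi m) = \tO(\phi m),
\end{equation*}
which yields $\sum_{U_i\in\U}\out_G(U_i) \le \tO(\phi m)$ as required. The only subtle point is that \Cref{lem:bounded-recourse} bounds recourse amortized over the whole sequence rather than per update, but since we are bounding a cumulative quantity this is exactly what is needed.
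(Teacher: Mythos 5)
Your argument is correct and follows essentially the same route as the paper: bound the boundary right after preprocessing by Property~1 of \Cref{thm:decomp}, then observe that after $Z$ updates the number of edges of $G_\U$ can have grown by at most the cumulative recourse $Z\cdot\tO(\rho)=\tO(\phi m)$ from \Cref{lem:bounded-recourse}. Your extra bookkeeping (the identity with $2|E(G_\U)|$, the direct per-update edge, and the remark about amortization being cumulative) is harmless and matches the paper's reasoning.
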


%\begin{lemma}
%Let $U_1,\dots,U_k$ denote the subsets in the partition maintained by the
%root ED-process started with parameters $\alpha,\phi$ on graph $G$ with $m$
%edges. Then
%\begin{equation*}
% 
%\end{equation*}
%\end{lemma}
\begin{proof}
First the ED-process performs a boundary-linked expander decomposition and then
starts a CD-process on each cluster. At this point the number of edges between
sub-clusters (which equals the number of edges in $G_\U$) is at most $\tO(\phi m)$ according to 
Theorem~\ref{thm:decomp}. Because of the bounded recourse from Lemma~\ref{lem:bounded-recourse}
the total number of edges between vertices in $G_\U$ can be at most
%\begin{equation*}
$\tO(\phi m)+Z\cdot \tO(\rho)=\tO(\phi m)$, %\enspace,
%\end{equation*}
after $Z$ updates.
\end{proof}

\noindent
We are now ready to prove the main lemma.
\main*
\begin{proof}
The first property follows from Lemma~\ref{lem:edtime}. The second and third
property follow from the fact that in the pre-processing we perform the
expander decomposition algorithm from Theorem~\ref{thm:decomp}, which has a
running time of $\tO(m/\phi)$ and generates an expander decomposition that
fulfills $\sum_i\out(U_i)\le \tO(\phi m)$. The bound on the amortized recourse
follows from Lemma~\ref{lem:bounded-recourse}.

Claim~\ref{cla:edproperties} shows that the maintained partition fulfills
Property~\ref{ed:expanding} and Property~\ref{ed:localboundary} of an
$(\alpha,\phi)$-linked expander decomposition with slack $38^h$.
Finally, Claim~\ref{cla:totalboundary} shows that it also fulfills
Property~\ref{ed:globalboundary}.
\end{proof}

%%% Local Variables:
%%% mode: latex
%%% TeX-master: "main_RaeckeTree.tex"
%%% End:

\section{Derandomization and Deamortization}
\label{sec:deterministic-EH}
In this section, we show that our algorithm in Theorem~\ref{thm:dynamic_R_tree}, which maintains an $(n^{o(1)},n^{o(1)})$-expander hierarchy of a dynamic graph on $n$ vertices in $n^{o(1)}$ time can be de-randomized and de-amortized easily using the results in~\cite{ChuzhoyGLNPS19det} and~\cite{NanongkaiSW17}.
Throughout the section, we use $\bar O(\cdot)$ to hide $(\log\log n)^{O(1)}$ factors. The main result of this section can be summarized as the following theorem.
\begin{thm}
\label{thm:derand_deamort_main}
There is a deterministic algorithm, that, given a fully dynamic unweighted graph $G$ on $n$ vertices, maintains a data structure representing a $(2^{-\bar O(\log^{2/3}n)},2^{-O(\log^{5/6}n)})$-expander hierarchy with slack $2^{\bar O(\log^{1/2}n)}$ of $G$ in $2^{-O(\log^{5/6}n)}$ worst-case update time and the data structure supports the following query: given a vertex $u\in V(G)$, return a leaf-to-root path of $u$ in the hierarchy in $O(\log^{1/6} n)$ time.
\end{thm}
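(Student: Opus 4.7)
The plan is to follow the two-pronged approach sketched at the end of Section~\ref{sec:overview}: derandomize the only probabilistic ingredient in our pipeline, and then deamortize by the standard "building in the background" trick, exploiting the fact that the only amortized piece (expander pruning, Theorem~\ref{thm:dynamic_expander_pruning}) is invoked only through the \mlp process, which naturally admits staggered parallel executions.

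For the derandomization step, I would replace the randomized cut-matching step of Lemma~\ref{thm:cut-matching} with the deterministic balanced cut subroutine of Chuzhoy et al.~\cite{ChuzhoyGLNPS19det}. Plugging this into Lemma~\ref{thm:cut-matching-trimming} and then into Theorem~\ref{thm:decomp} yields a deterministic static $(\alpha,\phi)$-ED algorithm, but with a worse approximation factor of $2^{\bar O(\log^{2/3}n)}$ in place of $\polylog n$; consequently the constraint $\alpha\le 1/(4\gamma_{\textsc{cmp}}\log m)$ forces $\alpha = 2^{-\bar O(\log^{2/3}n)}$. Feeding this weaker $\alpha$ through the analysis of Lemma~\ref{lem:fully_dynamic_expander_decomposition} (where the amortized recourse becomes $\tO(\rho) = \tO(38^h\psi/\alpha)$) and through the cascading update bound of Theorem~\ref{thm:dynamic_R_tree}, the rebalancing of $\phi$ against the new $\alpha$ and the depth $t=O(\log^{1/6}n)$ gives $\phi = 2^{-O(\log^{5/6}n)}$, slack $2^{\bar O(\log^{1/2}n)}$, and amortized update time $2^{O(\log^{5/6}n)}$.

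For the deamortization, I would apply "building in the background" at two places. First, inside the \mlp process (Section~\ref{sec:dynamic_expander_decomposition}), each algorithm $A_s$ restarts a fresh run of Theorem~\ref{thm:dynamic_expander_pruning} at the beginning of every batch of length $\ell_s$. The amortization inside the pruning routine is intrinsic (Theorem~\ref{thm:dynamic_expander_pruning} bounds the \emph{total} work over the batch), but since we can predict the batch boundaries in advance, I would run three staggered copies of each $A_s$ offset by $\ell_s/3$ updates, so that while one copy is "active", another is initializing itself on the current graph and a third is catching up. Each per-update slice of work from the active copy takes only $\tO(1/\phi^2)$ time in the worst case; the other two copies do at most the same amount of work per update to stay synchronized. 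Second, the \emph{restart} step of the CD-process (invoking the static Theorem~\ref{thm:decomp} on a cluster $U$ whose CD-process has expired) takes time $\tO(\vol_G(U)/\phi)$, which we spread evenly over the $\Theta(\phi\vol_G(U)/\rho)$ updates of the preceding batch using the same staggered-copy idiom. Because each level of the hierarchy is handled independently and only $O(\log^{1/6}n)$ levels exist, the worst-case per-update time remains $2^{O(\log^{5/6}n)}$.

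Since three parallel instances coexist at each level, the algorithm cannot cheaply output the entire hierarchy after every update; instead it maintains, for every vertex $u$ of $G$, a pointer chain to its ancestors in the currently "active" copy at each level. A leaf-to-root path query for $u$ is answered by following $O(\log^{1/6}n)$ pointers, giving the claimed $O(\log^{1/6}n)$ query time. The main obstacle I anticipate is verifying that the handover between the "active" copy and its successor at each level preserves the boundary-linked expander decomposition properties globally: when the active copy on level $i$ switches, the induced updates propagated to level $i+1$ must be absorbed by the correctly chosen copy at that level without racing ahead of its own preprocessing phase. Managing this requires choosing the batch lengths at different levels to be geometrically spaced so that higher levels always have their copies ready before the lower levels hand over, and this is exactly the flexibility provided by the $\psi$-geometric hierarchy already in \mlp.
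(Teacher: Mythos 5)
Your overall route is the same as the paper's: the derandomization half (swap the randomized cut-matching of \Cref{thm:cut-matching} for the deterministic subroutine of Chuzhoy et al., accept $\gamma^*=2^{\bar O(\log^{2/3}n)}$, hence $\alpha=2^{-\bar O(\log^{2/3}n)}$, and rebalance $\phi=2^{-O(\log^{5/6}n)}$ with depth $O(\log^{1/6}n)$ and slack $2^{\bar O(\log^{1/2}n)}$) is exactly what the paper does, and your second application of staggered background copies --- three copies of a CD-process offset by a third of its update limit to absorb the expensive restart via \Cref{thm:decomp} --- is precisely the paper's ``preparing / chasing / working'' scheme. The query mechanism (follow $O(\log^{1/6}n)$ ancestor pointers in the currently active copy, at the price of not explicitly maintaining a single hierarchy) also matches.

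There is, however, a genuine gap in your first application of the trick, inside \mlp. You claim that with three staggered copies of each $A_s$ the active copy does only $\tilde{O}(1/\phi^2)$ work per update in the worst case. This does not follow: \Cref{thm:dynamic_expander_pruning} only bounds the \emph{total} time over a sequence of updates, so the copy that must output a valid pruned set immediately after the current update can still be hit by a single expensive step, and staggering extra copies does not change what that up-to-date copy has to do. The paper resolves this differently at different levels: for $s\ge 2$ the set $\bar P^s$ is allowed to lag one sub-batch behind (its batch has fully arrived, so the pruning work can be spread evenly over the next sub-batch --- the ``one-batch input'' idea, no third copy needed), while at the bottom level, which must be current after every update, the algorithm reruns the pruning \emph{from scratch} on the short current batch (length at most $O(\psi^2)$) after each update; this is where the extra $\psi^2$ factor in the worst-case time $O(\psi^2\cdot 38^{2h}/\phi'^2)$ comes from, and it is absorbed into $2^{O(\log^{5/6}n)}$. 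Your write-up needs this fix (or an equivalent one) to make the worst-case claim sound. Relatedly, to push the worst-case guarantee up the hierarchy you must also spread the \emph{recourse} (the updates propagated to the contracted graph at the next level) evenly over a stage, not just the running time; otherwise a single update could trigger a burst of $\omega(\rho)$ induced updates at level $i+1$ and break the per-update bound there --- this is the precise form of the ``handover'' issue you flag, and the paper handles it by explicitly distributing the recourse within each stage so that level $i$ sees at most $O(\rho)^i$ induced updates per original update.
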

Compared with Theorem~\ref{thm:dynamic_R_tree}, the algorithm in Theorem~\ref{thm:derand_deamort_main} is deterministic, and also gives worst-case update time guarantees. On the flip side, it does not explicitly maintain a single expander hierarchy, but will constantly switch between several expander hierarchies that we maintain in the background, as we will see later.
The proof of the main theorem consists of two parts, that we will show in the following subsections: the first part shows how to derandomize the algorithm in Theorem~\ref{thm:dynamic_R_tree} using a recent result in \cite{ChuzhoyGLNPS19det}; and the second part shows how to de-amortize the algorithm in Theorem~\ref{thm:dynamic_R_tree}, using similar techniques from~\cite{NanongkaiSW17}. 
We note that, by directly combining the methods from the two subsections, we immediately obtain an algorithm for Theorem~\ref{thm:derand_deamort_main}.
We now describe the two parts in more detail.

\subsection{De-randomization}
In this section we provide the proof of the following theorem.
\begin{thm}
\label{thm:deterministic_dynamic_ED}
There is a deterministic algorithm, that, given a fully dynamic unweighted
graph $G$ on $n$ vertices, explicitly maintains a $(2^{-\bar O(\log^{2/3}n)},2^{-O(\log^{5/6}n)})$-expander
hierarchy with slack $2^{\bar O(\log^{1/2}n)}$ of $G$ in amortized update time $2^{\bar O(\log^{5/6}n)}$.
\end{thm}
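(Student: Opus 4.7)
The plan is to inspect every randomized ingredient in the pipeline leading to \Cref{thm:dynamic_R_tree} and to argue that each can be replaced by a deterministic counterpart, after which the parameter trade-offs are re-optimized. A quick audit shows the only source of randomness is the cut-matching step of \Cref{thm:cut-matching}: it is invoked (together with trimming) inside \Cref{thm:cut-matching-trimming}, which is the sole subroutine underlying the static $(\alpha,\phi)$-ED algorithm of \Cref{thm:decomp}. The dynamic constructions \mlp, \cp, \ep, \Cref{lem:fully_dynamic_expander_decomposition}, and \Cref{thm:dyn-expander-new} only call \Cref{thm:decomp} and the (deterministic) expander-pruning algorithm of \Cref{thm:dynamic_expander_pruning} as black boxes. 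Hence, derandomizing \Cref{thm:cut-matching} derandomizes everything.

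First I would swap \Cref{thm:cut-matching} for the deterministic balanced sparse cut procedure of Chuzhoy, Gao, Li, Nanongkai, Peng, and Saranurak~\cite{ChuzhoyGLNPS19det}. Their algorithm on an $m$-edge graph, given a conductance parameter $\phi$, either certifies deterministically that $G$ has conductance $\Omega(\phi)$ or outputs a cut of conductance $O(\phi \cdot \gamma_{\text{det}})$ with the smaller side having volume $\Omega(m/\gamma_{\text{det}})$, where $\gamma_{\text{det}} = 2^{\bar O(\log^{2/3} m)}$ is the deterministic analogue of the $\gorig = O(\log^2 m)$ approximation factor of the cut-matching game. Feeding this into the cut-matching+trimming template of \Cref{thm:cut-matching-trimming} gives a deterministic routine with the same structure but with $\gkrv$ replaced by $\gkrv' = 2^{\bar O(\log^{2/3} m)}$, and with running time $\tO(m/\phi)$ up to a $2^{\bar O(\log^{2/3} m)}$ overhead from the deterministic sub-routine.

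Next I would re-run the analysis of \Cref{thm:decomp}. The charging scheme there only requires the precondition $4\gkrv \alpha Z \le 1$, so it carries through provided we now set $\alpha \le 1/(4 \gkrv' \log_2 m) = 2^{-\bar O(\log^{2/3} m)}$. All constants of the form $O(\log^c m)$ in the boundary-linkedness bounds become $2^{\bar O(\log^{2/3} m)}$, but the structure of the three properties in \Cref{def:decomp} is preserved. This yields a deterministic static ED with running time $\tO(m/\phi) \cdot 2^{\bar O(\log^{2/3} m)}$. I would then choose $\phi$ for the dynamic algorithm: tracing through \mlp, \cp, and the recourse/update-time bookkeeping in \Cref{lem:main,thm:dyn-expander-new}, each invocation of the static ED now pays an extra $2^{\bar O(\log^{2/3} n)}$ factor, and the deterministic slack remains $38^h$ with $h = \log_\psi(\hat m)$. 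Picking $\phi = 2^{-\Theta(\log^{5/6} n)}$, $\psi = 2^{\Theta(\log^{1/2} n)}$, and $\alpha = 2^{-\bar O(\log^{2/3} n)}$ makes the total amortized update time of \Cref{thm:dynamic_R_tree}, which is of the form $2^{O(\log^{1/4} n) \log \rho} \cdot \tO(\psi \cdot 38^{2h}/(\alpha \phi^2))$, evaluate to $2^{\bar O(\log^{5/6} n)}$, while the hierarchy depth $t = O(\log_{1/\phi} m) = O(\log^{1/6} n)$ stays unchanged and the slack is $2^{\bar O(\log^{1/2} n)}$.

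The main obstacle is purely bookkeeping: verifying that at no level does the replacement of $\gorig$ by $\gkrv'$ create a circularity in the constraints (e.g., the requirement $w \le 1/(8\phi)$ in \Cref{thm:cut-matching-trimming}, the condition $k \le \phi \vol(U)/120$ in \Cref{thm:dynamic_expander_pruning}, and the inequality $\vol_G(\barP_t^{s+1}) \le \vol_G(U \setminus Q_t^{s+1})/2$ in \Cref{cla:correctness}). Each of these must be re-examined with the new parameter choice, but since all of the relevant quantities depend polynomially on $\log(1/\alpha)$, $\log(1/\phi)$, and $\log \psi$, and the exponents $\log^{2/3} n$, $\log^{5/6} n$, $\log^{1/2} n$ are separated by constant factors, the re-balancing goes through without changing the algorithmic structure. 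Since the static ED, expander pruning, and all contractions and bookkeeping are deterministic, the resulting algorithm is fully deterministic, proving \Cref{thm:deterministic_dynamic_ED}.
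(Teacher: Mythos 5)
Your proposal is correct and follows essentially the same route as the paper: the paper likewise identifies the cut-matching step of \Cref{thm:cut-matching} as the only randomized component, substitutes the deterministic balanced sparse cut algorithm of Chuzhoy et al.~\cite{ChuzhoyGLNPS19det} with approximation factor $2^{\bar O(\log^{2/3}n)}$ into the static decomposition of \Cref{thm:decomp}, and then re-tunes $\alpha=2^{-\bar O(\log^{2/3}n)}$, $\phi=2^{-O(\log^{5/6}n)}$, $\psi=2^{O(\log^{1/2}n)}$ so that the update-time bound $\rho^{\dep(T)}\cdot\tO(1/\phi^{2})$ evaluates to $2^{\bar O(\log^{5/6}n)}$ with depth $O(\log^{1/6}n)$ and slack $2^{\bar O(\log^{1/2}n)}$. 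Your parameter bookkeeping and the verification that the pipeline (pruning, \mlp, \cp, \ep) is otherwise deterministic match the paper's argument.
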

Recall that the algorithm in Theorem \ref{thm:dynamic_R_tree} utilizes the algorithm in Lemma~\ref{lem:fully_dynamic_expander_decomposition} as a subroutine, and upon this, everything is deterministic.
Recall also that the algorithm in Lemma~\ref{lem:fully_dynamic_expander_decomposition} utilizes as subroutines the algorithm in Theorem~\ref{thm:dynamic_expander_pruning}, which is deterministic, and the algorithm in Theorem~\ref{thm:decomp} as subroutines, which is randomized, and upon this, everything is deterministic.
Observe that the only randomized part in the algorithm in Theorem~\ref{thm:dynamic_R_tree} is the subroutine of the cut-matching game.
Therefore, the only part that is randomized in the algorithm of Theorem \ref{thm:dynamic_R_tree} is also the cut-matching step in Lemma~\ref{thm:cut-matching}.

A recent result by Chuzhoy et al~\cite{ChuzhoyGLNPS19det} gave a deterministic algorithm for the cut-matching step with weaker parameters, that is stated as follows.
\begin{lemma}
\label{thm:cut-matching_deterministic}
There is a deterministic algorithm, that, given an unweighted graph $G=(V,E)$ with $m$ edges and a parameter $\phi>0$,
\begin{enumerate}
\item either certifies that $G$ has conductance $\Phi_G\geq8\phi$; 
\item or finds a cut $(A,\overline{A})$ of $G$ with conductance $\Phi_G(A)\le\gamma^*\phi$, and $\vol_G(A),\vol_G(\bar{A})$ are both at least $m/(16\gamma^*)$, i.e., we find a relatively balanced low conductance cut;
\item or finds a cut $(A,\bar{A})$, such that $\Phi_G(A)\le\gamma^*\phi$ and $\vol_G(\bar{A})\le m/(16\gamma^*)$, and $A$ is a near $8\phi$-expander;
\end{enumerate}
with the parameter $\gamma^*=2^{O(\log^{2/3}m(\log\log n)^{1/3})}=2^{\bar O(\log^{2/3}m)}$. Moreover, the algorithm runs in time $\tilde O(m\gamma^*/\phi)$,
\end{lemma}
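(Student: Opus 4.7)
The plan is to invoke the recent deterministic balanced-cut machinery of Chuzhoy et al.~\cite{ChuzhoyGLNPS19det} essentially as a black box, since the only source of randomization in the cut-matching step (\Cref{thm:cut-matching}) is the ``cut player'' in the Khandekar--Rao--Vazirani game, which traditionally uses random unit vectors / projections to produce bisection-like cuts. The first step is to inspect the randomized proof of \Cref{thm:cut-matching} as given in \cite{SaranurakW19expander} and identify that the only place randomness enters is precisely the construction of a sequence of matchings whose union forms an expander, where the matchings are obtained by solving max-flow problems guided by random directions. Once isolated, this component can be replaced by a deterministic procedure.

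The main step of the reduction is to show that our three-case formulation is equivalent (up to constants hidden in $\gamma^*$) to the deterministic ``most-balanced sparse cut'' primitive that \cite{ChuzhoyGLNPS19det} supplies. Concretely, their algorithm takes a parameter $\phi$ and either certifies expansion $\Phi_G \geq \Omega(\phi)$, or produces a cut of sparsity $\le \gamma^* \phi$. When the cut is relatively balanced we fall into Case~2 directly; when it is very unbalanced, the larger side must be a near $\Omega(\phi)$-expander because otherwise the algorithm would have output a further cut on the larger side. This ``large side is a near-expander'' conclusion is built into their guarantee, but I would re-derive it by a standard averaging argument: if the large side $A$ had a cut $(S, A\setminus S)$ of sparsity $< 8\phi$ with $\vol_G(S) \le \vol_G(A)/2$, combining it with the original unbalanced cut would yield a sparser balanced cut, contradicting what the algorithm returned.

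For the parameters, a small calculation is needed to propagate the $\gamma^* = 2^{\tilde O(\log^{2/3} m)}$ approximation factor from \cite{ChuzhoyGLNPS19det} through the volume-balance thresholds (getting $m/(16\gamma^*)$ in place of the $m/(100\gamma_{\textsc{krv}})$ of \Cref{thm:cut-matching}) and the near-expansion constant $8\phi$, which is ultimately just a matter of matching constants. The running-time bound $\tilde O(m\gamma^*/\phi)$ follows from their stated runtime by noting that their algorithm performs $\tilde O(\gamma^*)$ rounds of approximate max-flow on a graph of $m$ edges with congestion bound $1/\phi$. The main obstacle I anticipate is the bookkeeping to reconcile our exact statement of the three cases with the slightly different statement in \cite{ChuzhoyGLNPS19det}; in particular, ensuring that the ``large side is a near $8\phi$-expander'' conclusion (rather than merely an $\Omega(\phi)$-expander) is obtained with the stated constant may require running their primitive with $\phi$ scaled up by an appropriate constant factor before invoking it.
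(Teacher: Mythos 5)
Your overall plan coincides with what the paper actually does: the paper gives no internal proof of this lemma at all, but imports it (with adjusted constants) from \cite{ChuzhoyGLNPS19det}, whose deterministic \textsf{BalCutPrune}/cut-or-certify primitive is stated precisely as a counterpart of the randomized cut-matching step of \Cref{thm:cut-matching} from \cite{SaranurakW19expander}. So treating the lemma as a citation plus bookkeeping of constants and of the running time is exactly the intended route, and to that extent your proposal is fine.

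The genuine problem is your proposed fallback derivation of Case~3. The claim that if $A$ contained a set $S$ with $|E_G(S,V\setminus S)|<8\phi\,\vol_G(S)$ and $\vol_G(S)\le\vol_G(A)/2$ then ``combining it with the original unbalanced cut would yield a sparser balanced cut, contradicting what the algorithm returned'' does not work. First, $S\cup\bar{A}$ need not be balanced (both $S$ and $\bar A$ can have tiny volume), so no contradiction with the balance dichotomy arises from a single such $S$; turning the existence of sparse cuts inside $A$ into a balanced sparse cut requires an iterative peeling-and-accumulation argument about the graph, not a one-shot comparison with the returned cut. Second, and more fundamentally, the output of the algorithm being $(A,\bar A)$ does not certify the nonexistence of anything: neither the KRV/SW19 cut-matching step nor the deterministic algorithm of \cite{ChuzhoyGLNPS19det} returns an (approximately) most-balanced sparsest cut, so you cannot reason ``otherwise it would have output a further cut.'' In both versions the near-expander guarantee on the large side is an explicit part of the primitive's output, certified by a witness (the low-congestion embedding of the expander formed by the matchings, resp.\ the pruning step of \textsf{BalCutPrune}), and that is what you must cite. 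If you instead only use a variant guaranteeing $\Phi_{G[A]}\ge\Omega(\phi)$, note additionally that converting this into ``$A$ is a near $8\phi$-expander in $G$'' needs care, since $\vol_{G[A]}$ and $\vol_G$ differ by the boundary edges of $A$. So: rely on the version of the Chuzhoy et al.\ statement that already contains the near-expander (or prune-to-expander) conclusion, as the paper does, and drop the averaging re-derivation.
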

The following corollary is immediately obtained by replacing the randomized cut-matching step in the algorithm of Theorem~\ref{thm:decomp} with the algorithm in the above theorem.
\begin{corollary}\label{cor:decomp_deterministic}
There is a deterministic algorithm that, given a graph $G=(V,E)$, a cluster $C\subseteq V$ with $\vol_{G}(C)=m$ and $|E_{G}(C,V\setminus C)|\le b$, and parameters $\alpha,\phi$ such that $\alpha\le 2^{-\bar O(\log^{2/3}m)}$, computes an $(\alpha,\phi)$-expander decomposition of $C$ in $\tilde{O}(b/\phi^2+m\gamma^*/\phi)$ time, with $\gamma^*=2^{O(\log^{2/3}m(\log\log n)^{1/3})}=2^{\bar O(\log^{2/3}m)}$.%\harry{running time, high probability values of paraemters?}
\end{corollary}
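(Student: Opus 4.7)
The plan is to directly substitute the deterministic cut-matching algorithm from Lemma~\ref{thm:cut-matching_deterministic} into the proof of Theorem~\ref{thm:decomp}, tracking how the parameter $\gkrv = O(\log^2 m)$ everywhere gets replaced by the weaker $\gamma^* = 2^{\bar O(\log^{2/3}m)}$, and then check that the whole argument still goes through with these new parameter values.

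First, I would observe that the entire algorithm behind Theorem~\ref{thm:decomp} (\Cref{algo}) uses randomness only through \Cref{thm:cut-matching-trimming}, which in turn calls \Cref{thm:cut-matching} (the cut-matching step). Everything else --- the trimming step from \Cref{lem:trimming}, the outer partitioning loop, the charging scheme --- is deterministic. So I would build a deterministic analogue of \Cref{thm:cut-matching-trimming} by replacing the call to the randomized \Cref{thm:cut-matching} with the deterministic \Cref{thm:cut-matching_deterministic}. This yields the same two cases (balanced sparse cut or unbalanced sparse cut with an expanding large side), but with conductance parameter $\gkrv$ replaced by $\gamma^*$, balance threshold $m/(100\gkrv)$ replaced by $m/(16\gamma^*)$, and running time $O(m\gamma^*/\phi)$ instead of $O(m\log m/\phi)$. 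Importantly, the conclusion in Case~2 that $A$ is a near $8\phi$-expander is now certified deterministically, so the subsequent trimming lemma~\ref{lem:trimming} continues to apply verbatim --- its proof only used the near-expander property and the flow-based pruning, neither of which is randomized.

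Next, I would re-run the analysis of \Cref{algo} with $\gkrv$ systematically replaced by $\gamma^*$. The precondition $\alpha \le 1/(4\gkrv \log_2 m)$ becomes $\alpha \le 1/(4\gamma^* \log_2 m) = 2^{-\bar O(\log^{2/3} m)}$, matching the corollary's hypothesis. In the charging argument, the only place where $\gkrv$ appears multiplicatively in the invariant is in $\fI(r) = 4\gkrv Z \fB(r) \varphi(r)$, which becomes $\fI(r) = 4\gamma^* Z \fB(r) \varphi(r)$. The ``no charge increase during a round'' step required $4\gkrv \alpha Z \le 1$; this is precisely replaced by $4\gamma^* \alpha Z \le 1$, which holds under the new hypothesis on $\alpha$. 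The ``between rounds'' and ``bound on the number of rounds'' arguments continue to give $R \le \log_2 m$ with the same derivation after replacing $\gkrv$ by $\gamma^*$, since they use only the relative scaling of $\fI, \fB, \varphi$ and the averaging argument, not the exact value of $\gkrv$.

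For the running time, I would retrace the final running-time paragraph. The per-round cost of invoking the deterministic cut-matching+trimming on a subgraph $G[U_i]^w$ is now $O(m_i \gamma^*/\phi)$ for cut-matching and $O(\log m \cdot |E(A,\bar A)|/\phi^2)$ for trimming. Each edge is charged $O(\gamma^*/\phi)$ per round, and at most $O(\log^3 m)$ rounds can touch it, so the total cost from cut-matching is $\tilde O(\gamma^* (b/\phi + m)/\phi)$, while the trimming contribution remains $\tilde O((b/\phi + m)/\phi^2) \cdot$ a polylog factor. After bounding the number of self-loop copies $w = \lceil \alpha/\varphi\rceil$ as in the original proof (using $\phi \le \alpha \le 1/(\gamma^* Z)$), the total edge count summed over active subclusters is still $\tilde O(b/\phi + m)$, yielding an overall runtime of $\tilde O(b/\phi^2 + m\gamma^*/\phi)$, as claimed.

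The main thing to be careful about --- really the only conceptual check --- is verifying that none of the analysis secretly relies on $\gkrv$ being polylogarithmic rather than subpolynomial. A priori this could affect (i) the upper bound on $R$, (ii) the precondition on $\alpha$, and (iii) the geometric-decrease argument for active volume. I expect all three to go through because each uses $\gkrv$ only as a multiplicative constant in the relevant inequality, and the hypothesis $\alpha \le 1/(4\gamma^* \log_2 m)$ is precisely calibrated to preserve the key inequality $4\gamma^* \alpha Z \le 1$. Once this is confirmed, the corollary follows with no further effort, and by composing along an expander-decomposition sequence (as in \Cref{cor:hierarchy}) one obtains the deterministic dynamic EH guarantee needed elsewhere.
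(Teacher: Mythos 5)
Your proposal is correct and follows essentially the same route as the paper, which obtains this corollary by simply substituting the deterministic cut-matching step of Lemma~\ref{thm:cut-matching_deterministic} for the randomized one inside the algorithm of Theorem~\ref{thm:decomp}; your explicit re-check that every appearance of $\gkrv$ can be replaced by $\gamma^*$ (in particular the invariant $4\gamma^*\alpha Z\le 1$, which the new hypothesis on $\alpha$ guarantees) is exactly what the paper leaves implicit. One bookkeeping point worth tightening: to land on $\tilde O(b/\phi^2+m\gamma^*/\phi)$ rather than $\tilde O(\gamma^* b/\phi^2+m\gamma^*/\phi)$, note that each border edge now contributes only $w\le\lceil\alpha/\varphi\rceil=O(1/(\gamma^* Z\phi))$ self-loops, so the smaller admissible $\alpha$ exactly cancels the larger per-edge cut-matching charge of $\tilde O(\gamma^*/\phi)$ on the $b$-dependent part.
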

To obtain the algorithm for Theorem~\ref{thm:deterministic_dynamic_ED}, we simply replace the randomized cut-matching step with the algorithm in Corollary~\ref{cor:decomp_deterministic}. 
We also change the parameters in  Section \ref{sec:dynamic_expander_decomposition} accordingly as 
$\alpha=2^{-\bar O(\log^{2/3}m)}$ and $\phi=2^{-O(\log^{5/6}n)}$.
We keep
$\psi=2^{O(\sqrt{\log n})}$, so $h=\Theta(\log_{\psi}n)=\Theta(\sqrt{\log n})$ as before, the depth of the hierarchy is $O(\log^{1/6}n)$ and $\rho=O(\psi\cdot 38^{h}/\alpha)=2^{\bar O(\log^{2/3}m)}$.
From the same proof of Section \ref{sec:dynamic_expander_decomposition} (with distinct parameters), we can show that we can maintain an $(\alpha,\phi)=(2^{-\bar O(\log^{2/3}n)},2^{- O(\log^{5/6}n)})$-expander
hierarchy with slack $2^{\bar O(\log^{1/2}n)}$ of an $n$-vertex fully-dynamic graph in amortized update time $O(1/\phi^{2})\cdot\rho^{\dep(T)}=2^{O(\log^{5/6}n)}$.

\subsection{De-amortization}
The main result in this section is the following theorem.
\begin{thm}
\label{thm:deamortized_dynamic_ED}
There is a randomized algorithm, that, given a fully dynamic unweighted
graph $G$ on $n$ vertices undergoing adaptive edge insertions and deletions, maintains a data structure representing a $(2^{-O(\log^{1/2} n)},2^{-O(\log^{3/4}n)})$-expander
hierarchy with slack $2^{O(\log^{1/2} n)}$ of $G$ in $2^{O(\log^{3/4}n)}$ worst-case update
time and the data structure supports the following query: given a vertex $u$, return a leaf-to-root path of $u$ in the hierarchy in $O(\log^{1/4} n)$ time
with high probability.
\end{thm}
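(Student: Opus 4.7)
The plan is to apply the standard ``building in the background'' technique to every amortized component of the algorithm from \Cref{thm:dynamic_R_tree}, so that at any moment each costly recomputation that would otherwise cause an amortized cost is in fact already done by a background copy. First I would classify the amortized costs. There are three: (i) the phase-initial recomputation of a fresh $(\alpha,\phi)$-expander hierarchy from scratch via \Cref{thm:decomp}, costing $\tilde O(m/\phi)$ per phase of length $T=\Theta(\phi m/\rho)$; (ii) the restart of an expired CD-process on a cluster $U$ with parameter $\phi'$, costing $\tilde O(\vol_G(U)/\phi')$ spread over its update limit $N=\Theta(\phi'\vol_G(U)/\rho)$; and (iii) the per-update work inside the MLP which invokes the fully dynamic expander-pruning of \Cref{thm:dynamic_expander_pruning}, whose worst-case cost within one batch on machine $M_s$ is $\tilde O(\psi^s/\phi^2)$ while its amortized per-update cost is only $\tilde O(1/\phi^2)$.

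Next I would address (iii), the sole inherently amortized component. The key observation is that machine $M_s$ only needs the output of the pruning at the end of its batch of $\psi^s$ updates, when it hands a snapshot to $M_{s-1}$. Hence instead of feeding \Cref{thm:dynamic_expander_pruning} incrementally, $M_s$ can use it in one-batch mode: collect the $\psi^s$ updates in the current batch, run the static version on the whole batch in $\tilde O(\psi^s/\phi^2)$ time, and spread this cost evenly over those $\psi^s$ real updates, yielding $\tilde O(1/\phi^2)$ worst-case cost per update per level. Machine $M_1$, which already recomputes from scratch upon each update, fits within the $\tilde O(\psi/\phi^2)$ budget. Since the recursion has depth $O(\log^{1/4} n)$ and each level multiplies work by at most $\rho=2^{O(\log^{1/2} n)}$, the aggregate worst-case time stays at $2^{O(\log^{3/4} n)}$.

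For (i) and (ii) I would use the three-copy overlapping-phase template sketched in the overview: run three copies of the same construction with starts staggered by $T$ updates, so that at any moment one copy is in the $T$-update ``construction'' stage (spreading the preprocessing cost over $T$ updates, so the per-update overhead is $\tilde O(m/(\phi T))=2^{O(\log^{3/4} n)}$), one is in the ``catch-up'' stage (replaying the $T$ updates that arrived during construction at rate $2$ per real update), and one is the ``active'' copy answering queries. The same template is applied to every CD-process restart: each CD-process on a cluster $U$ launches a replacement after $N/3$ relevant updates, which is fully built by the time the current CD-process hits its update limit $N$. To answer the leaf-to-root query for a vertex $u$, I maintain, pointing from the currently active hierarchy, an explicit parent pointer from every vertex at level $i$ to its level-$(i{+}1)$ parent; the total work to keep these pointers synchronized is absorbed into the update time via the recourse bound of \Cref{lem:fully_dynamic_expander_decomposition}, and the query simply follows $O(\log^{1/4} n)$ pointers.

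The main obstacle will be composing the background construction across the recursion of the CD-processes. Since each CD-process on a cluster $U$ can spawn sub-CD-processes on every pruned set $\tilde P^s$ that the MLP produces, and those in turn spawn further sub-processes, the staggered background copies can pile up many levels deep, and a restart event high in the recursion must be coordinated with the restarts of every descendant. The key invariant to enforce is that at each level the replacement is ready strictly before the current instance expires, so the cost multiplies by only a constant per level of the recursion; combined with the $O(\log^{1/4} n)$ depth this keeps the worst-case update time at $2^{O(\log^{3/4} n)}$, matching the amortized bound. The ``with high probability'' qualifier on the query time comes from the (still randomized) cut-matching game used by \Cref{thm:decomp} inside the background constructions; if one instead plugs in the deterministic counterpart from \Cref{cor:decomp_deterministic}, the same deamortization scheme yields the parameters of \Cref{thm:derand_deamort_main}.
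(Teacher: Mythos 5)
Your overall plan coincides with the paper's: ``building in the background'' applied everywhere, a three-copy staggered scheme (prepare / catch-up / work) for the phase-initial recomputation and for expired CD-processes, and level-by-level composition over the $O(\log^{1/4}n)$ depth. However, there are two concrete gaps. First, your treatment of the only inherently amortized piece, the pruning inside \mlp, rests on a wrong premise: machine $M_s$ does \emph{not} only need its output at the end of its batch of $\psi^s$ updates --- it hands a snapshot $\barP^s$ to the lower levels at the start of \emph{every sub-batch}, i.e.\ every $\psi^{s-1}$ updates, and $M_1$ needs a correct pruned set after every single update. Consequently, ``collect the whole batch, run \Cref{thm:dynamic_expander_pruning} on it, and spread the cost over those $\psi^s$ updates'' does not give a worst-case bound as stated: if the work is done when the batch completes, one update absorbs $\tilde O(\psi^s/\phi^2)$ time; if you instead want to interleave the work with the arriving updates, you need an explicit lazy-scheduling argument using the prefix bound of \Cref{thm:dynamic_expander_pruning} and the fact that outputs are only consumed at sub-batch boundaries --- which you never make. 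The paper's fix is different and simpler to verify: each $A'_s$ runs deliberately \emph{one batch late} (processing the previous, fully known batch while the current one streams in), at the price of a multiplicative $\psi$ loss; in particular the delayed snapshots force $M_1$ to rerun the pruning on up to $2\psi^2$ (not $\psi$) updates per step, so its budget is $\tilde O(\psi^2\cdot 38^{2h}/\phi^2)$, contrary to your claim that $M_1$ ``fits within the $\tilde O(\psi/\phi^2)$ budget'' (harmless for the final $2^{O(\log^{3/4}n)}$ bound, but it shows the lag is doing real work).

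Second, your composition across hierarchy levels silently assumes a \emph{worst-case} bound of $\rho$ on the number of updates a single update to $G^{i}$ can generate for $G^{i+1}$, but \Cref{lem:fully_dynamic_expander_decomposition} only gives an \emph{amortized} recourse bound; a single update can trigger a restart or a snapshot change that emits a large batch of updates to the next level, which would spike the worst-case time of every higher level even if each level's own update time has been deamortized. The paper closes exactly this hole by also distributing the recourse (the updates propagated to the contracted graph one level up) evenly over the updates of the corresponding stage, which is what makes the product $O(\psi^2\cdot 38^{2h}/\phi^2)\cdot O(\rho)^{i}=2^{O(\log^{3/4}n)}$ a legitimate worst-case bound. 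Adding these two ingredients --- the one-batch lag (or a carefully argued prefix-budget schedule) inside \mlp, and the even distribution of recourse to the next level --- would turn your outline into the paper's proof.
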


\iffalse
Note that once we use this standard technique, there can be a point of time that the algorithm
switches the pointer from old hierarchy to the new hierarchy (that
has been processed in the background and is ready to be used at some
point). This is why we do not really maintain \textbf{one} hierarchy explicitly because the change between the old and new one can be large. Note that we can still support the query quickly.
How to de-amortized multi-level pruning.
\fi

In order to construct an algorithm for Theorem~\ref{thm:deamortized_dynamic_ED}, we first show that we can de-amortize the algorithm in Lemma~\ref{lem:fully_dynamic_expander_decomposition}, and then we describe how to use this new algorithm of Lemma~\ref{lem:fully_dynamic_expander_decomposition} to construct an algorithm for Theorem~\ref{thm:deamortized_dynamic_ED}.

The crux in de-amortizing the algorithm in Lemma~\ref{lem:fully_dynamic_expander_decomposition} is to de-amortize the core subroutine: \mlp. 
Recall that \mlp extensively uses the algorithm in Theorem~\ref{thm:dynamic_expander_pruning} for expander pruning, that we denote by $\bset$. However, the algorithm in Theorem~\ref{thm:dynamic_expander_pruning} only guarantees small amortized update time, and cannot be de-amortized.
To overcome this issue, the key observation is that, we cannot sequentially feed the algorithm $\bset$ up to the current update and force it to produce information with respect to the current graph. Instead, when we feed the algorithm $\bset$ with a batch of updates that has already shown up in the update sequence, we have to wait for a certain number of updates that is comparable to the length of the batch that we feed to $\bset$, so that it can distribute the running time for processing the batch that we feed to it evenly to the new updates, thus achieving the worst-case update time guarantee.

In the remainder of this section, we first describe how to de-amortize the \mlp process, and then describe how to use it to further de-amortize the algorithm for Lemma~\ref{lem:fully_dynamic_expander_decomposition} and eventually provide an algorithm for Theorem~\ref{thm:deamortized_dynamic_ED}.

\paragraph{De-amortize \mlp.}
Recall that the \mlp consists of a hierarchy of algorithms $A_1,\ldots,A_{\hbar}$, such that, when an higher-level algorithm produces a pruned set, every lower-level algorithm works on the remaining graph where the pruned set is taken out from $U$.
The high-level intuition for de-amortizing \mlp is to ``delay'' the work in each algorithm, so that there are enough updates for the algorithm to distribute their work on. For this to be accomplished, we will have to incur a multiplicative loss of $\psi$ in the update time.

Let $U$ be the cluster that we run the \mlp process on.
We denote by $D$ the sequence of updates on $G$ that are relevant to $U$. 
For a pair of integers $1\le i<j\le N$, we denote $D(i,j)$ as the subsequence of $D$ from the $i$th update to the $j$th update (including both). We call such a subsequence a \emph{batch}.  
For a cluster $W$ of vertices that is $(\alpha',\phi')$-linked in $G$ upon the $(i-1)$th update, we denote by $\mathcal{B}(W,i,j,\alpha',\phi')$ to be the run of the algorithm in Theorem~\ref{thm:dynamic_expander_pruning}, starting with the cluster $W$ in $G$ with boundary-linkedness parameters $\alpha'$ and $\phi'$, handling the updates in $D(i,j)$. Note that, an update on $G$ may be irrelevant of $W$ (i.e., both endpoints of the updated edge are not in $W$). In this case we simply ignore this update in the run of $\mathcal{B}(W,i,j,\alpha',\phi')$.

Recall that the input for the \mlp process is a cluster that is $(\alpha',\phi')$-linked in $G$ for parameters $\alpha',\phi'$ that are known to the algorithm, and the \mlp process handles the next $N=\phi'\vol_G(U)/1200$ updates that are relevant to $U$.
Recall that $\psi=2^{O(\sqrt{\log n})}$ and $\hbar=\log_{\psi}N$.

The new algorithm consists of a hierarchy of $\hbar-1$ sub-algorithms $A'_{\hbar-1},A'_{\hbar-2},\ldots,A'_{1}$.
We first describe the work of sub-algorithm $A'_{\hbar-1}$. Recall that in Section \ref{sec:dynamic_expander_decomposition}, the work of $A_{\hbar}$ is divided into stages with length $\ell_{\hbar-1}=\psi^{\hbar-1}$ each.
Similarly, the work of $A'_{\hbar-1}$ is also divided into stages with length $\ell_{\hbar-1}$ each. However, the work in each stage is now completely different. In the first stage, $A'_{\hbar-1}$ does nothing.
For each $1\le t\le N/\ell_{\hbar-1}$, note that, at the beginning of the $(t+1)$th stage, the batch $D(1,t\ell_{\hbar-1})$ of updates has completely shown up. We simply let $A'_{\hbar-1}$ run $\bset(U,1,t\ell_{\hbar-1},\alpha',\phi')$ in its $(t+1)$th stage, with the work evenly distributed upon all updates in this stage.
And after this stage is finished, $A'_{\hbar-1}$ sends the pruned set $P^{\hbar-1}_{t\ell_{\hbar-1}}$ to $A'_{\hbar-2}$.
Intuitively, the sub-algorithm $A'_{\hbar-1}$ processes batches of size $[\psi^{\hbar-1}, \psi^{\hbar}]$, and is always ``$\psi^{\hbar-1}$ updates late'' compared to the current update.
From the above discussion, in a stage of $A'_{\hbar-1}$, at most $N=\psi^{\hbar}$ updates are handled. 
%Therefore, from Theorem~\ref{thm:dynamic_expander_pruning}, the running time per update is $O(\psi/\phi'^2)$.

We now describe, for each $2\le s\le \hbar-2$, the work of sub-algorithm $A'_{s}$, which is similar to $A'_{\hbar-1}$.
The work of sub-algorithm $A'_{s}$ is divided into stages with length $\ell_s$ each. In the first stage, it does nothing. For each $1\le t\le 2\psi-1$, note that at the end of the $t$th stage on $A'_{s}$, the batch $D(1,t\ell_{s})$ of updates has completely shown up, and $A'_{s}$ has not received anything from $A'_{s+1}$.
In the $(t+1)$th stage, we simply let $A'_{s}$ run $\bset(U,1,t\ell_{s},\alpha',\phi')$ in its $(t+1)$th stage, with the work evenly distributed upon all updates in this stage.
And after this stage is finished, $A'_{s}$ sends the pruned set $P^{s}_{t\ell_{s}}$ to $A'_{s-1}$. Starting from the $(2\psi+1)$th stage on $A'_{s}$, we call every next $\psi$ stages on $A'_s$ a \emph{phase} of $A'_s$. Note that, from the description of the work on $A'_{s+1}$, for each $t'\ge 0$, at the beginning of the $(t'+1)$th phase on $A'_{s}$, it receives a set $P^{s+1}_{t'\ell_{s+1}}$ from $A'_{s+1}$. We now describe the work of $A'_s$ within this phase. For each $0\le t\le \psi-1$, in the $(t+1)$ stage within this phase, we let it run the process
$\bset(\overline{P^{s+1}_{t'\ell_{s+1}}},t'\ell_{s+1},(t'+1)\ell_{s+1}+t\ell_{s},\alpha'/38^{\hbar-s-1},\phi'/38^{\hbar-s-1})$, with the work evenly distributed upon all updates in this stage.
This completes the work on $A'_{s}$.
After the $(t+1)$th stage on $A'_s$ is finished, $A'_{s}$ sends the pruned set $P^{s}_{t\ell_s}$ to $A'_{s-1}$.
Intuitively, the subalgorithm $A_s$ process batches of size $[\psi^{s}, 2\psi^{s+1}]$, and is always ``$\psi^{s}$ updates late'' compared with the current update.
From the above discussion, in a stage of $A'_{s}$, at most $2\psi^{s+1}$ updates are handled. %Therefore, from Theorem~\ref{thm:dynamic_expander_pruning}, the running time per update is $O(38^{2L}\psi/\phi'^2)$.
 
It remains to describe the work of $A'_1$. While the sub-algorithms $A'_2, A'_3,\ldots,A'_{\hbar-1}$ can be one-stage late, the work on $A'_1$ has to be up-to-date.
The work of sub-algorithm $A'_{1}$ is also divided into stages with length $\ell_1=\psi$ each.
In the first $2\psi$ stages, upon the $i$th update, the sub-algorithm $A'_{1}$ simply runs
$\bset(U,1,i,\alpha'/38^{\hbar-2},\phi'/38^{\hbar-2})$.
Starting from the $(2\psi+1)$th stage on $A'_{1}$, we call every next $\psi$ stages on $A'_1$ a \emph{phase} of $A'_1$. Note that, from the description of the work on $A'_{1}$, for each $t'\ge 0$, at the beginning of the $(t'+1)$th phase on $A'_{1}$, it receives a set $P^{2}_{t'\ell_{2}}$ from $A'_{2}$. We now describe the work of $A'_s$ within this phase. For each $0\le i\le \psi^2-1$, upon the $i$th update stage within this phase, we let it run the entire process
$\bset(\overline{P^{2}_{t'\ell_{2}}},t'\ell_{s+1}+1,t'\ell_{s+1}+i,\alpha'/38^{\hbar-2},\phi'/38^{\hbar-2})$.
Put in other words, within the phase, upon each update, the machine $M_1$ makes an individual run of the algorithm in Theorem~\ref{thm:dynamic_expander_pruning} handling all updates in this phase (from the first update in this phase to the current update).
This completes the work on $A'_{1}$.
From the above discussion, upon each update, $A'_1$ handles a batch of at most $2\psi^{2}$ updates.

From the discussion, the worst-case update time in this de-amortized \mlp is at most $O(\psi^2)$-factor larger than the amortized update time of \mlp. Therefore, the worst-case update-time $O(\psi^2\cdot 38^{2h}/\phi'^2)$.

\paragraph{De-amortize \cp.}
Recall that the input to \cp process is a cluster that is $(\alpha,\phi')$-linked in the current graph $G$.
Also recall that, the CD-process contains a main \mlp process, and additionally, for each set $\tilde P^s$ in the collection $\{\tilde P^1,\ldots,\tilde P^{\hbar}\}$ of sets maintained by the \mlp process, the CD-process recomputes an $(\alpha,\phi')$-expander decomposition on $\tilde P^s$ every time it changes (namely, every $\psi^s$ updates on $U$). For obtaining the expander decomposition in time, we tweak the de-amortized \mlp a bit, by letting each sub-algorithm $A'_s$ runs, in each phase, not only a process of $\bset$ handling a batch of updates, but also an $(\alpha,\phi')$-expander decomposition on the pruned out set of vertices, after completing the process of $\bset$, with the total work of both tasks evenly distributed on all updates in this stage. Note that this increase the worst-case update time by $O(1)$-factor.

\paragraph{De-amortize the algorithm for Lemma~\ref{lem:fully_dynamic_expander_decomposition}.}
Recall that the algorithm for Lemma~\ref{lem:fully_dynamic_expander_decomposition} simply first computes an $(\alpha,\phi)$-expander decomposition, and then, for each cluster in the $(\alpha,\phi)$-expander decomposition, it starts a CD-process on it with respect to the well-linkedness parameter $(\alpha,\phi')$ of this cluster. We have already shown how to de-amortize the CD-process. However, to completely de-amortize the algorithm for Lemma~\ref{lem:fully_dynamic_expander_decomposition}, we need one more step. Note that when the CD-process on a cluster $U$ has handled $N=\phi'\vol_G(U)/1200$ updates, the cluster $U$ will be reset. In particular, the algorithm will recompute an $(\alpha,\phi)$-expander decomposition from scratch on $U$, and then starts a new CD-process on each of the cluster in this decomposition. 
This cluster-resetting step needs to be de-amortized as well.

In order to achieve this, we run three CD-processes on the same cluster $U$ in parallel, each maintaining an $(\alpha/38^h,\phi/38^h)$-expander decomposition of $U$. At any time, one of the CD-process is used by the algorithm (that we call \emph{online}), and the others are temporarily not (that we call \emph{in the background}).
When the online CD-process terminates, we switch it into the background, and bring online another CD-process that was in the background.
We carefully choose the ``offset'' between these CD-process and schedule their work so that at any time, the online CD-process maintains an available decomposition of the current cluster.

We now describe the algorithm in more detail. We maintain $3$ tweaked CD-process in parallel. Each tweaked CD-process has three phases: the preparing phase; the chasing phase; and the working phase; each spans the time of a consecutive $N/3$ updates (recall that $N$ is the number of updates that can be handled by a CD-process). The offset between each pair of tweaked CD-process is also $N/3$.
Assume the input is a cluster $U$ that is $(\alpha,\phi')$-linked in $G$.
Assume that some tweaked CD-process starts at the $k$th update, and we denote by $G_k$ the graph after the $k$th update. In the first phase of the tweaked CD-process, the preparing phase, it computes an $(\alpha,\phi')$-expander decomposition of $U$ in $G_k$. In the second phase, the chasing phase, it handles the batch $D(k+1,k+2N/3)$ of updates. Note that, before this phase, the tweaked CD-process is $N/3$ updates behind, and after this phase, the tweaked CD-process manages to maintain an $(\alpha/38^h,\phi'/38^h)$-expander decomposition of the up-to-date graph. Intuitively, this can be achieved by running a normal CD-process with double speed.
In the third phase, the working phase, it runs a normal CD-process to handle the batch $D(k+2N/3+1,k+N)$ of updates.
The work in the first and the second phases is evenly distributed over all updates in that phase. 
Each tweaked CD-process is online only at its working phase.
It is not hard to see that, the combination of three tweaked CD-process defined above maintains an $(\alpha/38^h,\phi'/38^h)$-expander decomposition of the up-to-date graph, and achieves the worst-case update time within a $O(1)$-factor of the worst-case update time of a normal de-amortized CD-process described above. Therefore, the worst-case update time is $O(\psi^2\cdot 38^{2h}/\phi'^2)$.

\paragraph{Constructing the algorithm for Theorem~\ref{thm:deamortized_dynamic_ED}.}
Recall that the algorithm in Lemma~\ref{lem:fully_dynamic_expander_decomposition} maintains an $(\alpha/38^h,\phi'/38^h)$-expander decomposition, such that the amortized recourse in the contracted graph with respect to the decomposition is $\rho=O(38^h\cdot\psi/\alpha)$. However, to construct an algorithm for Theorem~\ref{thm:deamortized_dynamic_ED} using the de-amortized algorithm for Lemma~\ref{lem:fully_dynamic_expander_decomposition}, we need to ensure that the worst-case recourse is $2^{O(\sqrt{\log n})}$, preferably $O(\rho)$.
This can be achieved by further tweaking the de-amortized algorithm for Lemma~\ref{lem:fully_dynamic_expander_decomposition} a bit. Specifically, for each $i$ and in each stage of $G^{i}$, we not only distribute the running time evenly over all updates, but also distribute the recourse that is needed to propagate to the graph $G^{i-1}$ at one-level above. In this way, we ensure that the worst-case recourse for the graph $G^{i}$ is at most $O(\rho)^i$, thus achieving the worst-case update time 
$O(\psi^2\cdot 38^{2h}/\phi'^2)\cdot O(\rho)^i=2^{-O(\log^{3/4}n)}$.

 \iffalse
\begin{itemize}
\item This is just use the scheduling algorithm in {[}NSW{]} instead of
the simpler one we show. In fact, the one we presented here is a simplification
of the one in {[}NSW{]}. 
\item The only different between {[}NSW{]} and our multi-level scheduling
is this. Here: we feed a \textbf{sequence }of updates to pruning algorithm
in section 5. {[}NSW{]}: the input to pruning algorithm is always
a batch of update. This is possible again using the standard ``building
in the background'' in each of the $h$ levels. 
\item This already deamortize multi-level pruning. Why? although the pruning
algorithm of section 5 only in a guarantee amortized update time.
This is only amortized when for sequence of updates. But now we only
feed one-batch updates. So the running time is worst-case. 
\end{itemize}

\subsection{Derandomization and Derandomization}
\begin{thm}
There is a deterministic algorithm, that, given a fully dynamic unweighted
graph $G$ on $n$ vertices, maintains a data structure representing
a $(2^{-O'(\log^{2/3}n)},2^{-O'(\log^{5/6}n)})$-expander hierarchy
$T$ of $G$ using $2^{-O'(\log^{5/6}n)}$ worst-case update time
and the data structure supports the following query: given a vertex
$u$, return a leaf-to-root path of $u$ in $T$ in $O(\log n)$ time
with high probability.
\end{thm}
\fi

%\input{deterministic_EH}
\section{Applications}
\label{sec:application_dynamic}

In this section we show that our dynamic expander hierarchy almost directly leads to a number of applications in dynamic graph algorithms.

\subsection{Dynamic Tree Flow Sparsifier}
We start by reviewing the notion of tree flow sparsifiers. Given a weighted graph $G=(V,E,\capacity)$ and a subset $S\subseteq V$ of
vertices, a set $D$ of \emph{demands} on $S$ is a function
$D:S\times S\to \mathbb{R}_{\ge 0}$, that specifies, for each pair $u,v\in S$ of vertices, a demand $D(u,v)$. Given a subset
$S\subseteq V$ and a set $D$ of demands on $S$, a \emph{routing} of $D$ in $G$ is a flow $F$ on $G$, where for each pair $u,v\in S$, the amount of flow that $F$ sends from $u$ to $v$ is $D(u,v)$. We define the congestion $\eta(G,D)$ of a set $D$ of demands in $G$ to be the minimum congestion of a flow $F$ that is
a routing of $D$ in $G$. We say that a tree $T$ is a \emph{tree flow sparsifier of quality $q$} for $G$ with
respect to $S$, if $S\subseteq V(T)$, and for any set $D$ of demands on $S$,
$\eta(T,D)\le \eta(G,D)\le q\cdot\eta(T,D)$. A tree flow sparsifier $H$ of $G$
w.r.t.\ the subset $V(G)$ is just called a tree flow sparsifier for $G$. 

We design an algorithm that explicitly maintains a tree flow sparsifier for a graph $G$ that undergoes edge insertions and deletions, which proceeds as follows: given an unweighted dynamic graph $G$ on $n$ vertices, maintain a $(2^{-\bar{O}({\log^{2/3}n})},2^{-O(\log^{5/6}n)})$-expander hierarchy with slack $2^{\bar{O}(\log^{1/2} n)}$ of $G$ using Theorem~\ref{thm:deterministic_dynamic_ED}.

We immediately obtain the following result, which proves Corollary~\ref{thm:tree-flow-sparsifier} from the introduction.

\begin{corollary}
	There is a deterministic fully dynamic algorithm on a graph $G$ with $n$ vertices that explicitly maintains a tree flow sparsifier for $G$ with quality $2^{O(\log^{5/6} n)}$ and depth $O(\log^{1/6} n)$ using $2^{O(\log^{5/6} n)}$
	amortized update time. 
\end{corollary}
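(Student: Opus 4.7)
The plan is to directly combine the two main components of the paper: use Theorem~\ref{thm:deterministic_dynamic_ED} to maintain an expander hierarchy dynamically and deterministically, and then invoke Theorem~\ref{thm:EH_gives_R_tree} to argue that this hierarchy is itself a tree flow sparsifier. The sketchy algorithm given right before the corollary already names the object to maintain; what remains is to verify that the parameters line up.

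First I would instantiate the parameters. By Theorem~\ref{thm:deterministic_dynamic_ED}, we deterministically maintain an $(\alpha,\phi)$-expander hierarchy of $G$ with slack $s$, where
\[
\alpha=2^{-\bar O(\log^{2/3}n)},\quad \phi=2^{-O(\log^{5/6}n)},\quad s=2^{\bar O(\log^{1/2}n)},
\]
in amortized update time $2^{\bar O(\log^{5/6}n)}$. Since the underlying $(\alpha,\phi)$-expander decomposition sequence $(G^0,\dots,G^t)$ satisfies $\vol(G^{i+1})\le\tilde O(\phi)\vol(G^i)$, and $\log m=O(\log n)$, the depth is $t=O(\log_{1/\phi}m)=O(\log^{1/6}n)$, which gives the claimed depth bound. (For disconnected $G$ we run the construction on each connected component and take a disjoint union; the contracted roots give the required output tree.)

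Next I would read off the quality. Theorem~\ref{thm:EH_gives_R_tree} says that the $(\alpha,\phi)$-expander hierarchy with slack $s$ is a tree flow sparsifier whose quality is
\[
O(s\log m)^{t}\cdot O\Bigl(\tfrac{\max\{1/\alpha,1/\phi\}}{\alpha^{t-1}}\Bigr).
\]
Plugging in: $(s\log m)^t=(2^{\bar O(\log^{1/2}n)})^{O(\log^{1/6}n)}=2^{\bar O(\log^{2/3}n)}$; $1/\alpha^{\,t-1}=2^{\bar O(\log^{5/6}n)}$; and $\max\{1/\alpha,1/\phi\}=1/\phi=2^{O(\log^{5/6}n)}$. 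Multiplying these together collapses everything into $2^{O(\log^{5/6}n)}$, matching the stated quality. Since Theorem~\ref{thm:deterministic_dynamic_ED} \emph{explicitly} maintains the hierarchy, we are explicitly maintaining the tree flow sparsifier.

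The potential obstacle is mostly bookkeeping: one should check that the tree produced by the dynamic algorithm is \emph{exactly} the hierarchy $T$ used in Theorem~\ref{thm:EH_gives_R_tree}, with the same edge capacities $\deg_{G^{i}}(u_{i})$ on the edge joining a level-$i$ vertex to its parent, so that no separate update time is incurred in converting the hierarchy into the sparsifier. This is immediate from the definition of an expander hierarchy, so the update time of the sparsifier is inherited verbatim from Theorem~\ref{thm:deterministic_dynamic_ED}. Hence all three guarantees (quality $2^{O(\log^{5/6}n)}$, depth $O(\log^{1/6}n)$, amortized update time $2^{O(\log^{5/6}n)}$) hold simultaneously, completing the proof.
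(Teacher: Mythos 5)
Your proposal is correct and follows essentially the same route as the paper: maintain the $(\alpha,\phi)$-expander hierarchy with slack $s$ via Theorem~\ref{thm:deterministic_dynamic_ED} with exactly these parameter choices, note the depth bound $t=O(\log^{1/6}n)$, and plug into the quality bound of Theorem~\ref{thm:EH_gives_R_tree} to collapse everything to $2^{O(\log^{5/6}n)}$. The only additions (the connected-components remark and the check that the maintained tree carries the capacities $\deg_{G^i}(u_i)$) are harmless bookkeeping that the paper leaves implicit.
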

\label{cor:dynamicTreeFlowSparsifier}
\begin{proof}
	To bound the quality of the tree flow sparsifier, the main observation is that an expander hierarchy of a graph $G$ is itself a tree flow sparsifier for $G$. Concretely, let $\alpha := 2^{-\bar{O}({\log^{2/3}n})}$, $\phi := 2^{-O(\log^{5/6}n)}$ and $s := 2^{\bar{O}(\log^{1/2} n)}$. By Theorem~\ref{thm:deterministic_dynamic_ED}, the depth of $(\alpha,\phi)$-expander hierarchy we maintain is $ t: = O(\log^{1/6} n)$. Using Theorem~\ref{thm:EH_gives_R_tree}, it follows that our $(\alpha,\phi)$-expander hierarchy of $G$ with slack $s$ and depth $t$ is a tree flow sparsifier for $G$ with quality $O(s\log m)^{t} \cdot O(\max\{\frac{1}{\alpha},\frac{1}{\phi}\}/\alpha^{t-1}) = 2^{O(\log^{5/6} n)}$. 
	
	Since we can maintain an $(\alpha,\phi)$-expander hierarchy with slack $s$ of $G$ in $2^{O(\log^{5/6} n)}$ amortized update time~(Theorem~\ref{thm:deterministic_dynamic_ED}), it follows that the amortized update time for maintaining a tree flow sparsifier for $G$ is also bounded by $2^{O(\log^{5/6} n)}$.
\end{proof}

\subsection{Dynamic Vertex Flow Sparsifiers, Maximum Flow, Multi-commodity Flow, Multi-Way Cut and Multicut}

We show that a dynamic tree flow sparsifier can be used to maintain a tree flow sparsifier with w.r.t. a subset $S$~(also known as \emph{vertex flow sparsifiers}), an approximation to the value of the following problems (i) maximum flow/minimum cut, (ii) maximum concurrent (multi-commodity) flow, (iii) multi-way cut and (iv) multicut. 

In the dynamic vertex flow sparsifier\footnote{In general, vertex sparsifiers that preserve the (multi-commodity) flow between terminal vertices are not restricted to tree instances. However, as a byproduct of our techniques, the vertex sparsifiers we consider in this paper are always trees.} problem, the graph $G$ undergoes insertions or deletions of edges and the following queries are supported: given any subset $S \subseteq V(G)$, return a tree flow sparsifier for $G$ w.r.t. $S$. The main idea behind designing an algorithm for this problem is the observation that given a tree flow sparsifier for $G$, one can easily extract a tree flow sparsifier for $G$ w.r.t. any subset $S \subseteq V(G)$. Concretely, given an unweighted dynamic graph $G$ on $n$ vertices, let $T$ be the maintained tree flow sparsifier for $G$ from Corollary~\ref{cor:dynamicTreeFlowSparsifier}. For a vertex pair $u,v$, let $T_{u,v}$ denote the (unique) path between $u$ and $v$ in $T$. Upon receiving a query associated with an arbitrary subset $S \subseteq V(G)$, we do the following:
\begin{itemize}
	\item Construct the subtree $T':= \bigcup_{u \in S} T_{u,r_T}$ that consists of all the paths from vertices in $S$ to the root $r_T$ of $T$.
	\item Return $T'$.
\end{itemize}

We immediately obtain the following result, which proves the third item of Corollary~\ref{cor:cut and flow} from the introduction.

\begin{corollary}
\label{cor:dynamicVertexSparsifier}
	There is a deterministic fully dynamic algorithm on a graph $G$ with $n$ vertices such that given a query associated with an arbitrary $S \subseteq V(G)$ outputs a tree flow sparsifier with quality $2^{O(\log^{5/6} n)}$ for $G$ w.r.t. $S$ using $2^{O(\log^{5/6} n)}$ amortized update time and $O(|S| \log^{1/6} n)$ query time. Moreover, the update time can be made worst-case while keeping the same quality and running time guarantees. 
\end{corollary}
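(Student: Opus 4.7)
The plan is to leverage the maintained tree flow sparsifier $T$ from \Cref{cor:dynamicTreeFlowSparsifier} and exploit the tree structure to extract, on the fly, a sparsifier for any queried terminal set $S$. Concretely, I would run the algorithm of \Cref{cor:dynamicTreeFlowSparsifier} (or its worst-case version from \Cref{thm:derand_deamort_main}) to maintain $T$ of quality $q = 2^{O(\log^{5/6} n)}$ and depth $t = O(\log^{1/6} n)$, and on a query $S \subseteq V(G)$ output the Steiner-style subtree $T' := \bigcup_{u\in S} T_{u,r_T}$, where $T_{u,r_T}$ denotes the root-to-leaf path of $u$ in $T$.

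The main step is to verify that $T'$ is a tree flow sparsifier for $G$ w.r.t.\ $S$ of the same quality $q$. The key structural observation is that, since $T$ is a tree, the unique $u$-$v$ path in $T$ between any two terminals $u,v\in S$ is contained in $T_{u,r_T}\cup T_{v,r_T} \subseteq T'$. Therefore for any demand $D$ supported on $S\times S$, an optimal routing of $D$ in $T$ uses only edges of $T'$, yielding $\eta(T',D)=\eta(T,D) \le \eta(G,D)$. Conversely, because $T'$ is a subgraph of $T$, any routing of $D$ in $T'$ is also a routing in $T$, hence $\eta(T,D)\le \eta(T',D)$, and combining with the sparsifier guarantee of $T$ gives $\eta(G,D)\le q\cdot \eta(T,D) \le q\cdot\eta(T',D)$. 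This establishes the quality claim.

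For the query time, I would observe that $T'$ is the union of $|S|$ root-to-leaf paths, each of length at most $t=O(\log^{1/6} n)$, so $T'$ has $O(|S|\log^{1/6} n)$ vertices and edges, and can be assembled in the same amount of time by walking up each leaf $u\in S$ to the root and marking visited ancestors with a hash set to avoid duplication. This matches the claimed $O(|S|\log^{1/6} n)$ query time. For the amortized update time, the bound $2^{O(\log^{5/6} n)}$ is inherited directly from \Cref{cor:dynamicTreeFlowSparsifier}, and since the derandomized-deamortized version of \Cref{thm:derand_deamort_main} supports leaf-to-root path queries within the same time budget, this yields the worst-case update-time variant without loss in quality or query time.

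I do not expect a genuine obstacle here: once \Cref{cor:dynamicTreeFlowSparsifier} and \Cref{thm:derand_deamort_main} are in hand, the remaining work is the short structural lemma that the union of root-paths of terminals in a tree flow sparsifier is itself a tree flow sparsifier w.r.t.\ those terminals, which is essentially a one-line consequence of uniqueness of paths in trees and the subgraph monotonicity of congestion. The only minor care point is to ensure that in the worst-case variant the data structure exposes leaf-to-root paths quickly enough; but since \Cref{thm:derand_deamort_main} explicitly provides a query returning such a path in $O(\log^{1/6} n)$ time, summing over $u\in S$ gives exactly the stated $O(|S|\log^{1/6} n)$ bound.
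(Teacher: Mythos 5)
Your proposal is correct and follows essentially the same route as the paper: output the union of root-to-leaf paths $T' = \bigcup_{u\in S} T_{u,r_T}$, argue via uniqueness of tree paths (through the LCA) and subgraph monotonicity that $T'$ preserves congestion exactly relative to $T$ on demands supported on $S$, then invoke the quality of $T$ from \Cref{cor:dynamicTreeFlowSparsifier}, the depth bound for the query time, and \Cref{thm:derand_deamort_main} for the worst-case variant. No gaps.
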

\begin{proof}
	We first show that the output tree $T'$ is a tree flow sparsifier with quality $1$ for $T$ w.r.t. $S$. Since $T$ is a tree, every demand among two leaf vertices $u,v$ in $T$ is routed according to the unique path $T_{u,v}$ between $u$ and $v$ in $T$. If $u,v \in S$, note that $T_{u,v}$ is entirely contained in the sub-tree $T' = \bigcup_{u \in S} T_{u,_rt}$. Therefore, every demand that we route in $T$ between any vertex pair $u,v$ in $S$, can also be routed in $T'$ with the same congestion. For the other reduction, by construction we have that $T' \subseteq T$, i.e., every demand that we route in $T'$ between any vertex pair in $S$ can be routed in $T$ with the same congestion. Combining the above gives that $T'$ is a tree flow sparsifier with quality $1$ for $T$ w.r.t. $S$. As $T$ is a tree flow sparsifier with quality $2^{O(\log^{5/6} n)}$ for $G$~(Corollary~\ref{cor:dynamicTreeFlowSparsifier}), by the transitivity property of flow sparsifiers, it follows that $T'$ is a tree flow sparsifier with quality $2^{O(\log^{5/6} n)}$ for $G$ w.r.t. $S$.
	
	We next analyze the running time. The claimed amortized update time follows directly from Corollary~\ref{cor:dynamicTreeFlowSparsifier}. For the query time, Corollary~\ref{cor:dynamicTreeFlowSparsifier} ensures that at any time the depth of $T$ is $O(\log^{1/6} n)$. The latter guarantees that the length of each path from a leaf vertex to the root in $T$ is $O(\log^{1/6} n)$, which in turn implies that the time to compute $T'$ and its size are both bounded by $O(|S| \log^{1/6} n)$.
	
	To achieve our worst-case update time, we replace the expander hierarchy from Theorem~\ref{thm:deterministic_dynamic_ED} with the one from Thereom~\ref{thm:derand_deamort_main}, which in turn allows us to query for any given vertex $u$, the leaf-to-root path of $u$ in the hierarchy. Since we only need such paths for the construction of $T'$, our claim follows. 
\end{proof}

The above corollary readily implies a fully-dynamic algorithm for the all-pair approximate maximum flow problem: upon receiving a query associated with an arbitrary vertex pair $u,v \in V$ we let $S=\{u,v\}$ and then compute a tree flow sparsifier $T'$ for $G$ w.r.t. $S$ using Corollary~\ref{cor:dynamicVertexSparsifier}. Finally, we compute the maximum flow from $u$ to $v$ in $T'$ and return its value as an estimate. We have the following result, which proves the first item of Corollary~\ref{cor:cut and flow} from the introduction.

\begin{corollary}
	There is a deterministic fully dynamic algorithm on a graph $G$ with $n$ vertices that maintains for every vertex pair $u,v \in V$, an estimate that approximates the maximum flow from $u$ to $v$ in $G$ up to a factor of $2^{O(\log^{5/6} n)}$ using $2^{O(\log^{5/6} n)}$ worst-case update time and $O(\log^{1/6} n)$ query time. 
\end{corollary}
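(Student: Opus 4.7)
\medskip

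The plan is to apply Corollary~\ref{cor:dynamicVertexSparsifier} with the terminal set $S=\{u,v\}$ for each query $(u,v)$, and then solve the maximum flow problem on the resulting (small) tree sparsifier. The update-time guarantee transfers immediately, so the only things I would need to verify carefully are the query-time bound and the approximation quality.

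First I would instantiate the worst-case version of Corollary~\ref{cor:dynamicVertexSparsifier}, maintaining the underlying expander hierarchy via Theorem~\ref{thm:derand_deamort_main} so that both the update time and the access to root-to-leaf paths in the hierarchy are worst-case. Given a query pair $(u,v)$, I invoke the subroutine of Corollary~\ref{cor:dynamicVertexSparsifier} on $S=\{u,v\}$; this constructs the subtree $T':=T_{u,r_T}\cup T_{v,r_T}$ consisting of the two leaf-to-root paths. Since each such path has length at most $\dep(T)=O(\log^{1/6} n)$, the tree $T'$ has at most $O(\log^{1/6} n)$ edges and is produced in $O(|S|\log^{1/6} n)=O(\log^{1/6} n)$ time.

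Next, I would compute the maximum $u$--$v$ flow in $T'$ and return its value as the estimate. Because $T'$ is just the union of two paths meeting at a common ancestor, the maximum $u$--$v$ flow equals the minimum edge capacity along the unique $u$--$v$ path in $T'$, so it can be computed by a single traversal in $O(\log^{1/6} n)$ time. Correctness of the approximation follows from Corollary~\ref{cor:dynamicVertexSparsifier}: since $T'$ is a tree flow sparsifier for $G$ with respect to $\{u,v\}$ of quality $q=2^{O(\log^{5/6} n)}$, the value $\mincut_{T'}(u,v)$ satisfies $\mincut_{T'}(u,v)\le\mincut_G(u,v)\le q\cdot\mincut_{T'}(u,v)$ (using the standard fact that tree flow sparsifiers are at least as strong as tree cut sparsifiers), which is exactly the claimed approximation factor for $s$--$t$ maximum flow by the max-flow/min-cut theorem.

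The only potentially delicate point is making the per-query work truly $O(\log^{1/6} n)$ rather than depending on the size of $T'$: we have to avoid materializing the whole ancestor path before realizing it is short. This is not really an obstacle because Theorem~\ref{thm:derand_deamort_main} explicitly supports retrieval of the leaf-to-root path of any vertex in $O(\log^{1/4} n)$ (hence $O(\log^{1/6} n)$, since $\dep(T)=O(\log^{1/6} n)$), so walking up from $u$ and from $v$ simultaneously until the paths meet yields both $T'$ and the bottleneck edge capacity within the stated budget. Combining the worst-case update time $2^{O(\log^{5/6} n)}$ from Theorem~\ref{thm:derand_deamort_main}, the $O(\log^{1/6} n)$ query time just established, and the quality bound inherited from Corollary~\ref{cor:dynamicVertexSparsifier}, the corollary follows.
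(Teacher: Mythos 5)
Your proposal follows exactly the paper's route: query the dynamic vertex-sparsifier construction (Corollary~\ref{cor:dynamicVertexSparsifier}) with $S=\{u,v\}$, using the worst-case hierarchy of Theorem~\ref{thm:derand_deamort_main}, and return the maximum $u$--$v$ flow on the resulting tree $T'$, with quality, update time, and $O(\log^{1/6}n)$ query time inherited as you describe. The extra details you supply (bottleneck-edge computation on the two ancestor paths) are consistent with the paper's argument, so the proof is correct and essentially the same as the paper's.
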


We next show that the same idea extends to the maximum concurrent (multi-commodity) flow problem, which is defined as follows: given an unweighted graph $G$ and $k$ source-sink pairs $s_i,t_i$, each associated with a non-negative demand $D(i)$, compute the congestion $\eta(G,D)$, i.e., the minimum congestion a flow $F$ that is a routing of $D$ in $G$, where $D := (D(1),\ldots,D(k))$. We study a dynamic version of the problem, where $G$ undergoes edge updates and the $k$ source-sink pairs are made available only at query time. Our dynamic construction uses Corollary~\ref{cor:dynamicVertexSparsifier}, and whenever the $k$ source-sink pairs are revealed to us, we define $V_k = \cup_{i} \{s_i,t_i\}$ and then compute a tree flow sparsifier $T'$ for $G$ w.r.t. $V_k$. Finally, we compute the congestion $\eta(T,D)$ in $T$ and return this value as an estimate. The result below follows from the definition of tree flow sparsifiers and proves the third item of Corollary~\ref{cor:cut and flow} from the introduction.

\begin{corollary}
	There is a fully dynamic deterministic algorithm on a graph $G$ with $n$ vertices that maintains for every demand set $D$ defined on $k$ source-sink pairs $s_i,t_i$, an estimate that approximates $\eta(G,D)$ up to a factor of $2^{O(\log^{5/6} n)}$ using $2^{O(\log^{5/6} n)}$ worst-case update time and $O(k \log^{1/6} n)$ query time. 
\end{corollary}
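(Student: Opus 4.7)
The plan is to reduce the corollary directly to \Cref{cor:dynamicVertexSparsifier}, exactly along the lines sketched in the paragraph preceding the statement. First, I would maintain the graph $G$ under edge insertions and deletions using the data structure from \Cref{cor:dynamicVertexSparsifier}. Upon a query specifying $k$ source-sink pairs $(s_i,t_i)$ with demands $D(i)$, I set $V_k := \bigcup_{i=1}^{k}\{s_i,t_i\}$ (so $|V_k|\le 2k$) and invoke \Cref{cor:dynamicVertexSparsifier} on the terminal set $V_k$ to obtain, in $O(|V_k|\log^{1/6}n)=O(k\log^{1/6}n)$ time, a tree flow sparsifier $T'$ for $G$ w.r.t.\ $V_k$ of quality $q = 2^{O(\log^{5/6}n)}$. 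I then compute the congestion $\eta(T',D)$ exactly and return it as the estimate.

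The approximation guarantee is immediate from the definition of a tree flow sparsifier: since every $s_i,t_i$ lies in $V_k$, the inequality $\eta(T',D)\le\eta(G,D)\le q\cdot\eta(T',D)$ holds, so the returned value is within a factor $q = 2^{O(\log^{5/6}n)}$ of $\eta(G,D)$. The worst-case update time of $2^{O(\log^{5/6}n)}$ is inherited verbatim from \Cref{cor:dynamicVertexSparsifier} (in its deamortized form guaranteed by \Cref{thm:derand_deamort_main}, which is the version used in the vertex-sparsifier corollary).

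For the query time, the only non-routine step is computing $\eta(T',D)$ on the tree. Since $T'$ has depth $O(\log^{1/6}n)$ (it is the union of at most $2k$ leaf-to-root paths of a hierarchy of that depth), for each pair $(s_i,t_i)$ the unique $s_i$-$t_i$ path $P_i$ in $T'$ has length $O(\log^{1/6}n)$ and can be obtained in that time by walking up to the root from both endpoints and detecting their LCA. I then accumulate, for every edge $e\in E(T')$, the load $\ell(e) := \sum_{i : e\in P_i} D(i)$, and return $\eta(T',D)=\max_{e}\ell(e)/c(e)$. Because $T'$ is a tree, the unique-path routing achieves the minimum congestion, so this expression is exact. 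The total work is $\sum_{i=1}^{k} |P_i| = O(k\log^{1/6}n)$, matching the stated query bound.

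The main conceptual ingredient is the observation that once one has $T'$, the multi-commodity routing problem on a tree reduces to an easy edge-load computation; the main \emph{technical} work has already been absorbed into \Cref{cor:dynamicVertexSparsifier,thm:EH_gives_R_tree,thm:derand_deamort_main}, so no additional obstacle arises. The only detail worth double-checking when writing out the proof is that $T'$ as constructed by \Cref{cor:dynamicVertexSparsifier} indeed contains a (unique) path between every pair of terminals in $V_k$ — this follows because $T'$ is the union of root-to-leaf paths for all of $V_k$, hence a connected subtree containing all terminals, so any two of them are joined by a unique path of length $O(\log^{1/6}n)$.
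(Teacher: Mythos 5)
Your proposal is correct and follows essentially the same route as the paper: reduce to \Cref{cor:dynamicVertexSparsifier} with $V_k=\bigcup_i\{s_i,t_i\}$, extract the tree flow sparsifier $T'$, and return $\eta(T',D)$, with the approximation factor coming directly from the definition of a tree flow sparsifier and the query time from the $O(\log^{1/6}n)$ depth of the hierarchy. Your added detail on computing $\eta(T',D)$ via unique-path edge loads is a correct (and slightly more explicit) elaboration of what the paper leaves implicit.
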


We finally consider a dynamic version of the multi-way cut problem, which is defined as follows. Given an unweighted graph $G$ and $k$ distinguished vertices $s_1,\ldots,s_k$, the goal is to remove a minimum number of edges $F$ such that no pair of distinguished vertices $s_i$ and $s_j$ with $i \neq j$ belong to the same connected component after the removal of $F$ from $G$. We study a dynamic version of the problem, where $G$ undergoes edge updates and the $k$ distinguished vertices are made available only at query time. Similarly to above, we use Corollary~\ref{cor:dynamicVertexSparsifier} and whenever the $k$ distinguished vertices are revealed to us, we define $V_k =\cup_i \{s_i\}$ and then compute a tree flow sparsifier $T'$ for $G$ w.r.t. $V_k$. Finally, we compute an optimal solution to the multi-way cut problem on $T'$ with respect to the queried $k$ distinguished vertices and return this value as an esimate. The result below follows from the definition of tree flow sparsifiers and proves the third item of Corollary~\ref{cor:cut and flow} from the introduction. 

\begin{corollary}
	There is a fully dynamic deterministic algorithm on a graph $G$ with $n$ vertices that maintains for any $k$ distinguished vertices $s_1,\ldots,s_k$, an estimate that approximates an optimal solution to the multi-way cut up to a factor of $2^{O(\log^{5/6} n)}$ using $2^{O(\log^{5/6} n)}$ worst-case update time and $O(k \log^{1/6} n)$ query time. 
\end{corollary}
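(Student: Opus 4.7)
The plan is to maintain a tree flow sparsifier $T$ of $G$ using the dynamic algorithm of Corollary~\ref{cor:dynamicTreeFlowSparsifier}, instantiated with the worst-case deamortized expander hierarchy of Theorem~\ref{thm:derand_deamort_main} in place of the amortized one from Theorem~\ref{thm:deterministic_dynamic_ED}. Then, on a query with terminals $S = \{s_1,\ldots,s_k\}$, I would use the vertex-sparsifier extraction procedure from Corollary~\ref{cor:dynamicVertexSparsifier} to produce a tree flow sparsifier $T'$ for $G$ with respect to $S$, of quality $q = 2^{O(\log^{5/6} n)}$, in time $O(k\log^{1/6} n)$, with $|V(T')| = O(k\log^{1/6} n)$.

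The second step is to compute an optimal multi-way cut of $T'$ with terminal set $S$. Multi-way cut on a tree admits a standard linear-time bottom-up dynamic program: after rooting $T'$ at a fixed terminal, for every node $v$ we maintain, for each terminal $s_i$ lying in the subtree rooted at $v$, the minimum cost of edges cut inside that subtree so that every other terminal is isolated and the component containing $v$ contains either no terminal or the terminal $s_i$. Combining these values at each vertex requires only a constant-time merge per child, so the total running time is $O(|V(T')|) = O(k\log^{1/6} n)$, matching the target query complexity. We return the computed optimum $\mathrm{OPT}_{T'}$ as our estimate.

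To argue the approximation factor, I would rely on two inequalities that follow from the flow-sparsifier guarantee of $T'$. First, the optimal multi-way cut in $G$, of value $\mathrm{OPT}_G$, induces a partition of $V(G)$ that separates all terminals into distinct classes; passing this partition through $T'$ (via the routing interpretation underlying the flow-sparsifier definition) yields a multi-way cut in $T'$ of value at most $\mathrm{OPT}_G$, so $\mathrm{OPT}_{T'} \leq \mathrm{OPT}_G$. Second, the optimal multi-way cut in $T'$ isolates the $k$ terminals in $T'$, and the corresponding partition of $S$ can be realized in $G$ at cost at most $q\cdot\mathrm{OPT}_{T'}$ using the quality guarantee of $T'$, giving $\mathrm{OPT}_G \leq q\cdot\mathrm{OPT}_{T'}$. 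Combining these gives a $q = 2^{O(\log^{5/6} n)}$-approximation.

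The main obstacle I anticipate is carefully formalizing the quality-transfer inequalities above, since the definitions of flow/cut sparsifiers in the paper are stated for bipartitions of the terminal set, whereas multi-way cut corresponds to partitioning $S$ into $k$ singletons. In principle this extension follows because a flow sparsifier preserves the congestion of every multicommodity demand on $S$ (in particular the all-pairs unit demand whose LP relaxation is equivalent, up to $O(1)$ factors, to multi-way cut), so both inequalities go through with only constant loss absorbed into $q$. Apart from this quality-transfer step, all ingredients---the dynamic tree flow sparsifier, the worst-case deamortization, and the tree-DP for multi-way cut---are already in place from the preceding sections and compose cleanly to give the stated worst-case update and query times.
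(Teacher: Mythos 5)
Your proposal follows essentially the same route as the paper: maintain the dynamic tree flow sparsifier (worst-case via Theorem~\ref{thm:derand_deamort_main}), extract the terminal tree $T'$ through Corollary~\ref{cor:dynamicVertexSparsifier}, solve multi-way cut optimally on $T'$ by a linear-time tree procedure, and transfer the value through the sparsifier quality $q=2^{O(\log^{5/6}n)}$, which is exactly what the paper does (it leaves both the tree algorithm and the quality transfer implicit). One small caveat: your claim that the all-pairs unit-demand flow LP is $O(1)$-equivalent to multi-way cut is imprecise (the flow--multicut gap is $\Theta(\log k)$); a cleaner justification uses that the tree flow sparsifier is also a cut sparsifier for bipartitions of $S$ together with the standard isolating-cut argument, giving $\mathrm{OPT}_{T'}\le 2\,\mathrm{OPT}_G$ and $\mathrm{OPT}_G\le 2q\,\mathrm{OPT}_{T'}$, with all constant (or logarithmic) losses absorbed into $q$.
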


The dynamic multicut essentially follows the same idea and we omit it here for the sake of brevity.

\subsection{Dynamic Sparsest Cut and Lowest Conductance Cut}
We show that a dynamic tree flow sparsifier can be used to maintain sparsest cuts, multi-cuts and multi-way cuts. Throughout, we only focus on the dynamic sparsest cut problem. An almost identical idea extends to the lowest conductance cut but we omit a detailed description here for the sake of brevity.

Let $G=(V,E,c)$ be a weighted graph. For any cut $(S,\bar{S})$ such that $|S| \leq |\bar{S}|$, let $c(\delta(S))$ be the sum over capacites of all edges with one endpoint in $S$ and the other in $\bar{S}$, where $\bar{S} = V \setminus S$. Let $\alpha(G,S) := c(\delta(S))/|S|$ be the sparsity of $(S,\bar{S})$. The \emph{sparsest cut} problem asks to find a cut $(S,\bar{S})$ such that $|S| \leq |\bar{S}|$  with smallest possible sparsity in $G$, which we denoted by $\alpha(G)$. We study a dynamic version of this problem, where $G$ undergoes edge updates and at query time we need to report the sparsity $\alpha(G)$ of the sparsest cut in the current graph $G$. To design a dynamic algorithm, we follow a well-known approach used to solve the static version of the problem: given a graph $G$, (1) compute a tree flow sparsifier $T$ with quality $q$ for $G$ and (2) solve the sparsest cut problem on $T$. Since a tree flow sparsifier is also a tree cut sparsifier with the same quality, it is easy to verify that $\alpha(T)$ approximates $\alpha(G)$ up to a factor of $q$. The main advantage of this approach is that computing sparsest cut on trees is much easier. 

To see this, consider a (rooted) tree flow sparsifier $T=(V(T),E(T),c^T)$ with quality $q$ and depth $t$ for $G$ such that the leaf nodes of $T$ correspond to the vertices of $G$. It is known that the sparsest cut on a tree must occur at one of the edges in $T$. We can also build a data-structure such that given an internal node $x$ in $T$ (except the root), it reports the number of leaf nodes in the sub-tree rooted at $x$. Using these two observations, an algorithm for computing $\alpha(T)$ works as follows: 
\begin{itemize}
	\item For each edge $e=(x,p(x)) \in E(T)$~(as $T$ is rooted), where $p$ is the parent of $x$, compute the sparsity of the cut $(S,\bar{S})$ obtained by removing $(x,p(x))$ in $T$ using $c^T(e)/|S|$, where $|S|$ is precisely the number of leaf nodes in the sub-tree rooted at $x$.  
	\item Return $\min_{e \in E(T)} c^T(e)/|S|$.
\end{itemize}

In a similar vein, using Corollary~\ref{cor:dynamicTreeFlowSparsifier} we maintain a tree flow sparsifier $T$ for an unweighted dynamic graph $G$. As $T$ undergoes changes, we additionally update the information about the number of leaf nodes at an internal node and the edge with the smallest sparsity in $T$. Since these updates can be implemented in time proportional to the time needed to maintain $T$, we obtain the following result, which proves the second item of Corollary~\ref{cor:cut and flow} from the introduction.

\begin{corollary}
	There is a deterministic fully dynamic algorithm on a graph $G$ with $n$ vertices that maintains an estimate that approximates $\alpha(G)$ up to a factor of $2^{O(\log^{5/6} n)}$ using $2^{O(\log^{5/6} n)}$ amortized update time and $O(\log^{1/6} n)$ query time.
\end{corollary}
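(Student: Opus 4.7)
The plan is to combine the dynamic tree flow sparsifier from Corollary~\ref{cor:dynamicTreeFlowSparsifier} with a classical static reduction from sparsest cut on a tree to examining each tree edge. Concretely, I maintain the tree flow sparsifier $T=(V(T),E(T),c^{T})$ for $G$ explicitly under updates, then for each edge $e\in E(T)$ (viewing $T$ as rooted arbitrarily) I consider the cut of $V(G)$ induced by removing $e$: its two sides are the leaves of the two subtrees, so its sparsity in $T$ is $c^{T}(e)/\min\{|S_e|,n-|S_e|\}$, where $|S_e|$ denotes the number of leaf descendants of the child endpoint of $e$. It is standard that the tree sparsest cut value $\alpha(T)$ is realized by some such edge, and since $T$ is also a tree \emph{cut} sparsifier of quality $2^{O(\log^{5/6}n)}$, the value $\alpha(T)$ approximates $\alpha(G)$ within this factor.

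First, I invoke Corollary~\ref{cor:dynamicTreeFlowSparsifier} to explicitly maintain $T$ under edge updates to $G$, paying $2^{O(\log^{5/6}n)}$ amortized update time. Each update on $G$ triggers a small set of structural modifications on $T$ (leaf insertions/deletions and edge-capacity changes along root-to-leaf paths); crucially, the total number of such modifications per update is bounded by the amortized recourse of the expander hierarchy, which is subsumed by the stated update time. Second, at each internal node $x$ of $T$ I keep the counter $\ell(x)$ equal to the number of leaf descendants of $x$; these counters can be refreshed after each structural modification in time proportional to the depth $t=O(\log^{1/6}n)$ of $T$, since only ancestors along one root-leaf path are affected. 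Third, I store in a balanced binary search tree (or a priority structure like a Fibonacci heap) the quantity $c^{T}(e)/\min\{\ell(x_e),n-\ell(x_e)\}$ over all $e\in E(T)$, so that the minimum sparsity $\alpha(T)$ can be retrieved in $O(1)$ time and updated in $O(\log n)$ time per modification of $T$. Each single update on $T$ therefore costs at most $O(t+\log n)=O(\log^{1/6}n+\log n)$ additional time, which is dominated by $2^{O(\log^{5/6}n)}$.

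For the query, after maintaining $\alpha(T)$ I simply return its current value, which takes $O(1)$ time; however, I pad this to $O(\log^{1/6}n)$ to match the stated bound (in fact, one could instead traverse $T$ to certify the minimum-sparsity edge in time proportional to the depth of $T$). The approximation guarantee follows because $T$ is a tree cut sparsifier for $G$ with quality $q=2^{O(\log^{5/6}n)}$: for any cut $(A,\bar A)$ of $G$, $\mathrm{mincut}_{T}(A,\bar A)\le \mathrm{mincut}_{G}(A,\bar A)\le q\cdot \mathrm{mincut}_{T}(A,\bar A)$, and since sparsest cut is the minimum of $c(\delta(S))/|S|$ over $\emptyset\ne S\subsetneq V$, we get $\alpha(T)\le\alpha(G)\le q\cdot\alpha(T)$.

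The main obstacle I anticipate is showing that the auxiliary bookkeeping ($\ell(\cdot)$ and the priority structure) can be updated cheaply per structural change to $T$, and not per edge update to $G$ directly. Because the dynamic tree flow sparsifier algorithm provides an explicit representation of $T$ along with the sequence of structural updates it performs, and because the depth of $T$ is only $O(\log^{1/6}n)$, each such structural modification touches only $O(\log^{1/6}n)$ ancestor counters and induces $O(\log^{1/6}n)$ priority-structure updates, each taking $O(\log n)$ time. Thus the total extra amortized work per update to $G$ is bounded by the recourse in $T$ times $O(\log n\cdot\log^{1/6}n)$, which is absorbed into the $2^{O(\log^{5/6}n)}$ term, completing the proof.
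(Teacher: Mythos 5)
Your proposal is correct and follows essentially the same route as the paper: maintain the dynamic tree flow sparsifier $T$, use that it is also a tree cut sparsifier so $\alpha(T)$ approximates $\alpha(G)$ within the quality factor, exploit that the sparsest cut of a tree is realized at a tree edge, and keep leaf-count and minimum-sparsity bookkeeping whose per-modification cost is absorbed into the $2^{O(\log^{5/6}n)}$ update time. Your extra details (the explicit priority structure and using $\min\{|S_e|,n-|S_e|\}$ in the denominator) only sharpen the paper's sketch and do not change the argument.
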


\subsection{Dynamic Connectivity}

We observe that the data-structure representation of the expander hierarchy from Theorem~\ref{thm:derand_deamort_main} leads to a dynamic algorithm for maintaining connectivity information of $G$. More precisely, a graph $G$ is connected iff the top level our expander hierarchy consists of a single vertex. Moreover, two vertices $u$ and $v$ are connected iff the roots of $u$ and $v$ in the hierarchy are the same. These observations lead to the following result, which proves Corollary~\ref{cor:conn} from the introduction.

\begin{corollary}
	There is a deterministic fully dynamic algorithm on a $n$-vertex graph $G$ that maintains connectivity of $G$ using $2^{-O(\log^{5/6} n)}$ worst-case update time and also supports pairwise connectivity queries in $O(\log^{1/6} n)$ time.  
\end{corollary}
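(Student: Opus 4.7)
The plan is to apply the data-structure representation of the expander hierarchy from Theorem~\ref{thm:derand_deamort_main} essentially as a black box, together with two simple structural observations about expander hierarchies: (i) a graph $G$ is connected iff the root level of its expander hierarchy contains exactly one node, and (ii) two vertices $u, v \in V(G)$ lie in the same connected component iff they share the same root in the hierarchy tree. Both facts follow from the construction of an $(\alpha,\phi)$-ED-sequence $(G^0, \ldots, G^t)$: contracting clusters of a partition preserves connected components of $G^i$ in $G^{i+1}$, and since $G^t$ has no edges, the number of nodes at the top level equals the number of connected components of $G^0 = G$. I would first verify these two observations formally, taking care that singleton clusters created by the pruning processes do not break the invariant (a singleton lifted through contractions still traces a unique root-path in $T$, so the correspondence between leaves and roots is well-defined).

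Next, I would describe the two queries. For a pairwise connectivity query on $(u, v)$, invoke the leaf-to-root path query of Theorem~\ref{thm:derand_deamort_main} twice, once for $u$ and once for $v$, read off the top-level endpoint of each path, and compare them. Since the depth of the maintained hierarchy is $O(\log^{1/6} n)$, each path query costs $O(\log^{1/6} n)$ and hence the total query time meets the claimed bound. For the global connectivity query, I would additionally maintain a single integer counter equal to $|V_t(T)|$, the number of vertices at the root level; $G$ is connected iff this counter equals $1$. The counter is updated whenever the top level changes, which happens only when the underlying hierarchy is updated by the algorithm of Theorem~\ref{thm:derand_deamort_main}; since that algorithm already spends $2^{O(\log^{5/6} n)}$ worst-case time per update, piggybacking the counter maintenance on those top-level edits incurs no asymptotic overhead.

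The main technical subtlety, rather than a true obstacle, lies in making the counter-maintenance compatible with the worst-case guarantee and with the fact that Theorem~\ref{thm:derand_deamort_main} does not explicitly materialize a single hierarchy but continuously switches between hierarchies built in the background. I would handle this by observing that during any switch the algorithm must, over the course of the preparation/chasing phases, register each insertion or deletion at every level of the current active hierarchy; it therefore suffices to attach to each such level-$t$ event a constant-time increment or decrement of the counter associated with the currently online hierarchy, and to recompute the counter from scratch at the moment a new hierarchy is brought online (folded into the preparation phase without increasing the asymptotic worst-case update time). Combining the $2^{O(\log^{5/6} n)}$ worst-case update time of Theorem~\ref{thm:derand_deamort_main} with the $O(\log^{1/6} n)$ path-query time and the $O(1)$ counter-read time then yields the corollary.
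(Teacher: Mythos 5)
Your proposal is correct and follows essentially the same route as the paper: the paper's proof consists precisely of the two observations that $G$ is connected iff the top level of the hierarchy is a single vertex, and that $u,v$ are connected iff their roots in the hierarchy coincide, combined with the query interface and worst-case update bound of Theorem~\ref{thm:derand_deamort_main}. Your additional discussion of counter maintenance and the background-hierarchy switching is just a more explicit implementation of what the paper leaves implicit, so there is nothing substantively different to compare.
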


%Given that we can maintain an expander hierarchy $T$ of depth $O(\sqrt{\log n})$
%dynamically, it is clear that we can answer connectivity query between
%any two vertices $u$ and $v$ in $O(\sqrt{\log n})$ time by checking
%if $u$ and $v$ have the same root in $T$. By using the deterministic
%algorithm for maintaining $T$, we obtain \Cref{cor:connectivity}.

%TODO: Refer the right Lemma from the techinical part and explicitly state the subpolynomial time. 

\subsection{Treewidth decomposition}
\label{subsec:treewidth}

A \emph{treewidth decomposition} $T$ of a graph $G=(V,E)$ is a tree such that
each node $x$ in $T$ corresponds to a set $B_{x}\subseteq V$ of vertices called
a \emph{bag}. For each edge $(u,v)\in E$, there must exist a node $x$
whose bag $B_{x}$ contains both $u$ and $v$. Moreover, for each
vertex $u\in V$, $\{x\mid u\in B_{x}\}$ must induce a connected
subtree of $T$. A \emph{width} of $T$ is $\max_{x}|B_{x}|-1$. The
\emph{treewidth} $\tw(G)$ of $G$ is the minimum width over all treewidth
decomposition of $G$. 

We obtain the first dynamic algorithm for maintaining a tree width decomposition. The main observation behind our construction is that a treewidth decomposition of a graph can be directly derived from an expander hierarchy, which works as follows. Let $T$ be a expander hierarchy of a graph $G=(V,E)$. We simply let $T$ itself be the treewidith decomposition. It remains to define a bag $B_{x}$ for each node $x\in T$.

%We observe that a treewidth decomposition of a graph can be constructed from an expander hierarchy $T$ of $G$
%
% We let $T$ itself be the treewidith decomposition. Next, for each node $x\in T$,
%let $U$ be a cluster in $G^{i}$ corresponding to $x$. The bag $B_{x}$
%contains the ``original'' endpoints in $G$ of the edge set $E_{G^{i}}(U,V(G^{i}))$.
%We state a precise definition in \Cref{subsec:treewidth}. We can prove
%that, if $G$ has constant maximum degree, then the width of $T$
%is at most $\tw(G)\cdot n^{o(1)}$. 
%Hence, we obtain the first dynamic algorithm for maintaining treewidth decomposition.

%Recall the definitions of \emph{treewidth} and \emph{treewidth decomposition}
%from the introduction. 

%A treewidth decomposition can be defined from
%an expander hierarchy as follows. 

To this end, for each node $x\in T$, let $U$ be a cluster from $T$ corresponding
to a node $x$. Recall that $U\subseteq V(G^{i})$ for some $i$.
Let $E_{G_{i}}(U,V(G^{i}))$ denote the set of edges in $G^{i}$ incident
to a vertex from $U$. For each $e^{i}\in E_{G_{i}}(U,V(G^{i}))$,
there is a corresponding ``original'' edge $e$ of $G$. The bag
$B_{x}\subseteq V$ consists of the endpoints of all ``original''
edges correspond to edges from $E_{G_{i}}(U,V(G^{i}))$. See \Cref{fig:bag}
for an example. 

\iffalse
\begin{figure}
\begin{centering}
\includegraphics[width=0.5\textwidth]{bag.pdf}
\par\end{centering}
\caption{\label{fig:bag}A bag $B_{x}$ of a node $x$ in an expander hierarchy.}
\end{figure}
\fi

\begin{figure}[h]
\centering
\subfigure{\scalebox{0.24}{\includegraphics{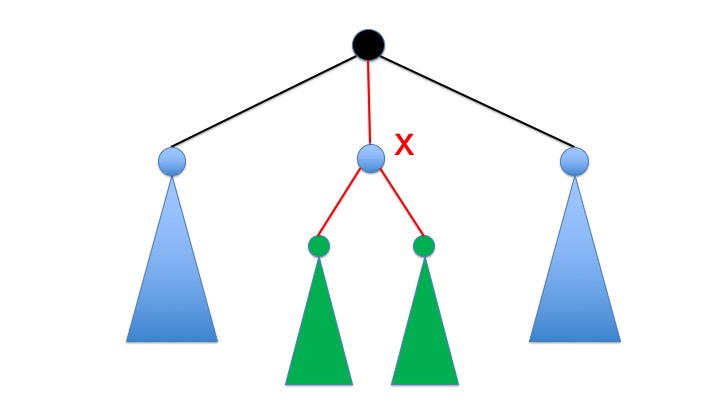}}}
\hspace{0.5cm}
\subfigure{\scalebox{0.24}{\includegraphics{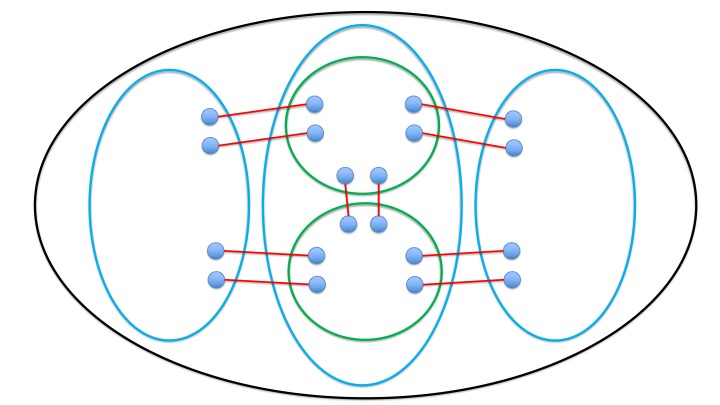}}}
\caption{An illustration of a bag $B_{x}$ of a node $x$ in an expander hierarchy. \label{fig:bag}}
\end{figure}

\begin{lem}	
The expander hierarchy $T$ is a treewith decomposition of $G$.
\end{lem}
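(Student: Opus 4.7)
The plan is to verify the two defining properties of a tree decomposition: (i) for every edge $(u,v) \in E(G)$ some bag contains both endpoints, and (ii) for every $u \in V(G)$ the set $S_u := \{x \in V(T) : u \in B_x\}$ induces a connected subtree of $T$. Throughout, for a node $x \in V_i(T)$ corresponding to $u_i \in V(G^i)$, I will write $V_x \subseteq V(G)$ for the set of original vertices that are leaf-descendants of $x$ in $T$, equivalently the vertices of $G$ that get contracted into $u_i$. I will also write $i^*(u,v)$ for the smallest level at which the two endpoints of an edge $(u,v) \in E(G)$ share a common ancestor in $T$.

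Property (i) is nearly immediate. By the definition of the $(\alpha,\phi)$-ED sequence, at level $i^*(u,v)-1$ the edge $(u,v)$ survives in $G^{i^*(u,v)-1}$ as a non-self-loop edge $(u_{i^*-1},v_{i^*-1})$, hence its original endpoints $u$ and $v$ both lie in the bag of the node corresponding to $u_{i^*-1}$. The main work is in verifying property (ii). I will split $S_u$ into its intersections with the ancestor chain of $u$'s leaf in $T$ and with the non-ancestors of $u$. A short case analysis from the definition of the bag shows: (A) an ancestor $x$ of $u$ at level $i$ lies in $S_u$ iff $u_i$ has at least one non-self-loop edge in $G^i$, equivalently iff some neighbor of $u$ lies outside $V_x$; (B) a non-ancestor $x$ of $u$ at level $i$ lies in $S_u$ iff some neighbor $v$ of $u$ lies inside $V_x$, since then the edge $(u,v)$ corresponds to a non-self-loop edge incident to $x$ in $G^i$. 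Consequently $S_u$ is the union of an initial segment of $u$'s leaf-to-root path together with, for each neighbor $v$ of $u$, the initial segment of $v$'s leaf-to-root path up to level $i^*(u,v)-1$.

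These chains meet at the common parents $x_{i^*(u,v)}$ of $u_{i^*(u,v)-1}$ and $v_{i^*(u,v)-1}$, so the connectivity of $S_u$ reduces to showing that each such meeting point is itself in $S_u$. This is where the hard part will be: for the neighbor $v$ whose merge level $i^*(u,v)$ is maximal among all neighbors of $u$, no neighbor of $u$ lies outside $V_{x_{i^*(u,v)}}$, so under the literal reading of the bag definition (which only counts non-self-loop edges of $G^i$) the ancestor $x_{i^*(u,v)}$ fails to be in $S_u$, disconnecting the chain arising from $v$ from the rest. To resolve this I will interpret $B_x$ as also recording the original endpoints of self-loops at $u_i$ that are \emph{freshly formed at level $i$}, i.e., those arising from edges $(u',v') \in E(G)$ whose merge level equals $i$. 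This interpretation contributes $O(1)$ endpoints per edge of $G$ beyond the original contribution, so it does not harm the bag-size bound needed in \Cref{cor:treewidth}, and it guarantees that for every neighbor $v$ of $u$ the edge $(u,v)$ places $u,v$ into $B_{x_{i^*(u,v)}}$. With this refinement, every neighbor-chain attaches to the ancestor chain at a node in $S_u$, so $S_u$ is connected, completing the verification of property (ii) and the lemma.
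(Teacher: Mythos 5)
Your argument is correct, and the ``refinement'' you add at the end is in fact exactly the bag definition this section works with: a node $x$ of $T$ is identified with a cluster $U\subseteq V(G^i)$, and $B_x$ collects the original endpoints of \emph{all} edges in $E_{G^i}(U,V(G^i))$, which includes the edges of $G^i$ with both endpoints inside $U$ --- precisely the original edges whose merge level is that of $x$, i.e., your ``freshly formed self-loops''. So no modification of the construction is needed: your amended bags coincide with the paper's, and your worry about the width bound is moot, since \Cref{cor:helpfulTreeWidth} bounds exactly these bags via flow-linkedness (\Cref{lem:bag is linked}) rather than by per-edge counting. You are right, though, that the weaker phrasing from the introduction (only endpoints of edges incident to $u_i$ in $G^i$) genuinely violates the connectivity axiom at the topmost merge --- already for a single edge contracted in one step --- so spotting that failure mode is a fair catch of the discrepancy between the two phrasings. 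One small imprecision: for an ancestor $x$ of $u$, membership $u\in B_x$ is not equivalent to ``$u_i$ has some incident edge in $G^i$'' but to ``some edge of $G$ with endpoint $u$ survives at level $i$'', i.e., some neighbor of $u$ lies outside $V_x$; you state both, and it is the second, correct criterion that your proof actually uses.

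Beyond that, your route to the connectivity axiom is genuinely different from the paper's. You describe $S_u=\{x\mid u\in B_x\}$ explicitly as an initial segment of $u$'s leaf-to-root path together with, for each neighbor $v$, the initial segment of $v$'s path below the merge node of $(u,v)$, and you check that each neighbor chain attaches to a node of $u$'s chain that itself lies in $S_u$. The paper instead argues by contradiction: if $S_u$ induced two components $T_1,T_2$, then a node $y$ strictly between them would satisfy $u\notin B_y$ while $B_y$ separates $\bigcup_{x\in T_1}B_x$ from $\bigcup_{x\in T_2}B_x$ in $G$, which is impossible since $u$ lies on both sides. Your direct characterization is longer but fully self-contained, and it effectively supplies the separator property that the paper's proof only asserts; the paper's argument is shorter once that property is granted. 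For the edge axiom you also pick a different witness bag ($u$'s ancestor one level below the merge, rather than the merge node itself); both choices are valid under the section's bag definition.
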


\begin{proof}
Observe that for each edge $(u,v)\in E$ there is a unique cluster graph
$C$ in $T$ containing a edge $e'\in E(C)$ corresponding to $e$.
Let $x$ be the node in $T$ corresponding to $C$. It is clear that
the bag $B_{x}$ containing both $u$ and $v$. 

Next, suppose towards a contradiction that there is a vertex $u$ where
the set $\{x\mid u\in B_{x}\}$ does not induce a connected subtree
of $T$. Let $T_{1}$ and $T_{2}$ be two disconnected induced subtrees.
Let $y$ be a node in a path connecting $T_{1}$ and $T_{2}$ such that $y$ is neither in $T_{1}$ nor in $T_{2}$. Observe that the bag $B_{y}$ is a separator that separates vertices in the bags in $T_{1}$ and $T_{2}$.
More precisely, let $V_{1}=\bigcup_{x\in T_{1}}B_{x}$ and $V_{2}=\bigcup_{x\in T_{2}}B_{x}$.
Observe that in the graph $G[V\setminus B_{y}]$, no pair of vertices
between $V_{1}\setminus B_{y}$ and $V_{2}\setminus B_{y}$ can be
connected. However, we have that $u\in V_{1}\setminus B_{y}$ and $u\in V_{2}\setminus B_{y}$, which is a contradiction.
\end{proof}

To bound the width of our treewidth decomposition, we need the notions of well-linkedness and flow-linkedness. 
\begin{definition}
A set of $S \subset V(G)$ is \emph{$\gamma$-well-linked} in $G$ iff any
cut $(A,B)$ in $G$, $|E(A,B)|\ge\gamma\cdot\min\{|A\cap S|,|B\cap S|\}$.
\end{definition}
%
%From\thatchaphol{Add a remark for the lemma below.} the definition of the bag, we have:
%\begin{lem}
%Let $T$ be an $(\alpha,\phi)$-expander hierarchy of depth $t$ and slack $s$. For each node $x\in T$, the bag $B_{x}$ is a set
%of vertices of $G$ that is $O(\alpha^{t}\phi\log m)$-well-linked.
%\end{lem}
\begin{definition}
	A set $S \subset V(G)$ is \emph{$\gamma$-flow-linked} in $G$ if given any multi-commodity
	flow demand $D$ on $S$ where the total demand on each vertex $v\in S$
	is at most $1$, the congestion for routing $D$ in $G$ is at
	most $\eta(G,D)\le1/\gamma$. 
\end{definition}
It is easy to see that any $\gamma$-flow-linked set in $G$ is a $\gamma$-well-linked set in $G$.
The next fact relates well-linkedness and treewidth in a bounded degree graph. 

\begin{fact}[A paraphrase of Corollary 2.1 from \cite{ChekuriC13}]\label{fact:tw linked}
Let $G=(V,E)$
be a graph with maximum degree $\Delta$. Let $B\subseteq V$ be a
set of vertices that is $\gamma$-well-linked in $G$. Then $\tw(G)\ge\frac{\gamma|B|}{3\Delta}-1$ or equivalently $|B|=O(\frac{\Delta}{\gamma}\cdot\tw(G))$. 
\end{fact}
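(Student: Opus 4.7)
The plan is to prove this classical connection between well-linked sets and treewidth by extracting a sparse, $B$-balanced vertex cut from any tree decomposition of $G$ and then applying the well-linkedness hypothesis to the edge cut it induces. Let $w=\tw(G)$ and fix an optimal tree decomposition $\mathcal{T}=(T,\{B_x\}_{x\in V(T)})$ with $\max_x|B_x|\le w+1$. Recall the standard separator property of tree decompositions: for every node $y\in V(T)$, removing $B_y$ from $G$ disconnects the vertices belonging only to different subtrees of $T-y$; in particular, each connected component of $G[V\setminus B_y]$ is a union of such ``subtree'' vertex sets, and $G$ has no edges between distinct components.

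First, I would weight each node $x\in V(T)$ by the number of $B$-vertices appearing only in the subtree rooted at $x$ (after suitably rooting $T$). By the classical centroid-of-a-weighted-tree argument, there exists $y\in V(T)$ such that every subtree of $T-y$ carries at most $|B|/2$ of this weight. Translating back to $G$, this gives a bag $B_y$ with $|B_y|\le w+1$ such that every connected component $V_1,\dots,V_k$ of $G\setminus B_y$ satisfies $|V_i\cap B|\le|B|/2$.

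Next, I would group $V_1,\dots,V_k$ greedily (e.g.\ largest-first, always appending to the currently lighter side) into two classes yielding a partition $A\sqcup\bar A=V\setminus B_y$ with both $|A\cap B|$ and $|\bar A\cap B|$ at least $(|B|-|B_y|)/3$. Because no edge of $G$ crosses between different $V_i$'s, every edge of $E_G(A,V\setminus A)$ must be incident to $B_y$, so
\[
|E_G(A,V\setminus A)|\;\le\;\sum_{v\in B_y}\deg_G(v)\;\le\;\Delta\,|B_y|\;\le\;\Delta(w+1).
\]
On the other hand, $\gamma$-well-linkedness of $B$ applied to the cut $(A,V\setminus A)$ in $G$ gives
\[
|E_G(A,V\setminus A)|\;\ge\;\gamma\,\min\{|A\cap B|,|\bar A\cap B|\}\;\ge\;\tfrac{\gamma}{3}\bigl(|B|-|B_y|\bigr).
\]
Chaining the two inequalities and rearranging yields $w\ge \gamma|B|/(3\Delta)-O(1)$, which is the claimed bound (the constant $-1$ in the statement comes from absorbing the $|B_y|\le w+1$ slack into the inequality; a careful case split gives the exact constant).

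The main obstacle will be the bookkeeping in the balanced-partition step. Specifically, one must ensure that if $|B_y\cap B|$ itself is a constant fraction of $|B|$, the greedy split of the remaining components still produces roughly balanced sides. I would handle this by a short dichotomy: if $|B_y\cap B|\ge|B|/3$, then already $|B|\le 3(w+1)$ and the desired inequality is immediate (it is actually much stronger than needed); otherwise $|B\setminus B_y|\ge 2|B|/3$, and the greedy argument on the components (each of weight $\le|B|/2$) delivers the required $1/3$--$2/3$ split. All other steps are standard facts about tree decompositions and trees, so I expect the proof to remain within a few paragraphs once this case split is written out cleanly.
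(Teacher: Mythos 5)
The paper does not actually prove this statement---it is quoted as a paraphrase of Corollary~2.1 of Chekuri--Chuzhoy---so there is no internal proof to compare against. Your argument is the standard self-contained derivation: take an optimal tree decomposition, use the classical fact that some bag $B_y$ (of size $\le \tw(G)+1$) is a balanced separator for the weight function counting $B$-vertices, group the components of $G\setminus B_y$ into two sides, observe that every crossing edge is incident to $B_y$ and hence the cut has size at most $\Delta(\tw(G)+1)$, and play this against $\gamma$-well-linkedness. This outline is sound, and your dichotomy on $|B_y\cap B|$ is the right way to handle the degenerate branch (note that the first branch silently uses $\gamma\le\Delta$, which does hold for $|B|\ge 2$ via a singleton cut, and should be said explicitly).

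The one concrete slip is the quantitative claim in the grouping step: it is not true in general that the components can be split so that both sides carry at least $(|B|-|B_y|)/3$ of the $B$-weight. The balanced-separator lemma only bounds each component's weight by $|B|/2$, not by half of the \emph{residual} weight $M=|B\setminus B_y|$; e.g.\ if one component holds $|B|/2$ terminals, the remaining components hold $|B|/6+\epsilon$, and $|B_y\cap B|$ is just below $|B|/3$, then the lighter side of any split is at most $|B|/6+\epsilon < M/3$. What the greedy (or the ``one big component vs.\ the rest'') argument does give is a split with lighter side $\ge |B|/6$ up to lower-order terms, which yields $\tw(G)\ge \gamma|B|/(6\Delta)-1$ rather than the stated constant $3$. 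Since the paper only ever invokes the asymptotic form $|B|=O(\frac{\Delta}{\gamma}\tw(G))$ (in the treewidth corollary), this loss is harmless for the application, but as written your chain of inequalities overstates the guarantee; to recover the constant $3$ you would need a sharper case analysis (or simply defer to the cited corollary). Also flag the trivial edge case $|B|\le 1$, where well-linkedness is vacuous and the inequality as literally stated needs the convention $\gamma\le\Delta$ (or $\gamma\le 1$) to be meaningful.
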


The following key technical lemma relates the notion of tree flow sparsifiers to the notion of treewidth via flow-linkedness.

\begin{lem}
	\label{lem:bag is linked}
	If $T$ is a tree flow sparsifier with quality $q$, then each bag
	$B_{x}$ is $\Omega(1/q)$-flow-linked.
\end{lem}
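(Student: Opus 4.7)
The plan is to reduce the flow-linkedness of $B_x$ in $G$ to a congestion bound for routing in $T$ itself, and then transfer this back to $G$ via the tree flow sparsifier property. Concretely, for any $1$-bounded demand $D$ supported on $B_x$, the sparsifier guarantee gives $\eta(G,D)\le q\cdot \eta(T,D)$. Thus it suffices to exhibit a routing of $D$ in $T$ with congestion $O(1)$, which will immediately yield $\eta(G,D)\le O(q)$ and hence the $\Omega(1/q)$-flow-linkedness of $B_x$.

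In $T$, I use the canonical routing along unique tree paths. For every tree edge $(a,\mathrm{parent}(a))$ with $a$ at level $j$, the capacity is $\deg_{G^j}(a)$ by construction, and the $1$-boundedness of $D$ implies that the total demand crossing this edge is at most $\min\bigl(|B_x\cap T_a|,\,|B_x\setminus T_a|\bigr)$, where $T_a$ denotes the subtree of $T$ rooted at $a$. The heart of the argument is the combinatorial claim that at least one of these two cardinalities is bounded by $\deg_{G^j}(a)$, which gives congestion at most $1$ on every internal tree edge. For a leaf edge $(v,\mathrm{parent}(v))$ with $v\in B_x$, the bound is immediate: the total demand at $v$ is $O(1)$ while the capacity is $\deg_G(v)\ge 1$.

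To establish the combinatorial claim, I split $B_x$ into $B_x^{in}$ (original endpoints nested in $u_i$) and $B_x^{out}$ (endpoints nested in some $u_i'\ne u_i$ at level $i$). Each vertex $v\in B_x$ determines a unique edge $e^i$ of $G^i$ incident to $u_i$ whose original endpoints include $v$; uniqueness holds because an edge of $G$ both of whose original endpoints are nested in $u_i$ would have become a self-loop and been deleted during contraction. Depending on whether $a$ is a descendant of $u_i$, an ancestor of $u_i$, or lies in a subtree disjoint from $T_{u_i}$, the ``small side'' of the cut $(T_a,V(T)\setminus T_a)$ sits entirely inside $B_x^{in}$ or entirely inside $B_x^{out}$. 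In each case I map a vertex $v$ on the small side to the projection of its edge $e^i$ to $G^j$: because exactly one of the two original endpoints of $e^i$ lies in $T_a$, this projection is a non-self-loop edge of $G^j$ incident to $a$, and since the contraction process never merges distinct parallel edges, the map is injective. This yields $\min(|B_x\cap T_a|,|B_x\setminus T_a|)\le \deg_{G^j}(a)$ as required.

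The main obstacle is the small case analysis on the position of $a$ relative to $u_i$ in $T$, together with verifying that the projection of $e^i$ from level $i$ to level $j$ truly survives as a non-self-loop edge incident to $a$. This reduces to checking that the level-$j$ ancestors of the two original endpoints of $e^i$ lie on opposite sides of the cut defined by $a$, which follows mechanically from how $a$, $u_i$, and $u_i'$ sit in the tree hierarchy.
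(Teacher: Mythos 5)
Your proposal is correct, but the core of your argument differs from the paper's. Both proofs share the forced outer reduction: exhibit a routing of $D$ in $T$ with $O(1)$ congestion and then transfer it to $G$ through the quality-$q$ guarantee, giving $\eta(G,D)\le O(q)$ and hence $\Omega(1/q)$-flow-linkedness. The paper then routes everything through the hub $x$ itself: each $v\in B_x$ sends its (at most one unit of) demand along its tree path to $x$, and the charging is per path --- the boundary edge of $G^i$ that put $v$ into the bag contributes one unit of capacity to \emph{every} tree edge on the path from $v$ to $x$, so all these flows coexist with congestion $O(1)$ and are then matched up at $x$. You instead keep the canonical unique-path routing in $T$ and argue edge by edge: the demand crossing a tree edge below a level-$j$ node $a$ is at most $\min\bigl(|B_x\cap T_a|,\,|B_x\setminus T_a|\bigr)$, and your injection of the appropriate side of the bag into the edges of $G^j$ incident to $a$ (using that contraction preserves parallel edges and only kills self-loops) bounds this by the capacity $\deg_{G^j}(a)$. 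The case analysis on whether $a$ is below, above, or incomparable to $x$ checks out, and the injectivity argument is sound; the only slip is the claim that each $v\in B_x$ determines a \emph{unique} edge of $G^i$ incident to $u_i$ --- a bag vertex may have several surviving incident edges, and the self-loop justification does not give uniqueness --- but this is harmless, since fixing one such edge per vertex arbitrarily is all your injection needs. The trade-off: the paper's hub charging is shorter and case-free, while your argument proves the slightly stronger per-edge fact that the bag is $O(1)$-congestion routable in $T$ under the natural shortest-path routing (essentially that $B_x$ is well-linked in $T$ itself), at the cost of the positional case analysis.
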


\begin{proof}
	Let $D$ be any demand $D$ on $B_{x}$ where the total demand on
	each vertex $v\in B_{x}$ is at most $1$. It suffices to show $D$
	is routable in $T$, i.e.~$\eta(T,D)\le1$. This is because $T$
	has quality $q$, so $D$ can be routed in $G$ with congestion
	$q$, i.e.~$\eta(G,D)\le q$.
	
	The crucial observation is that all vertices $v\in B_{x}$ can route
	one unit of flow in $T$ to $x$ \emph{simultaneously} without congestion.
	The latter holds since each $v\in B_{x}$ is an endpoint of some boundary
	edge $e$ of a cluster correspond to the node $x$ or the children
	of $x$ in $T$. Therefore, the edge $e$ contributes to one unit capacity to
	every tree-edge in the path from $v$ to $x$ in $T$. 
	
	Now, to route $D$ in $T$, each vertex $v\in B_{x}$ just sends flow
	(equal to its total demand) of at most one unit to $x$, which causes
	no congestion. Connecting the all flow paths that meet at $x$ completes the proof of the lemma. 
\end{proof}

As an expander hierarchy is a good quality tree flow sparsifiers, our construction of treewidth decomposition has small width.
%at most $\tw(G)\cdot n^{o(1)}$.
This fact is summarized as follows:

\begin{cor}
	\label{cor:helpfulTreeWidth}
	Let $G$ be a constant degree graph and let $T$ an $(\alpha,\phi)$-expander
	hierarchy of $G$ with depth $t$ and slack $s$. Then each bag $B_{x}$
	has size at most $\tw(G)\cdot O(s\log m)^{t}\cdot O(\max\{\frac{1}{\alpha},\frac{1}{\phi}\}/ \alpha^{t-1})$. 
\end{cor}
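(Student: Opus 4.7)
The plan is to chain together the three previously established results in a direct way: the quality bound for expander hierarchies viewed as tree flow sparsifiers, the translation from tree flow sparsifier quality to flow-linkedness of the bags, and the classical connection between (well-)linkedness and treewidth in bounded-degree graphs. No new combinatorial or algorithmic ingredient should be needed.

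First, I would invoke Theorem~\ref{thm:EH_gives_R_tree}: since $T$ is an $(\alpha,\phi)$-expander hierarchy of $G$ with depth $t$ and slack $s$, it is a tree flow sparsifier for $G$ of quality
\[
q = O(s\log m)^{t}\cdot O\bigl(\max\{\tfrac{1}{\alpha},\tfrac{1}{\phi}\}/\alpha^{t-1}\bigr).
\]
Next, I would apply Lemma~\ref{lem:bag is linked} to $T$ to conclude that every bag $B_{x}$ of the treewidth decomposition derived from $T$ is $\Omega(1/q)$-flow-linked in $G$. A short observation then yields that $B_x$ is also $\Omega(1/q)$-well-linked in $G$: indeed, for any cut $(A,B)$ of $G$, routing the uniform demand that sends $1/\min\{|A\cap B_x|,|B\cap B_x|\}$ units across the cut is an $O(1)$-bounded demand on $B_x$, so by the $\Omega(1/q)$-flow-linkedness it is routable with congestion $O(q)$, which forces $|E(A,B)|\ge \Omega(1/q)\cdot \min\{|A\cap B_x|,|B\cap B_x|\}$.

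Finally, I would apply Fact~\ref{fact:tw linked} with $\gamma = \Omega(1/q)$ and $\Delta = O(1)$ (the constant-degree hypothesis on $G$), obtaining
\[
|B_x| \;=\; O\!\left(\frac{\Delta}{\gamma}\cdot \tw(G)\right) \;=\; O(q)\cdot \tw(G) \;=\; \tw(G)\cdot O(s\log m)^{t}\cdot O\bigl(\max\{\tfrac{1}{\alpha},\tfrac{1}{\phi}\}/\alpha^{t-1}\bigr),
\]
as claimed. This bounds every bag uniformly, so in particular it bounds the width of the decomposition.

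I do not anticipate a real obstacle: the entire argument is a composition of results that have already been stated. The only thing to verify carefully is the flow-linked $\Rightarrow$ well-linked step, which is standard but should be written explicitly since the definitions in the excerpt are formally distinct; the rest is bookkeeping of the quality parameter $q$ through the three invocations.
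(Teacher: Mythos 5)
Your proposal is correct and follows essentially the same route as the paper: invoke Theorem~\ref{thm:EH_gives_R_tree} for the quality $q$, Lemma~\ref{lem:bag is linked} for $\Omega(1/q)$-flow-linkedness of each bag, note that flow-linked implies well-linked, and conclude via Fact~\ref{fact:tw linked} with $\Delta=O(1)$. The only difference is that you spell out the flow-linked $\Rightarrow$ well-linked step, which the paper dismisses as immediate.
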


\begin{proof}
	By \Cref{thm:EH_gives_R_tree}, $T$ has quality $q=O(s\log m)^{t}\cdot O(\max\{\frac{1}{\alpha},\frac{1}{\phi}\}\cdot\frac{1}{\alpha^{t-1}})$,
	and thus each bag $B_{x}$ is $\Omega(1/q)$-flow-linked by \Cref{lem:bag is linked}, and hence also
	$\Omega(1/q)$-well-linked. Finally, by \Cref{fact:tw linked}, $|B_{x}|\le O(\tw(G)\cdot q)$
	as desired.
\end{proof}

Our dynamic algorithm for treewidth decomposition proceeds as follows. We maintain a $(\alpha, \phi)$-expander hierarchy $T$ with slack $s:=2^{\bar{O}(\log^{1/2} n)}$ and depth $t:= O(\log^{1/6} n)$ of $G$ using Theorem~\ref{thm:deterministic_dynamic_ED}, where $\alpha := 2^{-\bar{O}(\log^{2/3} n)}$, $\phi := 2^{-O(\log^{5/6} n)}$. Using Corollary~\ref{cor:helpfulTreeWidth} and observing that we can explicitly update all the bags within the same running time guaranteed by Theorem~\ref{thm:deterministic_dynamic_ED}, we get the following result which proves Corollary~\ref{cor:treewidth} from the introduction.

\begin{cor}
	There is a deterministic fully dynamic algorithm
	on a constant degree graph $G$ with $n$ vertices that maintains a treewidth decomposition of $G$ with width $tw(G) \cdot 2^{O(\log^{5/6} n)}$ using $2^{O(\log^{5/6} n)}$ amortized update time.
\end{cor}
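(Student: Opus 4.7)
The plan is to combine the dynamic expander hierarchy of Theorem~\ref{thm:deterministic_dynamic_ED} with the bag construction described above and the width bound from Corollary~\ref{cor:helpfulTreeWidth}. First I would instantiate Theorem~\ref{thm:deterministic_dynamic_ED} to maintain an $(\alpha,\phi)$-expander hierarchy $T$ of $G$ with $\alpha=2^{-\bar O(\log^{2/3}n)}$, $\phi=2^{-O(\log^{5/6}n)}$, slack $s=2^{\bar O(\log^{1/2}n)}$ and depth $t=O(\log^{1/6}n)$, using amortized update time $2^{O(\log^{5/6}n)}$. On top of $T$ I would maintain, for every node $x\in V(T)$ corresponding to a cluster vertex $u_i\in V(G^i)$, the bag $B_x\subseteq V(G)$ consisting of the original endpoints in $G$ of the edges incident to $u_i$ in $G^i$. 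The lemma preceding \Cref{cor:helpfulTreeWidth} already shows that this yields a valid tree decomposition, so the only things left to bound are the width and the update time.

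For the width, I would simply plug the parameters into Corollary~\ref{cor:helpfulTreeWidth}. We get $(s\log m)^t = 2^{\bar O(\log^{2/3}n)}$, $\max\{1/\alpha,1/\phi\}=2^{O(\log^{5/6}n)}$, and $1/\alpha^{t-1}=2^{\bar O(\log^{5/6}n)}$, so each bag has size
\[
|B_x|\le \tw(G)\cdot 2^{\bar O(\log^{2/3}n)}\cdot 2^{O(\log^{5/6}n)}\cdot 2^{\bar O(\log^{5/6}n)}=\tw(G)\cdot 2^{O(\log^{5/6}n)},
\]
which is the width claimed in the corollary.

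For the update time I would argue that maintaining the bags explicitly fits within the $2^{O(\log^{5/6}n)}$ amortized budget of Theorem~\ref{thm:deterministic_dynamic_ED}. The bag of $x$ is, by construction, a function of the edges incident to $u_i$ in the level-$i$ graph $G^i$, each of which is tracked by the expander-hierarchy algorithm. Because $G$ has constant degree, each level-$i$ edge contributes $O(1)$ vertices to at most $O(1)$ bags along the root-to-leaf path through $x$, so every single edge update to some $G^i$ triggers only $O(t)=O(\log^{1/6}n)$ bag modifications. Since the recourse at every level is already absorbed into the amortized update time of Theorem~\ref{thm:deterministic_dynamic_ED}, the bookkeeping for bag maintenance increases the running time by at most a $\polylog(n)$ factor, which is swallowed by $2^{O(\log^{5/6}n)}$.

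The main obstacle I anticipate is the last step: making sure that the per-update bag maintenance truly piggybacks on the recourse already accounted for in the hierarchy. One needs to check that the algorithm of Theorem~\ref{thm:deterministic_dynamic_ED} exposes enough information (namely, the set of level-$i$ edges that are inserted or deleted at each update, together with their contracted endpoints and their pre-images in $G$) so that each bag change can be executed in $O(1)$ time. Once this is verified from the structure of the ED-process described in Section~\ref{sec:dynamic_expander_decomposition}, the corollary follows.
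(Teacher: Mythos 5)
Your proposal is correct and follows essentially the same route as the paper: maintain the $(\alpha,\phi)$-expander hierarchy via Theorem~\ref{thm:deterministic_dynamic_ED} with exactly these parameters, bound the bag sizes by plugging into Corollary~\ref{cor:helpfulTreeWidth}, and observe that the bags can be updated explicitly within the same amortized time since the edge changes at every level are already accounted for by the hierarchy's recourse. The paper's own proof is just a terser version of this argument, so no further comparison is needed.
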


\clearpage
\appendix
\section{Proof of Lemma~\ref{lem:restricted_demand}}
\label{apd:restricted}

	We show this via an approximate maxflow-mincut theorem. It is well
	known \cite{LeightonR99}
	that the optimum congestion required for solving a
	multicommodity flow problem with demands $D$ in an undirected graph $G=(V,E)$ is at most
	$O(\log n / \operatorname{sparsity}(G,D))$, where
	$\operatorname{sparsity}(G,D)=\max_{X\subseteq{V}}{|E_G(X,V\setminus X)|/ D(X,V\setminus X)}$, and $D(X,V\setminus X)=\sum_{(x,y)\in X\times
		V\setminus X}(D(x,y)+D(y,x))$ is the demand that has to cross the cut $X$.
	
	With this in mind we prove the lemma by showing that $D$ has low sparsity in
	$G[S]$. Fix a subset $X$. The demand that originates at vertices in $X$ is at
	most $\sum_x\gamma\deg_G(x)$ because the demand is $\gamma$-restricted. The
	same holds for the demand that ends at vertices in $X$. This means that the
	total demand that can cross the cut can be at most
	$\gamma\min\{\vol_G(X),\vol_G(V\setminus X)\}$.
	But since $\induced{S}^{\alpha/\phi}$ is a $\phi$-expander we know that
	$|E_G(X,S\setminus
	X)|\ge
	\phi
	\min\{\vol_{\induced{S}[\alpha/\phi]}(X),\vol_{\induced{S}[\alpha/\phi]}(S\setminus
	X)\}
	\ge
	\phi\min\{\vol_G(X),\vol_{G}(S\setminus
	X)\}$. Using approximate maxflow-mincut gives the first statement.
	
	For the second statement the demand that has to cross the cut can be at most
	
	\begin{equation*}
	\begin{split}
	D(X,S\setminus X) &\le \gamma\cdot\min\{{\textstyle\sum}_{v\in X}|E_G(\{v\},V\setminus S)|,{\textstyle\sum}_{v\in
		S\setminus X}|E_G(\{v\},V\setminus S)|\}\\
	&\le\gamma\cdot\tfrac{\phi}{\alpha}\cdot\min\{\vol_{\induced{S}[\alpha/\phi]}(X),\vol_{\induced{S}[\alpha/\phi]}(S\setminus X)\}\\
	&\le\gamma\cdot\tfrac{\phi}{\alpha}\cdot\tfrac{1}{\phi}|E_G(X,S\setminus X)|\enspace,
	\end{split}
	\end{equation*}
	where the first inequality is due to the $\gamma$-boundary restriction, the
	second due to the reweighting of boundary edges in the graph
	$\induced{S}[\alpha/\phi]$ and the last inequality follows from the
	$\phi$-expansion of $\induced{S}[\alpha/\phi]$.

\clearpage		  
\bibliographystyle{alpha}
\bibliography{references} 
         
\end{document}